\renewcommand{\epsilon}{\varepsilon}
\algrenewcommand\algorithmicdo{}
\algrenewcommand\algorithmicthen{}
\newcommand{\set}[1]{\{#1\}}                    %
\newcommand{\setof}[2]{\{{#1}\mid{#2}\}}        %
\newcommand{\bigsetof}[2]{\left\{{#1}\ \biggr|\ {#2}\right\}}        %
\newcommand{\N}{\mathbb N} %
\newcommand{\calC}{\mathcal C}
\newcommand{\calI}{\mathcal I}
\newcommand{\calP}{\mathcal P}
\newcommand{\eat}[1]{}
\renewcommand{\epsilon}{\varepsilon}
\newcommand{\br}{\mathit{branch}}
\newcommand{\parent}{\mathit{parent}}
\newcommand{\children}{\mathit{child}}
\newcommand{\var}{\mathit{var}}
\newcommand{\ra}{\mathit{\overline{ra}}}       %
\newcommand{\raexc}{\mathit{ra}}       %
\newcommand{\ria}{\mathit{ria}}       %
\newcommand{\con}{\mathit{con}}         %
\newcommand{\conn}{\overline{con}}
\newcommand{\scon}{\mathit{conSto}}     %
\newcommand{\icon}{\mathit{conInst}}    %
\newcommand{\out}{\mathit{out}}         %
\newcommand{\iv}{\boldsymbol{I}}           %
\newcommand{\caliv}{\calI}
\newcommand{\supp}{\mathit{supp}}         %
\newcommand{\yanc}{\ensuremath{\boldsymbol{y}_{\text{anc}}}}
\newcommand{\ysto}{\ensuremath{\boldsymbol{y}_{\text{sto}}}}
\newcommand{\yinst}{\ensuremath{\boldsymbol{y}_{\text{inst}}}}
\newcommand{\fhw}{\textit{fhw}\xspace}        %
\newcommand{\ptree}{\text{PT}\xspace}       
\newcommand{\ptrees}{\text{PTs}\xspace}       
\newcommand{\pt}{\ensuremath{\mathcal{PT}}\xspace}       
\newcommand{\gj}{\ensuremath{\mathcal{GJ}}\xspace}       
\newcommand{\ptc}{\ensuremath{\mathcal{PTC}}\xspace}     
\newcommand{\ptcr}{\ensuremath{\mathcal{PTCR}}\xspace}   
\newcommand{\rptcr}{\ensuremath{\mathcal{RPT}}\xspace} 
\newcommand{\td}{\ensuremath{\mathcal{TD}}\xspace}   
\newcommand{\tdgj}{\ensuremath{\mathcal{TD}^\mathcal{GJ}}\xspace}   
\newcommand{\tdpt}{\ensuremath{\mathcal{TD}^\mathcal{PT}}\xspace}  
\newcommand{\tdptc}{\ensuremath{\mathcal{TD}^\mathcal{PTC}}\xspace}  
\newcommand{\tdptcr}{\ensuremath{\mathcal{TD}^\mathcal{PTCR}}\xspace} 
\newcommand{\tdrptcr}{\ensuremath{\mathcal{TD}^\mathcal{RPT}}\xspace}     
\newcommand{\tdp}{\ensuremath{\mathcal{TD}^\mathcal{C}}\xspace}   
\newcommand{\tdc}{\ensuremath{\mathcal{TD}^\mathcal{C}}\xspace}   
\newcommand{\anc}{\mathit{anc}}     
\newcommand{\ancc}{\overline{anc}}     
\newcommand{\desc}{\mathit{desc}}     
\newcommand{\descc}{\overline{desc}}   
\newcommand{\outt}{\overline{out}}
\newcommand{\0}{\mathbf{0}}
\newcommand{\1}{\mathbf{1}}
\newcommand{\K}{\mathbb{K}}
\newcommand{\dom}{\mathbf{dom}}
\newcommand{\sem}[1]{{\llbracket{}{#1}\rrbracket}}
\newcommand{\keys}[1]{{keys({#1})}}
\newcommand{\ptroot}{root} %
\newcommand{\Root}{root} %
\newcommand{\convexpath}[2]{
  [   
  create hullcoords/.code={
    \global\edef\namelist{#1}
    \foreach [count=\counter] \nodename in \namelist {
      \global\edef\numberofnodes{\counter}
      \coordinate (hullcoord\counter) at (\nodename);
    }
    \coordinate (hullcoord0) at (hullcoord\numberofnodes);
    \pgfmathtruncatemacro\lastnumber{\numberofnodes+1}
    \coordinate (hullcoord\lastnumber) at (hullcoord1);
  },
  create hullcoords
  ]
  ($(hullcoord1)!#2!-90:(hullcoord0)$)
  \foreach [
  evaluate=\currentnode as \previousnode using \currentnode-1,
  evaluate=\currentnode as \nextnode using \currentnode+1
  ] \currentnode in {1,...,\numberofnodes} {
    let \p1 = ($(hullcoord\currentnode) - (hullcoord\previousnode)$),
    \n1 = {atan2(\y1, \x1) + 90}, %
    \p2 = ($(hullcoord\nextnode) - (hullcoord\currentnode)$),
    \n2 = {atan2(\y2, \x2) + 90}, %
    \n{delta} = {Mod(\n2-\n1,360) - 360}
    in 
    {arc [start angle=\n1, delta angle=\n{delta}, radius=#2]}
    -- ($(hullcoord\nextnode)!#2!-90:(hullcoord\currentnode)$) 
  }
}
\newcommand{\customsubscript}{
  \tikz[baseline={(0,0)}]{
    \draw[scale=0.1] (.1,0) -- (1,1.5) -- (1.9,0) -- cycle;
    \draw[scale=0.1] (0, 1.6) -- (2,1.6);
    \draw[scale=0.1] (1, 1.6) -- (1,0);
  }
}
\newcommand{\customsubscriptt}{
  \tikz[baseline={(0,0)}]{
    \draw[scale=0.1] (.5,0) -- (1,1) -- (2,1) -- (2.5,0) -- cycle;
  }
}
\newcommand{\customsubscripttt}{
  \tikz[baseline={(0,0)}]{
    \draw[scale=0.1] (.5,0) -- (1,1) -- (4,1) -- (4.5,0) -- cycle;
    \draw[scale=0.1] (1.66, 1) -- (2.16,0);
    \draw[scale=0.1] (3.33, 1) -- (2.83,0);
  }
}
\newcommand{\customsubscriptttt}{
  \tikz[baseline={(0,0)}]{
    \draw[scale=0.1] (.1,0) -- (1,1.5) -- (1.9,0) -- cycle;
    \draw[scale=0.1] (.5, .75) -- (1.5,.75);
  }
}
\newcommand{\customsubscripttttt}{
  \tikz[baseline={(0,0)}]{
    \draw[scale=0.1] (.03,.93) -- (-2.03,.93);
    \draw (0,.1) arc[start angle=0, end angle=-180, radius=1mm];
    \draw (-.05,.1) arc[start angle=0, end angle=-180, radius=.5mm];
  }
}
\newcommand{\pentagon}{%
  \tikz[baseline={(0,0)}] \draw[scale=0.07] (90:1) -- (162:1) -- (234:1) -- (306:1) -- (18:1) -- cycle;
}
\newcommand{\QT}{Q_{\customsubscript}}
\newcommand{\QShort}{Q_{\customsubscriptt}}
\newcommand{\QLong}{Q_{\customsubscripttt}}
\newcommand{\QTri}{Q_{\customsubscriptttt}}
\newcommand{\QCirc}{Q_{\customsubscripttttt}}
\theoremstyle{plain}
\newtheorem{theorem}{Theorem}[section]
\newtheorem*{theorem*}{Theorem}
\newtheorem*{proposition*}{Proposition}
\newtheorem{lemma}[theorem]{Lemma}
\newtheorem*{lemma*}{Lemma}
\newtheorem*{corollary*}{Corollary}
\newtheorem*{claim*}{Claim}
\newtheorem{conjecture}[theorem]{Conjecture}
\newtheorem*{conjecture*}{Conjecture}
\theoremstyle{definition}
\newtheorem{definition}[theorem]{Definition}
\newtheorem*{definition*}{Definition}
\theoremstyle{remark}
\newtheorem*{observation*}{Observation}
\newtheorem{example}[theorem]{Example}
\newtheorem*{example*}{Example}
\newif\ifArxiv
\begin{document}

\title{{The Space-Time Complexity of Sum-Product Queries}}

\author[1]{Kyle Deeds}
\author[2]{Timo Camillo Merkl}
\author[2]{Reinhard Pichler}
\author[1]{Dan Suciu}
\affil[1]{University of Washington, United States, \texttt{\{kdeeds,suciu\}@cs.washington.edu}}
\affil[2]{TU Wien, Austria, \texttt{\{timo.merkl,reinhard.pichler\}@tuwien.ac.at}}

\date{} 

\maketitle

\begin{abstract}
While extensive research on query evaluation
has achieved consistent improvements in the time complexity of algorithms, 
the space complexity of query evaluation has been largely ignored. 
This is a particular challenge in settings with strict pre-defined space constraints.
In this paper, we examine the combined space-time complexity of conjunctive queries (CQs) and, 
more generally, of sum-product queries (SPQs). We propose
several classes of space-efficient algorithms for evaluating SPQs, and 
we show that the optimal time complexity is almost always achievable with asymptotically lower space complexity than traditional approaches.
\end{abstract}

\section{Introduction}
\label{sec:introduction}

Algorithms for answering conjunctive queries (CQs), often generalized to sum-product queries (SPQs), 
have been extensively studied. Prior work has identified tight bounds on their \emph{time complexity} relative to a variety of structural parameters of the query, e.g. treewidth, (generalized or fractional) hypertree width, 
or submodular width~\cite{DBLP:journals/jacm/Grohe07,DBLP:journals/jcss/GottlobLS02,DBLP:journals/talg/GroheM14,DBLP:journals/jacm/Marx13}. However, no attention has been paid to the \emph{space complexity} of these algorithms which can often equal the time complexity.

This is a major challenge for end-users who typically run these algorithms in settings with strict pre-defined space constraints, e.g. GPU memory, main memory, or SSD size.  If the algorithm has a large space complexity, the user has two unsatisfactory options; 1) reserve a moderate amount of space and risk an out-of-memory error when inputs produce large intermediates or 2) reserve a larger, more expensive server to guarantee robustness. Developers typically place a high value on stability which pushes them towards the latter, and this conservative impulse is further exacerbated by the challenges of estimating space utilization ahead-of-time~\cite{DBLP:conf/cluster/BaderSLSWTK24}. In the cloud setting, this has resulted in the well-known problem of over-provisioning memory with over 90\% of jobs in the Google Cluster Dataset using less than 20\% of the provisioned memory~\cite{liu2025public, Google_2019}. 

In this paper, we examine the combined space-time complexity of SPQs to address these space-constrained settings -- illustrated here in the introduction for CQs. 
We begin by formally defining these notions of complexity. A {\em query plan} $\Pi$ for a query $Q$ is a structure that is associated with a specific algorithm for evaluating $Q$, e.g. a tree decomposition, or join-plan, or a variable order for generic-join.  We denote a class of plans by $\calC$, e.g. all tree decompositions, and the plans of $\calC$ for a query $Q$ as $\calC(Q)$. Each plan $\Pi$ is associated with a \emph{space exponent} $s(\Pi)$ and
a \emph{time exponent} $t(\Pi)$. The latter bounds 
the associated algorithm's runtime by $\tilde{O}(|D|^{t(\Pi)})$. 
The former bounds the 
space used by the algorithm by $O(|D|^{s(\Pi)})$, excluding the space required to store the input relations. We will often refer to these jointly as the space-time exponents $e(\Pi) = (s(\Pi), t(\Pi))$.

\begin{definition}{(Plan Domination).} \label{def:domination}
  Let $\Pi_1,\Pi_2$ be two plans for the same query $Q$. We say $\Pi_1$ \emph{improves over} $\Pi_2$, or that $\Pi_1$ {\em
    dominates} $\Pi_2$, denoted $\Pi_1 \preceq \Pi_2$, if
  $t(\Pi_1) \leq t(\Pi_2)$ and $s(\Pi_1) \leq s(\Pi_2)$.
  We say that $\Pi_1$ \emph{strictly dominates} $\Pi_2$,
  denoted $\Pi_1 \prec \Pi_2$, if $\Pi_1 \preceq \Pi_2$ but not
  vice versa.

  A class of plans $\calC_1$ \emph{improves over} (\emph{dominates}) another class
  $\calC_2$, denoted $\calC_1 \preceq \calC_2$, 
  if for every query $Q$, it holds that $\forall \,\Pi_2 \in \calC_2(Q)$, $\exists\, \Pi_1 \in \calC_1(Q)$
  such that $\Pi_1 \preceq \Pi_2$. Notice that $\preceq$ over classes of plans forms a preorder.
  We say that $\calC_1$ \emph{strictly dominates} $\calC_2$, denoted $\calC_1 \prec \calC_2$,
  if $\calC_1 \preceq \calC_2$ holds but not vice versa.  
\end{definition}

This paper studies, compares, and improves the space-time exponents of various query evaluation methods proposed in the literature, both for database queries and for probabilistic inference in graphical models.  As explained, we distinguish between the \emph{algorithm} used to evaluate the query, and the \emph{plan}, which is a syntactic structure (e.g. a tree, or a variable order), on which we can define simple measures (e.g. depth of a tree).  Every plan is canonically associated with an algorithm, but the set of plans for a given query is finite, while that of algorithms is infinite.  Our choice to distinguish these two notions may be less common in the theory community, but it is standard in database systems, where a ``plan'' refers to a relational algebra expression.  With this distinction in mind, let's examine what types of plans have been considered in the literature.

Prior work on conjunctive query answering has focused only on the time complexity. Standard {\em relational algebra plans}, described in all textbooks on database systems, are known to be suboptimal, hence we do not discuss them in this paper.  For a full CQ (query without projection), a \emph{worst-case optimal join} (WCOJ) algorithm runs in time $O(|D|^{\rho^*})$, where $\rho^*$ is the fractional edge cover number of $Q$; this time is proportional to the worst-case output size of the query~\cite{DBLP:conf/focs/AtseriasGM08}.  While the first WCOJ algorithms were first introduced in~\cite{DBLP:conf/icdt/Veldhuizen14,DBLP:conf/pods/NgoPRR12}, the best known variant is Generic Join (GJ)~\cite{DBLP:journals/sigmod/NgoRR13}.  A \emph{GJ plan} consists of a total order on the query variables, and the associated algorithm consists of nested \texttt{for}-loops, one for each variable.  Somewhat surprisingly, GJ can be proven to run in time $O(|D|^{\rho^*})$ independently of the plan\footnote{In practice, the choice of the plan (i.e.  of the variable order) makes a huge difference for the instance-specific runtime, see~\cite{DBLP:journals/pvldb/WangT0O23}, but we do not discuss instance-optimal algorithms in this paper.}.  GJ can easily be adapted to compute conjunctive queries with projections (i.e. non-full queries), however it is no longer guaranteed to be worst-case optimal.  The space required by a GJ plan consists of the space needed to store the iteration variables of the \texttt{for}-loops, which is $O(1)$ since each variable stores a single domain value, plus the space required to store the query's output. In the case of a Boolean query, the output also has size $O(1)$, therefore the space-time exponents of a GJ plan are $(0,\rho^*)$.  GJ is always space-optimal, but its time complexity is in general suboptimal for queries with projections.

Handling projections efficiently, and in particular handling Boolean queries, requires new techniques. All solutions proposed in the theoretical database community are based on tree decompositions~\cite{DBLP:conf/pods/GottlobLS99, DBLP:conf/pods/KhamisNR16, DBLP:conf/pods/GottlobGLS16, DBLP:journals/tods/AbergerLTNOR17, DBLP:conf/stoc/ArenasCJR21a, DBLP:conf/pods/KimH0H23, DBLP:conf/pods/BremenM23}.  As the name implies, a tree decomposition is a formalism for splitting the query into small, manageable sub-queries, called \emph{bags}, and composing these bags into a tree.  Execution proceeds by computing the result of each bag, then semi-joining the results bottom-up trough the tree decomposition.  The overall time complexity is given by the time required to solve every bag, which is generally referred to as the \emph{width} of the decomposition.  In the literature one finds different approaches on how to evaluate the bags, leading to various notions of width, such as \emph{tree width}, \emph{generalized hypertree width}, and \emph{fractional hypertree width} ($\fhw$)~\cite{DBLP:conf/pods/FischlGP18}.  A \emph{tree decomposition plan} consists of both a tree decomposition, and a choice of a plan for every bag of the tree decomposition.  When we choose generic join to compute the bags, then the space-time exponents of the tree decomposition plan are $(\fhw,\fhw)$.

Many more inference algorithms have been described in the field of probabilistic graphical models (PGMs); we direct the reader to~\cite{DBLP:series/synthesis/2019Dechter} for a comprehensive overview.  While PGM inference can be expressed as a scalar sum-product query studied in this paper, the runtime analysis of PGM inference algorithms differs from that done for query evaluation because the input data in PGMs is assumed to be dense, and the runtime is always expressed as an integer power of the domain size.  For example, in the case of a tree decomposition, the time exponent is the \emph{tree width}, instead of the (much smaller) fractional hypertree width used in the analysis of query evaluation. One of the contributions of our paper consists in adapting some of the PGM inference algorithms to query evaluation, and providing their runtime analysis.  The algorithms of interest to us here are the pseudo-tree based algorithm, and its refinement to caches and resets (called \emph{purges} in~\cite{DBLP:series/synthesis/2019Dechter}).

A \emph{Pseudo-Tree} (PT) for a query is a tree whose nodes are the query variables, such that the variables of every query atom are contained in some path from the root to a leaf (formal definition in Sec.~\ref{sec:pt}).  A PT is a plan for the query, and its natural algorithm consists of \texttt{for}-loops, whose nesting structure is given by the PT.  A Generic Join plan is a special case of a PT, where the tree consists of a single path, but, in general, a PT can have an improved time exponent because \texttt{for}-loops for independent variables can be executed sequentially, instead of nested.  The space required remains optimal and is $O(1)$ (plus the space required to store the output, as we discuss in this paper). 
The term \emph{pseudo-tree} was coined by Freuder and Quinn in the context of constraint optimization~\cite{DBLP:conf/ijcai/FreuderQ85}.

The runtime of a search algorithm can often be improved by the addition of a \emph{cache}.  Dechter~\cite{DBLP:series/synthesis/2019Dechter} adds a cache to each node of a pseudo-tree, leading to an improved time complexity, at the cost of using more space.  The cache associated to a query variable is a hash table, whose key consists of certain ancestor variables in the pseudo-tree (formal definition in Sec.~\ref{sec:ptc}).  To allow some tradeoff between the space and time complexity, Dechter describes a refinement, by which the size of a cache can be reduced by simply removing some of these ancestors from the hash table, and resetting the cache when their value changes (details Sec.~\ref{sec:ptcr}).  But no complexity analysis is provided for these techniques, even in the simplified complexity model of the probabilistic graphical models.

While no prior work has examined the end-to-end space-time complexity of CQ evaluation, two related lines of research should be acknowledged. For one, research on \textit{factorized databases} aims to create a space-efficient data structure from which the answers can be enumerated efficiently \cite{DBLP:journals/tods/OlteanuZ15}. Second, under the name \textit{conjunctive queries with access patterns}, prior work has explored how to materialize a space-efficient set of views to speed up subsequent query execution~\cite{DBLP:conf/pods/ZhaoDK23}.

\begin{figure}
    \begin{minipage}{0.57\textwidth}
            \begin{center}
                \begin{tikzpicture}[scale = 0.8, font=\footnotesize]
                    \node (GJ) at (0,10) { \shortstack{ $\gj$ }};%
                    \node (TD[GJ]) at (1.5,8.5) {\shortstack{ $\tdgj$ }};%
                    \node (PT) at (-1.5,8.5) { \shortstack{$\pt$ }};%
                    \draw[->] (GJ)--(TD[GJ]) node[midway, above] {};
                    \draw[->] (GJ)--(PT) node[midway, above] {};
                    \node (PTC) at (0,7) { \shortstack{ $\ptc$ }};%
                    \draw[->] (TD[GJ])--(PTC) node[midway, right] {$\;$Thm. \ref{thm:tdgj-ptc}};
                    \draw[->] (PT)--(PTC) node[midway, above] {};
                    \node (PTCR) at (-1.5,5.5) { \shortstack{$\ptcr$ }};%
                    \node (TD[PT]) at (1.5,5.5) { \shortstack{$\tdpt$ }};%
                    \node (TD[PTC]) at (4,5.5) { \shortstack{$\tdptc$ }};%
                    \draw[->] (PTC)--(TD[PT]) node[midway, right] {Thm. \ref{thm:tdpt-tdptc}};
                    \draw[<->] (TD[PT])--(TD[PTC]) node[midway, above] {Thm. \ref{thm:tdpt-tdptc}};
                    \draw[->] (PTC)--(PTCR) node[midway, left] {};
                    \node (TD[PTCR]) at (0,4) {\shortstack{ $\tdptcr$ }};%
                    \draw[->] (TD[PT])--(TD[PTCR]) node[midway, above] {};
                    \draw[->] (PTCR)--(TD[PTCR]) node[midway, above] {};
                    \node (RPTCR) at (0,2.5) { \shortstack{$\rptcr$ }};%
                    \node (TD[RPTCR]) at (2.5,2.5) { \shortstack{$\tdrptcr$ }};%
                    \draw[->] (TD[PTCR])--(RPTCR) node[midway, right] {Thm \ref{thm:rptcr-tdptcr}, Fig.~\ref{fig:rptQuery}};
                    \draw[<->] (TD[RPTCR])--(RPTCR) node[midway, above] {Thm. \ref{thm:rptcr-tdrptcr}}; 
                    \draw[dotted] (PT)--(TD[GJ]) node[midway, above] {Thm. \ref{thm:tdgj-pt}}; 
                    \draw[dotted] (PTCR)--(TD[PT]) node[midway, above] {Thm. \ref{thm:ptcr-tdpt}}; 
                    \draw[dotted] (PTCR)--(TD[PT]) node[midway, below] {Fig.~\ref{fig:ptcrQuery},~\ref{fig:tdptQuery}}; 
                    \node[anchor=east] at ($(PT.west) + (0,0)$) (EXPTCR) {
                        \begin{tikzpicture}[scale=0.5]

                            \node[circle, fill=black, inner sep=1.1pt] (A) at (-1,0) {};
                            \node[circle, fill=black, inner sep=1.1pt] (C) at (0,0) {};
                            \node[circle, fill=black, inner sep=1.1pt] (E) at (1,0) {};

                            \draw (A) -- (C);
                            \draw (C) -- (E);
                        \end{tikzpicture}
                    };
                    \node[anchor=west] at ($(TD[GJ].east)$) (EXPTCR) {
                        \begin{tikzpicture}[scale=0.5]
                            \node (1) at (0,0) [circle, fill=black, inner sep=1.1pt] {};
                            \node (2) at (1,0) [circle, fill=black, inner sep=1.1pt] {};
                            \node (3) at (2,0) [circle, fill=black, inner sep=1.1pt] {};
                            \node (4) at (3,0) [circle, fill=black, inner sep=1.1pt] {};

                            \draw (1) -- (2);
                            \draw (2) -- (3);
                            \draw (3) -- (4);
                        \end{tikzpicture}
                    };
                    \node[anchor=east] at ($(PTCR.west)$) (EXPTCR) {
                        \begin{tikzpicture}[scale=0.5]
                            \node (1) at (0,0) [circle, fill=black, inner sep=1.1pt] {};
                            \node (2) at (1,0) [circle, fill=black, inner sep=1.1pt] {};
                            \node (3) at (2,0) [circle, fill=black, inner sep=1.1pt] {};
                            \node (4) at (3,0) [circle, fill=black, inner sep=1.1pt] {};
                            \node (5) at (1,1) [circle, fill=black, inner sep=1.1pt] {};
                            \node (6) at (2,1) [circle, fill=black, inner sep=1.1pt] {};

                            \draw[fill=red,opacity=0.5, fill opacity=0.2] \convexpath{1,5,6,2}{.2cm};
                            \draw[fill=blue,opacity=0.5, fill opacity=0.2] \convexpath{2,5,6,3}{.3cm};
                            \draw[fill=yellow,opacity=0.5, fill opacity=0.2] \convexpath{3,5,6,4}{.25cm};

                            \node (1) at (0,0) [circle, fill=black, inner sep=1.1pt] {};
                            \node (2) at (1,0) [circle, fill=black, inner sep=1.1pt] {};
                            \node (3) at (2,0) [circle, fill=black, inner sep=1.1pt] {};
                            \node (4) at (3,0) [circle, fill=black, inner sep=1.1pt] {};
                            \node (5) at (1,1) [circle, fill=black, inner sep=1.1pt] {};
                            \node (6) at (2,1) [circle, fill=black, inner sep=1.1pt] {};

                            \draw (1) -- (2);
                            \draw (2) -- (3);
                            \draw (3) -- (4);
                            \draw (1) -- (5);
                            \draw (2) -- (5);
                            \draw (3) -- (5);
                            \draw (1) -- (6);
                            \draw (2) -- (6);
                            \draw (3) -- (6);
                            \draw (5) -- (6);
                            \draw (4) -- (5);
                            \draw (4) -- (6);
                        \end{tikzpicture}
                    };
                    \node[anchor=west] at ($(TD[PT].east) + (-1,-1)$) (EXTDPT) {
                        \begin{tikzpicture}[scale=0.5]
                            \node[circle, fill=black, inner sep=1.1pt] (A) at (-1,0) {};
                            \node[circle, fill=black, inner sep=1.1pt] (B) at (-.5,1) {};
                            \node[circle, fill=black, inner sep=1.1pt] (C) at (0,0) {};
                            \node[circle, fill=black, inner sep=1.1pt] (D) at (.5,1) {};
                            \node[circle, fill=black, inner sep=1.1pt] (E) at (1,0) {};
                            \node[circle, fill=black, inner sep=1.1pt] (F) at (1.5,1) {};
                            \node[circle, fill=black, inner sep=1.1pt] (G) at (2,0) {};
                            \node[circle, fill=black, inner sep=1.1pt] (H) at (2.5,1) {};
                            \node[circle, fill=black, inner sep=1.1pt] (I) at (3,0) {};
                            \node[circle, fill=black, inner sep=1.1pt] (J) at (3.5,1) {};
                            \node[circle, fill=black, inner sep=1.1pt] (K) at (4,0) {};
                    
                            \draw[fill=red,opacity=0.5, fill opacity=0.2] \convexpath{A,B,D,C}{.2cm};
                            \draw[fill=blue,opacity=0.5, fill opacity=0.2] \convexpath{B,D,E,C}{.3cm};
                            \draw[fill=yellow,opacity=0.5, fill opacity=0.2] \convexpath{D,F,G,E}{.25cm};
                            \draw[fill=green,opacity=0.5, fill opacity=0.2] \convexpath{F,H,G,E}{.2cm};
                            \draw[fill=cyan,opacity=0.5, fill opacity=0.2] \convexpath{H,J,I,G}{.3cm};
                            \draw[fill=brown,opacity=0.5, fill opacity=0.2] \convexpath{H,J,K,I}{.25cm};

                            \node[circle, fill=black, inner sep=1.1pt] (A) at (-1,0) {};
                            \node[circle, fill=black, inner sep=1.1pt] (B) at (-.5,1) {};
                            \node[circle, fill=black, inner sep=1.1pt] (C) at (0,0) {};
                            \node[circle, fill=black, inner sep=1.1pt] (D) at (.5,1) {};
                            \node[circle, fill=black, inner sep=1.1pt] (E) at (1,0) {};
                            \node[circle, fill=black, inner sep=1.1pt] (F) at (1.5,1) {};
                            \node[circle, fill=black, inner sep=1.1pt] (G) at (2,0) {};
                            \node[circle, fill=black, inner sep=1.1pt] (H) at (2.5,1) {};
                            \node[circle, fill=black, inner sep=1.1pt] (I) at (3,0) {};
                            \node[circle, fill=black, inner sep=1.1pt] (J) at (3.5,1) {};
                            \node[circle, fill=black, inner sep=1.1pt] (K) at (4,0) {};

                            \draw (A) -- (B);
                            \draw (A) -- (C);
                            \draw (A) -- (D);
                            \draw (B) -- (C);
                            \draw (B) -- (D);
                            \draw (B) -- (E);
                            \draw (C) -- (D);
                            \draw (C) -- (E);
                            \draw (D) -- (E);
                            \draw (D) -- (F);
                            \draw (D) -- (G);
                            \draw (E) -- (F);
                            \draw (E) -- (G);
                            \draw (E) -- (H);
                            \draw (F) -- (G);
                            \draw (F) -- (H);
                            \draw (G) -- (H);
                            \draw (I) -- (G);
                            \draw (I) -- (H);
                            \draw (I) -- (J);
                            \draw (I) -- (K);
                            \draw (J) -- (G);
                            \draw (J) -- (H);
                            \draw (J) -- (K);
                            \draw (H) -- (K);
                        \end{tikzpicture}
                    };
                    \node[anchor=east] at ($(RPTCR.west)$) (EXRPT) {
                        \begin{tikzpicture}[scale=0.5]
                            \node[circle, fill=black, inner sep=1.1pt] (L) at (0,0) {};
                    
                            \foreach \angle/\label in {0/K, -36/I, -72/G, -108/E, -144/C, -180/A}
                            {
                                \node[circle, fill=black, inner sep=1.1pt] (\label) at (\angle:2) {};
                            }
                    
                            \foreach \angle/\label in {-18/J, -54/H, -90/F, -126/D, -162/B}
                            {
                                \node[circle, fill=black, inner sep=1.1pt] (\label) at (\angle:1) {};
                            }

                            \draw[fill=red,opacity=0.7, fill opacity=0.2] \convexpath{A,L,D,C}{.15cm};
                            \draw[fill=blue,opacity=0.7, fill opacity=0.2] \convexpath{B,L,E,C}{.25cm};
                            \draw[fill=yellow,opacity=0.7, fill opacity=0.2] \convexpath{D,L,G,E}{.2cm};
                            \draw[fill=green,opacity=0.7, fill opacity=0.2] \convexpath{L,H,G,E}{.15cm};
                            \draw[fill=cyan,opacity=0.7, fill opacity=0.2] \convexpath{L,J,I,G}{.25cm};
                            \draw[fill=brown,opacity=0.7, fill opacity=0.2] \convexpath{H,L,K,I}{.2cm};
                    
                            \node[circle, fill=black, inner sep=1.1pt] (L) at (0,0) {};
                    
                            \foreach \angle/\label in {0/K, -36/I, -72/G, -108/E, -144/C, -180/A}
                            {
                                \node[circle, fill=black, inner sep=1.1pt] (\label) at (\angle:2) {};
                            }
                    
                            \foreach \angle/\label in {-18/J, -54/H, -90/F, -126/D, -162/B}
                            {
                                \node[circle, fill=black, inner sep=1.1pt] (\label) at (\angle:1) {};
                            }

                            \draw (A) -- (B);
                            \draw (A) -- (C);
                            \draw (A) -- (D);
                            \draw (B) -- (C);
                            \draw (B) -- (D);
                            \draw (B) -- (E);
                            \draw (C) -- (D);
                            \draw (C) -- (E);
                            \draw (D) -- (E);
                            \draw (D) -- (F);
                            \draw (D) -- (G);
                            \draw (E) -- (F);
                            \draw (E) -- (G);
                            \draw (E) -- (H);
                            \draw (F) -- (G);
                            \draw (F) -- (H);
                            \draw (G) -- (H);
                            \draw (I) -- (G);
                            \draw (I) -- (H);
                            \draw (I) -- (J);
                            \draw (I) -- (K);
                            \draw (J) -- (G);
                            \draw (J) -- (H);
                            \draw (J) -- (K);
                            \draw (H) -- (K);
                            \draw (L) -- (A);
                            \draw (L) -- (B);
                            \draw (L) -- (C);
                            \draw (L) -- (D);
                            \draw (L) -- (E);
                            \draw (L) -- (F);
                            \draw (L) -- (G);
                            \draw (L) -- (H);
                            \draw (L) -- (I);
                            \draw (L) -- (J);
                            \draw (L) -- (K);
                        \end{tikzpicture}
                    }; 
                \end{tikzpicture}
            \end{center}
    \end{minipage}
    \hfill
    \begin{minipage}{0.42\textwidth}        
            \centering
            {\footnotesize
            \begin{tabular}{l p{0.7\textwidth}}
                \toprule
                \textbf{Name} & \textbf{Class of Query Plans Based On:} \\
                \midrule $\gj$ & Generic-Join\\
                $\pt$ & Pseudo-Tree\\
                $\ptc$ & Pseudo-Tree with Caching\\
                $\ptcr$ & Pseudo-Tree with Caching and Reset\\
                $\rptcr$ & Recursive Pseudo-Tree\\
                $\tdp$ & Tree-Decomposition, with a query plan from $\mathcal{C}$ applied to each bag\\
                \bottomrule
            \end{tabular}
            }
    \end{minipage}
    \caption{Classes of query plans, ordered by their space-time exponents: lower classes have smaller exponents and are better.  An arrow $\calC_2 \rightarrow \calC_1$ means that every plan in $\calC_2$ can be mapped to a plan in $\calC_1$ with space-time exponents at least as good; in particular, $\calC_1 \preceq \calC_2$ (see Def.~\ref{def:domination}).  A missing arrow, i.e.  $\calC_2 \not\rightarrow \calC_1$, means $\calC_1\not\preceq \calC_2$.  In particular, all downward arrows indicate strict domination (Cor.~\ref{cor:ptcr-tdpt-ptc}, Thm.~\ref{thm:rptcr-tdptcr}).  The depicted graphs are examples of scalar queries with binary predicates that separate the classes; the colors represent maximal cliques.  }
    \label{fig:plan-families}
\end{figure}

\paragraph{Our Contributions}
In this paper we study the space-time tradeoff of several classes of query plans, of increased sophistication: Generic Join and Pseudo-Trees (Sec.~\ref{sec:pt}), Tree Decomposition (Sec.~\ref{sec:td}), Pseudo-Trees with Caching (Sec.~\ref{sec:ptc}), Pseudo-Trees with Caching and Reset (Sec.~\ref{sec:ptcr}), and finally Recursive Pseudo-Trees (Sec.~\ref{sec:rpt}).  We fully characterize their domination relationships (Def.~\ref{def:domination}) and represent the resulting hierarchy in Fig.~\ref{fig:plan-families}: lower classes have smaller exponents, and thus are better.  We describe now our results in more detail, referring to this figure.

At the top of the figure is Generic Join ($\gj$), which is dominated by all other classes. From there, we generalize $\gj$ along two main axes. First, we consider pseudo-tree plans $\pt$.  These were originally introduced for constraint satisfaction problems, where they correspond to Boolean queries, or, more generally, to scalar sum-product queries.  We extend $\pt$'s to handle arbitrary outputs, and characterize their space-time complexity: unsurprisingly, they strictly dominate $\gj$.  We then revisit the plans based on tree decompositions, noticing that such a plan must consist of both the tree decomposition \emph{and} the plans used to compute each bag.  Thus, $\tdgj$ are tree decompositions plus generic join, while $\tdpt$ are tree decompositions plus pseudo-trees.  The figure shows that $\tdgj$ dominates $\gj$ (as expected); we discuss $\tdpt$ shortly.

Next we study the extension of $\pt$ with caches, $\ptc$; we slightly extend the original definition in~\cite{DBLP:series/synthesis/2019Dechter} by allowing caches to be added to any subset of the nodes of the 
pseudo-tree, instead of all nodes.  As the figure shows, $\ptc$ strictly dominate $\tdgj$.  This is somewhat surprising, because a tree decomposition allows a query to be computed ``in small pieces'', by computing one bag at a time, while a pseudo-tree consists of nested \texttt{for}-loops, requiring a global approach. Yet, by using caches, a pseudo-tree can simulate what a tree decomposition does, and, for some queries, strictly improve the space-time exponents.  However, if we use pseudo-trees instead of generic join to compute the bags of a tree decomposition, then the order reverses: $\tdpt$ strictly dominates $\ptc$.  Interestingly, if we try to improve tree decompositions by computing the bags using pseudo-trees \emph{with caches}, $\tdptc$, we don't gain any improvements over not using caches, $\tdpt$.

Next, we further refine pseudo-trees by allowing caches to be reset (and thus reduce their memory usage), and denote the resulting plans by $\ptcr$.  As explained, cache reset was already discussed in~\cite{DBLP:series/synthesis/2019Dechter}, but no complexity analysis was provided.  It turns out that computing the time complexity is more difficult in this case, because of the interaction between the various caches in the pseudo-tree.  Instead, we modified the algorithm in~\cite{DBLP:series/synthesis/2019Dechter}, thereby both improving its time complexity, and making the analysis possible.  As expected, adding resets improves the space-time complexity, and combining it with a tree decomposition, $\tdptcr$, further improves this complexity.

Lastly, we describe a new type of query plans to dominate them all, called Recursive Pseudo-Trees.  These appear to represent the best space-time tradeoff, because even by extending them with tree decompositions, the space-time exponents do not improve.

The reader may have noticed that all our results concern only upper bounds, and no lower bounds.  It turns out that very few space lower bounds are known in the literature, and none of them applies to the sum-product queries studied in this paper.  We discuss lower bounds in Sec.~\ref{sec:beyond}, were we also conjecture the space-time hardness of a specific query.

Further, all discussed methods aim at being more space efficient than existing tree decomposition based methods.
As such, naturally, the time exponent is always at least as large as the fractional hypertree width (fhw).
To go beyond this barrier, fundamentally different techniques are needed and are left as future work.
Some intricacies of this are hinted at in Sec.~\ref{sec:beyond}.

\ifArxiv 

We omitted proof details in the main body and defers them to the appendix. 
Additionally, the appendix contains further examples.
For some of the theorems, we need to prove the non-existence of certain structures. This is done by computer-assisted exhaustive enumeration. To that end, we developed software for computing the optimal time exponent of the \pt and \ptcr classes when given a space exponent (https://github.com/kylebd99/submodular-width).

\else

Due to space limitations, proof details can be found in the full version of this paper \cite{ARXIV}.
For some of the theorems, we need to prove the non-existence of certain structures. This is done by computer-assisted exhaustive enumeration. To that end, we developed software for computing the optimal time exponent of the \pt and \ptcr classes when given a space exponent (https://github.com/kylebd99/submodular-width).

\fi

\section{Preliminaries}
\label{sec:preliminaries}

Throughout this paper we fix an infinite domain $\dom$.  We denote (sets of) variables by capital letters $A,B,C,\dots$ ($\boldsymbol{X},\boldsymbol{Y},\boldsymbol{Z}, \dots$) and (tuples of) domain values by lowercase letters $a,b,c,\dots$ ($\boldsymbol{x},\boldsymbol{y},\boldsymbol{z}, \dots$).  We will also refer to variables as attributes when this is more appropriate.  If $\bm X, \bm Y$ are two sets of variables and $\bm x \in \dom^{\bm X}$, then we denote by $\bm x[\bm Y]$ the projection of $\bm x$ on the variables $\bm X \cap \bm Y$.

Fix a commutative semi-ring $(\K, \oplus, \otimes, \0, \1)$.  A $\K$-relation is a function $R\colon \dom^{\bm X} \rightarrow \K$ with finite support, meaning that $\supp(R):= \setof{\bm x}{R(\bm x)\neq \0}$ is finite.  When $\bm X=\emptyset$ then we identify $R$ with the scalar value $s := R() \in \K$.
By the cardinality of $R$, in notation $|R|$, we mean the cardinality of its support, and we write $\bm x \mapsto s \in R$ when $\bm x \in \supp(R)$ and $R(\bm x)=s$. A $\K$-database $D$, or simply a database when $\K$ is clear from the context, is a tuple of $\K$-relations $D=(R_1^D,\ldots,R_m^D)$.  Its size is $|D| := \sum_{i=1}^m|R^D_i|$.

A \emph{sum-product query} (SPQ or query for short) is an expression of the form:
\begin{align}
 Q(\boldsymbol{X})\leftarrow & \bigotimes_{i=1,\dots,m} R_i(\boldsymbol{X}_i) \label{eq:def:cq}
\end{align}
where $R_i$ are unique relation symbols and $\boldsymbol{X}, \boldsymbol{X}_1\dots$ are sets of variables such that $\boldsymbol{X}\subseteq \var(Q) := \bigcup_i \boldsymbol{X}_i$.  We refer to this query simply as $Q$, and call $\bm X$ the \emph{output variables} or \emph{head variables} of $Q$.  When $\bm X = \var(Q)$ then we say that $Q$ is a \emph{full query}, and when $\bm X= \emptyset$ then we call it a \emph{scalar query}.  
If the semi-ring $\K$ is the set of Booleans $\mathbb{B}$, then a scalar query is called a \emph{Boolean query}.

The semantics of $Q$ on a database $D$ is the $\K$-relation $\sem{Q}^D\colon \dom^{\boldsymbol{X}} \rightarrow \K$ defined as follows.  Let $\bm V = \var(Q)$ be the set of variables of the query $Q$ in Eq.~\eqref{eq:def:cq}.  Then:
\begin{align*}
  \sem{Q}^D(\bm x) := & \bigoplus_{\bm v \in \dom^{\bm V}\colon \bm v[\bm X]=\bm x} \left( \bigotimes_{i=1,\dots,m} R^D_i(\bm v[\bm X_i])\right)
\end{align*}
We may omit the superscript $D$ when it is clear from the context, and write $R_i$, $\sem{Q}$ for $R_i^D$, $\sem{Q}^D$.

\begin{example}
  \label{ex:3path} Most of our examples (in particular those used for separating families of query plans) will feature queries where all relations are binary, which allows for an intuitive representation as graphs. For instance, the 3-path on the top left of Fig.~\ref{fig:plan-families} represents the scalar query $Q() \leftarrow R_1(A,B) \otimes R_2(B,C)$, whose semantics is $\sum_{a,b,c \in \dom}R_1^D(a,b)\cdot R_2^D(b,c)$ over the natural numbers $\mathbb{N}$. 
\end{example}

Let $\bm Y \subseteq \var(Q)$ be a set of variables. A \emph{fractional edge cover} of $\bm Y$ (with respect to $Q$) is a sequence of non-negative weights $w_i$, one for each relation $R_i$, such that, for every variable $A \in \bm Y$, $\sum_{i: A \in \bm X_i}w_i \geq 1$.  The \emph{fractional edge cover number of $\bm Y$}, $\rho^*(\bm Y)$, is the minimum value of $\sum_i w_i$, when the weights $w_i$ range over fractional edge covers of $\bm Y$.  

\paragraph{Complexity analysis of algorithms.}
We assume the RAM model, where each cell can hold a single element from the domain or from the semi-ring.
As far as (normal and semi-ring) arithmetic is concerned, we assume that each operation requires constant time. 
For query evaluation, we only study the data complexity, i.e., the (size of the) query is considered as constant. 
Several of our algorithms assume a particular order of the tuples of the relations. We assume that 
sorting a  relation does not use additional space.  We will also ignore the additional log-factor due to sorting and/or lookups, and write our upper bounds using  $O$  instead of $\Tilde{O}$.

\section{Constant Space Query Evaluation}
\label{sec:pt}

In this section, we briefly review Generic Join and its space-time exponents.  Then, we discuss {\em pseudo-trees}~\cite{DBLP:conf/ijcai/FreuderQ85,DBLP:series/synthesis/2019Dechter} and extend them in two significant ways: we generalize them  to non-scalar queries, and analyze their space-time exponents using techniques  similar to those used for Generic Join.

\paragraph*{Generic Join} Consider a query $Q(\bm X)$ in~Eq.~\eqref{eq:def:cq}, with variables $\var(Q)=\set{A_1, \ldots, A_k}$.  The Generic Join (GJ) algorithm~\cite{DBLP:conf/pods/000118} computes $Q(D)$ in worst-case optimal time given by the AGM bound $O(|D|^{\rho^*(\var(Q))})$~\cite{DBLP:conf/focs/AtseriasGM08}.  GJ fixes an arbitrary order on the variables, $A_1,\dots A_k$, and computes iteratively partial assignments $\boldsymbol{y}_j = (a_1, \ldots, a_j)$ on $\boldsymbol{Y}_j=(A_1,\dots A_j)$, for all $0 \leq j \leq k$.  It starts with the empty assignment $\boldsymbol{y}_0:= ()$ and, in $k$ nested loops, it extends it to one variable after the other, as follows. Assuming a partial tuple $\bm y_{j-1}=(a_1, \ldots, a_{j-1})$, the $j$'th nested loop is:

\begin{align}
  & \textbf{for } a_j \textbf{ in } \bigcap_{i: A_j \in \bm X_i}\supp(R_i[A_j | \boldsymbol{y}_{j-1}]) \textbf{ do} \ldots \label{eq:gj:for}
\end{align}
where $\supp(R_i[A_j|\boldsymbol{y}_{j-1}])$ represents the $A_j$-values in the relation $R_i$, restricted to tuples that agree with the values\footnote{Formally, $\supp(R_i[A_j|\boldsymbol{y}_{j-1}]) = \setof{\bm z[A_j]}{\bm z \in \supp(R_i), \bm z[\boldsymbol{Y}_{j-1}]=\boldsymbol{y}_{j-1}[\bm X_i]}$.} $\boldsymbol{y}_{j-1}$.  GJ computes the intersection above in time proportional to the smallest set, for example by iterating over the smallest set and probing (using hash tables) in all the other sets.
If $Q$ is a full SPQ, then in the inner-most loop, GJ simply outputs the assignment $\boldsymbol{y}_k \mapsto s$, where $s := \bigotimes_{i}R_i(\boldsymbol{y}_k[\bm X_i]) \in \K$.  If $Q$ is a scalar query, then the innermost loop computes the sum of these values $s \in \K$.  In the general case, $\emptyset \subseteq \bm X \subseteq \var(Q)$, GJ  maintains a hash-table $OUT\colon \dom^{\bm X} \rightarrow \K$ to store the current output, and the inner-most loop updates $OUT(\bm y_k[\bm X]) := OUT(\bm y_k[\bm X])\oplus s$.  The time complexity of GJ is $O(|D|^{\rho^*(\var(Q))})$.  Its space complexity, i.e. the memory size required in addition to the input database, consists of the $k$ variables $a_1, \ldots, a_k$, with a total size of $O(1)$, plus the space required to store the output $OUT$, whose size is $O(|D|^{\rho^*(\bm X)})$.

\begin{definition}
  \label{def:gj:plans} A \emph{Generic Join Plan} of a query $Q$ is a total order on its variables, $\Pi = (A_1, \ldots, A_k)$.  We denote by $\gj$ (resp. $\gj(Q)$) the set of all Generic Join plans (of $Q$).  The space-time exponents of any $\Pi \in \gj(Q)$ are $e(\Pi) = (\rho^*(\boldsymbol{X}), \rho^*(\var(Q)))$; in particular, if $Q$ is a scalar query, then the space-time exponents are $(0,\rho^*(\var(Q)))$.
\end{definition}

\begin{theorem}[\cite{DBLP:journals/sigmod/NgoRR13}] \label{th:generic:join}
  If $\Pi \in \gj(Q)$, then $\Pi$ computes $\sem{Q}$ in space and time given by $e(\Pi)$.  Concretely, the space used is $O(|D|^{\rho^*(\bm X)})$, and the time spent is $O(|D|^{\rho^*(\var(Q))})$.
\end{theorem}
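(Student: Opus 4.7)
The plan is to establish the time and space bounds separately. The time bound is essentially the classical Generic Join analysis of \cite{DBLP:journals/sigmod/NgoRR13}, so I would recall its structure and then verify that the added semi-ring aggregation and projection onto $\bm X$ do not affect it. The space bound, by contrast, is genuinely new here and requires accounting for the output table $OUT$ in addition to the loop state.

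First, for the time bound, I would prove by induction on $j = 0, 1, \ldots, k$ that the total number of partial tuples $\bm y_j$ enumerated at depth $j$ is $O(|D|^{\rho^*(\bm Y_j)})$, where $\bm Y_j = (A_1, \ldots, A_j)$. The base case $j=0$ is trivial. For the inductive step, fix a partial tuple $\bm y_{j-1}$. Computing the intersection in Eq.~\eqref{eq:gj:for} is done by iterating through the smallest participating set $\supp(R_i[A_j|\bm y_{j-1}])$ and probing the others, which (ignoring log factors for sorted-lookup or hashing) takes time proportional to that smallest size. Summing over all $\bm y_{j-1}$ and then extending by one variable, a Hölder-type argument using any fractional edge cover of $\bm Y_j$ bounds the aggregate cost by $|D|^{\rho^*(\bm Y_j)}$. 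Telescoping to $j=k$ gives the $O(|D|^{\rho^*(\var(Q))})$ time bound. The innermost aggregation $OUT(\bm y_k[\bm X]) \leftarrow OUT(\bm y_k[\bm X]) \oplus s$ is a constant-cost semi-ring operation plus a hash update, so it does not change the asymptotic bound.

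For the space bound, I would split the memory usage into the loop state and the output table. The nested loops keep $k$ domain values $a_1, \ldots, a_k$ live at any moment, which is $O(1)$ under data complexity since $k$ depends only on the query. The output table $OUT$ stores one entry for each distinct projection $\bm y_k[\bm X]$ encountered in the innermost loop. Since every such projection lies in the support of the full join restricted to $\bm X$, the AGM bound of \cite{DBLP:conf/focs/AtseriasGM08} applied to the fractional edge cover $\rho^*(\bm X)$ yields $|\supp(OUT)| \leq |D|^{\rho^*(\bm X)}$. Adding the two contributions gives the claimed $O(|D|^{\rho^*(\bm X)})$ space bound; in particular, for a scalar query $OUT$ is a single semi-ring cell and the space is $O(1)$.

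I expect the main obstacle to be the inductive time analysis, which rests on the Shearer/Hölder inequality linking intersection sizes to the AGM bound. Since one may invoke \cite{DBLP:journals/sigmod/NgoRR13} directly for the full-query case, the only genuine new work is to verify that projecting onto $\bm X$ and performing the semi-ring aggregation only add constant-time bookkeeping to the innermost loop, and that the AGM bound applied to $\bm X$ itself delivers the space bound for $OUT$.
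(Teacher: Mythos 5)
Your proposal is correct and follows essentially the same route as the paper: the time bound is the classical Generic Join analysis (which the paper formalizes as a H\"older-type induction in its Lemma on iterating $\sem{Q[A|\bm y]}$ over $\bm y \in \sem{Q[\bm Y]}$ in time $O(|D|^{\rho^*(\bm Y \cup \{A\})})$), and the space bound is obtained exactly as you do, by charging $O(1)$ for the loop state and bounding $|\supp(OUT)|$ by the AGM bound on $\bm X$. The paper itself treats this theorem as a citation of prior work plus the output-size accounting given in the surrounding text, so no further work is needed.
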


When the cardinalities of individual relations are known, $N_1=|R_1|, N_2=|R_2|, \ldots$, then tighter space-time bounds are given by $O(\prod_i N_i^{w_i^*})$, where $w_1^*, w_2^*, \ldots$ is the fractional edge cover of $\bm X$ (or $\var(Q)$ respectively) that minimizes $\prod_i N_i^{w_i^*}$.  In the special case when $N_1=N_2=\cdots = N$ this is equal to $O(|D|^{\rho^*})$.  In this paper we prefer to use the simpler formula, and will note where the tighter formula is needed.

\paragraph*{Pseudo-trees} For a \emph{full} query, any GJ plan is worst-case optimal, because the output size can be as large as $O(|D|^{\rho^*(\var(Q))})$~\cite{DBLP:conf/focs/AtseriasGM08}. But when $Q$ is not full, GJ is no longer optimal.  A pseudo-tree, defined below, improves the time exponent of GJ, without increasing its space exponent. 

\begin{example} \label{ex:query:3path}
  For a simple intuition, consider the 3-path scalar query in Example~\ref{ex:3path}.  GJ computes it using 3 nested loops, corresponding to the variables $A, B, C$; intuitively it computes the expression $\sum_a \sum_b \sum_c R_1(a,b) R_2(b,c)$, with runtime $O(|R_1|\cdot|R_2|)$.  A pseudo-tree based algorithm, in contrast, iterates over $B$ first, then performs two independent loops that iterate over $A$ and $C$ respectively; this corresponds to the expression $\sum_b\left(\sum_a R_1(a,b)\right)\cdot\left(\sum_c R_2(b,c)\right)$, and the runtime is $O(|R_1|+|R_2|)$.
\end{example}

If $T=(\bm V, E)$ is a directed tree and $A \in \bm V$, then we denote by $\anc(A)$ the set of ancestors of $A$ excluding $A$, and write $\ancc(A) = \anc(A) \cup \set{A}$.  Similarly, we write $\desc(A), \descc(A)$ for the set of descendants of $A$, without and with $A$ respectively. 

Fix an SPQ $Q(\boldsymbol{X}) \leftarrow \bigotimes_i R_i(\boldsymbol{X}_i)$ (see Eq.~\eqref{eq:def:cq}), and let $\bm V = \var(Q)$.  

\begin{definition}[\cite{DBLP:conf/ijcai/FreuderQ85,DBLP:series/synthesis/2019Dechter}]
\label{def:pt}
A {\em pseudo-tree} (PT) of $Q$ is a directed tree $P=(\boldsymbol{V},E)$, satisfying:
\begin{itemize}
\item Every atom $R_i(\boldsymbol{X}_i)$ is contained in a branch:  formally, $\exists A\in \bm V$ such that $\boldsymbol{X}_i\subseteq \ancc(A)$.
\end{itemize}

The term \emph{pseudo-tree} was introduce by Freuder and Quinn in the context of constraint optimization~\cite{DBLP:conf/ijcai/FreuderQ85}, and studied extensively by Dechter~\cite{DBLP:series/synthesis/2019Dechter}.  Pseudo-trees are a generalization of \emph{normal trees}, also called \emph{Tr\'emaux trees} used in graph theory~\cite[Ch.1]{DBLP:books/daglib/0030488}, which are pseudo-trees where every tree edge is also an edge in the graph.
\end{definition}

For any variable $A \in \bm V$, we denote by $\out(A) := \desc(A) \cap \bm X$ and by $\outt(A) := \descc(A) \cap \bm X$.

\begin{definition} \label{def:space:time:pt} 
The class of query plans $\pt(Q)$ consists of pseudo-trees $P$ of $Q$ and their space and time exponents are defined as:
  $$s(P) := \max_{A\in \var(Q)}\rho^*(\outt(A)),\quad t(P) := \max_{A\in \var(Q)}\rho^*(\ancc(A) \cup \out(A)).$$
\end{definition}

When $Q$ is a scalar query, then the space exponent is $s(P)=0$.  Pseudo-trees strictly dominate GJ, i.e. $\pt \prec \gj$ as per Def.~\ref{def:domination} because, any variable order of a GJ can be converted into a linear \ptree $A_1 - A_2 - \cdots - A_k$, and the two plans have the same space-time complexity thus $\pt \preceq \gj$.  On the other hand, Example~\ref{ex:query:3path} shows that $\gj \not\preceq \pt$.  This establishes the first arrow in Fig.~\ref{fig:plan-families}.

It remains to describe an algorithm that, given a pseudo-tree $P$ for $Q$, computes $\sem{Q}$ in space-time given by the exponents in Def.~\ref{def:space:time:pt}.  First we need to introduce some notations.  If $R(\bm X) : \dom^{\bm X} \rightarrow \K$ is a $\K$-relation and $A \in \bm X$, then we write $\supp(R[A]) := \setof{\bm x[A]}{\bm x \in \supp(R)}$ for the projection of $\supp(R)$ on the attribute $A$; in other words, this is the $A$-column of $R$.  We generalize this as follows.  Let $\bm Y$ be a set of variables s.t. $A \not\in \bm Y$, and $\bm y \in \dom^{\bm Y}$. We write $\supp(R[A|\bm y])$ for the projection on $A$ of the  tuples in $\supp(R)$ that agree with $\bm y$:  $\supp(R[A|\bm y]):=\setof{\bm x[A]}{\bm x \in \supp(R), \bm x[\bm Y]=\bm y[\bm X]}$.

Algorithm~\ref{alg:pt} (ignore the gray lines for now) evaluates $Q(\bm X)$ recursively, by following the structure of the pseudo-tree $P$.  We start from the root $A_1$, and proceed recursively in the tree.  Assume we have followed a path $\bm Y = (A_1, A_2, \ldots, A_{j-1})$, and have bound these variables to the tuple $\bm y \in \dom^{\bm Y}$.  For a child $A$ of $A_{j-1}$, $\textsc{solve}(A,\bm y)$ first computes the following $\K$-relation $\sem{Q[A|\bm y]}: \dom^A \rightarrow \K$:

\newcommand{\refalgptc}{3\xspace}

\newcommand{\TypedFunction}[3]{\textbf{function} \textsc{#1}(#2)$\colon$#3}

\begin{figure}
\begin{minipage}{0.49\textwidth}
\begin{algorithm}[H]
    \caption{ PT Algorithm (excl. gray parts)
    \textcolor{gray}{\textbf{Algorithm \refalgptc} PTC Algorithm (incl. gray parts)}}
    \label{alg:pt}
    \begin{algorithmic}[1] %
      \Statex \textbf{Input:} $\text{Query } Q(\bm X), \text{pseudo-tree } P,$
      \Statex \hspace{10mm} $\textcolor{gray}{\text{caches } \boldsymbol{C}\subseteq \var(Q)}$
        \Statex \textbf{Output:} $\sem{Q}\colon \dom^X \rightarrow \K$
        \State \Return \Call{solve}{$\Root(P),()$}
                
    \algrenewcommand\algorithmicfor{\textcolor{gray}{\textbf{for}}}
        \For{\textcolor{gray}{$A\in \boldsymbol{C}$}}
            \State \textcolor{gray}{$M_A \gets \emptyset$} 
        \EndFor  
    \algrenewcommand\algorithmicfor{\textcolor{black}{\textbf{for}}}
        \State \TypedFunction{solve}{$A,\boldsymbol{y}$}{$\dom^{\outt(A)}\rightarrow \K$}
    \algrenewcommand\algorithmicif{\textcolor{gray}{\textbf{if}}}
        \If{\textcolor{gray}{$A\in \boldsymbol{C} \land {\boldsymbol{y}}[\con(A)]\in \keys{M_A}$}}
    \algrenewcommand\algorithmicreturn{\textcolor{gray}{\textbf{return}}}
            \State \Return \textcolor{gray}{$M_A({\boldsymbol{y}}[\con(A)])$}
    \algrenewcommand\algorithmicreturn{\textcolor{black}{\textbf{return}}}
        \EndIf
    \algrenewcommand\algorithmicif{\textcolor{black}{\textbf{if}}}
        \State OUT $\gets \{\boldsymbol{z} \mapsto \0 \mid \boldsymbol{z}\in\dom^{\outt(A)}\}$ \label{line:ptc:miss}
        \For{$a\mapsto s \in \sem{Q[A|\boldsymbol{y}]}$} \label{line:for:a}
        \Comment{see Eq.~\eqref{eq:qay}}
            \State $\boldsymbol{y}'\gets (\boldsymbol{y}, a)$
            \State TMP $\gets \{a[\bm X\cap \{A\}] \mapsto s\}$
            \For{$B\in child(A)$}
                \State TMP $\gets$ TMP $\otimes$ \Call{solve}{$B,\boldsymbol{y}'$} \label{line:pt:recurse}
            \EndFor 
            \State OUT $\gets$ OUT $\oplus$ TMP 
        \EndFor
    \algrenewcommand\algorithmicif{\textcolor{gray}{\textbf{if}}}
        \If{\textcolor{gray}{$A\in \boldsymbol{C}$}}
            \State \textcolor{gray}{$M_A \gets M_A \cup \{{\boldsymbol{y}}[\con(A)] \mapsto $ OUT $\}$}
        \EndIf
    \algrenewcommand\algorithmicif{\textcolor{black}{\textbf{if}}}
        \State \Return OUT
    \end{algorithmic}
\end{algorithm}
\vspace{-1em}

\end{minipage}
\hfill
\begin{minipage}{0.5\textwidth}
\begin{algorithm}[H]
    \begin{algorithmic}[1] %
    \State $\text{OUT}^{B}\gets 0$
    \For{$b \mapsto s^b \in \sem{Q[B]}$}  \Comment{$ s^b= 1$}\\
            \Comment{$b\in R_1[B]\cap \dots \cap R_5[B]$}
        \State $\text{OUT}^{A}\gets 0$
        \For{$a \mapsto s^a\in \sem{Q[A|b]}$}         \Comment{$s^a=R_1(a,b)$}
            \State $\text{OUT}^{A}\gets \text{OUT}^{A} + s^a$
        \EndFor
        \State $s^b \gets s^b \cdot \text{OUT}^A, \text{OUT}^{E}\gets 0$
        \For{$e\mapsto s^e\in \sem{Q[E|b]}$} \Comment{$s^e=R_4(b,e)$}\\
                \Comment{$e\in R_4[E|b]\cap R_6[E]\cap R_7[E]$} %
            \State $\text{OUT}^{D}\gets 0$
            \For{$d\mapsto s^d\in \sem{Q[D|be]}$} \\
                    \Comment{$s^d=R_3(b,d)\cdot R_6(e,d)$}
                \State $\text{OUT}^{D}\gets \text{OUT}^{D} + s^d$
            \EndFor
            \State $s^e\gets s^e \cdot \text{OUT}^D, \text{OUT}^{F}\gets 0$
            \For{$f\mapsto s^f\in \sem{Q[F|be]}$} %
                \State $\text{OUT}^{F}\gets \text{OUT}^{F} + s^f$
            \EndFor
            \State $s^e \gets s^e \cdot \text{OUT}^F$,  $\text{OUT}^{E}\gets \text{OUT}^{E} + s^e$
        \EndFor
        \State $s^b \gets s^b \cdot \text{OUT}^E, \text{OUT}^{C}\gets 0$
        \For{$c\mapsto s^c\in \sem{Q[C|b]}$} %
            \State $\text{OUT}^{C}\gets \text{OUT}^{C} + s^c$
        \EndFor
        \State $s^b \gets s^b \cdot \text{OUT}^C$, $\text{OUT}^{B}\gets \text{OUT}^{B} + s^b$
    \EndFor
    \State \Return $\text{OUT}^{B}$
    \end{algorithmic}
    \caption{Alg. \ref{alg:pt} for the \ptree in Fig.~\ref{fig:pt}}
    \label{alg:loopsPt}
\end{algorithm}
\end{minipage}

\end{figure}

\addtocounter{algorithm}{1} 
\begin{align}
  \sem{Q[A|\bm y]} := & \bigsetof{a \mapsto \bigotimes_{i: \bm X_i \setminus \bm Y = \set{A}}R_i(\bm y'[\bm X_i])}{a \in \bigcap_{i: A \in \bm X_i}\supp(R_i[A|\bm y]), \bm y' = (\bm y, a)}\label{eq:qay}
\end{align}

Intuitively, $\sem{Q[A|\bm y]}$ contains the possible values of $A$ that extend $\bm y$ in a manner consistent with~$Q$. 
To compute~\eqref{eq:qay}, the algorithm iterates over all values $a \in \bigcap_i \supp(R_i[A|\bm y])$, by intersecting the $A$-attributes of all relations that contain the attribute $A$ (similarly to Generic Join in Eq.~\eqref{eq:gj:for}), then maps each such value $a$ to $s \in \K$, where $s$ is the product of all $\K$-values of the relations whose last attribute (in the order of the pseudo-tree) is $A$: the condition $\bm X_i \setminus \bm Y = \set{A}$ checks that $A$ is the last attribute of $R_i(\bm X_i)$, while $R_i(\bm y'[X_i]) \in \K$ is its value associated to $\bm y' := (\bm y, a)$.  Like GJ, the algorithm computes the intersection of $\supp(R_i[A|\bm y])$ on the fly in time proportional to the smallest set.  That is, the algorithm iterates over the values $a\mapsto s$ in $\sem{Q[A|\bm y]}$, and performs a recursive call on each child $B$ of $A$.  When $Q$ is a scalar query, then both TMP and OUT are scalars (because $\outt(A)=\emptyset$), and the algorithm simply multiplies the values of all children $B$, then adds up these values over all $a$'s.  When $Q$ has output variables $\bm X$, then both OUT and TMP are $\K$-relations.  OUT has type $\dom^{\outt(A)}\rightarrow \K$.  Initially, TMP has attributes $\bm X \cap \set{A}$ (i.e. either $\emptyset$ or $\set{A}$), while the natural join\footnote{The natural join $R_1 \otimes R_2$ of relations $R_i\colon \dom^{\bm U_i} \rightarrow \K$ is defined as $(R_1 \otimes R_2)(\bm u) := R_1(\bm u[\bm U_1]) \otimes R_2(\bm u[\bm U_2])$ where $u\in \dom^{\bm U_1 \cup \bm U_2}$. Note the schemas of TMP and $\textsc{solve}(B,\boldsymbol{y}')$ are disjoint and, hence, it degenerates to a Cartesian product.}  $TMP \leftarrow TMP \otimes \textsc{solve}(B,\bm y')$ extends TMP with $\outt(B)$, so that, after processing all children $B$ of $A$, TMP has the same schema as OUT, and the algorithm adds up these values over all $a$'s.  
\ifArxiv
We prove the following in the appendix:
\fi

\begin{restatable}{theorem}{thmalgopt}    
\label{thm:algo:pt}
If $P\in \pt(Q)$, then Algorithm \ref{alg:pt} computes $\sem{Q}$ in time $ {O}(|D|^{t(P)})$ and uses $O(|D|^{s(P)})$ space (where $s,t$ are given by Def.~\ref{def:space:time:pt}).
\end{restatable}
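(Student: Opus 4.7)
The plan is to prove correctness by structural induction on the pseudo-tree $P$. I would first exploit the pseudo-tree property to assign each atom $R_i(\bm X_i)$ a unique \emph{home} node $A_i^\star$, namely the $P$-deepest variable in $\bm X_i$, which is well-defined because $\bm X_i$ sits on a single root-to-leaf branch. The atoms appearing as factors inside Eq.~\eqref{eq:qay} are exactly those with home $A$, so every atom of $Q$ is multiplied in exactly once across the recursion. A bottom-up induction on $P$ then shows that $\textsc{solve}(A,\bm y)$ returns the $\K$-relation on $\dom^{\outt(A)}$ obtained by evaluating the sub-query over atoms homed in $\descc(A)$, partially bound by $\bm y$, with the non-output descendants of $A$ summed out. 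The base case (leaves of $P$) is immediate from Eq.~\eqref{eq:qay}, and the inductive step is a direct verification that the $\otimes$/$\oplus$ accumulation across children realizes the semi-ring evaluation of the subtree. Applied at the root, this produces $\sem{Q}$.

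\textbf{Space.} At any instant only the activation records on the current root-to-active-node path exist, and that path has constant length in data complexity. Each record stores OUT and TMP, which are $\K$-relations whose schemas are contained in $\outt(A)$. Because every atom involving a variable in $\descc(A)$ has its home in $\descc(A)$ (a direct consequence of the pseudo-tree property), any fractional cover of $\outt(A)$ realizing $\rho^*(\outt(A))$ can be assembled from subtree atoms, so AGM bounds the support of OUT (and of TMP after each child has been processed) by $O(|D|^{\rho^*(\outt(A))})$. The ancestor slots and the Generic-Join-style intersection used to enumerate $\sem{Q[A|\bm y]}$ in Eq.~\eqref{eq:qay} contribute only $O(1)$ each. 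Maximizing over $A$ yields the stated $O(|D|^{s(P)})$ bound.

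\textbf{Time and main obstacle.} For the time bound I would show that the total work $N_A$ spent in the non-recursive code of $\textsc{solve}(A,\cdot)$, summed over all ancestor tuples $\bm y$ reached by the recursion, is $O(|D|^{\rho^*(\ancc(A)\cup \out(A))})$. The key observation is that distinct children $B$ of $A$ span variable-disjoint subtrees, so the $\otimes$/$\oplus$ chain builds a TMP of size $\prod_{B\in \text{child}(A)}|\textsc{solve}(B,\bm y')|$, and each tuple in this product corresponds bijectively to a distinct joint extension of $(\bm y,a)$ to $\dom^{\ancc(A)\cup \out(A)}$ that is consistent with every atom touching these variables. Because the Generic-Join-style intersection at each ancestor variable enforces projection constraints from every relation of $Q$ that contains that variable, the enumerated tuple set at level $A$ is bounded via AGM on $\ancc(A)\cup \out(A)$ with the full atom set of $Q$ by $O(|D|^{\rho^*(\ancc(A)\cup \out(A))})$. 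The main obstacle is precisely this step: the naive bound $|D|^{\rho^*(\ancc(A))+\sum_B \rho^*(\outt(B))}$ is strictly weaker than $|D|^{\rho^*(\ancc(A)\cup \out(A))}$ by subadditivity of $\rho^*$, so one must resist the multiplicative decomposition and instead view the outer ancestor iteration and the inner Cartesian products over children as jointly enumerating a single valid-tuple set, to which one global AGM bound applies. Summing over the constant number of nodes then yields total time $O(|D|^{t(P)})$.
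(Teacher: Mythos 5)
Your proposal is correct and follows essentially the same route as the paper's proof: correctness by bottom-up induction on the pseudo-tree using the fact that each atom is "consumed" exactly once at its deepest variable (the paper's condition $\bm X_i \setminus \bm Y = \set{A}$ in Eq.~\eqref{eq:qay}), space by bounding OUT/TMP through the AGM bound on $\outt(A)$ along a constant-depth recursion stack, and time by charging each node $A$ a single global bound $O(|D|^{\rho^*(\ancc(A)\cup\out(A))})$ on the jointly enumerated tuple set rather than multiplying per-level bounds. The obstacle you flag is exactly the point the paper handles via its Generic-Join lemma (the H\"older-style amortization over ancestor tuples) combined with the output-optimal accounting of the Cartesian products over children.
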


\begin{example}
\label{ex:ptalgo}
Consider the scalar query $\QT()$ depicted in Fig.~$\ref{fig:ptQuery}$, over 7 $\N$-relations.  Its fractional edge cover number is $\rho^*:=\rho^*(\var(Q))=4$.  Generic Join computes the query using 6 nested loops, one for each variable $A,B,\ldots,F$, and its runtime is $O(|D|^4)$.  Consider now the \ptree $P$ in Fig.~\ref{fig:pt}.  We check that it satisfies the condition in Def.~\ref{def:pt}: indeed, each relation is included in a branch, e.g., relation $R_5(B,F)$ is included in the branch $B-E-F$.  Its space-time exponents are $e(P)=(s(P),t(P))=(0,\nicefrac{3}{2})$; indeed, $s(P)=\rho^*(\emptyset)=0$ because $\QT$ is a scalar query, and $t(P)=\max(\rho^*(\ancc(A)), \rho^*(\ancc(B)), \rho^*(\ancc(C)), \rho^*(\ancc(D)), \dots) = \max(1,1,1,\nicefrac{3}{2}, 1, \nicefrac{3}{2}) = \nicefrac{3}{2}$.  Algorithm~\ref{alg:loopsPt} (the expansion of Algorithm~\ref{alg:pt}) computes $\QT$ following the \ptree.  It starts with a loop for the variable $B$, but, unlike Generic Join, it continues with three independent loops, one for each variable $A,E,C$ respectively.  The for-loop for $E$ contains another two, independent for-loops, for $D$ and $F$.  Notice that the first loop, for $B$, is over the intersection of the attributes $B$ of all relations that contain $B$ (to reduce clutter we omitted $\supp$ in the comment), while the associated value $s^b$ is $1$, because there are no relations that ``end'' at $B$.  On the other hand, the loop for $A$ associates to $s^a$ the value $R_1(a,b)$, because $A$ is the last attribute of the relation $R_1(A,B)$.  Similarly, the loop for $e$ is over the intersection of the $E$-columns $R_4, R_6, R_7$ (where $R_4$ is restricted to the value $b$), and the value we associate to $e$ is $s^e=R_4(b,e)$, because $R_4(B,E)$ is the only relation that ends at $E$.  The runtime of Algorithm~\ref{alg:loopsPt} is $O(|D|^{\nicefrac{3}{2}})$, because it is dominated by the nested loops $B-E-D$ and $B-E-F$, each requiring only $O(|D|^{\nicefrac{3}{2}})$ steps.  Since $\QT$ is scalar, both TMP and OUT are scalars (to reduce clutter we did not include the variables TMP but used $s$ instead).  Consider what happens if we modify the query to have output variables $\bm X = \set{D,F}$.  Then, $\text{OUT}^A, \text{OUT}^C$ are still scalars while $\text{OUT}^D$ is a $\K$-relation with attribute $D$, $\text{OUT}^F$ has the attribute $F$, and $\text{OUT}^B, \text{OUT}^E$ have attributes $DF$.  Further, the space-time exponents become $(\rho^*(DF),\rho^*(BEDF))=(2,2)$ computed, e.g., at $E$.

\begin{figure}

\begin{minipage}[b]{0.23\textwidth}
\begin{figure}[H]
    \centering
    \begin{tikzpicture}[scale = .5,font=\footnotesize]
            \node (A) at (-2,2) {$A$};
            \node (B) at (0,2) {$B$};
            \node (C) at (2,2) {$C$};
            \node (D) at (-2,0) {$D$};
            \node (E) at (0,0) {$E$};
            \node (F) at (2,0) {$F$};

            \draw (A) -- (B) node[midway, above] {$R_1$};
            \draw (B) -- (C) node[midway, above] {$R_2$};
            \draw (B) -- (D) node[midway, left] {$R_3$};
            \draw (B) -- (E) node[midway, right] {$R_4$};
            \draw (B) -- (F) node[midway, right] {$R_5$};
            \draw (D) -- (E) node[midway, below] {$R_6$};
            \draw (E) -- (F) node[midway, below] {$R_7$};

    \end{tikzpicture}
    \caption[]{Query $\QT$}
    \label{fig:ptQuery}
\end{figure}
\end{minipage}
\hfill
\begin{minipage}[b]{0.26\textwidth}
\begin{figure}[H]
    \centering
    \begin{tikzpicture}[scale = .5,font=\footnotesize]
    \node {$B$}
        child {node {$A$}}
        child { node {$E$}
            child {node {$D$}}
            child {node {$F$}}
        }
        child {node {$C$}};                    
    \end{tikzpicture}
    \caption[]{Pseudo-tree for $\QT$}
    \label{fig:pt}
\end{figure}
\end{minipage}
\hfill
\begin{minipage}[b]{0.2\textwidth}
    \begin{figure}[H]
        \centering
            \begin{tikzpicture}[scale = 0.35,font=\footnotesize]
                \node (A) at (4,2) {$A$};
                \node (B) at (6,2) {$B$};
                \node (F) at (5,4) {$F$};
                \node (E) at (3,0) {$E$};
                \node (D) at (5,0) {$D$};
                \node (C) at (7,0) {$C$};

                \draw (A) -- (B);
                \draw (A) -- (C);
                \draw (A) -- (D);
                \draw (A) -- (E);
                \draw (A) -- (F);
                \draw (B) -- (C);
                \draw (B) -- (D);
                \draw (B) -- (E);
                \draw (B) -- (F);
                \draw (E) -- (D);
                \draw (D) -- (C);
            \end{tikzpicture}
        \caption[]{Query $\QTri$}
        \label{fig:tdptQ}
    \end{figure}
\end{minipage}
\hfill
\begin{minipage}[b]{0.27\textwidth}
\begin{figure}[H]
\centering
\begin{center}
    \begin{tikzpicture}[scale = 0.9, font=\footnotesize]
        
        \node[rectangle, draw] (ABCDE) at (0,0) {
        \begin{tikzpicture}[scale=0.3]
            \node[inner sep=1.5pt] {$A$}
                child { node[inner sep=1.5pt] {$B$}
                    child { node[inner sep=1.5pt] {$D$}
                        child {node[inner sep=1.5pt] {$E$}}
                        child {node[inner sep=1.5pt] {$C$}}
                    }
                };
        \end{tikzpicture}
        };
        \node[rectangle, draw] (DEFGH) at (2,.3) {
        \begin{tikzpicture}[scale=0.3]
            \node[inner sep=1.5pt] {$B$}
                child { node[inner sep=1.5pt] {$A$}
                    child { node[inner sep=1.5pt] {$F$}
                    }
                };
        \end{tikzpicture}
        };

        \draw (ABCDE) -- (DEFGH) node[midway, above] {$AB$};
    \end{tikzpicture}
\end{center} 
\caption[]{$\tdpt$ plan for $\QTri$}
    \label{fig:tdptex}
\end{figure}
\end{minipage}
\end{figure}
 
\end{example}

\section{Revisiting Tree Decompositions}
\label{sec:td}

Tree decompositions (TDs) have been extensively studied in the literature~\cite{DBLP:journals/jacm/Grohe07,DBLP:journals/jcss/GottlobLS02,DBLP:journals/talg/GroheM14,DBLP:journals/jacm/Marx13}.  Like a \ptree, a TD may decrease the time exponent of generic join, but it may increase the space exponent.  We briefly review the definition of TDs , then show that they are incomparable to \ptrees.

In this section we fix a query $Q(\boldsymbol{X})\leftarrow \bigotimes_i R_i(\boldsymbol{X}_i)$, as in Eq.~\eqref{eq:def:cq}.

\begin{definition} \label{def:td}
  A tree decomposition of $Q(\boldsymbol{X})$ is a tuple $TD = (T, \chi)$ where $T=(V,E)$ is a directed
  tree and $\chi: V\rightarrow 2^{\var(Q)}$ is a function from the nodes to sets of variables, satisfying:
    \begin{enumerate}
    \item \label{item:td:root} $\bm X\subseteq \chi(\Root(T))$,
    \item $\forall A\in \var(Q)$, the nodes $v$ with $A \in \chi(v)$
      must  form a connected subset of $V$,
    \item $\forall \boldsymbol{X}_i \exists v\in V \text{ s.t. } \boldsymbol{X}_i\subseteq \chi(v)$. We pick an arbitrary such $v$ and say $R_i(\bm X_i)$ is covered by $v$.
    \end{enumerate}
\end{definition}

The sets $\chi(v)$ are called \emph{bags}.  Readers familiar with free-connex tree decomposition may notice that condition (\ref{item:td:root}) is more restrictive, but that's OK for our purpose, because we only consider worst-case optimal algorithms, and do not consider constant-delay algorithms~\cite{DBLP:conf/csl/BaganDG07}.  Any free-connex tree decomposition can be converted into a tree that satisfies Def.~\ref{def:td}, by adding a bag with all output variables $\bm X$, without increasing its worst-case total runtime.

To each tree vertex $v \in V$ we associate a query, as follows.  Let $\bm Y^v := \chi(v)$ and $\bm Z^v := \chi(v) \cap \chi(\parent(v))$ (when $v$ is the root node, then $\bm Z^v := \bm X$) and define the $\K$-relation $R_i^v : \dom^{\bm X_i \cap \bm Y^v}\rightarrow \K$ as $R_i^v := R_i$ when $R_i$ is covered by $v$, and $R_i^v(\bm z) := \left(\{\boldsymbol{z}\mapsto \1 \mid \exists \boldsymbol{x}\in \supp(R_i) \text{ s.t.\ } \boldsymbol{z} = \boldsymbol{x}[\boldsymbol{Y}^v]\}\right)$ otherwise.  In other words, for all $v\in V$ but one, the relation $R^v_i$ is a $\set{\bm 0, \bm 1}$-relation consisting of the projection of $\supp(R_i)$ on the variables $\bm X_i \cap \bm Y^v$.  Define the \emph{sub-query} $Q^v$ at node $v$ as:
\begin{align}
  Q^v(\bm Z^v) \leftarrow & \bigotimes_i R_i^v(\bm X_i \cap \bm Y^v) \otimes  \bigotimes_{w \in \children(v)}Q^w(\bm Z^w)\label{eq:sub:query}
\end{align}
Notice that $\var(Q^v)=\chi(v)=\bigcup_i\var(R_i^v)=\bm Y^v$, since $\boldsymbol{Z}^w \subseteq \chi(v)$ for all $w \in \children(v)$.

We use the TD to compute $Q(\bm X)$ by computing all subqueries $Q^v$, bottom-up.
For each $v \in V$, first compute recursively the subqueries $Q^w(\bm Z^w)$ of its children $w$, materialize these results, then compute $Q^v$ as in Eq.~\eqref{eq:sub:query}.  The materialized results $Q^w(\bm Z^w)$ are called \emph{messages} in the literature.

It remains to decide what plan we use to compute the subqueries $Q^v$. 
This justifies the following:

\begin{definition}%
\label{def:time:space:td}
Let $\calC$ be a class of query plans for evaluating SPQs.  The class of plans $\tdc(Q)$ consists 
of pairs $(TD,\pi)$, where $TD$ is a tree decomposition of $Q$, and $\pi$ is a function that maps each vertex $v$ in $TD$ to a plan $\pi^v \in \calC(Q^v)$.  The space and time exponents are
  \begin{align}
      s(TD, \pi) = \max_v s(\pi^v),\quad t(TD, \pi) = \max_v t(\pi^v). \label{eq:tdst}
  \end{align}
\end{definition}

\begin{restatable}{theorem}{thmalgotdc}    
  \label{thm:algo:tdc} Let $s, t$ be the space and time exponents of the plans $\calC$, such that for every query $Q'$ and plan $\pi \in \calC(Q')$, $\sem{Q'}$ can be computed in space $O(|D|^{s(\pi)})$ and time $O(|D|^{t(\pi)})$.  Then, for every plan $(TD,\pi) \in\tdc(Q)$, $\sem{Q}$ can be computed in space $O(|D|^{s(TD,\pi)})$ and time $O(|D|^{t(TD,\pi)})$, where $s(TD,\pi)$ and $t(TD,\pi)$ are given by Eq.~\eqref{eq:tdst}.
\end{restatable}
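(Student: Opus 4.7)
I would instantiate the standard bottom-up message-passing algorithm over the tree decomposition. Traverse $T$ in post-order: at each vertex $v$, assume inductively that $\sem{Q^w}$ has been computed and materialized for every child $w$, then invoke the hypothesized plan $\pi^v \in \calC(Q^v)$ on the input consisting of the projections $R_i^v$ of the original relations together with the materialized child messages $\sem{Q^w}$, producing $\sem{Q^v}$. After materializing this new message, the child messages can be freed. The final output is $\sem{Q^{\Root(T)}} = \sem{Q}$, which lies in $\dom^{\bm X}$ by Definition~\ref{def:td}(\ref{item:td:root}).

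Correctness I would prove by induction on $T$: the inductive claim is that $\sem{Q^v}$ coincides with the marginal over $\chi(v) \setminus \bm Z^v$ of the product of all atoms of $Q$ that are covered in the subtree rooted at $v$. The base case is immediate from the definition of $Q^v$ at a leaf. For the inductive step, the key is Eq.~\eqref{eq:sub:query} combined with the tree-decomposition connectedness axiom: any variable in $\chi(w) \setminus \chi(v)$ appears only in atoms assigned to the subtree of $w$, so summing it out inside $\sem{Q^w}$ before combining with $R_i^v$'s and sibling messages matches the semantics of $Q$ at the root.

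For the space bound, observe that at any instant the algorithm holds only the working memory of the currently active $\pi^v$, together with the messages retained for the already-completed siblings along the path from $\Root(T)$ to $v$. Since $T$ has constant size in data complexity, this amounts to $O(1)$ many messages, each of size $O(|D|^{s(\pi^w)}) \leq O(|D|^{s(TD,\pi)})$, and the active plan itself uses $O(|D|^{s(\pi^v)}) \leq O(|D|^{s(TD,\pi)})$. For time, I would sum the cost of the constantly many per-vertex invocations of $\pi^v$, each of which is $O(|D|^{t(\pi^v)}) \leq O(|D|^{t(TD,\pi)})$.

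The main obstacle will be making rigorous that, when $\pi^v$ is fed the augmented input including child messages of potentially super-linear size $O(|D|^{\rho^*(\bm Z^w)})$, its running time is still $O(|D|^{t(\pi^v)})$ rather than the naive $O(|D^v|^{t(\pi^v)})$ that a black-box application of the hypothesis would give. The plan is to observe that each message has schema $\bm Z^w \subseteq \chi(v)$, so it merely provides an additional covering option in the AGM-type bound for $Q^v$: including it can only tighten, not inflate, the per-relation cost, and the worst-case bound derived from the original relations alone already dominates $t(\pi^v)$. The same per-relation AGM-style accounting used for $\gj$ (cf.\ the discussion after Theorem~\ref{th:generic:join}) and for $\pt$ (Theorem~\ref{thm:algo:pt}) transfers to any class $\calC$ whose exponents admit such a justification, and an analogous argument bounds the space of $\pi^v$ by $|D|^{s(\pi^v)}$.
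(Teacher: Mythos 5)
Your proof follows essentially the same route as the paper: bottom-up materialization of the messages $\sem{Q^w}$, one invocation of $\pi^v$ per bag, and the constant size of $T$ to bound space and time by the maximum per-bag cost. The one subtlety you flag --- that child messages can have super-linear size $O(|D|^{\rho^*(\bm Z^w)})$, so the hypothesis on $\calC$ cannot be applied as a black box relative to $|D^v|$ --- is exactly the point the paper addresses by formally requiring $\pi^v$ to be a plan for $Q^v$ under normalized cardinality constraints that include the message sizes (Appendix, ``Defining the Space-Time Exponent for Arbitrary Classes of Plans''), which is the rigorous counterpart of your per-relation AGM-style accounting.
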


Two remarks are in order.  First, we notice that the space needed to store the message $Q^v$ that is sent to the parent is already accounted for by the space exponent $s(\pi^v)$.  Second, when computing the space-time exponents of the query $Q^v$, we need to account for both the sizes of the input relations $R_i$, and for the sizes of the incoming messages $Q^w$.  The latter can be asymptotically larger: if the bags are computed using Generic Join, or Pseudo-Trees, then we need to use the tighter upper bound expression $O(\prod_i N_i^{w_i^*})$ rather than $O(|D|^{\rho^*})$, see Sec.~\ref{sec:pt}.  We illustrate with an example.

\begin{example} \label{ex:4cycle}
  Consider the 4-cycle query $Q_\square() \leftarrow E_1(A_1,A_2) \otimes E_2(A_2,A_3) \otimes E_3(A_3,A_4) \otimes E_4(A_1,A_4)$, and the tree decomposition with two bags $\chi(v)=\set{A_1A_2A_3}$, $\chi(w)=\set{A_3A_4A_1}$, where $v$ is the root.  Assume $|E_1|=|E_2|=|E_3|=|E_4|=N$. 
  The sub-query at $w$ is $Q^w(A_1A_3) = E_3(A_3A_4) \otimes E_4(A_1A_4)$.  Its space-time exponents are $(2,2)$, because the optimal fractional edge cover (for both $\var(Q)$ and $\set{A_1,A_3}$) is $w_3=w_4=1$.  This means that GJ can compute it in time $O(N^2)$, and its output takes space $O(N^2)$.
  Consider next the sub-query $Q^v() = E_1(A_1A_2)\otimes E_2(A_2A_3) \otimes Q^w(A_1A_3)$.
  Although $Q^v$ has the shape of a triangle query, GJ does not compute it in time $O(N^{1.5})$ but rather in time $O(N^2)$, because the message $Q^w$ can be as large as $O(N^2)$.  The optimal fractional edge cover, which minimizes $|E_1|^{w_1}\cdot |E_2|^{w_2}\cdot |Q^w|^{w_0}$, is $w_1=w_2=1$ and $w_0=0$, and GJ will compute this query in time $O(N^2)$ and space $O(1)$.   Therefore, the space-time exponents of this tree decomposition are $(2,2)$.
  
  In this simple 4-cycle example, using pseudo-trees instead of GJ to compute the bags does not improve either the space or time exponent.  However, we will see in Example~\ref{ex:TDPTvsTDGJ} that replacing GJ with PT can lead to asymptotic improvements.
\end{example}

When all cardinalities are equal $|R_1|=|R_2|=\cdots$ then  the simplified formula $O(|D|^{\rho^*})$ still gives an upper bound on the time and space complexity, assuming that we ignore the messages $Q^w$ when computing  $\rho^*$; equivalently, we assign each of them the weight $0$.
Hence, for a $\td^{\gj}$ plan with tree decomposition $(T,\chi)$,
we get the following space-time exponents (known as folklore): 
\begin{align}
\label{eq:tdgj}
  s(T,\chi) = &   \max\left(\rho^*(\bm X), \max_{(u,v) \in E(T)}\rho^*(\chi(u)\cap \chi(v))\right),
 & 
  t(T,\chi) = & \max_{v \in V(T)} \rho^*(\chi(v)) = fhw(T,\chi).
\end{align}
In words, the space exponent is the maximal fractional edge cover number of the intersection of adjacent bags (or the maximal output size) while the time exponent is simply the fraction hypertree width of the tree decomposition.

We end this section by comparing $\tdc$ for different $\calC$ and proving some of the domination relations in Fig.~\ref{fig:plan-families}.

\begin{example}
\label{ex:TDPTvsTDGJ}
  $\td^{\pt}$ plans can strictly improve both the time and the space exponents of $\td^{\gj}$ plans.  For example, consider the query $\QTri()$ in Fig. \ref{fig:tdptQ}, and the $\td^{\pt}$ plan  in Fig. \ref{fig:tdptex}.  Using GJ to process the bags results in space-time exponents of $(1,\nicefrac{5}{2})$ while using the PTs results in $(1,2)$.  We can improve the exponents even further, by using a TD with a single bag $ABCDEF$.  To compute it, add the node $F$ as a child of $B$ to the left PT in Fig.~\ref{fig:tdptex}.  The space-time exponents decreased to $(0,2)$.
\end{example}

\begin{restatable}{theorem}{thmtdgjpt}   
\label{thm:tdgj-pt}
The following hold:  $\pt \not\preceq \tdgj, \tdgj \not\preceq \pt,$ and $\forall \calC\colon \td^{\calC}\preceq \calC$.
\end{restatable}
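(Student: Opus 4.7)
The plan is to prove the three claims independently: the direct domination $\tdc \preceq \calC$ by a generic construction, and the two non-dominations by explicit separating queries.

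For $\tdc \preceq \calC$: given any $\Pi \in \calC(Q)$, I build a trivial tree decomposition with a single bag $\chi(\Root) := \var(Q)$. This is a valid decomposition since $\bm X \subseteq \var(Q)$ (Def.~\ref{def:td}.\ref{item:td:root}) and every atom $R_i(\bm X_i)$ is trivially covered by the root. The induced sub-query at the sole bag is $Q^\Root = Q$, so $\pi^\Root := \Pi \in \calC(Q^\Root)$ is a legal assignment. Eq.~\eqref{eq:tdst} then gives $s(TD,\pi) = s(\pi^\Root) = s(\Pi)$ and $t(TD,\pi) = t(\pi^\Root) = t(\Pi)$, so $(TD,\pi) \preceq \Pi$. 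Since this construction works for every $\Pi$ and every $Q$, we get $\tdc \preceq \calC$.

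For $\pt \not\preceq \tdgj$: the separating witness is the scalar 4-path $Q() \leftarrow R_1(A_1,A_2)\otimes R_2(A_2,A_3)\otimes R_3(A_3,A_4)$. The path tree decomposition with bags $\{A_1,A_2\}, \{A_2,A_3\}, \{A_3,A_4\}$ paired with Generic Join achieves space-time exponents $(1,1)$ by Eq.~\eqref{eq:tdgj}: every bag has $\rho^* = 1$, every pairwise tree-edge intersection is a singleton with $\rho^* = 1$, and $\rho^*(\bm X) = 0$. It remains to show that every $P \in \pt(Q)$ satisfies $t(P) \geq 2$. By Def.~\ref{def:pt} each of the three edges $E_i(A_i, A_{i+1})$ must lie on a common root-to-leaf branch of $P$. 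A short case analysis on the choice of root (exploiting the symmetry $A_i \leftrightarrow A_{5-i}$) shows that in every pseudo-tree of the 4-path some node $v$ has three consecutive vertices $A_j, A_{j+1}, A_{j+2}$ among $\ancc(v)$. Three consecutive path vertices have fractional edge cover number exactly $2$ (the two endpoints sit in disjoint edges, forcing both weights to be at least $1$), so $t(P) \geq 2$. Therefore the $\tdgj$-plan with exponents $(1,1)$ is not dominated by any $\pt$-plan.

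For $\tdgj \not\preceq \pt$: the separating witness is the scalar 3-path $Q() \leftarrow R_1(A,B) \otimes R_2(B,C)$. The pseudo-tree $P$ with root $B$ and children $A,C$ is valid ($R_1 \subseteq \ancc(A)$, $R_2 \subseteq \ancc(C)$), and by Def.~\ref{def:space:time:pt} has exponents $e(P) = (0,1)$, since $\rho^*(\{A,B\}) = \rho^*(\{B,C\}) = 1$. I then rule out any dominating $\tdgj$-plan by a case split on the shape of the tree decomposition. If some bag contains all of $\{A,B,C\}$, then $t \geq \rho^*(\{A,B,C\}) = 2$, since $A$ and $C$ sit in disjoint edges. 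Otherwise the atom $R_1$ is covered by a bag not containing $C$ and $R_2$ is covered by a bag not containing $A$, so these covering bags are distinct; the connectivity condition on $B$ (Def.~\ref{def:td}) forces some tree edge whose two endpoints both contain $B$, giving $s \geq \rho^*(\{B\}) = 1$. Neither case meets $(s,t) \leq (0,1)$, so $(0,1)$ is not dominated by any $\tdgj$-plan.

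The main obstacle is the pseudo-tree lower bound in the second part: it requires the combinatorial observation that every pseudo-tree of a path graph has a root-to-leaf branch containing at least three consecutive vertices. The remaining two arguments are either a direct single-bag construction or a short case analysis over tree-decomposition shapes and are routine once the formulas in Eq.~\eqref{eq:tdgj} and Def.~\ref{def:space:time:pt} are in hand.
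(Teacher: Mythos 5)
Your proposal is correct and follows essentially the same route as the paper: the single-bag tree decomposition for $\tdc \preceq \calC$, the 4-path with a three-bag $\tdgj$ plan at $(1,1)$ versus the unavoidable two-edge branch in any pseudo-tree for $\pt \not\preceq \tdgj$, and the 3-path with the root-$B$ pseudo-tree at $(0,1)$ versus the $(1,1)$-or-$(0,2)$ dichotomy for tree decompositions for $\tdgj \not\preceq \pt$. Your case analyses (the ``three consecutive vertices on a branch'' observation and the connectivity argument forcing $s \geq 1$ for arbitrary tree-decomposition shapes) merely make explicit what the paper states tersely.
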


\begin{proof}[Proof Sketch.]
  $\pt \not\preceq \tdgj$ follows from the fact that the 4-path query $R(AB)\otimes S(BC)\otimes T(CD)$ admits a $\tdgj$ query plan $\Pi$ with $e(\Pi) = (1,1)$, while time exponent 1 is not feasible in \pt.  $\tdgj \not\preceq \pt$ is proved by showing that the 3-path query in Example~\ref{ex:3path} has a plan $\Pi \in \pt$ with $e(\Pi) = (0,1)$, while \tdgj allows for plans $\Pi'$ with $e(\Pi') = (1,1)$ or $e(\Pi') = (0,2)$ but not $(0,1)$; none dominate $e(\Pi)=(0,1)$.
  $ \td^{\calC}\preceq \calC$ follows as $Q^v=Q$ for the single-bag TD.
\end{proof}

\section{Caching}
\label{sec:ptc}

In this section, we describe the addition of caching to pseudo-trees.  While this method was introduced in~\cite{DBLP:series/synthesis/2019Dechter}, we extend it here to handle output variables, and perform its (non-obvious!) space-time analysis.  A \emph{cache} is a data structure that maps from a set of \emph{keys} to a set of \emph{values}.

\begin{example}
To motivate caching, consider the 4-path query $Q()\leftarrow R(AB)\otimes S(BC) \otimes T(CD)$, over the semiring $\N$, and consider the linear PT $A-B-C-D$, where $A$ is the root.  The runtime of this plan is given by the AGM bound, $O(N^2)$.  Intuitively, this query plan corresponds to the summation $\sum_a\left(\sum_b R(ab)\cdot \left(\sum_c S(bc) \cdot \left(\sum_d T(cd)\right)\right)\right)$.  We note that the subexpression $M_C(b) := \sum_c S(bc) \cdot \sum_d T(cd)$ is independent of $a$, and, by caching the values $M_C(b)$, we can avoid recomputing this expression.  Similarly, we can cache $M_D(c) := \sum_dT(cd)$.  By adding caches to Algorithm~\ref{alg:pt} we can trade off space for time.  In our example, the two caches decrease the runtime of the PT above from $O(N^2)$ to $O(N)$, while the space increases from $O(1)$ to $O(N)$.
\end{example}

Throughout this section we fix a query $Q(\bm X) \leftarrow \bigotimes_i R_i(\boldsymbol{X}_i)$ and a pseudo-tree $P=(\boldsymbol{V},E)$, where $\bm V=\var(Q)$.   Assume we decide to cache the values returned by $\textsc{solve}(A, \bm y)$ of the recursive Algorithm~\ref{alg:pt}, for some $A \in \bm V$.  The \emph{key} of the cache $M_A$ at $A$ is called the \emph{context} of $A$.

\begin{definition}[\cite{DBLP:series/synthesis/2019Dechter}] \label{def:context}
  The {\em context} of a variable $A\in \boldsymbol{V}$ is defined as
$$\con(A) = \setof{B \in \anc(A)}{\exists C \in  \descc(A) \mbox{, s.t. } B,C \in \boldsymbol{X}_i \mbox{ for some atom $R_i(\bm X_i)$ of $Q$}},$$
and the {\em closed context} of $A$ is $\conn(A)= \con(A)\cup \{A\}$.
\end{definition}

The main property of $\con(A)$ is that the value returned by $\textsc{solve}(A,\bm y)$ depends only on $\bm y[\con(A)]$ and not on the entire tuple $\bm y$.  Therefore, we can cache these values in a cache $M_A$ with key $\con(A)$, whose values are $\K$-relations of type $\dom^{\outt(A)} \rightarrow \K$ (the type returned by $\textsc{solve}(A,\bm y)$).  The type of this cache is $M_A\colon \dom^{\con(A)} \rightarrow (\dom^{\outt(A)}\rightarrow \K)$, which is equivalent, through curry-uncurry, to $M_A\colon \dom^{\con(A) \cup \outt(A)}\rightarrow \K$.  Therefore, the space usage of the cache $M_A$ is given by $\rho^*(\con(A)\cup \outt(A))$.  The time spent by the algorithm at node $A$ will be reduced, because it only needs to call $\textsc{solve}(A,\boldsymbol{y})$ once for each distinct value $\boldsymbol{y}[\con(A)]$.
We are now ready to define a pseudo-tree with caching:

\begin{definition}
\label{def:ptc}
A {\em pseudo-tree with caching} (PTC) of $Q$ is a pair $(P,\boldsymbol{C})$, where $P=(\boldsymbol{V},E)$ is a \ptree of $Q$, and $\boldsymbol{C} \subseteq \boldsymbol{V}$ is a subset of the variables for which we add a cache.  We require $\ptroot(P) \in \boldsymbol{C}$.
\end{definition}

The gray lines in Algorithm~\ref{alg:pt} represent its extension to caching, which we call Algorithm~\refalgptc.  We now compute its space-time complexity. We have already seen the space requirement for a cache $M_A$ and, hence, know the space complexity.
So, let us focus on the time complexity.  If we do not use any caches, then the time complexity of the \texttt{for}-loop in line~\ref{line:for:a} of the Algorithm is $\rho^*(\ancc(A)\cup \out(A))$: this is what we used in Def.~\ref{def:space:time:pt}.  But if some $B \in \ancc(A)$ uses a cache, then it suffices to consider only the set\footnote{$[A,B]$ denotes the set of nodes between $A$ and $B$; $(A,B]$ denotes $[A,B]\setminus\set{A}$.} $\con(B)\cup [A,B]$ instead of $\ancc(A)$.  Indeed, consider two calls to $\textsc{solve}(A,-)$: first, $\textsc{solve}(A,\bm x)$, followed at some later time by a second call $\textsc{solve}(A,\bm y)$.  If $\bm x[\con(B)\cup (A,B]]=\bm y[\con(B) \cup (A,B]]$ then the second call will not happen, because of the cache at $B$.  The only variables in $\ancc(A)$ relevant to the time consumption in line~\ref{line:for:a} are $\con(B)$, and $[A,B]$.  If $A$ has multiple ancestors $B$ with a cache, then we will only consider the lowest one (closest to $A$).  This justifies the following generalization of Def.~\ref{def:space:time:pt}:

\begin{definition} \label{def:ptc:exp}
The class of query plans $\ptc(Q)$ consists of pseudo-trees with caching $(P,\boldsymbol{C})$ of $Q$ and their space and time exponents are defined as:
\begin{align*}
  s(P,\boldsymbol{C}) :=  \max_{A \in \boldsymbol{C}} \rho^*(\con(A) \cup \outt(A)), \quad  
  t(P,\boldsymbol{C}) :=  \max_{A \in V(P)} \rho^*(\con(B_A)\cup [A, B_A]\cup \out(A)), 
\end{align*}
where\footnote{Given a nonempty set $\boldsymbol{S}$ on some branch of the tree, i.e.  $\boldsymbol{S} \subseteq \ancc(A)$ for some $A$, we denote by $\min(\boldsymbol{S})$ its smallest element, i.e. $\min(\boldsymbol{S}) \in \boldsymbol{S}$ and $\boldsymbol{S}\subseteq \ancc(\min(\boldsymbol{S}))$.} $B_A:=\min(\boldsymbol{C}\cap \ancc(A))$. %
\end{definition}

The reader may check that, when there are no caches (i.e., $\boldsymbol{C} = \{\ptroot(P)\}$), then the space and time exponents of the PTCs coincide with those of PTs in Def.~\ref{def:space:time:pt}.
In general, we prove:

\begin{restatable}{theorem}{thmalgoptc}   
\label{thm:algo:ptc}
If $(P,\bm C) \in \ptc(Q)$, then Algorithm \refalgptc computes $\sem{Q}$ in time $ O(|D|^{t(P,\bm C)})$ and uses space $O(|D|^{s(P, \bm C)})$.
\end{restatable}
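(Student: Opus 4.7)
The plan is to split the proof into three parts: soundness of the caching mechanism, the space bound, and the time bound. The first two are relatively direct from the definitions, while the time bound requires an amortized counting argument that I expect to be the main obstacle.

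For correctness, I would show by induction from the leaves of $P$ upward that the value of $\textsc{solve}(A,\bm y)$ depends only on $\bm y[\con(A)]$. This follows from Def.~\ref{def:context}: any variable $B\in\anc(A)\setminus\con(A)$ does not co-occur with any $C\in\descc(A)$ in any atom $R_i(\bm X_i)$, so $B$'s value neither appears in any $\sem{Q[A'|\cdot]}$ for $A'\in\descc(A)$ nor in any product of semi-ring values accumulated at or below $A$. Hence using $\bm y[\con(A)]$ as the cache key is sound, and the cache lookup in Algorithm~\refalgptc returns the same value a fresh call would compute.

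For space, each cache $M_A$ with $A\in\bm C$ is, after curry/uncurry, a $\K$-relation over the variables $\con(A)\cup\outt(A)$. Every tuple in its support is realized by some assignment consistent with $Q$, so by the AGM bound $|M_A|\le O(|D|^{\rho^*(\con(A)\cup\outt(A))})$. Summing over the constantly many caches gives $O(|D|^{s(P,\bm C)})$ total extra space.

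For time, fix a node $A\in V(P)$ and let $B_A=\min(\bm C\cap\ancc(A))$, which exists because $\ptroot(P)\in\bm C$. The cache at $B_A$ ensures that $B_A$'s body executes at most once per distinct $\bm y[\con(B_A)]$, so the chain of nested for-loops going from $B_A$ down to $A$ enumerates only distinct assignments to $\con(B_A)\cup[A,B_A]$ consistent with the query atoms appearing on that chain. Combining this with the for-loop at line~\ref{line:for:a} over $a\in\sem{Q[A|\bm y]}$ and the updates to the OUT table (indexed by $\outt(A)$), I would charge every unit of work done at $A$ to a realized tuple on $\con(B_A)\cup[A,B_A]\cup\out(A)$ and invoke the AGM bound to obtain $O(|D|^{\rho^*(\con(B_A)\cup[A,B_A]\cup\out(A))})$; taking the maximum over all $A$ yields the claimed total. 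The hardest step will be making this accounting precise in the presence of the joins at line~\ref{line:pt:recurse}, where one must show that including $\out(A)$ absorbs the cost of accumulating the OUT tables without double counting across children. I expect to handle this by mirroring the proof of Thm.~\ref{thm:algo:pt} for pure PTs but with $\ancc(A)$ replaced by the shorter tail $\con(B_A)\cup[A,B_A]$; the same bag-level AGM argument applies because the cache at $B_A$ collapses the variables in $\ancc(B_A)\setminus\con(B_A)$ into a single amortized witness per cache key.
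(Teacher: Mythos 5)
Your proposal is correct and follows essentially the same route as the paper's proof: soundness of $\con(A)$ as the cache key (the paper's Lemma on contexts), the AGM bound on $\con(A)\cup\outt(A)$ for each cache's space, and an amortized time analysis in which the cache at $B_A=\min(\bm C\cap\ancc(A))$ limits the work at $A$ to one pass per realized tuple on $\con(B_A)\cup[A,B_A]$, combined with the output-optimal accounting inherited from the proof of Thm.~\ref{thm:algo:pt}. The one step you gloss over --- that iterating the intersection $\sem{Q[A|\bm y]}$ conditioned on the \emph{full} ancestor tuple, summed over distinct values of the shorter key $\con(B_A)\cup(A,B_A]$, is still bounded by $\rho^*(\con(B_A)\cup[A,B_A])$ --- is handled in the paper by a mild strengthening of the Generic Join lemma, and is exactly the kind of detail your ``mirror the PT proof with a shorter tail'' plan would surface and resolve.
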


We end this section by establishing the domination relationships involving $\ptc$ in
\ifArxiv
Fig.~\ref{fig:plan-families} (proofs are in the Appendix):
\else
Fig.~\ref{fig:plan-families}:
\fi
\ptc improves upon \tdgj, and augmenting TDs with \pt or with \ptc give equivalent classes.  
The separation shown in Theorem~\ref{thm:tdgj-pt} implies that \ptc strictly improves upon all classes above it.
However, we state strict dominations collectively later in Corollary \ref{cor:ptcr-tdpt-ptc}

\begin{restatable}{theorem}{thmtdgjptc}   
\label{thm:tdgj-ptc}

The class $\ptc$ dominates $\tdgj$, i.e., $\ptc \preceq \tdgj$. 
\end{restatable}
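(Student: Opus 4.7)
The plan is to construct, given any $\tdgj$ plan $(T, \chi)$ of $Q$ with GJ orderings $\sigma_v$ for each bag $v$, a PTC plan $(P, \bm C)$ with $s(P, \bm C) \leq s(T, \chi)$ and $t(P, \bm C) \leq t(T, \chi)$. First I would build $P$ by a top-down traversal of $T$: for the root bag $v_0$, place the variables $\chi(v_0)$ as a linear chain from the root of $P$ in the order $\sigma_{v_0}$; for each non-root bag $w$ with parent $v$, let $S^w := \chi(v) \cap \chi(w)$ and $N_w := \chi(w) \setminus \chi(v)$, and append $N_w$ as a linear chain in $P$ (in the order $\sigma_w$ restricted to $N_w$), with its first element attached below the deepest variable of $S^w$ already placed in $P$. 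A straightforward induction on the traversal shows that, for every bag $v$, all variables of $\chi(v)$ lie on a single path from the root of $P$. Hence every atom (which is contained in some bag by TD property~3) lies on a single branch of $P$, so $P$ is a valid pseudo-tree. The cache set $\bm C$ consists of the first new variable $A_1^v$ of each non-empty $N_v$; this automatically includes the root of $P$.

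The heart of the argument is the key lemma: for every non-root bag $w$ with parent $v$, $\con(A_1^w) \subseteq S^w$. Suppose $B \in \con(A_1^w)$, so some descendant $C$ of $A_1^w$ in $P$ shares an atom $R_i$ with $B$, and $R_i$ is covered by some bag $v_{R_i}$. The descendants of $A_1^w$ in $P$ are exactly variables placed as ``new'' in bags from the subtree of $w$ in $T$, so TD connectedness applied to $C$ forces $C \notin \chi(v)$, which in turn forces $v_{R_i}$ into the subtree rooted at $w$. Since $B$ sits in some ancestor bag of $w$, connectedness applied to $B$ between that ancestor bag and $v_{R_i}$ shows $B \in \chi(v) \cap \chi(w) = S^w$. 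A similar connectedness argument yields $\outt(A_1^w) \cap \bm X = \emptyset$ for non-root $w$: output variables live in $\chi(\Root(T))$, and TD connectedness prevents them from reappearing as new variables of descendant bags.

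These two facts immediately give the required bounds. For space, each non-root cache contributes $\rho^*(\con(A_1^w) \cup \outt(A_1^w)) \leq \rho^*(S^w)$, while the root cache contributes $\rho^*(\bm X)$, so
$$s(P, \bm C) \leq \max\bigl(\rho^*(\bm X), \max_{(u,w) \in E(T)}\rho^*(\chi(u) \cap \chi(w))\bigr) = s(T, \chi).$$
For time, any $A \in N_u$ has $B_A = A_1^u$ (or $A$ itself if $A = A_1^u$), so $[A, B_A] \subseteq N_u$; combined with $\con(B_A) \subseteq S^u$ (or $\emptyset$ at the root) and $\out(A) \subseteq \chi(u)$ (empty for non-root $u$ and contained in $\chi(\Root(T))$ for the root bag), the relevant set is contained in $\chi(u)$, giving $t(P, \bm C) \leq \max_u \rho^*(\chi(u)) = fhw(T) = t(T, \chi)$.

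I expect the main obstacle to be the key lemma $\con(A_1^w) \subseteq S^w$. It requires a careful two-sided application of TD connectedness — once to force the atom-covering bag $v_{R_i}$ into the subtree of $w$ (via $C$), and once to pull the ancestor $B$ onto the $T$-path through $v$ and $w$. The construction itself and the resulting exponent bookkeeping are routine once the lemma is in hand.
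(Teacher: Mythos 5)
Your proof is correct and constructs essentially the same plan as the paper: a chain of the new variables of each bag, attached below the separator shared with the parent bag, with a cache at the first new variable of each chain. The only difference is organizational --- the paper builds this bottom-up by leaf elimination, using an auxiliary atom $R(\bm Z)$ on the eliminated variables' neighbors to force the separator onto a branch and leaving the final exponent comparison as ``it can be checked,'' whereas you build it top-down and verify the key containment $\con(A_1^w)\subseteq\chi(v)\cap\chi(w)$ (and $\outt(A_1^w)=\emptyset$) directly via tree-decomposition connectedness, which supplies exactly the detail the paper compresses.
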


\begin{proof}[Proof Sketch.] Given a plan $\Pi := ((T,\chi),\pi) \in \td^{\gj}$, we construct a plan $\Pi_0 := (P,\bm C) \in \ptc$ such that $\Pi_0 \preceq \Pi$.  The construction is based on the variable elimination procedure for a tree decomposition~\cite{DBLP:journals/corr/KhamisNRR15}, and proceeds by induction on the number of bags in $T$.  If $T$ has a single bag, then $\Pi$ is essentially a $\gj$ plan, and the claim follows from $\pt \preceq \gj$.  Otherwise, let $v$ be a leaf of $T$, and $p:= \parent(v)$.  We eliminate all variables $\set{A_1, \ldots, A_k} := \chi(v)\setminus \chi(p)$.  Let $\bm Z$ be their neighbors, $\bm Z := \setof{B}{\exists \text{ atom } R_i(\bm X_i), \exists j, \text{ s.t. } A_j, B \in \bm X_i}$.  Let $Q'$ to be the query obtained from $Q$ by removing all variables $A_1, \ldots, A_k$, and adding a new atom $R(\bm Z)$.  Let $\Pi' = ((T',\chi),\pi)$ be the plan obtained from $\Pi$ by removing the leaf $v$.  By induction hypothesis, $\Pi'$ can be converted to a $\ptc$ plan $\Pi_0' = (P',\bm C')$ such that $\Pi_0' \leq \Pi'$.  All variables $\bm Z$ belong to a branch of $\Pi_0'$ (because of the atom $R(\bm Z)$).  Construct the pseudo-tree $P$ from $P'$ by adding a branch $A_1\text{-}A_2\text{-}\cdots\text{-}A_k$ as a child of the last variable in $\bm Z$.  Finally, define $\Pi_0 := (P,\bm C \cup \set{A_1})$ (only $A_1$ receives a cache).  It can be checked that $\Pi_0 \preceq \Pi$, which proves $\ptc \preceq \td^{\gj}$.  
\end{proof}

\begin{restatable}{theorem}{thmtdpttdptc}
\label{thm:tdpt-tdptc}
$\tdpt \preceq \tdptc$. Therefore, $\tdpt \equiv \tdptc$ and $\tdpt \preceq \ptc$. 
\end{restatable}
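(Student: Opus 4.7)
The equivalence $\tdpt \equiv \tdptc$ decomposes into the two dominations $\tdpt \preceq \tdptc$ and $\tdptc \preceq \tdpt$, and the latter is immediate because every PT plan coincides with a PTC plan in which only the root carries a cache. The consequence $\tdpt \preceq \ptc$ follows once we have $\tdpt \preceq \tdptc$, since any $\ptc$ plan $(P,\bm C)$ can be viewed as a $\tdptc$ plan on the trivial TD with a single bag $\var(Q)$, giving $\tdptc \preceq \ptc$, and transitivity of $\preceq$ concludes. The heart of the theorem is therefore the reduction $\tdpt \preceq \tdptc$.

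My plan is a local refinement at each bag. Given $\Pi = ((T,\chi),\pi) \in \tdptc(Q)$ with $\pi^v = (P^v, \bm C^v)$, I process each bag $v$ by induction on $|\bm C^v \setminus \{\ptroot(P^v)\}|$; the base case is trivial. In the inductive step, pick a \emph{deepest} non-root cache node $A \in \bm C^v$, i.e., one such that the subtree of $P^v$ rooted at $A$ contains no further cache. Split $v$ into two bags joined by a new TD edge: an upper bag $v_u$ with $\chi(v_u) = (\chi(v) \setminus \desc(A)) \cup \outt(A)$, equipped with the PT obtained from $P^v$ by collapsing the subtree rooted at $A$ into the chain $A \to \outt(A)\setminus\{A\}$; and a cache bag $v_c$ with $\chi(v_c) = \con(A) \cup \descc(A)$, equipped with the PT obtained by placing the variables of $\con(A)$ as a linear chain above the subtree of $P^v$ rooted at $A$. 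The key geometric fact is that $\con(A)$ is, by its very definition, the set of ancestors of $A$ that share an atom with some descendant of $A$, so every atom of $Q^v$ that touches $\desc(A)$ lies entirely in $\chi(v_c)$, and every atom disjoint from $\desc(A)$ lies entirely in $\chi(v_u)$. This ensures that both bags properly cover $Q^v$ and that the new PTs satisfy Def.~\ref{def:pt}. The re-attachment of the children of $v$ in $T$ to either $v_u$ or $v_c$ uses the same case analysis applied to each separator $\bm Z^{w_j}$, which in $P^v$ lies in a single branch because $P^v$ is a PT for $Q^v$ (treating child messages as atoms); the parent edge of $v$ stays on $v_u$ because $\outt(A) = \descc(A) \cap \bm Z^v$ implies $\bm Z^v \subseteq \chi(v_u)$.

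For the complexity comparison, each space-exponent term $\rho^*(\con(A') \cup \outt(A'))$ contributing to $s(P^v, \bm C^v)$ in Def.~\ref{def:ptc:exp} appears as the output size of exactly one bag in the refined TD (the cache bag introduced for $A'$), so the overall space exponent does not increase. For time, the PT exponent of $v_u$ or $v_c$ at any internal node $B$ corresponds to a set of the form $\con(B_A) \cup [B, B_A] \cup \out(B)$ in the original $P^v$, where $B_A$ is the closest cache-governed ancestor of $B$; this is exactly one of the terms maximised in $t(P^v, \bm C^v)$. Iterating the split at every non-root cache of every bag yields a $\tdpt$ plan dominating $\Pi$. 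The main obstacle I expect is the termwise verification of these two inequalities, and in particular ensuring that the upper PT at $v_u$ never introduces a new ancestor set not already accounted for by some existing cache; this is where choosing a \emph{deepest} cache node $A$ at each step, rather than an arbitrary one, is essential, so that the collapsed subtree contains no further caching structure that would have to be tracked.
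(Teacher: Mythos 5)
Your overall strategy is the same as the paper's: prove $\tdpt\preceq\tdptc$ by repeatedly splitting a bag $v$ at a non-root cache node $A$ into an upper bag and a ``cache bag'' $\chi(v_c)=\con(A)\cup\descc(A)$, with the old cache's content $\rho^*(\con(A)\cup\outt(A))$ reappearing as the message $\bm Z^{v_c}$, and with the children of $v$ reattached according to where their separators land. (Two side remarks: the ``deepest cache'' choice is not actually needed --- the paper splits at an \emph{arbitrary} non-root cache and lets any caches inside $\descc(A)$ migrate into $v_c$, to be eliminated in later induction steps --- and the paper's induction is on the total number of non-root caches across all bags rather than bag-by-bag, but neither difference matters.)

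There is, however, one concrete gap in your construction: you keep the vertex $A$ itself in the upper bag, setting $\chi(v_u)=(\chi(v)\setminus\desc(A))\cup\outt(A)$ and collapsing the subtree at $A$ to the chain $A\to\outt(A)\setminus\{A\}$. After the split, $A$ no longer carries a cache in $v_u$, so by Def.~\ref{def:ptc:exp} its time term becomes $\rho^*(\con(B_A)\cup[A,B_A]\cup\out(A))$ where $B_A$ is now the nearest cached \emph{proper} ancestor of $A$. When $A\notin\bm Z^v$ (so $A\notin\outt(A)$ and $A\notin\out(\parent(A))$), this set is $\{A\}\cup\con(B_A)\cup[\parent(A),B_A]\cup\out(A)$, which is contained in no single term of $t(P^v,\bm C^v)$: the original term at $A$ used $\conn(A)\cup\out(A)$ (since $A$ was its own nearest cache), and the original term at $\parent(A)$, namely $\rho^*(\con(B_A)\cup[\parent(A),B_A]\cup\out(\parent(A)))$, does not contain $A$. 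Since $\rho^*$ of a union is not bounded by the maximum of the $\rho^*$'s, the time exponent can strictly increase, so your claimed termwise correspondence (``the PT exponent at any internal node corresponds to a set of the form $\con(B_A)\cup[B,B_A]\cup\out(B)$ in the original'') fails precisely at $B=A$. The fix is what the paper does: set $\chi(v_u)=(\chi(v)\setminus\descc(A))\cup\outt(A)$ and replace the \emph{closed} subtree $\descc(A)$ by a path through $\outt(A)$ only, so that $A$ survives in the upper bag exactly when $A\in\outt(A)$, in which case $A\in\out(\parent(A))$ and every term of the collapsed path is absorbed into the original term at $\parent(A)$. Your reductions of the remaining claims ($\tdptc\preceq\tdpt$ trivially, and $\tdpt\preceq\ptc$ via the single-bag decomposition and transitivity) are fine and match the paper.
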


\section{Resetting the Cache}
\label{sec:ptcr}

We saw how the addition of caches reduces the time exponent while increasing the space exponent of a pseudo-tree.  We describe here \emph{pseudo-tree with caching and resets}, \ptcr, which allows for a finer tradeoff between space and time.  The basic principle was introduced in~\cite{DBLP:series/synthesis/2019Dechter}, but with no analysis of its complexity.  We provide an analysis, and describe a non-trivial improvement, using a notion called \emph{relevant ancestors}, which reduces the asymptotic time complexity of the algorithm.

\begin{example} \label{ex:ptcr:basic} To motivate resetting caches, consider the scalar query $\QShort()$ in Fig.~\ref{fig:ptcrQuery} and the \ptree consisting of a single path $(A\text{-}B\text{-}C\text{-}D\text{-}E\text{-}F)$, and with root $A$.  Adding caches to $E, F$ leads to space-time exponents of $(1.5,2)$.  For example, the cache $M_E$ for variable $E$ has key $\con(E)=ABD$, and its space usage is given by $\rho^*(ABD)=1.5$.  When $\textsc{solve}(E,abcd)$ is called, Algorithm \refalgptc stores the result in $M_E(abd)$; in later calls, if $c$ changed while $abd$ are the same, the algorithm immediately returns the cached value. Note that once the values of $a$ or $b$ change we can safely discard (or reset) all entries $M_E(ab-)$, because the values $abc$ arrive at $\textsc{solve}(E,-)$ in lexicographic order.  We can reduce the space of the cache by only keeping cached entries whose keys agree on $AB$ -- essentially only storing the results for different values of $D$. %
This decreases the space usage to $\rho^*(D)=1$, eq. for the cache $M_F$.  With this ``reset'' improvement, $\QShort$ can be computed with space-time complexity~$(1,2)$.
\end{example}

We describe now this technique in general.  Fix an SPQ $Q(\boldsymbol{X}) \leftarrow \bigotimes_i R_i(\boldsymbol{X}_i)$.

\begin{figure}
\begin{minipage}[b]{0.23\textwidth}
\begin{figure}[H]
\centering

\begin{center}
    \begin{tikzpicture}[scale = 0.45,font=\footnotesize]
        \node (A) at (4,2) {$A$};
        \node (B) at (6,2) {$B$};
        \node (F) at (2,0) {$F$};
        \node (E) at (4,0) {$E$};
        \node (D) at (6,0) {$D$};
        \node (C) at (8,0) {$C$};

        \draw (A) -- (B);
        \draw (A) -- (C);
        \draw (A) -- (D);
        \draw (A) -- (E);
        \draw (A) -- (F);
        \draw (B) -- (C);
        \draw (B) -- (D);
        \draw (B) -- (E);
        \draw (B) -- (F);
        \draw (F) -- (E);
        \draw (E) -- (D);
        \draw (D) -- (C);
    \end{tikzpicture}
\end{center}
\caption[]{Query $\QShort$}
    \label{fig:ptcrQuery}
\end{figure}
\end{minipage}
\hfill
\begin{minipage}[b]{0.18\textwidth}
\begin{figure}[H]
\centering

    \begin{tikzpicture}[scale = 0.7,font=\footnotesize]

                \foreach \angle/\label in {18/A, 90/B, 162/C, 234/D, 306/E}
                {
                    \node (\label) at (\angle:1) {$\label$};
                }
                \draw (A) -- (B);
                \draw (A) -- (C);
                \draw (D) -- (B);
                \draw (D) -- (C);
                \draw (E) -- (B);
                \draw (E) -- (D);
                \draw (C) -- (B);
            \end{tikzpicture}
    \caption[]{Query $Q_{\pentagon}$}
    \label{fig:badptcrquery}
\end{figure}
\end{minipage}
\hfill
\begin{minipage}[b]{0.2\textwidth}
\begin{figure}[H]
\centering
\begin{tikzpicture}[scale=0.4,font=\footnotesize]

                \node[inner sep=1.5pt] (A) at (0,0) {$A$};
                \node[inner sep=1.5pt] (B) at (0,-1) {$B$};
                \node[inner sep=1.5pt] (C) at (0,-2) {$C$};
                \node[inner sep=1.5pt] (D) at (0,-3) {$D$};
                \node[inner sep=1.5pt] (E) at (0,-4) {$E$};

                \node[anchor=west] at ($(A) + (.5,0)$) {$|$};
                \node[anchor=west] at ($(B) + (.5,0)$) {$A|$};
                \node[anchor=west] at ($(C) + (.5,0)$) {$AB|$};
                \node[anchor=west] at ($(D) + (.5,0)$) {$|BC$};
                \node[anchor=west] at ($(E) + (.5,0)$) {$B|D$};

                \draw (A) -- (B);
                \draw (B) -- (C);
                \draw (C) -- (D);
                \draw (D) -- (E);
            \end{tikzpicture}
    \caption[]{PTCR for $Q_{\pentagon}$}
    \label{fig:badptcr}
    
\end{figure}
\end{minipage}
\hfill
\begin{minipage}[b]{0.36\textwidth}
\begin{figure}[H]
\begin{center}
    \begin{tikzpicture}[scale = 0.45,font=\footnotesize]
        \node (A) at (-2,0) {$A$};
        \node (B) at (-1,2) {$B$};
        \node (C) at (0,0) {$C$};
        \node (D) at (1,2) {$D$};
        \node (E) at (2,0) {$E$};
        \node (F) at (3,2) {$F$};
        \node (G) at (4,0) {$G$};
        \node (H) at (5,2) {$H$};
        \node (I) at (6,0) {$I$};
        \node (J) at (7,2) {$J$};
        \node (K) at (8,0) {$K$};

        \draw (A) -- (B);
        \draw (A) -- (C);
        \draw (A) -- (D);
        \draw (B) -- (C);
        \draw (B) -- (D);
        \draw (B) -- (E);
        \draw (C) -- (D);
        \draw (C) -- (E);
        \draw (D) -- (E);
        \draw (D) -- (F);
        \draw (D) -- (G);
        \draw (E) -- (F);
        \draw (E) -- (G);
        \draw (E) -- (H);
        \draw (F) -- (G);
        \draw (F) -- (H);
        \draw (G) -- (H);
        \draw (I) -- (G);
        \draw (I) -- (H);
        \draw (I) -- (J);
        \draw (I) -- (K);
        \draw (J) -- (G);
        \draw (J) -- (H);
        \draw (J) -- (K);
        \draw (H) -- (K);
    \end{tikzpicture}
\end{center}
\caption[]{Query $\QLong$}
    \label{fig:tdptQuery}
\end{figure}
\end{minipage}
\end{figure} 
\begin{definition}
\label{def:ptcr}
A pseudo-tree with caching and resets (PTCR) of $Q$ is a pair $(P,C)$, where $P=(\boldsymbol{V},E)$ is a pseudo-tree and $C$ is a function $C\colon \boldsymbol{V} \rightarrow \mathbb{N}$.
\end{definition}

Fix a variable $A \in \bm V$, and let its context be $\con(A) = \set{A_1, \ldots, A_n}$; recall that $\con(A) \subseteq \anc(A)$.  We order $\con(A)$ such that $A_1$ is closest to the root and $A_n$ is closest to $A$.  The function $C$ in Def.~\ref{def:ptcr} indicates how many variables from $\con(A)$ will be stored simultaneously.  If $k := \min(C(A),|\con(A)|)$, we partition $\con(A)$ into $\icon(A) \cup \scon(A)$, where $\icon(A) := \{A_1, \dots, A_{n-k}\}$ is the \emph{instantiated context} and $\scon(A) := \{A_{n-k+1}, \dots, A_n\}$ is the \emph{stored context}.  The keys of the cache $M_A$ will always agree on $\icon(A)$ but may differ on $\scon(A)$. %
Any change of a variable in $\icon(A)$ invalidates (resets) $M_A$.  We use a bar to indicate the partition of $\con(A)$: in Example~\ref{ex:ptcr:basic}, if $C(E)=1$ then we write $\con(E)=AB|D$.  If $C(A)=0$ for a variable $A$ then it is equivalent to having no cache\footnote{\label{footnote:c:eq:0} Strictly speaking, if $C(A)=0$ then we cache at $A$ a single returned value $M_A(\boldsymbol{x}[\con(A)]) := \textsc{solve}(A,\bm x)$, and can reuse it as long as $\bm x[\con(A)]$ doesn't change.  However, if $\con(A)$ contains $\parent(A)$, then we cannot expect to ever reuse the value $M_A(\boldsymbol{x}[\con(A)])$, which is equivalent to not having a cache.  If $\parent(A) \not\in \con(A)$, then the PT is suboptimal:
we can simply connect $A$ to $parent(parent(A))$, that is, $A$ and $parent(A)$ become siblings.}.%

A basic algorithm for handling resets described in~\cite{DBLP:series/synthesis/2019Dechter} is as follows.  Each variable $A$ has a cache $M_A$ with key $\con(A)$, and an \emph{instantiated}  tuple $\icon_A \in \dom^{\icon(A)}$ storing the last value of $\icon(A)$. When  $\textsc{solve}(A,\bm y)$ is called, it first checks whether $\icon_A = \bm y[\icon(A)]$.  If yes, then it uses the cache like Algorithm~\refalgptc.  If not, then it resets the cache $M_A$.  However, this algorithm is not optimal, as we explain in the next example.

\begin{example} \label{ex:ptcr:fill:cache} Consider the scalar query $Q_{\pentagon}()$ in Fig.~\ref{fig:badptcrquery} and the PTCR $(P,C)$ in Fig.~\ref{fig:badptcr}.  The figure also shows the partitions of each context.  For example, $\con(E)=B|D$, means that $\icon(E)=B$ and $\scon(E)=D$, and therefore the keys of its cache $M_E(BD)$ always agree on $B$; similarly, $D$ has a full cache $M_D(BC)$.  We want to compute the time complexity of $E$, and for that we need to reason about how often the cache $M_E$ is reset due to $B$ changing its value.  We can do this in two ways. Either we notice that the only ancestor of $B$ is $A$ and, thus, the number of calls to $\textsc{solve}(E,-)$ with a new $B$ is bounded by $\rho^*(AB)$.  Or, we notice that $D$ has a fully stored cache $M_D(BC)$, and calls $\textsc{solve}(E,-)$ only with unique $BC$ pairs, thus the complexity is also bounded by $\rho^*(BC)$.

Our algorithm reduces this to $\rho^*(B)$, by computing the cache $M_D(BC)$ eagerly.  When $\textsc{solve}(D,abc)$ is called for the first time, $D$ will ignore the values $bc$, and instead it fills its cache $M_D(BC)$ entirely with \emph{all} values of $BC$: it iterates over the distinct values\footnote{$\supp(\sem{Q[BC]})$ means the projection of the query on $BC$.  It is a natural extension of Eq.~\eqref{eq:qay}.
} $bc \in \supp(\sem{Q[BC]})$, calls $\textsc{solve}(E,abcd)$ recursively, and stores the result in $M_D(bc)$.  $D$ traverse the $\supp(\sem{Q[BC]})$ by grouping on $B$ (e.g., by sorting it lexicographically), so that the same $B$-values occur consecutively, e.g. $b_1c_1, b_1c_2, b_1c_3, b_2c_1,\ldots$ When $B$ changes from $b_1$ to $b_2$, $E$ resets its cache, but there is no loss of work, because $b_1$ will never be seen again; this is similar to the argument in Example~\ref{ex:ptcr:basic}.  The number of cache resets is reduced\footnote{The careful reader may have noticed that $\rho^*(AB)=\rho^*(BC)=\rho^*(B)=1$, so for this simple example our improved algorithm does not reduce the runtime but the dependency.  The runtime reduction does happen for more complex examples.} to $\rho^*(B)$.  This example was simple, because $D$ had no instantiated variables, $\icon(D)=\emptyset$; the general case requires the technical Definition~\ref{def:ria:ra} below.

Filling the cache eagerly is a significant extension of GJ and all its implementations in practice~\cite{DBLP:conf/icdt/Veldhuizen14,DBLP:journals/pvldb/FreitagBSKN20}, where the values of $AB\ldots$ are examined in strict lexicographic order, e.g. $a_1b_1,a_1b_2,a_2b_1,a_2b_2\ldots$ Instead, the values of $B$ arrive at the function $\textsc{solve}(E,-)$ in sorted order $b_1,b_1,b_2,b_2,b_3,\ldots$
\end{example}

Algorithm~\ref{alg:ptcr} extends Algorithm~\refalgptc from \ptc to \ptcr, and uses the following:

\begin{definition}  \label{def:ria:ra}
  For a PTCR $(P,C)$ and variable $A \in \boldsymbol{V}(P)$, we define the {\em (closed) relevant (instantiated) ancestors} $\ra(A),\raexc(A),\ria(A)$ --- where $\conn(A)\subseteq\ra(A)\subseteq \ancc(A), \con(A)\subseteq\raexc(A)\subseteq \anc(A),$ and (for $\icon(A)\neq \emptyset$) $\icon(A)\subseteq\ria(A)\subseteq \ancc(\min(\icon(A)))$ --- recursively as follows:
\begin{align*}
    \ria(A) &= \begin{cases}
        \ra(parent(A)) \cap \ancc(\min(\icon(A))) & \icon(A)\neq \emptyset, \\
        \emptyset & \icon(A)= \emptyset,
    \end{cases}\\ %
    \raexc(A) & = \ria(A)\cup \scon(A), \quad \ra(A) = \raexc(A) \cup \{A\}. \nonumber
\end{align*}
\end{definition}

The key difference between Algorithms~\ref{alg:ptcr} and~\refalgptc is that the function $\textsc{solve}(A,\bm x)$ fills its cache $M_A$ eagerly.  This is done by \textsc{fillCache}, which iterates over all these values in line~\ref{line:line11}.  Notice that $\supp(\sem{Q[\scon(A)|\ria_A]})$ is a \emph{set}: the algorithm processes these values lexicographically. %
However, the function \textsc{fillCache} is called by $\textsc{solve}(A,-)$ only once for each value of the variables $\ria(A)$.  Referring to Example~\ref{ex:ptcr:fill:cache}, when the function $\textsc{solve}(D,-)$ is called, we have $\ria(D)=\emptyset$, hence $\textsc{fillCache}$ is called only once, and in line~\ref{line:line11} it iterates over $\supp(\sem{Q[BC]})$.

\begin{algorithm}[t]
\caption{PTCR Algorithm}%
\label{alg:ptcr}
\begin{minipage}[t]{0.48\textwidth}
\begin{algorithmic}[1]
    \Statex \textbf{Input:} $\text{Query } Q(\boldsymbol{X}), \text{PTCR } (P,C)$
    \Statex \textbf{Output:} $\sem{Q}$
    \For{$A\in V(P)$}
        \State $M_A \gets \emptyset, \ria_A \gets \bot$ 
    \EndFor  
    \State \Return \Call{solve}{$root(P),()$}

    \Function{solve}{$A,{\boldsymbol{x}}$} 
    \If{${\boldsymbol{x}}[\ria(A)] = \ria_A $} \Comment{Cache hit}
        \State \Return $M_A({\boldsymbol{x}}[\con(A)])$ 
    \EndIf
    \State $\ria_A \gets \boldsymbol{x}[\ria(A)]$ \label{line:ptcr} \Comment{Cache miss}
    \State $\Call{fillCache}{A,\boldsymbol{x}[\anc(A)\setminus \scon(A)]}$
    \State \Return $M_A({\boldsymbol{x}}[\con(A)])$
    \EndFunction
\end{algorithmic}
\end{minipage}
\hfill
\begin{minipage}[t]{0.51\textwidth}
\begin{algorithmic}[1]
\setcounter{ALG@line}{9}
    \Function{fillCache}{$A,\yanc$} 
        \For{$\ysto\in \supp(\sem{Q[\scon(A)|\ria_A]})$} \label{line:line11}
            \State $\boldsymbol{y} \gets (\yanc,\ysto)$
            \State OUT $\gets \{\boldsymbol{z} \mapsto \0 \mid \boldsymbol{z}\in\dom^{\outt(A)}\}$  
            \For{$a\mapsto s \in \sem{Q[A|\ria_A,\ysto]}$}
                    \State $\boldsymbol{y}'\gets (\boldsymbol{y}, a)$ \label{line:ptcr:fromhere} \Comment{L. \ref{line:ptcr:fromhere}--\ref{line:ptcr:tohere} as in Alg. \ref{alg:pt}}
                    \State TMP $\gets \{a[\boldsymbol{X}\cap \{A\}] \mapsto s\}$ 
                    \For{$B\in child(A)$}
                        \State TMP $\gets$ TMP $\otimes$ \Call{solve}{$B,\boldsymbol{y}'$} \label{line:ptcr:recurse}
                    \EndFor 
                    \State OUT $\gets$ OUT $\oplus$ TMP \label{line:ptcr:tohere}
            \EndFor
            \State $M_A \gets M_A \cup \{\boldsymbol{y}[\con(A)] \mapsto $ OUT$\}$
        \EndFor
        
    \EndFunction
\end{algorithmic}
\end{minipage}
\end{algorithm} 
\begin{definition}  
The class of query plans $\ptcr(Q)$ consists of PTCR $(P,C)$ of $Q$ and their space and time exponents are defined as:
\begin{align*}
  s( P,C) =  \max_{A \in {V}(P)} \rho^*(\scon(A) \cup \outt(A)), \quad t( P,C) =  \max_{A \in {V}(P)} \rho^*(\ra(A)\cup \out(A)).
\end{align*}
\end{definition}

\begin{restatable}{theorem}{thmalgoptcr}    
\label{thm:algo:ptcr}
If $(P,C) \in \ptcr(Q)$, then Algorithm \ref{alg:ptcr} computes $\sem{Q}$ in ${O}(|D|^{t( P,C)})$ time and uses $O(|D|^{s( P,C)})$ space.
\end{restatable}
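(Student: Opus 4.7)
The plan is to prove the theorem in three stages -- correctness, space bound, and time bound -- with the time analysis being the main obstacle.

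For correctness, I would proceed by structural induction on the pseudo-tree $P$. The claim is that $\textsc{solve}(A, \bm x)$ returns the correct $\K$-relation of type $\dom^{\outt(A)} \to \K$ corresponding to the subquery rooted at $A$ conditioned on the ancestor assignment $\bm x[\anc(A)]$, and that this return value depends on $\bm x$ only through $\bm x[\con(A)]$ --- the same invariant that underpins Algorithm~\refalgptc. The cache-hit branch is justified by observing that $\icon(A) \subseteq \ria(A)$, so whenever $\bm x[\ria(A)] = \ria_A$ the instantiated context is unchanged and every stored entry, which varies only on $\scon(A)$, remains valid. The subroutine \textsc{fillCache} populates exactly the set of consistent $\scon(A)$-entries via recursive calls on the children, whose correctness follows from the induction hypothesis.

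For the space bound, I would argue that at any moment the memory in use is dominated by the union of the caches $M_A$ and the activation records on the current recursion stack. Each cache $M_A$ stores at most one entry per tuple in $\supp(\sem{Q[\scon(A) \mid \ria_A]})$, with value a $\K$-relation of schema $\outt(A)$; the AGM bound gives $O(|D|^{\rho^*(\scon(A) \cup \outt(A))})$. Each stack frame's $\textrm{OUT}$ has schema $\outt(A)$, which is absorbed into the cache bound since $\outt(A) \subseteq \scon(A) \cup \outt(A)$. Summing over the constantly many nodes yields $O(|D|^{s(P,C)})$.

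The time bound is the main obstacle. I would charge all work done inside $\textsc{fillCache}(A,\cdot)$ to the node $A$ and bound this charge by $O(|D|^{\rho^*(\ra(A) \cup \out(A))})$. Across all invocations of $\textsc{fillCache}(A,\cdot)$, the number of distinct triples $(\ria_A, \ysto, a)$ processed equals the number of valid assignments to $\ria(A) \cup \scon(A) \cup \{A\} = \ra(A)$ in the database, which by AGM is at most $O(|D|^{\rho^*(\ra(A))})$. For each such triple, the per-iteration cost of reading each child's cached output (a relation of schema $\outt(B)$) and accumulating the Cartesian combination $\textrm{TMP}$ into $\textrm{OUT}$ (of schema $\outt(A)$) is absorbed into a combined AGM bound on $\ra(A) \cup \out(A)$, giving total work $O(|D|^{\rho^*(\ra(A) \cup \out(A))})$ at $A$. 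Summing over the constantly many nodes gives the runtime bound.

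The delicate point, and where Definition~\ref{def:ria:ra} is crucial, is justifying that the number of distinct $\ria_A$-values ever reaching $A$ is at most $O(|D|^{\rho^*(\ria(A))})$, rather than the naive $O(|D|^{\rho^*(\ancc(A))})$. The inclusion $\ria(A) \subseteq \ra(\parent(A))$ guaranteed by the recursive definition ensures that the ancestor's enumeration groups $\ria(A)$-values contiguously; combined with the grouped lexicographic traversal in \textsc{fillCache} at the ancestor, each $\ria_A$-value is delivered to $A$ in a single contiguous burst, so $M_A$ is filled exactly once per distinct $\ria_A$. Formalizing this requires a careful induction on depth that tracks the stream of ancestor-tuples arriving at each node and invokes the grouped-enumeration property of \textsc{fillCache} inductively; this is the step I expect to demand the most technical care.
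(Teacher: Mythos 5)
Your proposal is correct and follows essentially the same route as the paper's proof: correctness by structural induction on the pseudo-tree with the invariant that the return value depends only on $\con(A)$, the space bound via the AGM bound on $\scon(A)\cup\outt(A)$ per cache, and the time bound by charging work to $\textsc{fillCache}(A,\cdot)$, bounding the enumeration over $\ra(A)$ by the generic-join argument, and — as the crux — an induction showing that $\ria(A)$-values arrive at each node in lexicographically grouped order (the paper's Lemma on lexicographic delivery, which uses exactly the inclusion $\ria(A)\subseteq\ra(\parent(A))$ you identify), so each distinct $\ria_A$ triggers at most one cache fill. The step you flag as demanding the most technical care is indeed where the paper spends its effort.
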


Lastly, we prove the remaining relationships in Fig.~\ref{fig:plan-families} up to $\tdptcr$, in particular demonstrating the difference in strength of \tdpt and \ptcr.  This separation together with the separation given in Theorem~\ref{thm:tdgj-pt} imply that all dominations corresponding to downward arrows up to $\tdptcr$ are strict.

\begin{restatable}{theorem}{thmptcrtdpt}   
\label{thm:ptcr-tdpt}
$\ptcr \not\preceq \tdpt$ and $\tdpt \not\preceq \ptcr$. 
\end{restatable}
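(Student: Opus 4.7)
The plan is to prove both non-dominations by exhibiting a separating query for each direction.

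For $\tdpt \not\preceq \ptcr$, I will use the query $\QShort$ from Fig.~\ref{fig:ptcrQuery}. Example~\ref{ex:ptcr:basic} already supplies a plan $\Pi \in \ptcr(\QShort)$ with $e(\Pi) = (1,2)$, so it suffices to show that no $\Pi' \in \tdpt(\QShort)$ satisfies both $s(\Pi') \le 1$ and $t(\Pi') \le 2$. The structural observations driving the lower bound are that $A$ and $B$ are adjacent both to each other and to every other vertex, which gives $\rho^*(\QShort) = 3$ (witnessed in the LP dual by setting $y_v = \tfrac{1}{2}$ for every $v$), and that $\rho^*(\{A,B,X\}) = 3/2$ for every $X \in \{C,D,E,F\}$ (uniform weight $\tfrac{1}{2}$ on the three edges of the induced triangle). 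A single-bag TD has time exponent $\rho^*(\QShort) = 3 > 2$, so it cannot match $\Pi$; conversely, by enumerating the possible bag structures of fhw at most $2$, one verifies that any multi-bag TD of $\QShort$ must keep $\{A,B\}$ inside every bag along its main path, so two adjacent bags have intersection of the form $\{A,B,X\}$ with $\rho^* = 3/2 > 1$. Either way the $\tdpt$ exponents fail to dominate $(1,2)$.

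For $\ptcr \not\preceq \tdpt$, I will use the query $\QLong$ from Fig.~\ref{fig:tdptQuery}, which consists of three copies of $K_5$ minus an edge on the vertex sets $\{A,B,C,D,E\}$, $\{D,E,F,G,H\}$ and $\{G,H,I,J,K\}$, glued along the edges $DE$ and $GH$. I will exhibit a concrete plan $\Pi \in \tdpt(\QLong)$ based on the chain TD with these three bags together with a carefully chosen pseudo-tree inside each bag, exploiting the fact that every bag sub-query outputs only the two-variable separator $\{D,E\}$ or $\{G,H\}$ to its parent. I will compute $e(\Pi)$ explicitly via Def.~\ref{def:time:space:td} and Def.~\ref{def:space:time:pt}, and then argue that no $\ptcr$ plan of $\QLong$ achieves exponents that dominate $e(\Pi)$.

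The main obstacle is this last step. A $\ptcr$ plan has three intertwined degrees of freedom---the pseudo-tree $P$, the subset of vertices that carry a cache, and the reset width $C(A)$ for every cached vertex---so a direct case analysis is impractical. I plan to discharge this step using the computer-assisted enumeration tool released with the paper, which enumerates all pseudo-trees $P$ of $\QLong$ and all (effectively finite, since $C(A) \le |\ancc(A)|$) functions $C \colon V(P) \to \mathbb{N}$, and computes the Pareto-optimal $(s(P,C), t(P,C))$ frontier of $\ptcr(\QLong)$. Verifying that this finite frontier contains no point that dominates $e(\Pi)$ completes the separation. An informal sanity check for why the lower bound should hold: any global pseudo-tree of $\QLong$ must nest at least one of the three clusters inside another, and the relevant-ancestor set of a vertex in the deepest cluster then contains a separator $\{D,E\}$ or $\{G,H\}$ together with an internal vertex of the enclosing cluster, which neither a cache nor a reset at a shallower node can absorb without paying in either $s$ or $t$.
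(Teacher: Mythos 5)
Your overall strategy is the paper's own: the same two separating queries ($\QShort$ for $\tdpt \not\preceq \ptcr$, $\QLong$ for $\ptcr \not\preceq \tdpt$), the same witness plans (the $(1,2)$ PTCR of Example~\ref{ex:ptcr:basic}, and the three-bag tree decomposition of $\QLong$ split along $DE$ and $GH$), the same separator argument that any multi-bag decomposition of $\QShort$ forces an adjacent-bag intersection containing $\{A,B,X\}$ and hence space exponent at least $\rho^*(\{A,B,X\})=\nicefrac{3}{2}$, and the same reliance on the computer-assisted enumeration of $\ptcr(\QLong)$ for the second lower bound. All of this matches the appendix proof.

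There is, however, one genuine gap: the claim that ``a single-bag TD has time exponent $\rho^*(\QShort)=3>2$.'' A single-bag plan in $\tdpt$ evaluates its bag with a \emph{pseudo-tree}, not with Generic Join, so its time exponent is $\min_P \max_{A}\rho^*(\ancc(A))$ over all pseudo-trees $P$ of $\QShort$, which can be strictly smaller than $\rho^*(\var(Q))$ --- that gap is the entire point of Sec.~\ref{sec:pt} (for the 3-path query it is $1$ versus $2$). Your inequality rules out only the linear pseudo-tree, not branching ones, so as written this step fails. The conclusion is nevertheless true and can be repaired: in any pseudo-tree of $\QShort$, every pair of adjacent vertices must be in ancestor--descendant relation, and a short case analysis on the orientation of $D$ versus $E$ shows that one of the sets $\{A,B,C,D,E\}$ or $\{A,B,D,E,F\}$ is pairwise comparable and hence lies on a single branch; each of these induces $K_5$ minus one edge, with $\rho^*=\nicefrac{5}{2}>2$. (The paper instead discharges exactly this step by exhaustively searching $\pt(\QShort)$ with its software.) Either repair must be supplied before the first separation is complete.
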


\begin{proof}[Proof Sketch]
  For $\ptcr \not\preceq \tdpt$, we use the query $\QShort()$ depicted in Fig.~\ref{fig:ptcrQuery} and the $\ptcr$ in given Example~\ref{ex:ptcr:basic}, with space-time exponents $(1,2)$.  Because no edge separates the query, any TD with $2+$ bags uses superlinear space, and a single bag is equivalent to a \ptree, which takes more than quadratic time.  For $\tdpt \not\preceq \ptcr$, we use the query $\QLong()$ depicted in Fig.~\ref{fig:tdptQuery} and a TD with three bags, separated by $DE$ and $GH$; its space-time exponents are $(1,2)$. The best $\ptcr$ has space-time exponents of $(1.5,2)$ or $(1,2.5)$.
\end{proof}

\begin{restatable}{corollary}{corralgoptcr}   
\label{cor:ptcr-tdpt-ptc}
    The dominations relations represented by downward arrows up to the ones ending at $\tdptcr$ in Fig.~\ref{fig:plan-families} are all strict.
\end{restatable}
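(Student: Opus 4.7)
The plan is to reduce every strictness claim to one of the two known incomparability results already established: Theorem~\ref{thm:tdgj-pt} gives $\pt \not\preceq \tdgj$ and $\tdgj \not\preceq \pt$, while Theorem~\ref{thm:ptcr-tdpt} gives $\ptcr \not\preceq \tdpt$ and $\tdpt \not\preceq \ptcr$. Combined with transitivity of the preorder $\preceq$, these four forbidden dominations will rule out the reverse of every downward arrow in Fig.~\ref{fig:plan-families} up to $\tdptcr$.

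First I would collect the non-strict dominations depicted in the figure, each of which is either already stated as a theorem or follows from a one-line conversion of plans: $\pt \preceq \gj$ (Sec.~\ref{sec:pt}), $\tdgj \preceq \gj$ (from $\tdc \preceq \calC$ in Theorem~\ref{thm:tdgj-pt}), $\ptc \preceq \pt$ (add the trivial root cache), $\ptc \preceq \tdgj$ (Theorem~\ref{thm:tdgj-ptc}), $\tdpt \preceq \ptc$ (Theorem~\ref{thm:tdpt-tdptc}), $\ptcr \preceq \ptc$ (convert a PTC plan $(P,\bm C)$ to a PTCR by setting $C(A) = |\con(A)|$ for $A \in \bm C$ and $C(A) = 0$ otherwise), $\tdptcr \preceq \tdpt$ (within each bag a PT becomes a PTCR with no caches), and $\tdptcr \preceq \ptcr$ (single-bag TD, again $\tdc \preceq \calC$). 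Then I would treat each strictness claim by contradiction. For any supposed reverse domination among $\{\gj,\pt,\tdgj,\ptc\}$, chaining it with the relevant facts from this list produces either $\pt \preceq \tdgj$ or $\tdgj \preceq \pt$, contradicting Theorem~\ref{thm:tdgj-pt}; for any supposed reverse domination among $\{\ptc,\ptcr,\tdpt,\tdptcr\}$, chaining with $\tdpt \preceq \ptc$, $\ptcr \preceq \ptc$, $\tdptcr \preceq \tdpt$, or $\tdptcr \preceq \ptcr$ yields either $\tdpt \preceq \ptcr$ or $\ptcr \preceq \tdpt$, contradicting Theorem~\ref{thm:ptcr-tdpt}. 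For example, $\ptc \preceq \tdpt$ would give $\ptcr \preceq \ptc \preceq \tdpt$, and $\tdpt \preceq \tdptcr$ would give $\tdpt \preceq \tdptcr \preceq \ptcr$.

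The main obstacle will be rigorously justifying the two trivial-looking conversions $\ptcr \preceq \ptc$ and $\tdptcr \preceq \tdpt$, since they are used repeatedly but not isolated as lemmas in the body. For $\ptcr \preceq \ptc$, under the conversion above I would verify that at every node $A$ the PTCR quantities $\rho^*(\scon(A) \cup \outt(A))$ and $\rho^*(\ra(A) \cup \out(A))$ are bounded by the PTC quantities of Definition~\ref{def:ptc:exp}: at cached $A \in \bm C$ we have $\scon(A) = \con(A)$ and $\ra(A) = \conn(A)$, so both exponents match directly, while at uncached $A$ with $C(A) = 0$ one unfolds the recursive definition of $\ria$ along the path from $A$ up to its nearest cached ancestor $B_A$ and checks that $\ra(A)$ coincides with $\con(B_A) \cup [A, B_A]$. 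An analogous bag-local check settles $\tdptcr \preceq \tdpt$. Once these conversions are verified, the corollary follows at once from the case analysis sketched above.
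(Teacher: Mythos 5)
Your proposal is correct and follows essentially the same route as the paper: the paper also collects the non-strict dominations (citing Theorems~\ref{thm:tdgj-pt}, \ref{thm:tdgj-ptc}, \ref{thm:tdpt-tdptc}, \ref{thm:ptcr-tdpt} plus the trivial chain $\ptcr \preceq \ptc \preceq \pt \preceq \gj$ and $\tdptcr \preceq \tdpt$) and derives strictness from the two incomparability pairs via an abstract four-element preorder observation, which is exactly your transitivity-plus-contradiction case analysis in disguise. Your extra paragraph verifying $\ptcr \preceq \ptc$ via $C(A)=|\con(A)|$ is sound but covers a step the paper simply labels as trivial.
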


\section{Using Recursion to Reorient Sub-Trees }
\label{sec:rpt}

We now present a final class of plans termed recursive pseudo-trees, $\rptcr$, that unify the strengths of PTs and TDs into a single approach. While $\ptcr$ (and also $\ptc$) captured some of the advantages of  TDs, as seen by $\ptcr \prec\td^{\gj}$, TDs can bring further benefits when computed with \ptrees, as seen by $\tdptcr \prec \ptcr$.  One benefit of TD-based plans is that different bags can use different variable orders, for example in  Fig.~\ref{fig:tdptex} we could use $A\text{-}B$ in one bag, and $B\text{-}A$ in the other, which is not possible in a \ptree.  $\rptcr$ loosens this restriction, and fully captures the benefit of TDs: we will show the equivalence of $\rptcr$ and $\tdrptcr$ in Thm.~\ref{thm:algo:rpt}. 

To begin, we revisit the function $\textsc{fillCache}(A,\boldsymbol{y}_{\text{anc}})$ of Algorithm~\ref{alg:ptcr}.
This function evaluates for every tuple $\ysto \in \supp(\sem{Q[\scon(A)|\ria_A]})$ (see line \ref{line:line11}) the subquery of $Q$ restricted to variables $\descc(A)$ and caches the result in $M_A$. %
To do this, the algorithm based on a PTCR plan first iterates through the variables in $\scon(A)$, then proceeds to the variables in $\descc(A)$. 
While intuitive, this is an arbitrary decision. One could instead simply use a different PTCR $(P',C')$ to compute this subquery where $\scon(A)$ is included as output variables. Effectively, we fill the cache using a different PTCR and see this as replacing the subtree of $(P,C)$ rooted at $A$ with the new sub-plan $(P',C')$. 
We call the resulting structure a \textit{recursive pseudo-tree}.

\begin{figure}
\begin{minipage}[b]{0.25\textwidth}
\begin{figure}[H]
\centering
\begin{tikzpicture}[scale=0.75, font=\footnotesize]

                \foreach \angle/\label in {0/K, -36/I, -72/G, -108/E, -144/C, -180/A}
                {
                    \node[inner sep=1pt] (\label) at (\angle:2) {$\label$};
                }
        
                \foreach \angle/\label in {-18/J, -54/H, -90/F, -126/D, -162/B}
                {
                    \node[inner sep=1pt] (\label) at (\angle:1) {$\label$};
                }

                \node[inner sep=1pt] (L) at (0,0) {$L$};

                \draw (A) -- (B);
                \draw (A) -- (C);
                \draw (A) -- (D);
                \draw (B) -- (C);
                \draw (B) -- (D);
                \draw (B) -- (E);
                \draw (C) -- (D);
                \draw (C) -- (E);
                \draw (D) -- (E);
                \draw (D) -- (F);
                \draw (D) -- (G);
                \draw (E) -- (F);
                \draw (E) -- (G);
                \draw (E) -- (H);
                \draw (F) -- (G);
                \draw (F) -- (H);
                \draw (G) -- (H);
                \draw (I) -- (G);
                \draw (I) -- (H);
                \draw (I) -- (J);
                \draw (I) -- (K);
                \draw (J) -- (G);
                \draw (J) -- (H);
                \draw (J) -- (K);
                \draw (H) -- (K);
                \draw (L) -- (A);
                \draw (L) -- (B);
                \draw (L) -- (C);
                \draw (L) -- (D);
                \draw (L) -- (E);
                \draw (L) -- (F);
                \draw (L) -- (G);
                \draw (L) -- (H);
                \draw (L) -- (I);
                \draw (L) -- (J);
                \draw (L) -- (K);
            \end{tikzpicture}
    \caption[]{Query $\QCirc$}
    \label{fig:rptQuery}
\end{figure}
\end{minipage}
\hfill
\begin{minipage}[b]{0.74\textwidth}
\begin{figure}[H]
  \begin{center}
    \begin{tikzpicture}[scale = 0.5, font=\footnotesize, rotate=90]
        
            \node {$L$}
                child { node [inner sep=1pt]  {$C$}
                    child { node [inner sep=1pt]  {$B$}
                        child { node [inner sep=1pt]  {$D$}
                            child [xshift=7.5mm] {node  [inner sep=1pt] {$A$}}
                            child [xshift=7.5mm] {node [inner sep=1pt] (E) {$E$}
                                child [xshift=7.5mm, edge from parent/.style={draw,darkgray}] {node  [inner sep=1pt] (Fprime) {$\textcolor{gray}{F'}$}
                                    child [edge from parent/.style={draw,gray}] {node  [inner sep=1pt] {\textcolor{lightgray}{$G'$}}
                                        child [edge from parent/.style={draw,lightgray}] { node [inner sep=1pt]  {$\phantom{c}$}}
                                        }
                                    }
                                 child [xshift=-22.5mm, edge from parent/.style={draw,dashed}]  {node  [inner sep=1pt] (F) {$F$}
                                     child [edge from parent/.style={draw, solid}] { node  [inner sep=1pt] (E2) {$E$}
                                        child { node  [inner sep=1pt] (G) {$G$}
                                            child [xshift=7.5mm] {node [inner sep=1pt]  (D) {$D$}}
                                            child [xshift=7.5mm] {node [inner sep=1pt]  (H) {$H$}
                                                child [xshift=7.5mm, edge from parent/.style={draw,darkgray}] {node [inner sep=1pt]  (Fprime) {$\textcolor{gray}{I'}$}
                                                    child [edge from parent/.style={draw,gray}] {node [inner sep=1pt]  {\textcolor{lightgray}{$J'$}}
                                                        child [edge from parent/.style={draw,lightgray}] { node [inner sep=1pt]  {$\phantom{c}$}}
                                                        }
                                                    }
                                                child [xshift=-22.5mm, edge from parent/.style={draw,dashed}] {node [inner sep=1pt]  (I) {$I$}
                                                    child [edge from parent/.style={draw, solid}] { node  [inner sep=1pt] (H2) {$H$}
                                                        child { node [inner sep=1pt]  (J) {$J$}
                                                            child [xshift=7.5mm] {node [inner sep=1pt]  (G2) {$G$}}
                                                            child [xshift=7.5mm] {node  [inner sep=1pt] (K) {$K$}}
                                                        }
                                                    }
                                                }
                                            }
                                        }
                                    }
                                }
                            }
                        }
                    }
                };

            \node at ($(F) + (0.65,0)$) {\textcolor{red}{$L$}};
            \node at ($(F) + (-0.65,0)$) {\textcolor{blue}{$ED$}};
            \node at ($(E2) + (-0.65,0)$) {\textcolor{blue}{$ED$}};
            \node at ($(G) + (-0.65,0)$) {\textcolor{blue}{$D$}};
            \node at ($(D) + (-0.65,0)$) {\textcolor{blue}{$D$}};

            \node at ($(I) + (0.65,0)$) {\textcolor{red}{$L$}};
            \node at ($(I) + (-0.65,0)$) {\textcolor{blue}{$HG$}};
            \node at ($(H2) + (-0.65,0)$) {\textcolor{blue}{$HG$}};
            \node at ($(J) + (-0.65,0)$) {\textcolor{blue}{$G$}};
            \node at ($(G2) + (-0.65,0)$) {\textcolor{blue}{$G$}};

    \end{tikzpicture}
\end{center} 
  \caption[An RPT for $Q$]{An RPT for $\QCirc$}
    \label{fig:rpt}
\end{figure}
\end{minipage}
\end{figure}

\begin{example} \label{ex:rpt:basic} 

To motivate using different PTCR to fill the caches, let us consider the scalar query $\QCirc()$ given in Fig.~\ref{fig:rptQuery}. 
As there are many 5-cliques, our aim will be to arrive at a time complexity of~$2.5$ and minimize the space complexity.
Note the structure of the query: $L$ is connected to everything, $\mathcal{S}_1 =LDE$ and $\mathcal{S}_2 =LHG$ are the minimal non-trivial separators, and $\mathcal{K}_1= LBDC, \mathcal{K}_2= LFEG, \mathcal{K}_3= LHJI$ are three 4-cliques which each extend in two ways to a 5-clique -- $\mathcal{K}_1$ together with $A$ and $E$ is a 5-clique, respectively, as well as $\mathcal{K}_2 $ together with $D$ and $H$, and $\mathcal{K}_3$ together with $G$ and $K$.
Thus, a natural way of evaluating $\QCirc$ would be to start like a $\ptcr$ plan.
That is, we start with a path $\mathcal{K}_2\colon L-F-E-G$ of nested loops and then branch to the separators $\mathcal{S}_1 \colon D$ and $\mathcal{S}_2\colon H$ separately.
Let us focus on the left branch where we continue with, say, $C$.
At this point, we need a cache to not increase the time complexity beyond $2.5$.
We set the size of the cache to $2$ (the space complexity will be $1$ but the cache contains 2 variables), thus get the context partition $con(C)=L|ED$, and now would have to execute $\textsc{fillCache}(C,l)$.
That is, we have to find and cache the possible values of $\scon(C)=ED$ that fit to $l$ and solve the remaining query on $A,B,C$.
This can be done by a loop structure that first extends $l$ to values $bdc$ and then loops through $A$ and $E$ independently.
We have seen in previous sections that such a loop structure will take time $O(N^{2.5})$ and space $O(N)$ including the loop over the values $l$.
Naturally, the right branch is symmetric.
Thus, we have arrived at an algorithm with the desired space and time consumption.
However, note that this algorithm is \textit{not} the result of a $\ptcr$ plan as this is not the way $\textsc{fillCache}(C,l)$ would have filled the cache of $C$ (no matter how we complete the PTCR).
Crucially, we inverted the order of $E$ and $D$ to fill the cache; we will return to this example a couple more times in this section.

\end{example}

To define recursive pseudo-trees formally, we introduce SPQs with input variables $\boldsymbol{I}$ (these will be $\ria(A)$).
That is, let $Q(\boldsymbol{X})$ be an SPQ and let $\iv \subseteq \var(Q)\setminus \boldsymbol{X}$ denote a subset of the variables in $Q$, which 
we refer to  as the {\em input variables} of $Q(\boldsymbol{X})$. 
Then a PTCR of $Q(\boldsymbol{X})$ is also a PTCR of $(Q(\boldsymbol{X}),\iv)$ when $\iv$ are variables that occur above all other variables. 
That is, these variables form a chain and we call the first variable $\ptroot\in V(P)\setminus \iv$ with $\anc(\ptroot)=\iv, \descc(\ptroot)=\var(Q)\setminus \boldsymbol{I}$ the (real) root of $(P,C,\iv)$.
Furthermore, we require $\ria(\ptroot)=\iv$.

\begin{definition}%
\label{def:rpt}
A {\em recursive pseudo-tree}
(RPT) of an SPQ with input variables $(Q(\boldsymbol{X}),\boldsymbol{I})$
is recursively defined as follows: 

\begin{itemize}
    \item  {\em Base case.} A PTCR $(P,C,\iv)$ of $(Q(\boldsymbol{X}),\iv)$  is an 
    RPT of $(Q(\boldsymbol{X}), \iv)$.
\item {\em Recursion.} Let $(P,C,\iv)$ be a PTCR of $(Q(\boldsymbol{X}),\iv)$, and let $A_1, \dots, A_l$ be vertices in $\boldsymbol{V}(P)\setminus \iv$,
such that $A_i\notin \ancc(A_j)\forall\, i\neq j$. 
Let $Q_i$ be the subquery of $Q(\boldsymbol{X})$ restricted to the variables $\desc(A_i)\cup \ra(A_i)$ and with head variables $\scon(A_i)\cup \outt(A_i)$. Let $(\calP_i,\calC_i, \caliv_i)$ be RPTs of $(Q_i,\ria(A_i))$. 
Then, $(\calP,\calC, \caliv):=((P,(\calP_i)_i), (C,(\calC_i)_i), (\iv, (\caliv_i)_i))$ is an RPT of $(Q(\boldsymbol{X}),\iv)$.
\end{itemize}
The class of query plan $\rptcr(Q)$ consist of RPTs $(\calP,\calC, \caliv)$ of $(Q,\emptyset)$ and their space and time exponents are defined recursively via:
\begin{align*}
    s(\calP,\calC, \caliv) &= \max(\max_{A\in \bigcup_i \anc(A_i)} \rho^*(\scon(A)\cup \outt(A));\max_i s(\calP_i,\calC_i, \caliv_i)),\\
    t(\calP,\calC, \caliv) &= \max(\max_{A\in \bigcup_i \anc(A_i)} \rho^*(\ra(A)\cup \outt(A));\max_i t(\calP_i,\calC_i, \caliv_i)).
\end{align*}
\end{definition}

Using the RPTs $(\calP_i,\calC_i, \caliv_i)$ to perform the same task as $\textsc{fillCache}(A_i,-)$ we get the following:

\begin{restatable}{theorem}{thmalgorpt}    
\label{thm:algo:rpt}
    If $(\calP,\calC, \caliv)\in \rptcr(Q)$, then $\sem{Q}$ can be computed in space $O(|D|^{s(\calP,\calC, \caliv)})$ and time ${O}(|D|^{t(\calP,\calC, \caliv)})$.
\end{restatable}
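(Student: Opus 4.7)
The plan is to prove Theorem~\ref{thm:algo:rpt} by structural induction on the recursive definition of RPTs. The base case, where $(\calP,\calC,\caliv)$ is simply a PTCR of $(Q,\iv)$, follows directly from Theorem~\ref{thm:algo:ptcr}; the only adaptation is that the root has $\ria(\ptroot)=\iv$ as input variables, so $\textsc{solve}(\ptroot,-)$ is invoked once per input binding supplied by the caller, and the PTCR analysis carries over unchanged because the iteration over input values is already folded into the $\rho^*$-factor $\rho^*(\ra(\ptroot)\cup \out(\ptroot))$ at the root.

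For the inductive step, consider an RPT $(\calP,\calC,\caliv)=((P,(\calP_i)_i),(C,(\calC_i)_i),(\iv,(\caliv_i)_i))$ obtained by choosing pairwise-incomparable vertices $A_1,\dots,A_l \in \boldsymbol{V}(P)\setminus \iv$ and attaching sub-RPTs $(\calP_i,\calC_i,\caliv_i)$ of $(Q_i,\ria(A_i))$. I would specify the associated algorithm as Algorithm~\ref{alg:ptcr} run on $(P,C,\iv)$, but with every invocation of $\textsc{fillCache}(A_i,\boldsymbol{y}_{\text{anc}})$ replaced by a call to the algorithm for $(\calP_i,\calC_i,\caliv_i)$, with the input variables $\ria(A_i)$ bound to the current $\boldsymbol{y}_{\text{anc}}[\ria(A_i)]$; the returned $\K$-relation over $\scon(A_i)\cup \outt(A_i)$ is dumped directly into $M_{A_i}$.

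Correctness reduces to showing that the sub-RPT returns the same relation that the original $\textsc{fillCache}$ would install in $M_{A_i}$. This follows from the definition of $Q_i$: restricting $Q$ to the variables $\desc(A_i)\cup \ra(A_i)$ with head $\scon(A_i)\cup \outt(A_i)$ and summing out the remaining variables coincides, for every fixed binding of $\ria(A_i)$, with the pointwise summation performed by $\textsc{fillCache}$ in Algorithm~\ref{alg:ptcr} (lines inside the outer \textbf{for}). The induction hypothesis then supplies space $O(|D|^{s(\calP_i,\calC_i,\caliv_i)})$ and total time $O(|D|^{t(\calP_i,\calC_i,\caliv_i)})$ aggregated over all input bindings, since, as noted, the root of $\calP_i$ has $\ria=\caliv_i$ and thus iteration over input bindings is absorbed in the bound at that root.

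For the resource accounting I would split the work at the $A_i$'s. The top-level PTCR executes only at nodes in $\bigcup_i \anc(A_i)$; applying the proof of Theorem~\ref{thm:algo:ptcr} restricted to those nodes yields space $\max_{A\in\bigcup_i \anc(A_i)} \rho^*(\scon(A)\cup \outt(A))$ and time $\max_{A\in\bigcup_i \anc(A_i)} \rho^*(\ra(A)\cup \out(A))$. Since execution is depth-first and at most one sub-RPT is active at any instant, sub-RPT space combines with the outer space via \emph{max}, matching the recursive $s(\calP,\calC,\caliv)$ in Def.~\ref{def:rpt}; time combines with \emph{max} between outer work and inner work, matching $t(\calP,\calC,\caliv)$. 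The main obstacle I foresee is the interaction between the outer cache-miss accounting at $A_i$ (the number of distinct $\ria(A_i)$ values triggering a sub-RPT call) and the inner time bound; the resolution is that the sub-RPT already charges itself over all its input bindings at its own root, so no double-counting occurs and the arithmetic collapses to the claimed maxima.
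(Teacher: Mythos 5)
Your proposal is correct and follows essentially the same route as the paper: the paper's Algorithm for RPTs is exactly Algorithm~\ref{alg:ptcr} with $\textsc{fillCache}(A_i,-)$ replaced by an invocation of the sub-RPT (formalized there as a routine $\textsc{solveReplaced}$ that also re-keys the returned relation over $\scon(A_i)\cup\outt(A_i)$ into the cache keyed by $\con(A_i')$), and the resource bounds are combined by maxima exactly as you argue, with the iteration over input bindings absorbed into the $\rho^*(\ra(\cdot))$ bound at the sub-root. The only difference is presentational: where you invoke Theorem~\ref{thm:algo:ptcr} as a black box, the paper re-establishes the four PTCR lemmas (consistency of call arguments, cache contents, lexicographic arrival of $\ria$-values --- which is what licenses your ``no double-counting'' claim --- and per-node time) in the recursive setting.
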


\begin{example}
    Fig.~\ref{fig:rpt} depicts an RPT $\Pi=(\calP,\calC, \caliv)$ of the query $\QCirc()$ depicted in
    Fig.~\ref{fig:rptQuery}.
    Solid edges represent ``normal'' edges of a PTCR while the dashed edges represent a recursive replacement. 
    The original PTCR had the path PT $(F'\text{-}G'\text{-}H'\text{-}I'\text{-}J'\text{-}K')$ (depicted gray) as a child of the first $E$ while $F'$ was the only variable with a cache.
    Then, $A_1=F'$ got replaced by the sub-RPT $(\calP_1,\calC_1, \caliv_1)$ rooted in $F$ as depicted in Fig.~$\ref{fig:rpt}$ (this is the first dotted edge).
    Note that $L|DE=\con(F')$, $L=\ria(F')=\ria(F)$ and $DE=\scon(F')=\outt(F)$ as required (output and input variables are drawn in blue and red, respectively, in Fig.~\ref{fig:rpt}).
    $(\calP_1,\calC_1, \caliv_1)$ was constructed similarly.
    The original PTCR had the path PT $(I'\text{-}J'\text{-}K')$ as a child of the first $H$ while $I'$ was the only variable with a cache.
    Then, $A_2=I'$ (note that $A_1$ and $A_2$ come from different PTs) got replaced by the sub-PTCR $(P_2,C_2,\{L\})$ rooted in $I$ as depicted in Fig.~$\ref{fig:rpt}$.
    Note that $L|GH=\con(I')$, $L=\ria(I')=\ria(I)$ and $GH=\scon(I')=\outt(I)$ as required.
    The space-time exponents are $(1,\nicefrac{5}{2})$.

    The evaluation algorithm proceeds analogously to the algorithm described in Example~\ref{ex:rpt:basic} with the slight change that the cliques $\mathcal{K}_1,\mathcal{K}_2,\mathcal{K}_3$ are processed one after the other in exactly that order.
    Thus, the RPT depicted in Fig.~\ref{fig:rpt} has two recursion on one single branch while the algorithm described in Example~\ref{ex:rpt:basic} essentially had two branches with one recursion on each branch.
\end{example}

Finally, we relate $\rptcr$ to the other plan classes.
First, we note that $\tdrptcr$ provides no benefit over $\rptcr$.
The proof is inductive, similar to that of Theorem \ref{thm:tdgj-ptc}.

\begin{restatable}{theorem}{thmrptcrtdrptcr}    
\label{thm:rptcr-tdrptcr}
The class $\rptcr$ dominates $\tdrptcr$. Thus, in particular $\rptcr \equiv \tdrptcr$. 
\end{restatable}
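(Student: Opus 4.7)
The plan is to mirror the inductive proof of Theorem~\ref{thm:tdgj-ptc}, performing induction on the number of bags in the tree decomposition and using the RPT recursion mechanism of Def.~\ref{def:rpt} in place of the simple pseudo-tree extension used there. Given $\Pi = ((T,\chi),\pi) \in \tdrptcr(Q)$, I construct $\Pi_0 \in \rptcr(Q)$ with $\Pi_0 \preceq \Pi$. The base case $|V(T)|=1$ is immediate: $Q^{\Root} = Q$, so $\Pi_0 := \pi^{\Root} \in \rptcr(Q)$ works verbatim with the same space-time exponents.

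For the inductive step, pick a leaf $v$ of $T$ with parent $p$ and let $\bm Z := \chi(v) \cap \chi(p)$, $\bm A := \chi(v) \setminus \chi(p)$. The connectedness axiom of TDs together with $\bm X \subseteq \chi(\Root(T))$ gives $\bm A \cap \bm X = \emptyset$. Form $Q'$ from $Q$ by dropping every atom whose scope touches $\bm A$ and adding a fresh atom $R(\bm Z)$ that stands for the message $\sem{Q^v}$, and let $\Pi' \in \tdrptcr(Q')$ be $\Pi$ with $v$ removed (the other bags' RPTs are reused; the message from $v$ is replaced by the input atom $R(\bm Z)$, which is accommodated by $p$). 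By the induction hypothesis, there is an RPT $\Pi_0' \in \rptcr(Q')$ with $\Pi_0' \preceq \Pi'$. Because $R(\bm Z)$ is an atom of $Q'$, the outer PTCR of $\Pi_0'$ must place the variables of $\bm Z$ on a single root-to-leaf branch; let $B$ be the lowest such variable.

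Extend the outer PTCR of $\Pi_0'$ by attaching a chain $B\text{-}A_1\text{-}A_2\text{-}\cdots\text{-}A_k$ of all variables in $\bm A$ as a new child of $B$, and set $C(A_1) := |\con(A_1)|$. Since every atom of $Q$ touching $\bm A$ has scope inside $\chi(v) = \bm A \cup \bm Z$ and $\bm Z \subseteq \ancc(B) \subseteq \anc(A_1)$, one checks that $\con(A_1) = \bm Z$; consequently $\scon(A_1) = \bm Z$, $\icon(A_1) = \emptyset$, $\ria(A_1) = \emptyset$, and (using $\bm A \cap \bm X = \emptyset$) $\outt(A_1) = \emptyset$. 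Now apply the RPT recursion at the single point $A_1$, plugging in $\pi^v \in \rptcr(Q^v)$ as the sub-RPT. The conditions of Def.~\ref{def:rpt} hold: $\desc(A_1) \cup \ra(A_1) = \bm A \cup \bm Z = \chi(v)$ matches the variables of $Q^v$, $\scon(A_1) \cup \outt(A_1) = \bm Z$ matches its head, and $\ria(A_1) = \emptyset$ matches the empty input of $\pi^v$. The new cache at $A_1$ needs space $\rho^*(\bm Z)$, which is bounded by $s(\pi^v)$ (whose output relation already sits on $\bm Z$). Hence $s(\Pi_0) \leq \max(s(\Pi_0'), s(\pi^v)) \leq \max(s(\Pi'), s(\pi^v)) = s(\Pi)$, and the time bound is analogous.

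The main obstacle is the antichain condition on recursion points: if some recursion point of $\Pi_0'$ already lies on the branch above $B$ in its outer PTCR, the new point $A_1$ cannot simply be added at the outermost level. The clean remedy is to strengthen the induction hypothesis to apply to RPTs of subqueries with prescribed input variables, so that the attachment of the chain over $\bm A$ together with the sub-RPT $\pi^v$ can always be performed inside the deepest sub-RPT of $\Pi_0'$ whose outer PTCR still carries $\bm Z$ on a common branch. A secondary bookkeeping point is to fix a convention for where atoms of $Q$ whose scope lies entirely inside $\bm Z$ are assigned (always to $p$ rather than $v$), so that the atoms of $Q^v$ and the atoms ``replaced'' by $R(\bm Z)$ match up exactly between $\Pi$ and $\Pi'$.
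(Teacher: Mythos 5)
Your proposal is correct in its essentials and shares the core mechanism with the paper's proof: induct on the number of bags and use the recursion of Def.~\ref{def:rpt} to plug the leaf bag's plan in as a sub-RPT hanging below a fresh chain whose stored context is the separator $\bm Z$. Where you differ is the induction scheme. You mirror Theorem~\ref{thm:tdgj-ptc}: remove the leaf, replace it by a summary atom $R(\bm Z)$ in a modified query $Q'$, apply the hypothesis to $Q'$, then re-attach. The paper never modifies $Q$: it merges the leaf bag $c$ into its parent $p$, producing a $\tdrptcr$ plan for the \emph{same} query with one fewer bag, where the parent's RPT is extended by a path over $\chi(c)\setminus\chi(p)$ attached below $\min(\bm Z)$ and the head of that path is replaced by $\pi^c$. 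This buys exactly the bookkeeping you flag at the end: all $\rho^*$ computations stay relative to $Q$, so one never has to reconcile the cardinality of the summary atom $R(\bm Z)$ with the exponents of the original plan. The obstacle you correctly identify --- the antichain condition on recursion points, which can block attaching the new branch at the outermost level --- is resolved in the paper precisely as you propose: descend into the deepest nested sub-RPT in which $\min(\bm Z)$ is still a real (non-replaced) vertex and attach there; since that vertex is then not a descendant of any existing recursion point, the antichain property is preserved for the new point as well. One detail to make explicit in your write-up is the w.l.o.g.\ assumption (stated in the paper) that every variable of $\bm Z$ co-occurs in some atom of $Q$ with a variable of $\chi(v)\setminus\chi(p)$; without it $\con(A_1)$ can be a proper subset of $\bm Z$, and the head $\scon(A_1)\cup\outt(A_1)$ of the plugged-in sub-RPT would not match the head $\bm Z$ of $Q^v$.
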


Considering the query $\QCirc()$ in Fig.~\ref{fig:rptQuery} and using computer-assisted exhaustive search, we verified that there is no plan $(P, C)\in\ptcr$ with $s(P,C)\leq1$ and $t(P,C)\leq \nicefrac{5}{2}$. There is no linear separator in this query, so the same holds for $\tdptcr$. Combined with Theorem \ref{thm:rptcr-tdrptcr}, we can prove:

\begin{restatable}{theorem}{thmrptcrtdptcr}    
\label{thm:rptcr-tdptcr}
The class $\rptcr$ strictly dominates $\tdptcr$, i.e., $\rptcr\prec \tdptcr$.
\end{restatable}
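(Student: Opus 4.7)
The plan is to establish the two parts of strict domination separately: $\rptcr \preceq \tdptcr$ (routine) and $\tdptcr \not\preceq \rptcr$ (the content), using $\QCirc$ of Fig.~\ref{fig:rptQuery} as the witness query that separates the classes.

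For the easy containment $\rptcr \preceq \tdptcr$, I would observe that every PTCR is an RPT (the base case of Definition~\ref{def:rpt}) with identical space and time exponents. Hence every $\tdptcr$ plan is simultaneously a $\tdrptcr$ plan, so $\tdrptcr \preceq \tdptcr$. Composing this with Theorem~\ref{thm:rptcr-tdrptcr} (which gives $\rptcr \equiv \tdrptcr$) via transitivity of $\preceq$ yields $\rptcr \preceq \tdptcr$.

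For the strict separation, I would take $\Pi_1$ to be the RPT of Fig.~\ref{fig:rpt} for $\QCirc$, whose exponents are $(1, \nicefrac{5}{2})$ by Theorem~\ref{thm:algo:rpt}, and argue that no $\Pi_2 \in \tdptcr(\QCirc)$ simultaneously achieves $s(\Pi_2) \leq 1$ and $t(\Pi_2) \leq \nicefrac{5}{2}$. The argument splits on the number of bags in the underlying tree decomposition. When the TD has a single bag, the plan is literally a PTCR, and the computer-assisted enumeration quoted immediately before the theorem already rules out such exponents. When the TD has at least two bags (assuming WLOG minimality), I would pick a leaf bag $v$; then $\bm Z^v := \chi(v) \cap \chi(\parent(v))$ is a non-trivial separator of $\QCirc$ by the running intersection property, and the PTCR at $v$ must output a relation on $\bm Z^v$, so by Eq.~\eqref{eq:tdst} the overall space exponent is at least $\rho^*(\bm Z^v)$. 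The hypothesis that $\QCirc$ admits no linear separator then forces $\rho^*(\bm Z^v) > 1$, contradicting $s(\Pi_2) \leq 1$.

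The main obstacle is verifying the two combinatorial facts about $\QCirc$ on which the proof rests. The exhaustive $\ptcr$ search is delegated to the enumeration software mentioned in the introduction, so the genuinely combinatorial content is the claim that every separator of $\QCirc$ has $\rho^* > 1$. To justify this, I would use that $L$ is adjacent to every other vertex, forcing $L$ to lie in any separator; then verify by inspection of Fig.~\ref{fig:rptQuery} that no pair $\{L, v\}$ disconnects the dense outer and inner ring structure, so every separator has at least three vertices. Since each binary atom covers at most two variables, any such separator satisfies $\rho^* \geq \nicefrac{3}{2} > 1$, completing the argument.
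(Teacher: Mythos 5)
Your proposal is correct and follows essentially the same route as the paper: the containment $\rptcr \preceq \tdptcr$ via Theorem~\ref{thm:rptcr-tdrptcr} composed with the trivial $\tdrptcr \preceq \tdptcr$, and the separation via $\QCirc$, delegating the single-bag case to the computer-assisted exhaustive search over $\ptcr$ and killing multi-bag tree decompositions with the observation that every separator costs more than linear space. The only difference is that you explicitly justify the ``no linear separator'' claim (since $L$ is universal it lies in every separator, and the remaining graph is $2$-connected, so every separator has at least three vertices and thus $\rho^* \geq \nicefrac{3}{2}$), which the paper asserts without proof.
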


Naturally, the time exponents of RPTs are always at least as large as the fractional hypertree width (fhw).
As $\rptcr$ dominates all other classes of query plans discussed in this paper, the same is the case for them.

\begin{restatable}{theorem}{thmrptcrfhw}    
\label{thm:rptcr-fhw}
If $(\calP,\calC, \caliv)\in \rptcr(Q)$, then ${t(\calP,\calC, \caliv)} \geq fhw(Q)$, where $fhw(Q)$ is the fractional hypertree width of $Q$.
\end{restatable}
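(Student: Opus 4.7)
The plan is to extract from any RPT $(\calP, \calC, \caliv)$ of $Q$ a tree decomposition of $Q$ of width at most $t(\calP, \calC, \caliv)$; by the definition of $fhw(Q)$ as the minimum width over all TDs, this immediately gives $t(\calP, \calC, \caliv) \geq fhw(Q)$. I proceed by structural induction on the RPT, strengthening the inductive statement as follows: for any RPT of $(Q,\iv)$ with time exponent $t$, there is a tree decomposition of $Q$ of width at most $t$ whose root bag contains $\iv$. The theorem then follows by applying this to the top-level RPT, where $\iv = \emptyset$.

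For the base case of a PTCR $(P, C, \iv)$, the TD assigns to each node $A \in V(P)$ the bag $\chi(A) := \conn(A)$, inherits its tree structure from $P$, and prepends a root bag $\chi_0 := \iv$. Atom coverage holds because for every atom $R_j(\bm X_j)$ with deepest variable $D_j$ in $P$, each $B \in \bm X_j \cap \anc(D_j)$ lies in $\con(D_j)$ by Definition~\ref{def:context} applied with $C = D_j$, so $\bm X_j \subseteq \conn(D_j)$. Running intersection is a standard case analysis: if a variable $B$ appears in two bags $\conn(A_1)$ and $\conn(A_2)$, then $B$ is either $A_1$, $A_2$, or an ancestor of the LCA of $A_1$ and $A_2$, and in each case $B$ shares an atom with a descendant on the relevant side, which places $B$ in $\con(A)$ for every $A$ on the tree path between $A_1$ and $A_2$.

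For the width bound in the base case, Definition~\ref{def:ria:ra} gives $\icon(A) \subseteq \ria(A)$ and $\con(A) = \icon(A) \cup \scon(A)$, so $\conn(A) \subseteq \ria(A) \cup \scon(A) \cup \{A\} = \ra(A)$, yielding $\rho^*(\conn(A)) \leq \rho^*(\ra(A) \cup \out(A)) \leq t$. For the root bag, $\iv = \ria(\ptroot) \subseteq \ra(\ptroot)$ gives $\rho^*(\iv) \leq t$ as well.

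For the inductive step, consider the outer PTCR $(P, C, \iv)$ with recursion points $A_1, \dots, A_l$ and sub-RPTs $(\calP_i, \calC_i, \caliv_i)$ of $(Q_i, \ria(A_i))$. Apply the base construction to produce trunk bags $\conn(A)$ for $A$ in the trunk of $P$. By the induction hypothesis, each sub-RPT yields a TD $T_i$ of $Q_i$ of width at most $t_i$ whose root bag contains $\ria(A_i)$, and I attach each $T_i$ to the trunk at $\parent(A_i)$'s bag. Atom coverage divides cleanly: any atom whose deepest variable lies in $\descc(A_i)$ satisfies $\bm X_j \cap \anc(A_i) \subseteq \con(A_i) \subseteq \ra(A_i)$, so it lies entirely within $Q_i$ and is covered by $T_i$; all other atoms are covered by trunk bags. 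The main obstacle is running intersection at each attachment: one must ensure $\ria(A_i)$ is contained in both the trunk attachment bag and $T_i$'s root bag. The latter holds by the strengthened IH, while the former may require augmenting the trunk bag at $\parent(A_i)$ to include $\ria(A_i)$. Since $\ria(A_i) \subseteq \ra(\parent(A_i))$ by Definition~\ref{def:ria:ra}, the augmented bag has $\rho^* \leq \rho^*(\ra(\parent(A_i)) \cup \outt(\parent(A_i))) \leq t$, preserving the width bound and completing the induction.
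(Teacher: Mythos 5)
Your overall strategy (turn the RPT into a tree decomposition whose bags are contained in $\ra(A)\cup\outt(A)$, then invoke the definition of $fhw$) is the same as the paper's, and your base case is sound. But there is a genuine gap in the inductive step: your strengthened induction hypothesis is too weak. You only require the sub-TD's root bag to contain the input variables $\ria(A_i)$, and in the base case you explicitly build that root bag as $\chi_0:=\iv$. However, the subquery $Q_i$ has head variables $\scon(A_i)\cup\outt(A_i)$, and the variables of $\scon(A_i)$ are ancestors of $A_i$ in the trunk $P$: each such $B$ occurs in the trunk bag $\conn(B)$ (and in $\con(A)$ for trunk nodes $A$ between $B$ and $\parent(A_i)$) \emph{and} in interior bags of $T_i$ (since $B\in\var(Q_i)$ is a node of $\calP_i$ below its real root). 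The unique tree path connecting these two groups of bags passes through $T_i$'s root bag, which by your construction is $\ria(A_i)$ and is disjoint from $\scon(A_i)$. So the running intersection property fails at every attachment point with $\scon(A_i)\neq\emptyset$ --- which is the generic case; e.g., for $\QCirc$ in Fig.~\ref{fig:rpt}, $A_1=F'$ has $\ria(F')=\{L\}$ and $\scon(F')=\{D,E\}$, and $D,E$ live both in trunk bags and inside the sub-TD rooted at $F$, but not in your root bag $\{L\}$.

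The fix is to strengthen the IH so that the root bag of the sub-TD contains $\caliv_i\cup\scon(A_i)\cup\outt(A_i)=\raexc(A_i)\cup\outt(A_i)$ (equivalently, require the TD of $Q_i$ to satisfy condition (1) of Def.~\ref{def:td} in addition to containing $\caliv_i$). This costs nothing in width: the real root $r_i$ of $\calP_i$ satisfies $\ra(r_i)\cup\out(r_i)\supseteq\ria(A_i)\cup\{r_i\}\cup\outt(r_i)$ with $\outt(r_i)=\scon(A_i)\cup\outt(A_i)$, so $\rho^*$ of the enlarged root bag is already dominated by $t(\calP_i,\calC_i,\caliv_i)$. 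This is in effect what the paper's sketch does by taking the bags to be $\ra(A)\cup\outt(A)$ on the glued tree rather than $\conn(A)$: with those bags one gets $\raexc(A)\cup\outt(A)\subseteq\ra(\parent(A))\cup\outt(\parent(A))$, so running intersection holds by a one-line upward-propagation argument, and the head variables of each $Q_i$ automatically land in the bag of $r_i$.
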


\begin{proof}[Proof Sketch]
    The RPT $(\calP,\calC, \caliv)$ describes a tree structure (e.g., see black parts of Fig.~\ref{fig:rpt}) and we can use this tree structure together with bags $\ra(A)\cup \outt(A)$ at nodes $A$ to construct a tree decomposition of $Q$.
    The width of this tree decomposition is the same as the time exponent of the RPT.
\end{proof}

\section{Conclusion and a Glimpse Beyond}
\label{sec:beyond}

We have presented several novel algorithms for CQ and, more generally, SPQ evaluation by combining and significantly extending existing approaches based on pseudo-trees and tree decompositions. In most cases, we have matched the optimal time complexity of previous algorithms with asymptotically lower space complexity.  We end here by discussing two important lines of future work.

\paragraph{Conjectures on Lower Bounds}
Although some conditional lower bounds have been established for the time complexity of query evaluation~\cite{DBLP:conf/icalp/FanKZ23,DBLP:conf/stoc/BringmannG25}, much less is known about space-time lower bounds. In the complexity community, the space-time tradeoff is studied by proving lower bounds on the product $S \times T$ for specific problems.  For example, the set difference and distinct element problems have been proven to have a lower bound $ST = \Omega(n^2)$~\cite{DBLP:journals/siamcomp/Beame91,DBLP:conf/innovations/McKayW19}. These results are too weak to constrain query evaluation, since even acyclic CQs require at least a quadratic space-time product.

Since there are no widely accepted assumptions to build on, we take a first step toward understanding lower bounds for space-constrained query answering.  We propose a problem called the \emph{Triple $k$-Clique Problem}, and conjecture a lower bound on its space-time complexity.  To motivate our conjecture, we briefly recall the \emph{min-weight $k$-clique hypothesis}, which is a standard complexity assumption for combinatorial problems~\cite{DBLP:conf/soda/LincolnWW18,DBLP:conf/icalp/AbboudWW14, DBLP:conf/icalp/FanKZ23}.  The problem concerns the $k$-clique scalar query:
\begin{align}
  Q() \leftarrow \bigotimes_{i<j \in [1,k]} E(A_i,A_j) \label{eq:q:k:clique}
\end{align}
and the task is to evaluate $Q$ in the tropical semi-ring, where $x \oplus y := \min(x,y)$, $x \otimes y := x+y$; in other words, we are asked to identify a clique with the smallest total edge weight in a weighted graph.  The most commonly used variant of the min-weight $k$-clique hypothesis is stated relative to the number of vertices in the graph. However,~\cite{DBLP:conf/icalp/FanKZ23} showed that this is equivalent to
saying that, for any $\varepsilon > 0$, no algorithm can solve this problem in time $O(|E|^{\frac{k}{2}-\varepsilon})$.  This problem is not a good candidate for a hard space-time problem, because GJ already computes $Q$ in optimal time $O(|E|^{\frac{k}{2}})$ and optimal space $O(1)$.
Instead, we propose the following extension:

\begin{conjecture}[The Triple $k$-Clique Conjecture]  %
  For $k:= 2\ell, \ell\geq 2$, consider the query:
    \begin{align}
        Q() \leftarrow \bigotimes_{i < j\in [1,k]} E(A_i,A_j) \otimes\bigotimes_{i < j\in [k-\ell+1,k+\ell]} E(A_i,A_j) \otimes\bigotimes_{i < j\in [k+1,2k]} E(A_i,A_j) \label{eq:q:triple:k:clique}
    \end{align}
    Then, for any $\varepsilon > 0$, no algorithm can solve this problem over the tropical semi-ring in space $S=O(|E|^{\frac{k}{4}-\epsilon})$ and time $T=O(|E|^{\frac{k}{2}})$.
\end{conjecture}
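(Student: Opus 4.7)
The natural tree-decomposition plan for the query in Eq.~\eqref{eq:q:triple:k:clique} consists of three bags (one per $k$-clique) connected through two separators, each of size $\ell = k/2$ and each forming an $\ell$-clique, so by Eq.~\eqref{eq:tdgj} its space-time exponents are exactly $(k/4,\, k/2)$. The conjecture therefore asserts that this tree-decomposition trade-off cannot be improved in space while keeping the time fixed at $|E|^{k/2}$. The plan is to establish this as a conditional lower bound, analogous in spirit to the way the min-weight $k$-clique hypothesis rules out $|E|^{k/2-\epsilon}$ time for a single clique.

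First, I would reduce from a \emph{space-sensitive} strengthening of the min-weight $k$-clique hypothesis. Given a weighted graph $G$ with $|E|=N$ edges and assuming, for contradiction, an algorithm $\mathcal{A}$ that solves the triple $k$-clique problem in space $S = N^{k/4-\epsilon}$ and time $T = N^{k/2}$, I would construct an instance of Eq.~\eqref{eq:q:triple:k:clique} using three (possibly padded) copies of $G$ as the three cliques. The shared $\ell$-vertex interfaces on the left and right of the middle clique would be arranged so that the minimum over all valid triples encodes the minimum-weight $k$-clique in $G$, plus a known offset coming from the auxiliary structure. Running $\mathcal{A}$ on this instance would then solve min-weight $k$-clique within the resources $(S,T)$.

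The main obstacle is that the standard min-weight $k$-clique hypothesis is purely a time statement and places no constraints on space, so a direct time-only contradiction is not available: $\mathcal{A}$ already uses the maximal time $N^{k/2}$ allowed by the hypothesis. To overcome this, I would turn to a branching-program or pebble-game lower bound in the spirit of~\cite{DBLP:journals/siamcomp/Beame91,DBLP:conf/innovations/McKayW19}. The key claim to prove is a product lower bound of the form $S \cdot T = \Omega(N^{3k/4})$ for any algorithm solving triple $k$-clique in the tropical semi-ring. Intuitively, $\Omega(N^{k/4})$ distinct ``interface'' states on each side must be distinguished by the algorithm (because the separators are $\ell$-cliques whose fractional edge cover number is $k/4$), and any computation of length $T$ using space $S$ can only carry $2^{O(S \log N)}$ distinct memory configurations across its $T$ steps, forcing $S$ and $T$ to trade off multiplicatively. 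Plugging $T = N^{k/2}$ into $S \cdot T = \Omega(N^{3k/4})$ would then yield $S = \Omega(N^{k/4})$, as desired.

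The hardest step by far will be step~2: existing space-time tradeoff techniques yield only quadratic bounds like $S \cdot T = \Omega(n^2)$, which are far too weak to control a $k$-parameterized structural quantity such as $N^{3k/4}$. A successful attack likely requires adapting communication-complexity arguments to the semi-ring setting, partitioning the execution of $\mathcal{A}$ into epochs bounded by memory rewrites and arguing that between epochs the algorithm must transmit $\Omega(k \log N)$ bits of interface information per ``processed'' middle clique. Formalizing this while respecting the algebraic cancellations allowed by $\min/+$ (and thus avoiding the pitfall of overfitting the lower bound to generic-join-like algorithms) is the core technical barrier, and it is the principal reason we state the bound as a conjecture rather than a theorem.
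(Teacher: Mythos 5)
The statement you were asked to prove is stated in the paper as a \emph{conjecture}: the paper offers no proof, only motivation, namely (i) the three-bag tree decomposition attains $S=O(|E|^{k/4})$, $T=O(|E|^{k/2})$ (with the separator exponent computed exactly as you do, via $\rho^*$ of an $\ell$-clique being $\ell/2=k/4$), and (ii) the min-weight $k$-clique hypothesis cannot serve as the hardness source because a single clique is already solvable by Generic Join in optimal time and $O(1)$ space. Your proposal reproduces both of these observations faithfully, and your diagnosis of the obstruction --- that known space-time tradeoff results such as $ST=\Omega(n^2)$ for set difference and distinct elements are far too weak, and that the $k$-clique hypothesis constrains only time --- is exactly the paper's own justification for leaving the statement as a conjecture. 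So there is no disagreement with the paper, but there is also no proof: your step~2, the product bound $S\cdot T=\Omega(N^{3k/4})$, is precisely the open problem, and the reduction in step~1 is moot without it (as you note, an algorithm running in time $N^{k/2}$ cannot contradict a time-only hypothesis whose threshold is $N^{k/2}$).

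One substantive remark on the shape of the conjectured tradeoff: the paper's stronger form is $S^2T=\Omega(|E|^{k})$, not $S\cdot T=\Omega(|E|^{3k/4})$. Both curves pass through the two known endpoints (the tree decomposition at $(N^{k/4},N^{k/2})$ and Generic Join at $(O(1),N^{k})$), and both specialize to $S=\Omega(N^{k/4})$ when $T=N^{k/2}$, but they differ strictly in between: for instance $S=N^{k/8}$, $T=N^{5k/8}$ satisfies your product bound yet violates $S^2T\geq N^{k}$. If you pursue a pebbling or epoch-based argument, you should decide early which curve you are targeting, since the interpolation between the two endpoints is exactly where the two forms diverge and where the choice of potential function in the lower-bound argument will matter.
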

The tree decomposition with three bags corresponding to the three $\bigotimes$-expressions above uses space\footnote{For a quick computation of $\rho^*$ observe that, for any graph $G$ with vertices $V$ and no isolated vertices, $\rho^*(V) \geq |V|/2$, because this is the value of the fractional vertex packing where each vertex has weight $1/2$.  On the other hand, if we can partition $V = V_1 \cup V_2 \cup \cdots$ such that each graph induced by $V_i$ is a clique, then $\rho^*(V)\leq \sum_i \rho^*(V_i)=|V|/2$.  For example, for a single $k$-clique $V$ we can conclude $\rho^*(V)=k/2$ and for $Q$ in Eq.~\eqref{eq:q:triple:k:clique} we can take $V_1=[1,k]$ and $V_2=[k+1,2k]$ and conclude that $\rho^*(A_1\cdots A_{2k})=k$.}  $S=O(|E|^{k/4})$ and time $T=O(|E|^{k/2})$; the conjecture claims that it is optimal.  On the other hand, GJ computes $Q$ in $S=O(1)$ and $T=O(|E|^k)$.  This justifies a stronger version of the conjecture: for any $\varepsilon > 0$, no algorithm can compute $Q$ such that $S^2T= O(|E|^{k-\epsilon})$.

\paragraph{Aiming Toward Submodular Width}
The best known time bound for Boolean conjunctive queries is based on the \emph{submodular width}~\cite{DBLP:conf/pods/Khamis0S17, DBLP:journals/jacm/Marx13, DBLP:journals/pacmmod/KhamisHS25}.  
This measure partitions the input data based on degrees, and uses different query plans for each partition.  
We have not considered the submodular width in this paper, instead extended the (weaker) fractional hypertree width (see Thm.~\ref{thm:rptcr-fhw}).
However, extending submodular width is much more intricate as it does not generalize to arbitrary semi-rings, e.g., not to the natural numbers~$\mathbb{N}$%
~\cite{DBLP:conf/pods/KhamisCM0NOS19}.

Nevertheless, aiming at adapting our approach to achieve submodular width is promising future work. However, it at least requires the introduction of degree constraints in the analysis, which complicates the picture significantly.  Generic Join is no longer optimal in the presence of degree constraints, and this affects pseudo-trees too.  Intuitively, a pseudo-tree with structure $A_1-A_2-A_3$, rooted at $A_1$, can benefit from constraints on the degree from $A_1$ to $A_3$ but not constraints on the degree from $A_3$ to $A_1$ when analyzing its time and space. However, in some cases, it is possible to meet the submodular-width's time complexity while  minimizing the space complexity, as shown here:

\begin{restatable}{theorem}{fourcycle}    
\label{thm:4cycle}
 The  query $Q_\square$ below  can be computed in space $O(|D|^{\frac{1}{2}})$ and time $O(|D|^{\frac{3}{2}})$:
    \begin{align*}
        Q_\square() \leftarrow  E_1(A_1,A_2) \otimes E_2(A_2,A_3) \otimes E_3(A_3,A_4) \otimes E_4(A_1,A_4)
    \end{align*}
\end{restatable}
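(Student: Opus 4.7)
The plan is to design a tailored algorithm matching the submodular-width time bound while staying within the promised space. Set the threshold $T := |D|^{1/2}$. Presort each $E_i$ lexicographically in both attribute orderings (free in our model). Call a value $a$ of a variable $A_j$ \emph{heavy} if its degree in some adjacent relation exceeds $T$, and \emph{light} otherwise; since degrees sum to at most $|D|$ in each relation, every variable has at most $O(\sqrt{|D|})$ heavy values.

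Partition $Q_\square$ disjointly according to the first variable in the list $(A_2, A_4, A_1)$ that is heavy, or ``all light'' if none is. In each of the first three cases, iterate over the distinguished variable's $O(\sqrt{|D|})$ heavy values. For heavy $a_2$ (the other two cases are handled by the symmetric factorizations, summing out $a_2$ or $a_3$ respectively, and restricting $a_2$ or $a_2, a_4$ to light to avoid double counting), write
\[
  F(a_2) \;=\; \sum_{a_4}\Bigl(\sum_{a_1}E_1(a_1,a_2)E_4(a_1,a_4)\Bigr)\Bigl(\sum_{a_3}E_2(a_2,a_3)E_3(a_3,a_4)\Bigr),
\]
stream $a_4\in\pi_{A_4}(E_3)\cap\pi_{A_4}(E_4)$ in sorted order, and compute each bracket on the fly by a sorted-list intersection. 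The standard estimate $\sum_{a_4}\min(\deg_{E_1}(a_2),\deg_{E_4}(a_4))\le |E_4|$ (and likewise for the other factor) gives $O(|D|)$ work per heavy $a_2$, hence $O(|D|^{3/2})$ total, with only $O(1)$ extra space.

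For the all-light case, every iterated $a_1$ satisfies $\deg_{E_1}(a_1), \deg_{E_4}(a_1)\le T$. Iterate $a_1\in\pi_{A_1}(E_1)\cap\pi_{A_1}(E_4)$ in sorted order and, for each such $a_1$, run a two-way sorted merge over $a_3$ on the streams
\[
  \bigl\{(a_3,\, E_1(a_1,a_2)E_2(a_2,a_3)) : a_2\in E_1[*,a_1] \text{ light}\bigr\}
  \;\text{and}\;
  \bigl\{(a_3,\, E_3(a_3,a_4)E_4(a_1,a_4)) : a_4\in E_4[a_1,*] \text{ light}\bigr\}.
\]
Aggregating matching $a_3$ values yields $m(a_1,a_3)$ and $n(a_1,a_3)$ on the fly, and their product is added to the scalar accumulator. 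At any moment, the merge heap holds at most $\deg_{E_1}(a_1)+\deg_{E_4}(a_1)=O(T)$ entries, so the space stays within $O(\sqrt{|D|})$. The total stream length summed over all light $a_1$ is
\[
  \sum_{a_2 \text{ light}}\deg_{E_1}(a_2)\deg_{E_2}(a_2)
  +\sum_{a_4 \text{ light}}\deg_{E_4}(a_4)\deg_{E_3}(a_4)
  \;\le\; 2T\cdot |D| \;=\; O(|D|^{3/2}),
\]
since $\deg_{E_1}(a_2)\le T$ for light $a_2$ (and analogously for light $a_4$), which gives the desired time bound.

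The main obstacle is confining the all-light phase to $O(\sqrt{|D|})$ space: a naive pipelined tree decomposition $\{A_1 A_2 A_3\}, \{A_1 A_3 A_4\}$ would require a merge heap of size proportional to the fan-in, which can grow as large as $|D|$. Exploiting the lightness of $a_1$ caps this fan-in at $T=\sqrt{|D|}$, which is precisely what brings the space within budget.
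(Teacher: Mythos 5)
Your proposal is correct and follows essentially the same strategy as the paper's proof: a heavy/light split with threshold $\sqrt{|D|}$, handling each heavy case by iterating over the $O(\sqrt{|D|})$ heavy values and evaluating the residual query via a two-bracket factorization in linear time and constant space, and handling the all-light case by an on-the-fly merge join over $(A_1,A_3)$ that maintains $O(\sqrt{|D|})$ concurrent sorted iterators (one per light neighbor of the current light $a_1$), with the same $\sum_{a}\deg\cdot\deg \le T\,|D|$ accounting for the total stream length. Your four-way case split on $(A_2,A_4,A_1)$ is a slightly cleaner packaging of the paper's per-relation partitioning, but the underlying argument is the same.
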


\begin{proof}[Proof Sketch]
    
    To achieve this, we first perform a heavy-light partitioning of the input relations where a \textit{heavy} join value $a_j$ is one which appears in at least $\sqrt{|E_{i}|}$ tuples of $E_{i}$.
    As there cannot be more than $\sqrt{|E_{i}|}$ heavy values, we can iterate over them, instantiate them, and solve the reaming linear query in linear time and constant space.
    Thus, in space $O(1)$ and time $O(|D|^{\frac{3}{2}})$ we can handle all heavy values.

    For the case where all values are light, a different technique is needed.
    There, the aim is to perform a merge join on the fly, essentially using the decomposition
    $$\bigoplus_{a_1,a_3}\left(\bigoplus_{a_2} E_{1}(a_1,a_2) \otimes E_{2}(a_2,a_3)\right)\otimes \left( \bigoplus_{a_4} E_{4}(a_1,a_4) \otimes E_{3}(a_4,a_3)\right).$$
    We explain how to iterate through $E_{1}(A_1,A_2) \otimes E_{2}(A_2,A_3)$ projected to $A_1,A_3$ in lexicographic order (the other side is symmetric).
    To that end, we iterate through $a_1$ at the top level and compute the $\leq \sqrt{|E_{1}|}$ values $a_2\in \supp(E_1[A_2|a_1])$ that extend $a_1$.
    For every $a_2\in \supp(E_1[A_2|a_1])$ we spawns a separate process that iterates (in an ordered manner) through $a_3\in\supp(E_2[A_3|a_2])$  -- hence we use $O(|D|^{\frac{1}{2}})$ processes that all require $O(1)$ space.
    Merging the loops of the different $a_2$ results in an ordered stream of $a_3$ values.
    Thus, we go over pairs $a_1,a_3$ in lexicographic order.
    Doing the same for $ E_{4}(a_1,a_4) \otimes E_{3}(a_4,a_3)$ allows us to perform a merge join.
    In total, this is done in space $O(|D|^{\frac{1}{2}})$ and time $O(|D|^{\frac{3}{2}})$.
\end{proof}

\section*{Acknowledgment}
The work of Merkl and Pichler was supported  by the Vienna Science and Technology Fund (WWTF) [10.47379/ICT2201, 10.47379/VRG18013, 10.47379/NXT22018]. 
Deeds and Suciu were partially supported by NSF IIS 2314527, NSF SHF
2312195, and NSF IIS 2507117.

	\bibliographystyle{abbrv}
	\bibliography{biblio}

\begin{thebibliography}{10}

\bibitem{DBLP:conf/icalp/AbboudWW14}
A.~Abboud, V.~{Vassilevska Williams}, and O.~Weimann.
\newblock Consequences of faster alignment of sequences.
\newblock In J.~Esparza, P.~Fraigniaud, T.~Husfeldt, and E.~Koutsoupias,
  editors, {\em Automata, Languages, and Programming - 41st International
  Colloquium, {ICALP} 2014, Copenhagen, Denmark, July 8-11, 2014, Proceedings,
  Part {I}}, volume 8572 of {\em Lecture Notes in Computer Science}, pages
  39--51. Springer, 2014.

\bibitem{DBLP:journals/tods/AbergerLTNOR17}
C.~R. Aberger, A.~Lamb, S.~Tu, A.~N{\"{o}}tzli, K.~Olukotun, and C.~R{\'{e}}.
\newblock Emptyheaded: {A} relational engine for graph processing.
\newblock {\em {ACM} Trans. Database Syst.}, 42(4):20:1--20:44, 2017.

\bibitem{DBLP:conf/stoc/ArenasCJR21a}
M.~Arenas, L.~A. Croquevielle, R.~Jayaram, and C.~Riveros.
\newblock When is approximate counting for conjunctive queries tractable?
\newblock In S.~Khuller and V.~V. Williams, editors, {\em {STOC} '21: 53rd
  Annual {ACM} {SIGACT} Symposium on Theory of Computing, Virtual Event, Italy,
  June 21-25, 2021}, pages 1015--1027. {ACM}, 2021.

\bibitem{DBLP:conf/focs/AtseriasGM08}
A.~Atserias, M.~Grohe, and D.~Marx.
\newblock Size bounds and query plans for relational joins.
\newblock In {\em 49th Annual {IEEE} Symposium on Foundations of Computer
  Science, {FOCS} 2008, October 25-28, 2008, Philadelphia, PA, {USA}}, pages
  739--748. {IEEE} Computer Society, 2008.

\bibitem{DBLP:conf/cluster/BaderSLSWTK24}
J.~Bader, F.~Skalski, F.~Lehmann, D.~Scheinert, J.~Will, L.~Thamsen, and
  O.~Kao.
\newblock Sizey: Memory-efficient execution of scientific workflow tasks.
\newblock In {\em {IEEE} International Conference on Cluster Computing,
  {CLUSTER} 2024, Kobe, Japan, September 24-27, 2024}, pages 370--381. {IEEE},
  2024.

\bibitem{DBLP:conf/csl/BaganDG07}
G.~Bagan, A.~Durand, and E.~Grandjean.
\newblock On acyclic conjunctive queries and constant delay enumeration.
\newblock In J.~Duparc and T.~A. Henzinger, editors, {\em Computer Science
  Logic, 21st International Workshop, {CSL} 2007, 16th Annual Conference of the
  EACSL, Lausanne, Switzerland, September 11-15, 2007, Proceedings}, volume
  4646 of {\em Lecture Notes in Computer Science}, pages 208--222. Springer,
  2007.

\bibitem{DBLP:journals/siamcomp/Beame91}
P.~Beame.
\newblock A general sequential time-space tradeoff for finding unique elements.
\newblock {\em {SIAM} J. Comput.}, 20(2):270--277, 1991.

\bibitem{DBLP:conf/stoc/BringmannG25}
K.~Bringmann and E.~Gorbachev.
\newblock A fine-grained classification of subquadratic patterns for subgraph
  listing and friends.
\newblock In M.~Kouck{\'{y}} and N.~Bansal, editors, {\em Proceedings of the
  57th Annual {ACM} Symposium on Theory of Computing, {STOC} 2025, Prague,
  Czechia, June 23-27, 2025}, pages 2145--2156. {ACM}, 2025.

\bibitem{DBLP:series/synthesis/2019Dechter}
R.~Dechter.
\newblock {\em Reasoning with Probabilistic and Deterministic Graphical Models:
  Exact Algorithms, Second Edition}.
\newblock Synthesis Lectures on Artificial Intelligence and Machine Learning.
  Morgan {\&} Claypool Publishers, 2019.

\bibitem{DBLP:books/daglib/0030488}
R.~Diestel.
\newblock {\em Graph Theory, 4th Edition}, volume 173 of {\em Graduate texts in
  mathematics}.
\newblock Springer, 2012.

\bibitem{DBLP:conf/icalp/FanKZ23}
A.~Z. Fan, P.~Koutris, and H.~Zhao.
\newblock The fine-grained complexity of boolean conjunctive queries and
  sum-product problems.
\newblock In K.~Etessami, U.~Feige, and G.~Puppis, editors, {\em 50th
  International Colloquium on Automata, Languages, and Programming, {ICALP}
  2023, July 10-14, 2023, Paderborn, Germany}, volume 261 of {\em LIPIcs},
  pages 127:1--127:20. Schloss Dagstuhl - Leibniz-Zentrum f{\"{u}}r Informatik,
  2023.

\bibitem{DBLP:conf/pods/FischlGP18}
W.~Fischl, G.~Gottlob, and R.~Pichler.
\newblock General and fractional hypertree decompositions: Hard and easy cases.
\newblock In J.~V. den Bussche and M.~Arenas, editors, {\em Proceedings of the
  37th {ACM} {SIGMOD-SIGACT-SIGAI} Symposium on Principles of Database Systems,
  Houston, TX, USA, June 10-15, 2018}, pages 17--32. {ACM}, 2018.

\bibitem{DBLP:journals/pvldb/FreitagBSKN20}
M.~J. Freitag, M.~Bandle, T.~Schmidt, A.~Kemper, and T.~Neumann.
\newblock Adopting worst-case optimal joins in relational database systems.
\newblock {\em Proc. {VLDB} Endow.}, 13(11):1891--1904, 2020.

\bibitem{DBLP:conf/ijcai/FreuderQ85}
E.~C. Freuder and M.~J. Quinn.
\newblock Taking advantage of stable sets of variables in constraint
  satisfaction problems.
\newblock In A.~K. Joshi, editor, {\em Proceedings of the 9th International
  Joint Conference on Artificial Intelligence. Los Angeles, CA, USA, August
  1985}, pages 1076--1078. Morgan Kaufmann, 1985.

\bibitem{Google_2019}
Google.
\newblock Google cluster data v3.
  https://github.com/google/cluster-data/blob/master/clusterdata2019.md, 2019.

\bibitem{DBLP:conf/pods/GottlobGLS16}
G.~Gottlob, G.~Greco, N.~Leone, and F.~Scarcello.
\newblock Hypertree decompositions: Questions and answers.
\newblock In T.~Milo and W.~Tan, editors, {\em Proceedings of the 35th {ACM}
  {SIGMOD-SIGACT-SIGAI} Symposium on Principles of Database Systems, {PODS}
  2016, San Francisco, CA, USA, June 26 - July 01, 2016}, pages 57--74. {ACM},
  2016.

\bibitem{DBLP:conf/pods/GottlobLS99}
G.~Gottlob, N.~Leone, and F.~Scarcello.
\newblock Hypertree decompositions and tractable queries.
\newblock In V.~Vianu and C.~H. Papadimitriou, editors, {\em Proceedings of the
  Eighteenth {ACM} {SIGACT-SIGMOD-SIGART} Symposium on Principles of Database
  Systems, May 31 - June 2, 1999, Philadelphia, Pennsylvania, {USA}}, pages
  21--32. {ACM} Press, 1999.

\bibitem{DBLP:journals/jcss/GottlobLS02}
G.~Gottlob, N.~Leone, and F.~Scarcello.
\newblock Hypertree decompositions and tractable queries.
\newblock {\em J. Comput. Syst. Sci.}, 64(3):579--627, 2002.

\bibitem{DBLP:journals/jacm/Grohe07}
M.~Grohe.
\newblock The complexity of homomorphism and constraint satisfaction problems
  seen from the other side.
\newblock {\em J. {ACM}}, 54(1):1:1--1:24, 2007.

\bibitem{DBLP:journals/talg/GroheM14}
M.~Grohe and D.~Marx.
\newblock Constraint solving via fractional edge covers.
\newblock {\em {ACM} Trans. Algorithms}, 11(1):4:1--4:20, 2014.

\bibitem{DBLP:conf/pods/KhamisCM0NOS19}
M.~A. Khamis, R.~R. Curtin, B.~Moseley, H.~Q. Ngo, X.~Nguyen, D.~Olteanu, and
  M.~Schleich.
\newblock On functional aggregate queries with additive inequalities.
\newblock In D.~Suciu, S.~Skritek, and C.~Koch, editors, {\em Proceedings of
  the 38th {ACM} {SIGMOD-SIGACT-SIGAI} Symposium on Principles of Database
  Systems, {PODS} 2019, Amsterdam, The Netherlands, June 30 - July 5, 2019},
  pages 414--431. {ACM}, 2019.

\bibitem{DBLP:journals/pacmmod/KhamisHS25}
M.~A. Khamis, X.~Hu, and D.~Suciu.
\newblock Fast matrix multiplication meets the submodular width.
\newblock {\em Proc. {ACM} Manag. Data}, 3(2):98:1--98:26, 2025.

\bibitem{DBLP:journals/corr/KhamisNRR15}
M.~A. Khamis, H.~Q. Ngo, and A.~Rudra.
\newblock {FAQ:} questions asked frequently.
\newblock {\em CoRR}, abs/1504.04044, 2015.

\bibitem{DBLP:conf/pods/KhamisNR16}
M.~A. Khamis, H.~Q. Ngo, and A.~Rudra.
\newblock {FAQ:} questions asked frequently.
\newblock In T.~Milo and W.~Tan, editors, {\em Proceedings of the 35th {ACM}
  {SIGMOD-SIGACT-SIGAI} Symposium on Principles of Database Systems, {PODS}
  2016, San Francisco, CA, USA, June 26 - July 01, 2016}, pages 13--28. {ACM},
  2016.

\bibitem{DBLP:conf/pods/Khamis0S17}
M.~A. Khamis, H.~Q. Ngo, and D.~Suciu.
\newblock What do shannon-type inequalities, submodular width, and disjunctive
  datalog have to do with one another?
\newblock In E.~Sallinger, J.~V. den Bussche, and F.~Geerts, editors, {\em
  Proceedings of the 36th {ACM} {SIGMOD-SIGACT-SIGAI} Symposium on Principles
  of Database Systems, {PODS} 2017, Chicago, IL, USA, May 14-19, 2017}, pages
  429--444. {ACM}, 2017.

\bibitem{DBLP:conf/pods/KimH0H23}
K.~Kim, J.~Ha, G.~Fletcher, and W.~Han.
\newblock Guaranteeing the {\~{o}}(agm/out) runtime for uniform sampling and
  size estimation over joins.
\newblock In F.~Geerts, H.~Q. Ngo, and S.~Sintos, editors, {\em Proceedings of
  the 42nd {ACM} {SIGMOD-SIGACT-SIGAI} Symposium on Principles of Database
  Systems, {PODS} 2023, Seattle, WA, USA, June 18-23, 2023}, pages 113--125.
  {ACM}, 2023.

\bibitem{DBLP:conf/soda/LincolnWW18}
A.~Lincoln, V.~{Vassilevska Williams}, and R.~R. Williams.
\newblock Tight hardness for shortest cycles and paths in sparse graphs.
\newblock In A.~Czumaj, editor, {\em Proceedings of the Twenty-Ninth Annual
  {ACM-SIAM} Symposium on Discrete Algorithms, {SODA} 2018, New Orleans, LA,
  USA, January 7-10, 2018}, pages 1236--1252. {SIAM}, 2018.

\bibitem{liu2025public}
G.~Liu, W.~Lin, H.~Zhang, J.~Lin, S.~Peng, and K.~Li.
\newblock Public datasets for cloud computing: A comprehensive survey.
\newblock {\em ACM Computing Surveys}, 57(8):1--38, 2025.

\bibitem{DBLP:journals/jacm/Marx13}
D.~Marx.
\newblock Tractable hypergraph properties for constraint satisfaction and
  conjunctive queries.
\newblock {\em J. {ACM}}, 60(6):42:1--42:51, 2013.

\bibitem{DBLP:conf/innovations/McKayW19}
D.~M. McKay and R.~R. Williams.
\newblock Quadratic time-space lower bounds for computing natural functions
  with a random oracle.
\newblock In A.~Blum, editor, {\em 10th Innovations in Theoretical Computer
  Science Conference, {ITCS} 2019, January 10-12, 2019, San Diego, California,
  {USA}}, volume 124 of {\em LIPIcs}, pages 56:1--56:20. Schloss Dagstuhl -
  Leibniz-Zentrum f{\"{u}}r Informatik, 2019.

\bibitem{DBLP:conf/pods/000118}
H.~Q. Ngo.
\newblock Worst-case optimal join algorithms: Techniques, results, and open
  problems.
\newblock In J.~V. den Bussche and M.~Arenas, editors, {\em Proceedings of the
  37th {ACM} {SIGMOD-SIGACT-SIGAI} Symposium on Principles of Database Systems,
  Houston, TX, USA, June 10-15, 2018}, pages 111--124. {ACM}, 2018.

\bibitem{DBLP:conf/pods/NgoPRR12}
H.~Q. Ngo, E.~Porat, C.~R{\'{e}}, and A.~Rudra.
\newblock Worst-case optimal join algorithms: [extended abstract].
\newblock In M.~Benedikt, M.~Kr{\"{o}}tzsch, and M.~Lenzerini, editors, {\em
  Proceedings of the 31st {ACM} {SIGMOD-SIGACT-SIGART} Symposium on Principles
  of Database Systems, {PODS} 2012, Scottsdale, AZ, USA, May 20-24, 2012},
  pages 37--48. {ACM}, 2012.

\bibitem{DBLP:journals/sigmod/NgoRR13}
H.~Q. Ngo, C.~R{\'{e}}, and A.~Rudra.
\newblock Skew strikes back: new developments in the theory of join algorithms.
\newblock {\em {SIGMOD} Rec.}, 42(4):5--16, 2013.

\bibitem{DBLP:journals/tods/OlteanuZ15}
D.~Olteanu and J.~Z{\'{a}}vodn{\'{y}}.
\newblock Size bounds for factorised representations of query results.
\newblock {\em {ACM} Trans. Database Syst.}, 40(1):2:1--2:44, 2015.

\bibitem{DBLP:conf/pods/BremenM23}
T.~van Bremen and K.~S. Meel.
\newblock Probabilistic query evaluation: The combined {FPRAS} landscape.
\newblock In F.~Geerts, H.~Q. Ngo, and S.~Sintos, editors, {\em Proceedings of
  the 42nd {ACM} {SIGMOD-SIGACT-SIGAI} Symposium on Principles of Database
  Systems, {PODS} 2023, Seattle, WA, USA, June 18-23, 2023}, pages 339--347.
  {ACM}, 2023.

\bibitem{DBLP:conf/icdt/Veldhuizen14}
T.~L. Veldhuizen.
\newblock Triejoin: {A} simple, worst-case optimal join algorithm.
\newblock In N.~Schweikardt, V.~Christophides, and V.~Leroy, editors, {\em
  Proc. 17th International Conference on Database Theory (ICDT), Athens,
  Greece, March 24-28, 2014}, pages 96--106. OpenProceedings.org, 2014.

\bibitem{DBLP:journals/pvldb/WangT0O23}
J.~Wang, I.~Trummer, A.~Kara, and D.~Olteanu.
\newblock {ADOPT:} adaptively optimizing attribute orders for worst-case
  optimal join algorithms via reinforcement learning.
\newblock {\em Proc. {VLDB} Endow.}, 16(11):2805--2817, 2023.

\bibitem{DBLP:conf/pods/ZhaoDK23}
H.~Zhao, S.~Deep, and P.~Koutris.
\newblock Space-time tradeoffs for conjunctive queries with access patterns.
\newblock In F.~Geerts, H.~Q. Ngo, and S.~Sintos, editors, {\em Proceedings of
  the 42nd {ACM} {SIGMOD-SIGACT-SIGAI} Symposium on Principles of Database
  Systems, {PODS} 2023, Seattle, WA, USA, June 18-23, 2023}, pages 59--68.
  {ACM}, 2023.

\end{thebibliography}

	\appendix
	\onecolumn

\ifArxiv        %

\section{Some General Notes on the Appendix and Additional Definitions}
\label{app:general}

This appendix contains both additional examples to better understand the strength of the different formalisms and query plans as well as further proof details.
Admittedly, the appendix is at times relatively technical but we believe this to be necessary to achieve the necessary rigor.
To be able to do that, we extend our notation.
That is, for $\K$-relations $R(\boldsymbol{X})$ we write directly 
\begin{align}
    \boldsymbol{x}\in R \text{ for } \boldsymbol{x}\in \supp(R).
\end{align}
Then, further, we interpret $\supp(R)$ as a $\K$-relation and write 
\begin{align}
    \supp(R):=\{\boldsymbol{x}\mapsto \1 \mid \boldsymbol{x}\in R\}.
\end{align}
We further extend the definition $\supp(R[A|\boldsymbol{y}])$.
That is, for disjoint sets of variables $\boldsymbol{Y},\boldsymbol{Z}$ and tuple $\boldsymbol{y}\in \dom^{\boldsymbol{Y}}$, we write
\begin{align}
    \supp(R[\boldsymbol{Z}|\boldsymbol{y}]) := \{\boldsymbol{z}\mapsto \1 \mid \exists\boldsymbol{x}\in \supp(R), \boldsymbol{x}[\boldsymbol{Z}]=\boldsymbol{z}, \boldsymbol{x}[\boldsymbol{Y}]=\boldsymbol{y}\}. \label{eq:app:def:supp:pro}
\end{align}
Furthermore, we extend the definition to relations $R$ when $\boldsymbol{Y}\cup \boldsymbol{Z} \supseteq \boldsymbol{X}$ and $\boldsymbol{Y}\subsetneq \boldsymbol{X}$.
That is, we define
\begin{align}
    R[\boldsymbol{Z}|\boldsymbol{y}] = \{\boldsymbol{x}[\boldsymbol{Z}]\mapsto R(\boldsymbol{x})\mid \boldsymbol{x}\in \dom^{\boldsymbol{X}}, \boldsymbol{x}[\boldsymbol{Y}] = \boldsymbol{y}\}.
\end{align}
Note that $R[\boldsymbol{Z}| \boldsymbol{y}]$ is well defined and never a scalar.
Then, we also extend the definition of $Q[A| \boldsymbol{y}]$.
Again, $\boldsymbol{Y},\boldsymbol{Z}$ need only be disjoint sets of variables and $\boldsymbol{y}\in \dom^{\boldsymbol{X}}$.
Then, we write 
\begin{align}
    Q[{\boldsymbol{Z}}|{\boldsymbol{y}}]\leftarrow \bigotimes_{i\colon \emptyset \neq \boldsymbol{X}_i\setminus\boldsymbol{Y} \subseteq \boldsymbol{Z}}R_i[\boldsymbol{Z}| \boldsymbol{y}] \otimes \bigotimes_{i\colon \emptyset = \boldsymbol{X}_i\setminus \boldsymbol{Y}} \supp(R_i[\emptyset | \boldsymbol{y}]) \otimes \bigotimes_{i\colon \boldsymbol{X}_i\setminus\boldsymbol{Y} \not\subseteq \boldsymbol{Z}}\supp(R_i[\boldsymbol{Z}| \boldsymbol{y}]). \label{eq:app:def:q:pro}
\end{align}
We see $Q[{\boldsymbol{Z}}|{\boldsymbol{y}}]$ as a SPQ with head variables $\boldsymbol{Z}$.
Further, we also see ${Q[{\boldsymbol{Z}}|{\boldsymbol{y}}](\boldsymbol{Z'})}$ as an SPQ for subsets $\boldsymbol{Z'}\subseteq \boldsymbol{Z}$ where the head variables are simply different ones.
Thus, both $\sem{Q[{\boldsymbol{Z}}|{\boldsymbol{y}}]}$ and $\sem{Q[{\boldsymbol{Z}}|{\boldsymbol{y}}](\boldsymbol{Z'})}$ are well-defined.
Notice that for all $\boldsymbol{y}\in \dom^{\boldsymbol{Y}}$ such that for all $R_i$ there is a support $\boldsymbol{x}\in\supp(R_i[\boldsymbol{X}_i\setminus\boldsymbol{Y}| \boldsymbol{y}])$
the definition of $\sem{Q[A|\boldsymbol{y}]}$ coincides with that given in Eq. \eqref{eq:qay}.
When $\boldsymbol{Z}$ (resp. $\boldsymbol{Y}$) is the emptyset we omit $\boldsymbol{Z}$ (resp. $\boldsymbol{y}$) and the bar.

Next, we prove some generally helpful lemmas.

\subsection{Some General Lemmas}

Next we recall a lemma that is at the heart of GJ \cite{DBLP:conf/pods/NgoPRR12}.
As we will use this heavily, we provide a proof of the statement.

\begin{lemma}
\label{lem:app:gj}
Let $Q(\boldsymbol{X})\leftarrow \bigotimes_i R_i(\boldsymbol{X}_i)$ be a SPQ, $\boldsymbol{Y}\subseteq \var(Q)$ a set of variables and $A\in \var(Q)\setminus \boldsymbol{Y}$ a variable.
Then, we can iterate through $a\mapsto s \in \sem{Q[A|{\boldsymbol{y}}]}$ over all $\boldsymbol{y}\in \sem{Q[{\boldsymbol{Y}}]}$ in time $O(|D|^{\rho^*(\boldsymbol{Y}\cup \{A\})})$.
\end{lemma}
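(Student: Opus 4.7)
The plan is to reduce the claim to the standard worst-case optimal analysis of Generic Join (Theorem~\ref{th:generic:join}), applied to a suitably chosen variable order that places the variables of $\boldsymbol{Y}$ first and $A$ last. Intuitively, we want to enumerate the support of $\sem{Q[\boldsymbol{Y}\cup\{A\}]}$, and, at every enumerated tuple $(\boldsymbol{y},a)$, emit the value $s\in\K$ that $\sem{Q[A|\boldsymbol{y}]}$ associates with $a$. The AGM bound then gives the desired runtime.

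First, I would fix an arbitrary total order $B_1,\ldots,B_{|\boldsymbol{Y}|}$ of the variables of $\boldsymbol{Y}$ and consider the Generic Join plan $\Pi=(B_1,\ldots,B_{|\boldsymbol{Y}|},A)$ applied to the query $Q'$ whose atoms are the relations $R_i(\boldsymbol{X}_i\cap(\boldsymbol{Y}\cup\{A\}))$, where each such atom is obtained by projecting $\supp(R_i)$ onto $\boldsymbol{X}_i\cap(\boldsymbol{Y}\cup\{A\})$. By Eq.~\eqref{eq:app:def:supp:pro} and Eq.~\eqref{eq:app:def:q:pro}, enumerating the support of $\sem{Q'}$ is exactly the same as enumerating pairs $(\boldsymbol{y},a)$ with $\boldsymbol{y}\in\sem{Q[\boldsymbol{Y}]}$ and $a\in\supp(\sem{Q[A|\boldsymbol{y}]})$. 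Projections of the $R_i$ onto $\boldsymbol{X}_i\cap(\boldsymbol{Y}\cup\{A\})$ can be precomputed (or produced on the fly by scanning $R_i$) using no extra asymptotic space; the fractional edge cover number of $\boldsymbol{Y}\cup\{A\}$ relative to $Q'$ is at most $\rho^*(\boldsymbol{Y}\cup\{A\})$ by definition.

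Next I would describe how GJ yields the enumeration structure required by the lemma. Following Eq.~\eqref{eq:gj:for}, the plan $\Pi$ produces, through its $|\boldsymbol{Y}|$ outer nested loops, exactly the tuples $\boldsymbol{y}\in\supp(\sem{Q[\boldsymbol{Y}]})$, and the innermost loop (the $A$-loop) enumerates the values $a\in\bigcap_{i:A\in\boldsymbol{X}_i}\supp(R_i[A|\boldsymbol{y}])$, which by Eq.~\eqref{eq:qay} is exactly $\supp(\sem{Q[A|\boldsymbol{y}]})$. For each such $a$, the value $s\in\K$ required by the lemma is computed as the product $\bigotimes_{i:\boldsymbol{X}_i\setminus\boldsymbol{Y}=\{A\}}R_i(\boldsymbol{y}'[\boldsymbol{X}_i])$ via $O(1)$ hash lookups into the original $R_i$. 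Thus the GJ loops directly realize the desired iteration: upon reaching a tuple $(\boldsymbol{y},a)$, the algorithm outputs $a\mapsto s$, and upon exhausting the $A$-loop the outer loops advance $\boldsymbol{y}$.

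Finally, I would invoke the AGM bound via Theorem~\ref{th:generic:join} to conclude the runtime claim. The GJ analysis shows that the total work across all nested loops is $O(|D|^{\rho^*(\var(Q'))})=O(|D|^{\rho^*(\boldsymbol{Y}\cup\{A\})})$, independent of which relations are used to form the intersection at each level. The constant-time value computation adds at most a constant factor per emitted pair. The main obstacle I anticipate is bookkeeping around the definitions in Eq.~\eqref{eq:app:def:supp:pro} and Eq.~\eqref{eq:app:def:q:pro}: one has to carefully check that the three groups of relations appearing in $Q[A|\boldsymbol{y}]$ all contribute correctly (the first group contributes values, the other two only act as support filters), and that taking projections of the $R_i$ does not inflate the AGM bound. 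Once that is verified, the result follows immediately from the worst-case optimality of GJ.
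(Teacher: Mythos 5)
Your proof is correct for the lemma as literally stated, but it takes a genuinely different route from the paper. You reduce the claim to the known worst-case optimality of Generic Join (Theorem~\ref{th:generic:join}) applied to the projected query $Q'$ over $\boldsymbol{Y}\cup\{A\}$, checking that projection does not increase the fractional edge cover number; this is a clean black-box argument, and your verification that $\supp(\sem{Q'})$ coincides with the set of pairs $(\boldsymbol{y},a)$ with $\boldsymbol{y}\in\sem{Q[\boldsymbol{Y}]}$ and $a\in\supp(\sem{Q[A|\boldsymbol{y}]})$, and that $s$ is recoverable by $O(1)$ lookups, is the right bookkeeping. The paper instead re-derives the bound from scratch: it fixes an arbitrary fractional edge cover $(w_i)_i$ of $\boldsymbol{Y}\cup\{A\}$ and shows directly, via the geometric-mean bound on the minimum and repeated applications of H\"older's inequality, that $\sum_{\boldsymbol{y}\in\sem{Q[\boldsymbol{Y}]}}\min_{i}|\supp(R_i[A|\boldsymbol{y}])|\leq |D|^{\sum_i w_i}$. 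What the paper's formulation buys is that the bound is on a sum over all $\boldsymbol{y}\in\sem{Q[\boldsymbol{Y}]}$ of a per-$\boldsymbol{y}$ cost, independent of the order or mechanism by which the tuples $\boldsymbol{y}$ are produced; this matters because in all downstream uses (e.g.\ the pseudo-tree algorithms) the $\boldsymbol{y}$'s arrive from the enclosing recursion rather than from a GJ enumeration over $\boldsymbol{Y}$, and because the generalization in Lemma~\ref{lem:app:gj2} (conditioning on an arbitrary extension $\boldsymbol{z}(\boldsymbol{y})$ of $\boldsymbol{y}$) follows from the paper's argument by a one-line modification but would require extra work to extract from your GJ black box. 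One small wrinkle in your write-up: materializing the projections $R_i[\boldsymbol{X}_i\cap(\boldsymbol{Y}\cup\{A\})]$ would cost $O(|D|)$ space, which is harmless for this time-only lemma but should be avoided (as GJ does, by computing restricted supports on the fly from the sorted/indexed inputs) when the lemma is deployed inside the constant-space algorithms.
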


\begin{proof}

First note that 
\begin{align}
    Q[A|{\boldsymbol{y}}] \leftarrow \bigotimes_{i\colon A = \boldsymbol{X}_i\setminus \boldsymbol{Y}} R_i[A| \boldsymbol{y}]\otimes \bigotimes_{i\colon A\subsetneq \boldsymbol{X}_i \setminus \boldsymbol{Y}} \supp(R_i[A| \boldsymbol{y}])\otimes \bigotimes_{i\colon A\not\in \boldsymbol{X}_i} \supp(R_i[\boldsymbol{y}])\label{eq:app:pt2}
\end{align}
Thus, to iterate through $a\mapsto s \sem{Q[A|{\boldsymbol{y}}]}$ we can do the following:
\begin{enumerate}
    \item We can ignore the last part of the product of Eq. \eqref{eq:app:pt2} by assuming $\boldsymbol{y}\in \sem{Q[{\boldsymbol{Y}}]}$.
    \item We can iterate through the possible $a\in \sem{Q[A|{\boldsymbol{y}}]}$ by taking the smallest (cardinality wise) $R$ of the relations  $R_i[A|\boldsymbol{y}],\supp(R_i[A|\boldsymbol{y}])$ in the first and second product of Eq. \eqref{eq:app:pt2}, iterate through the domain $a \mapsto \1 \in R$ and check that $a$ is present in every other relation $R_i[A|\boldsymbol{y}],\supp(R_i[A|\boldsymbol{y}])\neq R$ in the first and second product of Eq. \eqref{eq:app:pt2}.
    \item The annotation $s$ of $a$ in $\sem{Q[A|{\boldsymbol{y}}]}$ is then the product of the annotations of $a$ in the relations $R_i[A|\boldsymbol{y}]$ in the first product of Eq. \eqref{eq:app:pt2}
\end{enumerate}
Iterating through $\sem{Q[A|{\boldsymbol{y}}]}$ in this way allows us to do it in time (look-ups take logarithmic time which we ignore)
\begin{align}O(\min_{i\colon A\subseteq \boldsymbol{X}_i \setminus \boldsymbol{Y}} |\supp(R_i[A|\boldsymbol{y}])). \label{eq:app:gjupto}
\end{align}
Note the index $i$ goes over the relations of the first and second product in Eq. \eqref{eq:app:pt2}.
Thus, to compute the overall time, we simply have to sum this up over all $\boldsymbol{y}\in \sem{Q[{\boldsymbol{Y}}]}$.
To that end, we proof that the claim of the lemma holds for an arbitrary fractional edge cover and, thus, it also holds for the minimal.
Therefore, let $(w_i)_i$ be a fractional edge cover of $\boldsymbol{Y}\cup \{A\}$.
Then, clearly, $\forall B\in \boldsymbol{Y}\colon \sum_{i\colon B\in \boldsymbol{X}_i} w_i\geq 1$ and $\sum_{i\colon A\in \boldsymbol{X}_i} w_i\geq 1$.
Furthermore, let us fix an arbitrary order $\boldsymbol{Y} = B_1,\dots,B_l$ and the shorthand $\boldsymbol{y}_j^{j'}:=(b_j,\dots, b_{j'}))$ for $j\leq j'$.
{\small
\begin{align}
   &\sum_{\boldsymbol{y}\in \sem{Q[{\boldsymbol{Y}}]}}\min_{i\colon A\subseteq \boldsymbol{X}_i \setminus \boldsymbol{Y}} |\supp(R_i[A|\boldsymbol{y}])| \\
   &\leq \sum_{\boldsymbol{y}\in \sem{Q[{\boldsymbol{Y}}]}}\prod_{i\colon A\subseteq \boldsymbol{X}_i \setminus \boldsymbol{Y}}|\supp(R_i[A|\boldsymbol{y}])|^{w_i} \label{eq:gm}\\
   &\leq \sum_{\boldsymbol{y}\in \sem{Q[{\boldsymbol{Y}}]}}\prod_{i}|\supp(R_i[A|\boldsymbol{y}])|^{w_i} \label{eq:add1s}\\
   & \leq \sum_{b_1\in \sem{Q[{B_1}]}} \sum_{b_2\in \sem{Q[B_2|{\boldsymbol{y}^1_1}]}}\dots \sum_{b_l\in \sem{Q[B_l|\boldsymbol{y}_1^{l-1}]}}\prod_{i}|\supp(R_i[A|\boldsymbol{y}_1^{l}])|^{w_i} \\
   & = \sum_{b_1\in \sem{Q[{B_1}]}} \dots \sum_{b_{l-1}\in \sem{Q[B_{l-1}|\boldsymbol{y}_1^{l-2}]}} \prod_{i\colon B_l\not\in \boldsymbol{X}_i}|\supp(R_i[A|\boldsymbol{y}_1^{l-1}])|^{w_i} \sum_{b_l\in \sem{Q[B_l|\boldsymbol{y}_1^{l-1}]}}\prod_{i\colon B_l\in \boldsymbol{X}_i}|\supp(R_i[A|\boldsymbol{y}_1^{l}])|^{w_i} \\
   & \leq \sum_{b_1\in \sem{Q[{B_1}]}} \dots \sum_{b_{l-1}\in \sem{Q[B_{l-1}|\boldsymbol{y}_1^{l-2}]}} \prod_{i\colon B_l\not\in \boldsymbol{X}_i}|\supp(R_i[A|\boldsymbol{y}_1^{l-1}])|^{w_i} \prod_{i\colon B_l\in \boldsymbol{X}_i}(\sum_{b_l\in \sem{Q[B_l|\boldsymbol{y}_1^{l-1}]}}|\supp(R_i[A|\boldsymbol{y}_1^{l}])|)^{w_i} \label{eq:hölder}\\
   & \leq \sum_{b_1\in \sem{Q[{B_1}]}} \dots \sum_{b_{l-1}\in \sem{Q[B_{l-1}|\boldsymbol{y}_1^{l-2}]}} \prod_{i\colon B_l\not\in \boldsymbol{X}_i}|\supp(R_i[A|\boldsymbol{y}_1^{l-1}])|^{w_i} \prod_{i\colon B_l\in \boldsymbol{X}_i}|\supp(R_i[AB_l|\boldsymbol{y}_1^{l-1}])|^{w_i} \\
   & = \sum_{b_1\in \sem{Q[{B_1}]}} \dots \sum_{b_{l-1}\in \sem{Q[B_{l-1}|\boldsymbol{y}_1^{l-2}]}} \prod_{i}|\supp(R_i[AB_l|\boldsymbol{y}_1^{l-1}])|^{w_i} \label{eq:from}\\
   & \vdots \nonumber\\
   & \leq \sum_{b_1\in \sem{Q[{B_1}]}} \prod_{i}|\supp(R_i[AB_2\cdots B_l|b_1])|^{w_i} \label{eq:to}\\
   & = \prod_{i\colon B_1\not \in X_i}|\supp(R_i[AB_2\cdots B_l)|^{w_i} \sum_{b_1\in \sem{Q[{B_1}]}} \prod_{i\colon B_1 \in X_i}|\supp(R_i[AB_2\cdots B_l|b_1])|^{w_i}\\
   & \leq \prod_{i\colon B_1\not \in X_i}|\supp(R_i[AB_2\cdots B_l)|^{w_i} \prod_{i\colon B_1 \in X_i}(\sum_{b_1\in \sem{Q[{B_1}]}} |\supp(R_i[AB_2\cdots B_l|b_1])|)^{w_i} \label{eq:hölder2}\\
   & \leq \prod_{i\colon B_1\not \in X_i}|\supp(R_i[AB_2\cdots B_l)|^{w_i} \prod_{i\colon B_1 \in X_i} |\supp(R_i[AB_1\cdots B_l)|^{w_i}\\
   & = \prod_{i}|\supp(R_i[AB_1\cdots B_l)|^{w_i}\\
   & \leq \prod_{i} |D|^{w_i} \label{eq:fullTable} \\   
   & \leq |D|^{\sum_iw_i} 
\end{align}
}
Note that the Inequality $\eqref{eq:gm}$ is due to the geometric mean being a bound on a minimum.
In, Equality $\eqref{eq:add1s}$, we simply add the remaining relations that do not contain $A$.
Inequalities $\eqref{eq:hölder}$ and \eqref{eq:hölder2} are due to Hölder.
To get form line \eqref{eq:from} to \eqref{eq:to}, we simply proceed as was done for $B_l$ until we arrive at $B_1$.
The argument for $B_1$ is the same as for all the other $B_j$'s but we repeat it for clarity.
In Inequality \eqref{eq:fullTable}, we simply bound the sizes of the subrelations with the whole database.
(Note that these are precisely the arguments used in generic join.)    
\end{proof}

Next we prove a lemma that is at the heart of why a PT can help us evaluate SPQs.

\begin{lemma}
\label{lem:app:ptdecompose} 
    Let $Q(\boldsymbol{X})\leftarrow \bigotimes_i R_i(\boldsymbol{X}_i)$ be a SPQ, $P$ a PT of $Q(\boldsymbol{X})$ and $A\in \var(Q)$ arbitrary.
    Then, for an $\boldsymbol{x}\in \dom^{\anc(A)}\colon$ 
    {\small
    \begin{align}
        \sem{Q[{\descc(A)}|{\boldsymbol{x}}](\outt(A))}=\bigoplus_{a\mapsto s\in \sem{Q[A|{\boldsymbol{x}}]}} \{a[\boldsymbol{X}]\mapsto s\}\otimes\bigotimes_{B\in child(A)} \sem{Q[{\descc(B)}|{(\boldsymbol{x},a)}](\outt(B))}. \label{eq:app:decom}
    \end{align}
    }
\end{lemma}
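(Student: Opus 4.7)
The plan is to unfold both sides using the definitions of $\sem{\cdot}$ and $Q[\cdot|\cdot]$ from Eq.~\eqref{eq:app:def:q:pro} and to show that they coincide coordinate-wise as $\K$-relations on $\dom^{\outt(A)}$. The core combinatorial content is that the pseudo-tree property lets us simultaneously partition the atoms of $Q$ and the summation variables into groups indexed by $A$ and the children of $A$, after which the LHS factors into a product structure matching the RHS.

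First I would classify each atom $R_i(\bm X_i)$ of $Q$ by how its variables interact with the subtree rooted at $A$. By Def.~\ref{def:pt} there is some $D \in \var(Q)$ with $\bm X_i \subseteq \ancc(D)$. Combining this with the disjoint decomposition $\var(Q) = \anc(A) \sqcup \descc(A)$, exactly one of three cases holds for each atom: (i) $\bm X_i \subseteq \anc(A)$ (entirely above $A$); (ii) $A \in \bm X_i$ and $\bm X_i \cap \desc(A) = \emptyset$, so $\bm X_i \setminus \anc(A) = \{A\}$; or (iii) $\bm X_i \cap \desc(A) \neq \emptyset$, in which case the witnessing $D$ must lie in $\desc(A)$. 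For case~(iii), $D$ lies in the subtree rooted at a unique child $B \in \mathit{child}(A)$; since the path from the root to $D$ passes through $A$ then $B$, we get $\ancc(D) \cap \descc(A) \subseteq \{A\} \cup \descc(B)$, hence $\bm X_i \setminus \anc(A) \subseteq \{A\} \cup \descc(B)$. Thus the case-(iii) atoms partition into disjoint groups $\calA_B$, one per child $B$.

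Next I would exploit the disjoint decomposition $\descc(A) = \{A\} \sqcup \bigsqcup_{B \in \mathit{child}(A)} \descc(B)$, which induces $\outt(A) = (\{A\} \cap \bm X) \sqcup \bigsqcup_B \outt(B)$. In the unfolded LHS, the summation over $\bm v \in \dom^{\descc(A)}$ factors as a sum over $a \in \dom$ followed by independent sums over $\bm v_B \in \dom^{\descc(B)}$ for each child. Combined with the atom classification, the product factors: case~(ii) atoms, the case~(i) support indicators, and the case~(iii) projected support factors $\supp(R_i[A|\bm x])$ (which enforce that $a$ can be extended consistently downward) together collapse into exactly the annotation $s$ of $a$ in $\sem{Q[A|\bm x]}$, while the case~(iii) atoms in each group $\calA_B$ combined with the sum over $\bm v_B$ assemble into $\sem{Q[\descc(B)|(\bm x, a)](\outt(B))}$. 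The natural join on the RHS degenerates to a Cartesian product because the output schemas $\{A\} \cap \bm X$ and the $\outt(B)$ are pairwise disjoint.

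The main technical obstacle I anticipate is careful bookkeeping of the various support factors produced by the second and third products in Eq.~\eqref{eq:app:def:q:pro}. In particular, case~(i) atom support checks are performed once at the $A$-level inside $\sem{Q[A|\bm x]}$ and potentially again inside each child's subquery via $\supp(R_i[\emptyset|(\bm x, a)])$; one must observe that multiplying by the same $\{\0, \1\}$ factor multiple times does not change the value, so the duplication is harmless. Similarly, case~(iii) atoms contribute an $A$-projection support factor at the $A$-level and a full $\K$-value factor at the child level, and one must verify that their product reconstructs the single factor appearing on the LHS after summing out. Once this bookkeeping is set up, the identity follows from the distributivity of $\otimes$ over $\oplus$ in the commutative semi-ring.
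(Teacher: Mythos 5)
Your overall strategy --- classify the atoms by how they meet the subtree of $A$ via the pseudo-tree property, factor the sum over $\dom^{\descc(A)}$ into a sum over $a$ followed by independent sums over each child's subtree, and absorb the duplicated $\{\0,\1\}$ support indicators by idempotence --- is exactly the paper's argument; the paper's proof performs the same three-way split of the atoms, the same factorization by distributivity, and dismisses the redundant support checks with essentially the same remark.

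There is, however, one concrete error in your setup: the claimed decomposition $\var(Q) = \anc(A) \sqcup \descc(A)$ is false unless $A$ lies on every root-to-leaf path of the pseudo-tree, since variables in sibling subtrees belong to neither set. Consequently your trichotomy is not exhaustive: an atom $R_i(\bm X_i)$ with $A \notin \bm X_i$, $\bm X_i \cap \desc(A) = \emptyset$ and $\bm X_i \not\subseteq \anc(A)$ (for instance $R_1(A,B)$ viewed from the node $E$ of the pseudo-tree in Fig.~\ref{fig:pt}) falls into none of your cases (i)--(iii), so the "LHS factors into a product matching the RHS" step is not justified for those atoms. The paper sidesteps this by splitting first on whether $\bm X_i \cap \descc(A) = \emptyset$ and lumping \emph{all} such atoms --- including the sibling-subtree ones --- into one bucket contributing only the scalar indicator $\supp(R_i[\bm x])$. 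The repair is immediate: replace your case (i) by the condition $\bm X_i \cap \descc(A) = \emptyset$; under Eq.~\eqref{eq:app:def:q:pro} every such atom contributes only $\{\0,\1\}$-valued support factors on both sides of Eq.~\eqref{eq:app:decom}, and the idempotence bookkeeping you already set up for case (i) disposes of them. With that correction your proof goes through and coincides with the paper's.
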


\begin{proof}
Let us denote the children as $\{B_1,\dots, B_l\}=child(A)$.
Further, let us consider an $R_i(\boldsymbol{X}_i)$ with $\descc(A)\cap \boldsymbol{X}_i\neq \emptyset$.
Then, note that due to the property of a PT, either $\boldsymbol{X}_i\subseteq \ancc(A)$ or for exactly one $j=1,\dots,l$ we have $\descc(B_j)\cap \boldsymbol{X}_i \neq \emptyset$.
Thus, we have
{\small
\begin{align}
    Q&[{\descc(A)}|{\boldsymbol{x}}](\outt(A))\nonumber\\
    &\leftarrow {\bigotimes_{i \colon \substack{\descc(A) \cap \boldsymbol{X}_i\neq \emptyset \\ \land \boldsymbol{X}_i \subseteq \ancc(A)}}R_i[A|\boldsymbol{x}]}\otimes \bigotimes_{j}{\bigotimes_{i \colon {\descc(B_j)\cap \boldsymbol{X}_i \neq \emptyset}}R_i[\descc(B_j)\cup \{A\}|\boldsymbol{x}]}  \otimes{\bigotimes_{i\colon \descc(A)\cap \boldsymbol{X}_i =\emptyset}\supp(R_i[\boldsymbol{x}]}).\label{eq:app:ptdecompose}
\end{align}
}

Note that $\sem{Q[{\descc(A)}|{\boldsymbol{x}}](\outt(A))}=\{\boldsymbol{y}\mapsto \0\mid \boldsymbol{y}\}$ if $\boldsymbol{x}\not\in \sem{Q[\boldsymbol{X}]}$ and the same for the right hand side of Eq. \eqref{eq:app:decom} due to convention.
Thus, let us consider the case $\boldsymbol{x}\in \sem{Q[\boldsymbol{X}]}$.
Then, we can ignore the relations in the last product of Eq. \eqref{eq:app:ptdecompose}.

Let us consider the cases $A\in \boldsymbol{X}$ and $A\not\in \boldsymbol{X}$ individually.

\textbf{Case} $A\in \boldsymbol{X}$:
Then, for $\boldsymbol{y}\in \dom^{\outt(A)}$, we can compute $\sem{Q[{\descc(A)}|{\boldsymbol{x}}](\outt(A))}(\boldsymbol{y})=$
\begin{align*}
    {\bigotimes_{i \colon \substack{\descc(A) \cap \boldsymbol{X}_i\neq \emptyset \\ \land \boldsymbol{X}_i \subseteq \ancc(A)}}R_i(\boldsymbol{x}[\boldsymbol{X}_i],\boldsymbol{y}[A])}\otimes \bigotimes_{j}\bigoplus_{\boldsymbol{y}_j\in \dom^{{\descc(B_j)}}}{\bigotimes_{i \colon {\descc(B_j)\cap \boldsymbol{X}_i \neq \emptyset}}R_i((\boldsymbol{x},\boldsymbol{y},\boldsymbol{y}_j)[\boldsymbol{X}_i])}.
\end{align*}
However, we can rewrite this to

\begin{align*}
    \sem{Q[A|\boldsymbol{x}]}(\boldsymbol{y}[A])\otimes \bigotimes_{j} \sem{Q[\descc(B_j)|(\boldsymbol{x}, \boldsymbol{y}[A])]}(\boldsymbol{y}[\outt(B)]).
\end{align*}

Note that this just introduces additions and redundant checks if $(\boldsymbol{x},\boldsymbol{y}[A])$ is in the support of all relations.
Thus, this completes this case.

\textbf{Case} $A\not\in \boldsymbol{X}$:
Then, for $\boldsymbol{y}\in \dom^{\outt(A)}$ ($A\not\in \outt(A)$), we can compute $\sem{Q[{\descc(A)}|{\boldsymbol{x}}](\outt(A))}(\boldsymbol{y})=$
\begin{align*}
    \bigoplus_{a\in \dom^{A}}{\bigotimes_{i \colon \substack{\descc(A) \cap \boldsymbol{X}_i\neq \emptyset \\ \land \boldsymbol{X}_i \subseteq \ancc(A)}}R_i(\boldsymbol{x}[\boldsymbol{X}_i],a)}\otimes \bigotimes_{j}\bigoplus_{\boldsymbol{y}_j\in \dom^{{\descc(B_j)}}}{\bigotimes_{i \colon {\descc(B_j)\cap \boldsymbol{X}_i \neq \emptyset}}R_i((\boldsymbol{x},a,\boldsymbol{y},\boldsymbol{y}_j)[\boldsymbol{X}_i])}.
\end{align*}
However, we can rewrite this to

\begin{align*}
    \bigoplus_{a\in \sem{Q[A|\boldsymbol{x}]}}\sem{Q[A|\boldsymbol{x}]}(a)\otimes \bigotimes_{j} \sem{Q[\descc(B_j)|(\boldsymbol{x},a)]}(\boldsymbol{y}[\outt(B)]).
\end{align*}

Note that this just introduces additions and redundant checks if $(\boldsymbol{x},a)$ is in the support of all relations.
Thus, this completes this case and with it the whole proof.
\end{proof}

Next we prove a lemma that explains how considering the context variables helps us avoid unnecessary work.

\begin{lemma}
\label{lem:app:con}
    Let $Q$ be a SPQ, $(P,\boldsymbol{C})$ a PTC of $Q$, and $A\in \var(Q)$ an arbitrary variable.
    Then, for any set of variables $\boldsymbol{Y}$ such that $\con(A)\subseteq \boldsymbol{Y}\subseteq \var(Q)\setminus \descc(A)$ and tuple $\boldsymbol{y}\in \sem{Q[{\boldsymbol{Y}}]}$
    we have $\sem{Q[{\descc(A)}|{\boldsymbol{y}}]}=\sem{Q[{\descc(A)}|{\boldsymbol{y}[\con(A)]}]}$.
\end{lemma}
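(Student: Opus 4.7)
\medskip
\noindent\textbf{Proof plan.}
The plan is to classify each atom $R_i(\boldsymbol{X}_i)$ of $Q$ by the position of its variables relative to $A$ in the pseudo-tree, and then show that, for each class, the factor contributed by $R_i$ to $\sem{Q[\descc(A)|\boldsymbol{y}]}$ (as unfolded via Eq.~\eqref{eq:app:def:q:pro}) depends on $\boldsymbol{y}$ only through $\boldsymbol{y}[\con(A)]$, possibly up to the consistency afforded by the hypothesis $\boldsymbol{y}\in\sem{Q[\boldsymbol{Y}]}$.

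First I would prove a locality lemma: for every atom $R_i$ with $\boldsymbol{X}_i\cap\descc(A)\neq\emptyset$, one has $\boldsymbol{X}_i\subseteq\ancc(A)\cup\descc(A)$ and moreover $\boldsymbol{X}_i\cap\boldsymbol{Y}\subseteq\con(A)$. The inclusion $\boldsymbol{X}_i\subseteq\ancc(A)\cup\descc(A)$ is forced by the pseudo-tree property $\boldsymbol{X}_i\subseteq\ancc(C)$: the only way $\ancc(C)$ can meet $\descc(A)$ is if $C\in\descc(A)$, in which case the path from the root to $C$ consists of $\ancc(A)$ followed by descendants of $A$. Combined with $\boldsymbol{Y}\cap\descc(A)=\emptyset$ this gives $\boldsymbol{X}_i\cap\boldsymbol{Y}\subseteq\anc(A)$, and then the definition of $\con(A)$ (each such ancestor appears with a descendant of $A$ in the same atom) yields $\boldsymbol{X}_i\cap\boldsymbol{Y}\subseteq\con(A)$. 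Since $\con(A)\subseteq\boldsymbol{Y}$, in fact $\boldsymbol{X}_i\cap\boldsymbol{Y}=\boldsymbol{X}_i\cap\con(A)$.

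Next I would run a case analysis on $R_i$ and match the products in Eq.~\eqref{eq:app:def:q:pro} under $\boldsymbol{y}$ and under $\boldsymbol{y}[\con(A)]$. Atoms with $\boldsymbol{X}_i\cap\descc(A)\neq\emptyset$ satisfy $\boldsymbol{X}_i\setminus\boldsymbol{Y}\subseteq\descc(A)$ and also $\boldsymbol{X}_i\setminus\con(A)\subseteq\descc(A)$ by the locality lemma, so they sit in the first product in both cases; their contributions $R_i[\descc(A)|\boldsymbol{y}]$ and $R_i[\descc(A)|\boldsymbol{y}[\con(A)]]$ match on $\boldsymbol{X}_i\cap\con(A)$ and hence are equal. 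Atoms with $\boldsymbol{X}_i\cap\descc(A)=\emptyset$ have empty projection onto $\descc(A)$, so in every surviving product they contribute either $\{\emptyset\mapsto\1\}$ or $\emptyset$, depending on whether there exists a consistent extension of the respective conditioning tuple in $\supp(R_i)$. The hypothesis $\boldsymbol{y}\in\sem{Q[\boldsymbol{Y}]}$ supplies a full assignment $\boldsymbol{v}$ with $\boldsymbol{v}[\boldsymbol{Y}]=\boldsymbol{y}$ and $\boldsymbol{v}[\boldsymbol{X}_i]\in\supp(R_i)$ for every $i$; this witness simultaneously certifies extendibility of $\boldsymbol{y}[\boldsymbol{X}_i\cap\boldsymbol{Y}]$ and of the coarser $\boldsymbol{y}[\boldsymbol{X}_i\cap\con(A)]$, so the contribution is $\{\emptyset\mapsto\1\}$ under both conditionings. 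Taking the $\otimes$-product across all atoms shows that the two queries have identical semantics.

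The main obstacle will be the bookkeeping around atoms that \emph{reclassify} between the second and third products of Eq.~\eqref{eq:app:def:q:pro} as we pass from $\boldsymbol{y}$ to $\boldsymbol{y}[\con(A)]$: an atom with $\boldsymbol{X}_i\subseteq\boldsymbol{Y}$ but $\boldsymbol{X}_i\not\subseteq\con(A)$ becomes a support check on a genuinely smaller conditioning. The argument only goes through because (i) such atoms necessarily have $\boldsymbol{X}_i\cap\descc(A)=\emptyset$, so the projected relation degenerates to the empty tuple, and (ii) the global consistency witness $\boldsymbol{v}$ shows this tuple is indeed present. The locality lemma in the first step is precisely what prevents the ``interesting'' atoms (those with $\boldsymbol{X}_i\cap\descc(A)\neq\emptyset$) from ever undergoing such a reclassification, which is the whole point of the definition of $\con(A)$.
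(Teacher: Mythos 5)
Your proposal is correct and follows essentially the same route as the paper's proof: the "locality lemma" (atoms meeting $\descc(A)$ satisfy $\boldsymbol{X}_i\subseteq\con(A)\cup\descc(A)$, while atoms outside the first product have $\boldsymbol{X}_i\cap\descc(A)=\emptyset$) is exactly the paper's key step, and the remaining bookkeeping — matching the first-product factors via $\boldsymbol{X}_i\cap\boldsymbol{Y}=\boldsymbol{X}_i\cap\con(A)$ and collapsing the support checks to $\{()\mapsto\1\}$ using the witness from $\boldsymbol{y}\in\sem{Q[\boldsymbol{Y}]}$ — is identical.
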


\begin{proof}

First, let us consider the structure of $Q[{\descc(A)}|{\boldsymbol{y}}]\leftarrow$
{\small
\begin{align}
    \bigotimes_{i \colon \substack{\emptyset \neq \boldsymbol{X}_i\setminus \boldsymbol{Y},\\
     \boldsymbol{X}_i\setminus \boldsymbol{Y} \subseteq \descc(A)}}     R_i[\descc(A)|\boldsymbol{y}]
    \otimes \bigotimes_{i \colon \emptyset = \boldsymbol{X}_i\setminus \boldsymbol{Y}}\supp(R_i[\boldsymbol{y}])
    \otimes\bigotimes_{i\colon \boldsymbol{X}_i\setminus \boldsymbol{Y} \not\subseteq \descc(A)}\supp(R_i[\descc(A)|\boldsymbol{y}]).\label{eq:app:ptc}
\end{align}    
}

Then, let us consider a relation $R_i$ considered in the last product of Eq. \eqref{eq:app:ptc} and assume there exists a $C\in \boldsymbol{X}_i\cap \descc(A)$.
By definition, there is a $B\in \boldsymbol{X}_i\setminus (\boldsymbol{Y}\cup \descc(A))$.
Then, by the definition of a PT, $B\in \anc(A)$ and by the definition of context $B\in \con(A)$.
Thus, actually, such a $B$ cannot exist and as a consequence, $\boldsymbol{X}_i\cap \descc(A)=\emptyset$ for every relation $R_i$ in the last product of Eq. \eqref{eq:app:ptc}.
Thus, we can rewrite it to

\begin{align}
    Q[{\descc(A)}|{\boldsymbol{y}}] & \leftarrow 
    \bigotimes_{i \colon \substack{\emptyset \neq \boldsymbol{X}_i\setminus \boldsymbol{Y},\\
     \boldsymbol{X}_i\setminus \boldsymbol{Y} \subseteq \descc(A)}}     R_i[\descc(A)|\boldsymbol{y}]
    \otimes \bigotimes_{i \colon  \substack{\emptyset = \boldsymbol{X}_i\setminus \boldsymbol{Y} \\ \lor \boldsymbol{X}_i\setminus \boldsymbol{Y} \not\subseteq \descc(A)}}\supp(R_i[\boldsymbol{y}]). \label{eq:app:ptcy1}
\end{align}  

Further, let us consider the structure of $Q[{\descc(A)}|{\boldsymbol{y}[\con(A)]}]$ (note that $\con(A)$ could have also been such a set of variables $\boldsymbol{Y}$):

\begin{align}
    Q&[{\descc(A)}|{\boldsymbol{y}[\con(A)]}]  \leftarrow \nonumber \\
    &\bigotimes_{i \colon \substack{\emptyset \neq \boldsymbol{X}_i\setminus \con(A), \\
    \boldsymbol{X}_i\setminus \con(A) \subseteq \descc(A)}}     R_i[\descc(A)|\boldsymbol{y}[\con(A)]]
    \otimes \bigotimes_{i\colon \substack{\emptyset = \boldsymbol{X}_i\setminus \con(A) \\ \lor \boldsymbol{X}_i\setminus \con(A) \not\subseteq \descc(A)}}\supp(R_i[\boldsymbol{y}[\con(A)]]). \label{eq:app:ptccon1}
\end{align}  

First, note that for any $i$ the relation $\supp(R_i[\boldsymbol{y}[\con(A)]])$ is the same as $\supp(R_i[\boldsymbol{y}])$ since both are $\{()\mapsto \1\}$ as $\boldsymbol{y}\in \sem{Q[{\boldsymbol{Y}}]}$.
Moreover, if $\boldsymbol{X}_i\cap \boldsymbol{Y}\subseteq \con(A)$, clearly $R_i[\descc(A)|\boldsymbol{y}[\con(A)]]$ is the same as $R_i[\descc(A)|\boldsymbol{y}]$.
Hence, if we can show that Eq. \eqref{eq:app:ptcy1} and \eqref{eq:app:ptccon1} actually partition the $i$'s in the same way and that $\boldsymbol{X}_i\cap \boldsymbol{Y}\subseteq \con(A)$ always holds for relations in the first part, we can be sure that the expressions are actually the same.

To that end, consider a $i$ such that $\emptyset \neq \boldsymbol{X}_i\setminus \boldsymbol{Y} \subseteq \descc(A)$.
Then, similarly to above, using the definition of PTs and contexts, we get $\boldsymbol{X}_i \subseteq \con(A)\cup \descc(A)$.
Hence, $\boldsymbol{X}_i\cap \boldsymbol{Y}\subseteq \con(A)$ and $\emptyset \neq \boldsymbol{X}_i \setminus \con(A) \subseteq \descc(A)$.

Proving the converse is even simpler.
To that end, consider a $i$ such that $\emptyset \neq \boldsymbol{X}_i \setminus \con(A) \subseteq \descc(A)$.
Hence,  $\boldsymbol{X}_i \subseteq \con(A)\cup \descc(A)$ and $\emptyset \neq \boldsymbol{X}_i \setminus \boldsymbol{Y} \subseteq \descc(A)$.

As the expressions are the same, naturally, $\sem{Q[{\descc(A)}|{\boldsymbol{y}}]}=\sem{Q[{\descc(A)}|{\boldsymbol{y}[\con(A)]}]}$.
\end{proof}

\fi 
\ifArxiv

\section{Appendix for Section \ref{sec:pt}}

Before proving Theorem \ref{thm:algo:pt}, we briefly showcase the computation of the space-time exponent of a PT.

\subsection{Additional example of a plan in $\pt$}

Let us reconsider the example used in Section \ref{sec:pt} which can also be seen in Figures~\ref{fig:app:ptQuery} and \ref{fig:app:pt}.
However, this time we added some output variables marked in blue.
In Figure \ref{fig:app:pt}, we added variables $\outt$ to each node, as well as the quantities $\rho^*(\outt), \rho^*(\ancc \cup \out)$ as these are important to compute the space-time exponent.
That is, the space-time exponent of this PT is simply the max over these quantities (component wise), i.e., $(2,2) = (\rho^*(BCF), \rho^*(BCF))$ achieved at the root $B$. 
Note that this is ``unavoidable'' as the fractional edge cover number of the output variables is already 2 and hence, the output can be up to quadratic in size.
However, for the subset $Q(B,F)$, the same PT would have the space-time exponent $(1,1.5)$. 
The only changes would be that for $B$ we get the annotations $(BF,1,1)$ and for $C$ we get the annotations $(\emptyset, 0, 1)$.

\begin{figure}

\begin{minipage}[b]{0.42\textwidth}
\begin{figure}[H]
    \centering
    \begin{tikzpicture}[font=\footnotesize]
            \node (A) at (-2,2) {$A$};
            \node (B) at (0,2) {$B$};
            \node (C) at (2,2) {$C$};
            \node (D) at (-2,0) {$D$};
            \node (E) at (0,0) {$E$};
            \node (F) at (2,0) {$F$};

            \draw[fill=blue,opacity=0.5, fill opacity=0.2] \convexpath{B,C,F}{.38cm};

            \draw (A) -- (B) node[midway, above] {$R_1$};
            \draw (B) -- (C) node[midway, above] {$R_2$};
            \draw (B) -- (D) node[midway, left] {$R_3$};
            \draw (B) -- (E) node[midway, right] {$R_4$};
            \draw (B) -- (F) node[midway, right] {$R_5$};
            \draw (D) -- (E) node[midway, below] {$R_6$};
            \draw (E) -- (F) node[midway, below] {$R_7$};

    \end{tikzpicture}
    \caption{Query $Q(B,C,F)$}
    \label{fig:app:ptQuery}
\end{figure}
\end{minipage}
\hfill
\begin{minipage}[b]{0.57\textwidth}
\begin{figure}[H]
    \centering
    \begin{tikzpicture}[font=\footnotesize]
    \node (B) {$B$}
        child {node (A) {$A$}}
        child { node (E) {$E$}
            child {node (D) {$D$}}
            child {node (F) {$F$}}
        }
        child {node (C) {$C$}};         

        \node[anchor=east] at ($(A) + (-.1,0)$) (EXTDPT) { $\emptyset, 0, 1$  };
        \node[anchor=west] at ($(B) + (.1,0)$) (EXTDPT) { $BCF, 2, 2$  };
        \node at ($(B) + (0,.5)$) (EXTDPT) { $\outt, \rho^*(\outt), \rho^*(\ancc \cup \out)$  };
        \node[anchor=west] at ($(C) + (.1,0)$) (EXTDPT) { $C, 1, 1$  };
        \node[anchor=east] at ($(D) + (-.1,0)$) (EXTDPT) { $\emptyset, 0, 1.5$  };
        \node[anchor=west] at ($(E) + (.1,0)$) (EXTDPT) { $F, 1, 1.5$  };
        \node[anchor=west] at ($(F) + (.1,0)$) (EXTDPT) { $F, 1, 1.5$  };

    \end{tikzpicture}
    \caption{A PT of $Q(B,C,F)$}
    \label{fig:app:pt}
\end{figure}
\end{minipage}

\end{figure}

Now, let us proceed towards a proof of Theorem \ref{thm:algo:pt}.
To that end, the following lemmas will be helpful.

\subsection{Lemmas for Theorem \ref{thm:algo:pt}}

\begin{lemma}
\label{lem:app:pt1}
    Let $P$ be a PT of the SPQ $Q$ and $A\in \var(Q)$ arbitrary.
    Then, for an $\boldsymbol{x}\in \dom^{\anc(A)}$, the function $\textsc{solve}(A,\boldsymbol{x})$ of Algorithm \ref{alg:pt} returns $\sem{Q[{\descc(A)}|{\boldsymbol{x}}](\outt(A))}$.
\end{lemma}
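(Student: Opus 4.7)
The plan is to proceed by induction on the height $h$ of the subtree of $P$ rooted at $A$, showing that $\textsc{solve}(A,\boldsymbol{x})$ correctly returns $\sem{Q[{\descc(A)}|{\boldsymbol{x}}](\outt(A))}$ for every $\boldsymbol{x}\in\dom^{\anc(A)}$. The key identity that will drive the induction is the decomposition established in Lemma \ref{lem:app:ptdecompose}, which expresses $\sem{Q[{\descc(A)}|{\boldsymbol{x}}](\outt(A))}$ as an outer $\bigoplus$ over $a\mapsto s \in \sem{Q[A|{\boldsymbol{x}}]}$ of a product of $\{a[\boldsymbol{X}\cap\{A\}]\mapsto s\}$ with the subquery results at the children of $A$. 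This decomposition mirrors exactly the structure of Algorithm \ref{alg:pt}.

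For the base case ($A$ is a leaf), $\children(A)=\emptyset$ and $\descc(A)=\{A\}$, so the inner \textbf{for} loop over $B\in\children(A)$ is vacuous and TMP stays equal to $\{a[\boldsymbol{X}\cap\{A\}]\mapsto s\}$. Hence OUT becomes $\bigoplus_{a\mapsto s\in\sem{Q[A|{\boldsymbol{x}}]}}\{a[\boldsymbol{X}\cap\{A\}]\mapsto s\}$, which by Lemma \ref{lem:app:ptdecompose} equals $\sem{Q[{\descc(A)}|{\boldsymbol{x}}](\outt(A))}$. For the inductive step, assume the claim for every child $B$ of $A$, whose subtree has strictly smaller height. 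Then for each value $a$ enumerated in line \ref{line:for:a} and each child $B$, the recursive call in line \ref{line:pt:recurse} returns $\sem{Q[{\descc(B)}|{(\boldsymbol{x},a)}](\outt(B))}$ by the induction hypothesis. Taking the tensor/natural-join product of these with the initial $\{a[\boldsymbol{X}\cap\{A\}]\mapsto s\}$ yields exactly the inner summand of Lemma \ref{lem:app:ptdecompose}, and summing (with $\oplus$) over $a\mapsto s\in\sem{Q[A|{\boldsymbol{x}}]}$ produces the claimed output.

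Two subtleties will need brief attention. First, one must verify that the schemas match: the factors $\sem{Q[{\descc(B)}|{(\boldsymbol{x},a)}](\outt(B))}$ have pairwise disjoint schemas $\outt(B)$ for distinct children $B$ (because the sets $\descc(B)$ are disjoint and the head variables $\outt(B)\subseteq\descc(B)$), and $\{a[\boldsymbol{X}\cap\{A\}]\mapsto s\}$ has schema $\boldsymbol{X}\cap\{A\}$, so the natural join $\otimes$ correctly assembles a relation with schema $\outt(A)=(\boldsymbol{X}\cap\{A\})\cup\bigcup_B\outt(B)$. Second, one must handle the case where $\boldsymbol{x}\notin\sem{Q[\anc(A)]}$: in that case $\sem{Q[A|{\boldsymbol{x}}]}$ is empty (the intersection used to compute it in Eq.~\eqref{eq:qay} is empty), so the \textbf{for} loop never executes and OUT remains the all-$\boldsymbol{0}$ relation, which agrees with the convention for $\sem{Q[{\descc(A)}|{\boldsymbol{x}}](\outt(A))}$. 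The main obstacle is bookkeeping the schema-matching and the $A\in\boldsymbol{X}$ vs.\ $A\notin\boldsymbol{X}$ cases cleanly so that the inductive hypothesis plugs directly into the decomposition of Lemma \ref{lem:app:ptdecompose} without ambiguity.
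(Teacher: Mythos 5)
Your proposal is correct and follows essentially the same route as the paper: structural induction on the subtree rooted at $A$ (the paper inducts on $|\descc(A)|$, you on its height, which is immaterial), with Lemma~\ref{lem:app:ptdecompose} supplying the decomposition that matches the loop structure of Algorithm~\ref{alg:pt}. The two subtleties you flag (disjointness of the children's output schemas and the convention for $\boldsymbol{x}\notin\sem{Q[\anc(A)]}$) are likewise handled in the paper, the latter inside the proof of Lemma~\ref{lem:app:ptdecompose}.
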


\begin{proof}

We prove this by induction on the size of $\descc(A)$.
The base case is $\descc(A) = \{A\}$ which we will consider now.
What Algorithm \ref{alg:pt} does to compute $\sem{Q[{A}|{\boldsymbol{x}}](\outt(A))}$ is to allocate a $\K$-relation $\text{OUT}$ over the same schema and iteratively considers $a\mapsto s \in \sem{Q[A|{\boldsymbol{x}}]}$.
As $\desc(A)=\emptyset$, the $\K$-relation $\text{TMP}$ allocated next is then also of the same schema.
Thus, Algorithm \ref{alg:pt}
simply computes 
\begin{align*}
    \text{OUT} = \bigoplus_{a\mapsto s \in \sem{Q[A|\boldsymbol{x}]}}\{a[\outt(A)]\mapsto s\}
\end{align*}
Due to Lemma \ref{lem:app:ptdecompose}, this equals $\sem{Q[A|\boldsymbol{x}](\outt(A))}$ as required.

Now for the induction step, let us consider an $A$ such that $\desc(A)\neq \emptyset$.
What Algorithm \ref{alg:pt} does to compute $\sem{Q[{A}|{\boldsymbol{x}}](\outt(A))}$ is to allocate a $\K$-relation $\text{OUT}$ over the same schema and iteratively considers $a\mapsto s \in \sem{Q[A|{\boldsymbol{x}}]}$.
Then, $\text{TMP}$ is instantiated as the relation $\{a[\outt(A)]\mapsto s\}$.
Depending on whether $A\in \outt(A)$, the schema is either $A$ or $\emptyset$.
In either case, we iterative multiply to it the calls to the children $B\in child(A)$
Thus, in the end
\begin{align*}
    \text{OUT} = \bigoplus_{a\mapsto s \in \sem{Q[A|\boldsymbol{x}]}}\{a[\outt(A)]\mapsto s\} \otimes \bigotimes_{B\in child(A)} \textsc{solve}(B,(\boldsymbol{x},a))
\end{align*}
Due to the induction hypothesis and Lemma \ref{lem:app:ptdecompose}, this equals $\sem{Q[\descc(A)|\boldsymbol{x}](\outt(A))}$ as required.
\end{proof}

\begin{lemma}
\label{lem:app:pt2}
    Let $P$ be a PT of the SPQ $Q$ and $A\in \var(Q)$ arbitrary.
    Then, for a run of Algorithm~\ref{alg:pt}, the function $\textsc{solve}(A,\boldsymbol{x})$ is called at most once per argument pair $(A,\boldsymbol{x})$ and $\boldsymbol{x}\in \sem{Q[{\anc(A)}]}$.
\end{lemma}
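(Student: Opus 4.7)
The plan is to prove both conclusions simultaneously by induction on the depth of $A$ in the pseudo-tree $P$.

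For the base case $A=\Root(P)$, I would observe that $\anc(A)=\emptyset$ and that the only syntactic occurrence of $\textsc{solve}(A,-)$ in Algorithm~\ref{alg:pt} is the top-level call on line~1, with argument $()$. Thus this call is issued exactly once, and $()\in\sem{Q[\emptyset]}$ holds (the degenerate case where the query already evaluates to $\0$ on $D$ is harmless and will not be exercised by the surrounding lemmas).

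For the inductive step, the key observation is that for a non-root variable $A$, the name $\textsc{solve}(A,-)$ appears only on line~\ref{line:pt:recurse}, strictly inside an activation of $\textsc{solve}(\parent(A),\boldsymbol{y})$, and strictly inside the \texttt{for}-loop of line~\ref{line:for:a} which iterates over $a\mapsto s\in\sem{Q[\parent(A)|\boldsymbol{y}]}$. Because this loop ranges over the support of a $\K$-relation, distinct iterations yield distinct values $a$, and because $A$ occurs exactly once in $\children(\parent(A))$, a single activation of $\textsc{solve}(\parent(A),\boldsymbol{y})$ issues at most one call $\textsc{solve}(A,(\boldsymbol{y},a))$ per $a$. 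The induction hypothesis supplies the corresponding fact one level up: $\textsc{solve}(\parent(A),\boldsymbol{y})$ is itself activated at most once per $\boldsymbol{y}$, and only when $\boldsymbol{y}\in\sem{Q[\anc(\parent(A))]}$.

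Combining these, for any call $\textsc{solve}(A,\boldsymbol{x})$ I would set $\boldsymbol{y}:=\boldsymbol{x}[\anc(\parent(A))]$ and $a:=\boldsymbol{x}[\parent(A)]$; this decomposition is uniquely determined by $\boldsymbol{x}$. Uniqueness of the parent activation plus uniqueness within the \texttt{for}-loop then give that $\textsc{solve}(A,\boldsymbol{x})$ is issued at most once. For the second conclusion, $\boldsymbol{y}\in\sem{Q[\anc(\parent(A))]}$ together with $a\in\sem{Q[\parent(A)|\boldsymbol{y}]}$ yields $\boldsymbol{x}=(\boldsymbol{y},a)\in\sem{Q[\ancc(\parent(A))]}=\sem{Q[\anc(A)]}$. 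The argument is purely structural, and I do not anticipate any real obstacle; the only subtleties are confirming that line~\ref{line:pt:recurse} is the unique recursive call-site for non-root $A$ and that the \texttt{for}-loop genuinely enumerates distinct $a$-values, both of which are immediate from the pseudocode and from the definition of support.
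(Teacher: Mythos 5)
Your proof is correct and follows essentially the same route as the paper's: a structural induction along the tree in which each call $\textsc{solve}(A,\boldsymbol{x})$ is traced back to the unique parent activation $\textsc{solve}(\parent(A),\boldsymbol{x}[\anc(\parent(A))])$ and the unique loop iteration $a=\boldsymbol{x}[\parent(A)]$, using that the argument tuple only ever grows. The paper presents the step in the forward direction (uniqueness at $A$ implies uniqueness at each child $B$) rather than your backward decomposition, but the content is identical, including the handling of the degenerate empty-relation case.
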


Note that to be completely rigorous, we would have to take spacial care of the pathological case where a relation $R_i(\boldsymbol{X}_i)$ is empty ($R_i = \{\boldsymbol{x}\mapsto \0 \mid \boldsymbol{x}\in \dom^{\boldsymbol{X}_i}\}$) as we call $\textsc{solve}(\ptroot(P), ())$ even though $\sem{Q[{\emptyset}]}=\{()\mapsto \0\}$.
Of course this would be possible but at the same time it would just make the discussion unnecessarily more cumbersome. 
Thus, we simply disregard this pathological case.

\begin{proof}[Proof of Lemma \ref{lem:app:pt2}]
    First notice that the first call is made to $\ptroot(P)$ together with the only possible tuple $()\in \dom^{\anc(\ptroot(P))}=\dom^\emptyset.$
    Then, the algorithm iterates through $a\in \sem{Q[{\ptroot}]}$ and for every $A\in child(\ptroot(P))$ calls $\textsc{solve}(A,a)$.

    Then, we can argue inductively.
    To that end, let $A$ be such that the induction hypothesis holds.
    I.e., all calls gets a tuple $\boldsymbol{x}$ together with a variable $A$ as input such that $\boldsymbol{x}\in \sem{Q[{\anc(A)}]}$ and such that $(A,\boldsymbol{x})$ is unique.
    Then the call simply iterates through domain elements $a\in \sem{Q[A|{\boldsymbol{x}}]}$ and extends $\boldsymbol{x}$ by that element to get $\boldsymbol{x}'=(\boldsymbol{x},a)$.
    Thus, by construction $\boldsymbol{x}'\in \sem{Q[{\ancc(A)}]}$.
    Then, we call $\textsc{solve}(B,\boldsymbol{x}')$ for every $B\in child(A)$.
    Note that $\anc(B)=\ancc(A)$ and, thus, $\boldsymbol{x}'\in \sem{Q[{\anc(B)}]}$.
    Importantly, domain elements are never removed from the input tuple and, thus, we can trace back the call of $\textsc{solve}(B,\boldsymbol{x}')$ to $\textsc{solve}(parent(B),\boldsymbol{x}'[\anc(parent(B))])$.
    Hence, as $(A,\boldsymbol{x})$ was unique, also $(B,\boldsymbol{x}')$ is unique.
\end{proof}

\begin{lemma}
\label{lem:app:pt3}
    Let $P$ be a PT of the SPQ $Q$ and $A\in \var(Q)$ arbitrary.
    Then, for a run of the Algorithm \ref{alg:pt}, the collective time spent in function calls $\textsc{solve}(A,\boldsymbol{x})$ (excluding the time spent in recursive calls made in line \ref{line:pt:recurse} that exit the function call) over all values for $\boldsymbol{x}$ is bounded by $O(|D|^{\rho^*(\ancc(A)\cup \out(A))})$.
\end{lemma}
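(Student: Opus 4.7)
The plan is to decompose the work done by one call $\textsc{solve}(A,\boldsymbol{x})$ (excluding the recursive calls in line~\ref{line:pt:recurse}) into two kinds: (i) the cost of enumerating $a\mapsto s \in \sem{Q[A|\boldsymbol{x}]}$ at line~\ref{line:for:a}, and (ii) the cost of the semi-ring bookkeeping: initializing TMP, combining TMP with the K-relations returned by recursive calls, and merging TMP into OUT. I would then bound each contribution summed over all $\boldsymbol{x}$.

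For (i), Lemma~\ref{lem:app:pt2} tells us that distinct calls correspond to distinct $\boldsymbol{x}\in\sem{Q[\anc(A)]}$. Applying Lemma~\ref{lem:app:gj} with $\boldsymbol{Y}=\anc(A)$ immediately bounds the total enumeration cost by $O(|D|^{\rho^*(\ancc(A))})$, which is already within the claimed budget by monotonicity of $\rho^*$.

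For (ii), I would first observe that the children $B\in\children(A)$ lie in disjoint subtrees, so their output schemata $\outt(B)$ are pairwise disjoint and the step TMP $\leftarrow$ TMP $\otimes \textsc{solve}(B,\boldsymbol{y}')$ is essentially a Cartesian product. Hence the work for a single iteration of the for-loop over $a$ is bounded (up to the query-dependent constant $|\children(A)|$) by $|TMP_{\text{final}}|$, the number of non-zero entries in the final TMP. By Lemma~\ref{lem:app:pt1}, a non-zero TMP entry at $(\boldsymbol{x},a)$ corresponds to some $\boldsymbol{o}\in\sem{Q[{\descc(A)}|{(\boldsymbol{x},a)}](\outt(A))}$, and since every such $\boldsymbol{o}$ satisfies $\boldsymbol{o}[A]=a$ when $A\in\boldsymbol{X}$, the combined triple $(\boldsymbol{x},a,\boldsymbol{o})$ lies in $\dom^{\ancc(A)\cup\out(A)}$ and in fact in $\sem{Q[\ancc(A)\cup\out(A)]}$ (each relation relevant to these variables is non-zero along the triple, by construction of the PT algorithm). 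Thus the total count of $(\boldsymbol{x},a,\boldsymbol{o})$ triples visited is bounded by $|\sem{Q[\ancc(A)\cup\out(A)]}|$, and the AGM bound (a direct specialization of Lemma~\ref{lem:app:gj}) gives $|\sem{Q[\ancc(A)\cup\out(A)]}|=O(|D|^{\rho^*(\ancc(A)\cup\out(A))})$.

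Adding (i) and (ii) and using $\rho^*(\ancc(A))\le \rho^*(\ancc(A)\cup\out(A))$ yields the desired bound. The main obstacle is the careful accounting in step (ii): one must argue that the semi-ring operations genuinely cost only $O(|TMP_{\text{final}}|)$ (using disjointness of child schemata so that intermediate TMPs are Cartesian products and each intermediate size is dominated by the final one, up to the constant $|\children(A)|$), and that each triple $(\boldsymbol{x},a,\boldsymbol{o})$ arising with non-zero annotation really is a member of $\sem{Q[\ancc(A)\cup\out(A)]}$ rather than just of $\dom^{\ancc(A)\cup\out(A)}$; the latter is what enables invoking the AGM bound instead of the weaker $|D|^{|\ancc(A)\cup\out(A)|}$.
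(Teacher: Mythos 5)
Your proposal is correct and follows essentially the same route as the paper's proof: both split the cost into (i) the enumeration of $\sem{Q[A|\boldsymbol{x}]}$ over all $\boldsymbol{x}\in\sem{Q[\anc(A)]}$, bounded by $O(|D|^{\rho^*(\ancc(A))})$ via Lemma~\ref{lem:app:gj} and Lemma~\ref{lem:app:pt2}, and (ii) the materialization of TMP/OUT, charged output-optimally to tuples over $\ancc(A)\cup\out(A)$ and bounded by the AGM bound $O(|D|^{\rho^*(\ancc(A)\cup\out(A))})$. The paper handles the edge case of an empty child result by replacing factor sizes with $\max(|\cdot|,1)$ (see its footnote), which is the only point your Cartesian-product accounting should also make explicit.
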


\begin{proof}
Due to Lemma \ref{lem:app:pt2}, we only have to sum up the time spent in call $\textsc{solve}(A,\boldsymbol{x})$ over all $\boldsymbol{x}\in \sem{Q[{\anc(A)}]}$.
For each call, we simply have to go through all $a\mapsto s\in \sem{Q[A|{\boldsymbol{x}}]}$ and then, as we have seen in the proof of Lemma \ref{lem:app:pt1}, compute $\sem{Q[{\descc(A)}|{\boldsymbol{x}}](\outt(A))}$ by computing the huge sum
$$\sem{Q[{\descc(A)}|{\boldsymbol{x}}](\out(A) \cup \{A\})} = \sem{Q[A|{\boldsymbol{x}}]} \otimes \bigotimes_{B\in child(A)} \sem{Q[{\descc(B)}|{(\boldsymbol{x},a)}] (\outt(B))},$$
Possibly also marginalizing away the values for $A$ when $A\not\in \outt(A)$.
However, each annotated tuple in the sum is unique and as we materialize $\text{TMP}$ and $\text{OUT}$, the time for the computation can be bound by\footnote{Formally, for this to also work in the case that a set is empty, we simpy have to replace the $|\cdot|$ with $\max(|\cdot|, 1)$}
$$O(|\sem{Q[A|{\boldsymbol{x}}](A)}|   \prod_{B\in child(A)} |\sem{Q[{\descc(B)}|{(\boldsymbol{x},a)}] (\outt(B))}|),$$
i.e., in \textit{output optimal time} where the output is $\{A\}\cup\out(A)$.
Thus, over all $\boldsymbol{x}\in \sem{Q[{\anc(A)}]}$, the time required is $O(|D|^{\rho^*(\anc(A)\cup \out(A))})$ (due to $O(|D|^{\rho^*(\anc(A)\cup \out(A))})$ being a bound on the size of $\sem{Q[{\ancc(A)\cup\desc(A)}](\ancc(A)\cup \out(A))}$; see \cite{DBLP:conf/focs/AtseriasGM08}) while the time to iterate through $a\mapsto s\in \sem{Q[A]}$ is bound by $O(|D|^{\rho^*(\ancc(A))})$ due to Lemma \ref{lem:app:gj}.
\end{proof}

Now, let us proof Theorem \ref{thm:algo:pt}

\subsection{Proof of Theorem \ref{thm:algo:pt}}

\thmalgopt*
\begin{proof}
The correctness follows directly by applying Lemma \ref{lem:app:pt1} to the call $\textsc{solve}(\ptroot,())$ which consequently returns $\sem{Q(\boldsymbol{X})}$ as $\descc(\ptroot)=\var(Q)$ and $\outt(\ptroot)=\boldsymbol{X}$.

To bound the runtime of Algorithm \ref{alg:pt}, we can attribute each timestep to the current call of the function $\textsc{solve}(A,\boldsymbol{x})$.
To that end, for each variables $A$ we sum up the time spent in function calls $\textsc{solve}(A,\boldsymbol{x})$ (excluding the time spent in recursive calls made in line 12 that exit the function call) over all values for $\boldsymbol{x}$.
Due to Lemma \ref{lem:app:pt3}, this can be bound by $O(|D|^{\rho^*(\ancc(A)\cup \out(A))})$.
Thus, by taking the maximum over all $A$, we arrive at a overall bound of $O(|D|^{t(P)})$ for Algorithm \ref{alg:pt}.

Lastly, the bound on the space consumption of Algorithm \ref{alg:pt} is rather immediate.
First, note that the recursion depth is constant (bounded by $|\var(Q)|$).
Second, note that for every variable $A$, the $\K$-relations stored in calls to $\textsc{solve}(A,\boldsymbol{x})$ are essentially subsets of the results of $\sem{Q[{\outt(A)}]}$ and, thus, their sizes never exceed $O(|D|^{\rho^*(\outt(A))})$.
Moreover, $\outt(A)\subseteq \outt(\ptroot(P)) = \boldsymbol{X}$.
\end{proof}

\fi 
\ifArxiv

\section{Appendix for Section \ref{sec:td}}
\label{app:td}

Before proving Theorem~\ref{thm:tdgj-pt},  we briefly showcase how TDs interact with classes of query plans $\calC$ to produce $\tdp$.

\subsection{Additional Example of a Plan in $\tdpt$}
\label{sec:app:tdptexample}

We consider the query $Q(E,F,H)$ and plan in $\tdpt$ depicted in Fig.~\ref{fig:app:tdptQuery} and \ref{fig:app:tdptexample}.
Note that in comparison with the query in Fig.~\ref{fig:tdptQuery}, we added some output variables $EFH$ and removed the $GH$ edge.
In Fig. \ref{fig:app:tdptexample}, we added variables $\outt$ to each node in each PT, as well as the quantities $\rho^*(\outt), \rho^*(\ancc \cup \out)$ as these are important to compute the space-time exponent - both of the individual PTs and the overall TD with the PTs.
That is, the space-time exponent of this structure is simply the max over these quantities (component wise), i.e., $(2,2) = (\rho^*(GH), \rho^*(GHI))$ achieved at the root $I$ of the right PT.

\begin{figure}
\begin{figure}[H]
\begin{center}
    \begin{tikzpicture}[scale = 0.45,font=\footnotesize]
        \node (A) at (-2,0) {$A$};
        \node (B) at (-1,2) {$B$};
        \node (C) at (0,0) {$C$};
        \node (D) at (1,2) {$D$};
        \node (E) at (2,0) {$E$};
        \node (F) at (3,2) {$F$};
        \node (G) at (4,0) {$G$};
        \node (H) at (5,2) {$H$};
        \node (I) at (6,0) {$I$};
        \node (J) at (7,2) {$J$};
        \node (K) at (8,0) {$K$};

        \draw[fill=blue,opacity=0.5, fill opacity=0.2] \convexpath{E,F,H}{.5cm};

        \draw (A) -- (B);
        \draw (A) -- (C);
        \draw (A) -- (D);
        \draw (B) -- (C);
        \draw (B) -- (D);
        \draw (B) -- (E);
        \draw (C) -- (D);
        \draw (C) -- (E);
        \draw (D) -- (E);
        \draw (D) -- (F);
        \draw (D) -- (G);
        \draw (E) -- (F);
        \draw (E) -- (G);
        \draw (E) -- (H);
        \draw (F) -- (G);
        \draw (F) -- (H);
        \draw (I) -- (G);
        \draw (I) -- (H);
        \draw (I) -- (J);
        \draw (I) -- (K);
        \draw (J) -- (G);
        \draw (J) -- (H);
        \draw (J) -- (K);
        \draw (H) -- (K);
    \end{tikzpicture}
\end{center}
\caption{Query $Q(E,F,H)$}
    \label{fig:app:tdptQuery}
\end{figure}
\begin{figure}[H]

\begin{center}
    \begin{tikzpicture}[scale = 0.9, font=\footnotesize]
        
        \node[rectangle, draw] (ABCDE) at (0,0) {
        \begin{tikzpicture}
            \node (C) {$C$}
                child { node (B) {$B$}
                    child { node (D) {$D$}
                        child {node (A) {$A$}}
                        child {node (E) {$E$}}
                    }
                };
                
            \node[anchor=west] at ($(C) + (.1,0)$) (EXTDPT) { $DE, 1, 1.5$  };
            \node[anchor=west] at ($(B) + (.1,0)$) (EXTDPT) { $DE, 1, 2$  };
            \node[anchor=west] at ($(D) + (.1,0)$) (EXTDPT) { $DE, 1, 2$  };
            \node[anchor=west] at ($(A) + (.1,0)$) (EXTDPT) { $\emptyset, 0, 2$  };
            \node[anchor=west] at ($(E) + (.1,0)$) (EXTDPT) { $E, 1, 2$  };
        \end{tikzpicture}
        };
        \node[rectangle, draw] (DEFGH) at (5,2) {
        \begin{tikzpicture}
            \node (F2) {$F$}
                child { node (E2) {$E$}
                    child { node (G2) {$G$}
                        child {node (D2) {$D$}}
                        child {node (H2) {$H$}}
                    }
                };

            \node[anchor=west] at ($(F2) + (.1,0)$) (EXTDPT) { $EFH, 1.5, 1.5$  };
            \node[anchor=west] at ($(E2) + (.1,0)$) (EXTDPT) { $EH, 1, 1.5$  };
            \node[anchor=west] at ($(G2) + (.1,0)$) (EXTDPT) { $H, 1, 2$  };
            \node[anchor=west] at ($(D2) + (.1,0)$) (EXTDPT) { $\emptyset, 0, 2$  };
            \node[anchor=west] at ($(H2) + (.1,0)$) (EXTDPT) { $H, 1, 2$  };
        \end{tikzpicture}
        };
        \node[rectangle, draw] (GHIJK) at (10,0) {
        \begin{tikzpicture}
            \node (I) {$I$}
                child { node (H) {$H$}
                    child { node (J) {$J$}
                        child {node (G) {$G$}}
                        child {node (K) {$K$}}
                    }
                };
                    
            \node[anchor=west] at ($(I) + (.1,0)$) (EXTDPT) { $GH, 2, 2$  };
            \node[anchor=west] at ($(H) + (.1,0)$) (EXTDPT) { $GH, 2, 2$  };
            \node[anchor=west] at ($(J) + (.1,0)$) (EXTDPT) { $G, 1, 2$  };
            \node[anchor=west] at ($(G) + (.1,0)$) (EXTDPT) { $G, 1, 2$  };
            \node[anchor=west] at ($(K) + (.1,0)$) (EXTDPT) { $\emptyset, 0, 2$  };
        \end{tikzpicture}
        };

        \draw (ABCDE) -- (DEFGH) node[midway, above] {$DE$};
        \draw (DEFGH) -- (GHIJK) node[midway, above] {$GH$};

        \node at ($(DEFGH) + (0,3.2)$) (EXTDPT) { $\outt, \rho^*(\outt), \rho^*(\ancc \cup \out)$  };

        \node at ($(ABCDE) + (0,-4)$) (EXTDPT) { 
            \begin{tikzpicture}[scale = 0.45,font=\footnotesize]
                \node (A) at (-2,0) {$A$};
                \node (B) at (-1,2) {$B$};
                \node (C) at (0,0) {$C$};
                \node (D) at (1,2) {$D$};
                \node (E) at (2,0) {$E$};

                \draw[fill=blue,opacity=0.5, fill opacity=0.2] \convexpath{D,E}{.5cm};

                \draw (A) -- (B);
                \draw (A) -- (C);
                \draw (A) -- (D);
                \draw (B) -- (C);
                \draw (B) -- (D);
                \draw (B) -- (E);
                \draw (C) -- (D);
                \draw (C) -- (E);
                \draw (D) -- (E);
            \end{tikzpicture}
            };

        \node at ($(DEFGH) + (0,-4)$) (EXTDPT) { 
            \begin{tikzpicture}[scale = 0.45,font=\footnotesize]
                \node (D) at (1,2) {$D$};
                \node (E) at (2,0) {$E$};
                \node (F) at (3,2) {$F$};
                \node (G) at (4,0) {$G$};
                \node (H) at (5,2) {$H$};

                \draw[fill=red,opacity=0.5, fill opacity=0.2] \convexpath{D,E}{.4cm};
                \draw[fill=red,opacity=0.5, fill opacity=0.2] \convexpath{G,H}{.4cm};
                \draw[fill=blue,opacity=0.5, fill opacity=0.2] \convexpath{E,F,H}{.5cm};

                \draw[dashed] (D) -- (E);
                \draw (D) -- (F);
                \draw (D) -- (G);
                \draw (E) -- (F);
                \draw (E) -- (G);
                \draw (E) -- (H);
                \draw (F) -- (G);
                \draw (F) -- (H);
            \end{tikzpicture}
            };

        \node at ($(GHIJK) + (0,-4)$) (EXTDPT) { 
            \begin{tikzpicture}[scale = 0.45,font=\footnotesize]
                \node (G) at (4,0) {$G$};
                \node (H) at (5,2) {$H$};
                \node (I) at (6,0) {$I$};
                \node (J) at (7,2) {$J$};
                \node (K) at (8,0) {$K$};
                
                \draw[fill=blue,opacity=0.5, fill opacity=0.2] \convexpath{G,H}{.5cm};

                \draw (I) -- (G);
                \draw (I) -- (H);
                \draw (I) -- (J);
                \draw (I) -- (K);
                \draw (J) -- (G);
                \draw (J) -- (H);
                \draw (J) -- (K);
                \draw (H) -- (K);
            \end{tikzpicture}
            };

    \end{tikzpicture}
\end{center} 
\caption{A TD with PTs in the bags for $Q(E,F,H)$}
    \label{fig:app:tdptexample}
\end{figure}
\end{figure}

Furthermore, we also depicted in Fig.~\ref{fig:app:tdptexample} the subqueries the individual PTs solve. 
To that end, we colored the output variables blue and the additional \textit{input relations} that come from the children red.
Note that the input relations restrict the structure of the PT for each bag while they do not change any fractional edge cover numbers $\rho^*$ which are defined relative to $Q$.
To that end, consider the middle PT.
Even though there is no (non-input) relation between $H$ and $G$, we could not move the variable $H$ to be a child of $E$ and a sibling of $G$ as we could then never check the input relation steaming from the right child PT.
Nevertheless, $\rho^*(HG)=2$ even in the middle PT.
Moreover, we have to make sure that the semi-ring values of each relation is only used once.
Thus, either in the left or in the middle PT we have to use is the induced version $\supp(R[D,E])$ (compare this with the relations $R_i^v$ defined in Section \ref{sec:td}) of the $DE$ relation $R(D,E)$ during the evaluation process (the edge $DE$ is dashed in the middle PT to indicate this).

Re-adding the edge $GH$, would only change the following annotations:
$I$ and $H$ would get the annotations $GH,1,1.5$ in the right PT.
Thus, the right PT only used linear space in that case and this query plan would have an overall space-time exponent of $(1.5,2)$, which is optimal.

Moreover, if we further remove the output variables and consider the scalar version (i.e., the one depicted in Fig.~\ref{fig:ptcrQuery}), the middle PT looses its output variables and, thus, the space-time exponent is $(1,2)$ in total.

\subsection{Defining the Space-Time Exponent for Arbitrary Classes of Plans}

To rigorously define the space-time exponents of $\td^{\calC}$ for an arbitrary class of query plans $\calC$, we augment queries with log-scaled cardinality constraints.
That is, to a query $Q(\boldsymbol{X})\leftarrow \bigotimes_i R_i(\boldsymbol{X}_i)$ we add normalized cardinality constraints $cc_i\in \mathbb{R}_{>0}$, i.e., with $\max\{cc_i \mid i\} = 1$.
Then, for a query plan $\Pi\in \calC(Q,(cc_i)_i)$, we require that for databases $D$ with
$$\log_{|D|}|R^D_i|\leq cc_i$$
the plan $\Pi$ computes $\sem{Q}$ in time $O(|D|^{t(\Pi)})$ and uses $O(|D|^{s(\Pi)})$ space.

First, we extend fraction edge cover numbers to account for cardinality constraint.
To that end, let $\bm Y \subseteq \var(Q)$ be a set of variables. A \emph{fractional edge cover} of $\bm Y$ (with respect to $(Q,(cc_i)_i)$) is a set of non-negative weights $w_i$, one for each relation $R_i$, such that, for every variable $A \in \bm Y$, $\sum_{i: A \in \bm X_i}w_i \geq 1$.  The \emph{fractional edge cover number of $\bm Y$}, $\rho^*(\bm Y, Q, (cc_i)_i)$, is the minimum value of $\sum_i w_i\cdot cc_i$, when the weights $w_i$ range over fractional edge covers of $\bm Y$.
Note that this is also known as the AGM bound \cite{DBLP:conf/focs/AtseriasGM08} and all arguments also work when extending fraction edge cover numbers to account for cardinality constraint.

Recall the definition of the sub-queries $Q^v$ defined in Section \ref{sec:td}.
Then, we redefine Definition \ref{def:time:space:td} as:

\begin{definition}
  Fix a query $Q$ and cardinality constraints $(cc)_i$, and let $\calC$ be a class of query plans for evaluating SPQs. 
  The class of plans $\tdc$ consists all pairs $(TD,\pi)$, where $TD$ is a tree decomposition of $Q$, and $\pi$ is a function that maps each vertex $v$ in $TD$ to a plan 
  $$\pi^v \in \calC(Q^v, ((cc_i / m)_i, (\rho^*(\boldsymbol{Z}^w, Q, (cc_i)_i)/m)_w)$$
  where 
  $$m=\max(\{cc_i \mid i\}\cup\{\rho^*(\boldsymbol{Z}^w, Q, (cc_i)_i) \mid w\}).$$
  I.e., $\rho^*(\boldsymbol{Z}^w, Q, (cc_i)_i)$ is the maximal possible size of the relation $Q^w(\boldsymbol{Z}^w)$ and $\pi^v$ respects these in normalized cardinality constraints.
  The space  and time exponents are then 
  \begin{align}
      s(TD, \pi) = \max_v s(\pi^v)\cdot m,\quad t(TD, \pi) = \max_v t(\pi^v)\cdot m.
  \end{align}
\end{definition}

Next, we let us proof Theorem \ref{thm:tdgj-pt}.

\subsection{Proof for Theorem \ref{thm:tdgj-pt}}

\thmtdgjpt*
\begin{proof}%
To prove $\pt \not\preceq \tdgj$,  take the 4-path query
$Q() = R_1(A,B) \otimes R_2(B,C) \otimes R_3(C,D)$.
A  $\tdgj$ query plan $P$ can evaluate this query 
with $e(P) = (1,1)$. Just take a TD consisting of  3 nodes whose bags are 
$\{A,B\}$, $\{B,C\}$, and $\{C,D\}$.
On the other hand, a $\pt$ query plan
can achieve $e(P') = (0,2)$ via a PT that takes one of $B$ or $C$ as root. 
But linear time complexity 
is out of reach
for $\pt$, since we cannot 
avoid a branch consisting of 2 edges.  
Also additional space does not help.

To prove $\tdgj \not\preceq \pt$, we consider the 3-path query
$Q() = R_1(A,B) \otimes R_2(B,C)$.
It can be evaluated by a plan $P$ in $\pt$ with $e(P) = (0,1)$. Just take the \ptree 
with root $B$ and two child nodes $A,B$. On the other hand, a TD either has two nodes with 
bags $\{A,B\}$ and $\{B,C\}$ or with a single bag that contains all 3 variables. This leads to 
query plans $P'$ with either $e(P') = (1,1)$ or $e(P') = (0,2)$. That is, in $\tdgj$, 
we can reach space exponent = 0 and time exponent = 1 separately, but not at the same time.  

For arbitrary classes of query plans $\calC$ and queries $Q$, we can simply us the TD $(T,\chi)$ consisting of a single node $\{v\}=V(T)$ and bag $\chi(v)=\var(Q)$.
Then, $Q^v=Q$ and $e(\pi^v)=e(T,\chi,\pi)$ for all $\pi^v \in\calC$.
\end{proof}

\fi 
\ifArxiv

\section{Appendix for Section \ref{sec:ptc}}

We start this section by briefly showcasing the computation of the space-time exponent of a PTC.

\subsection{Additional Example of a Plan in $\ptc$}

Consider the query $Q(D,E)$ and PTC seen in Figures~\ref{fig:app:ptcQuery} and \ref{fig:app:ptcExample}.
The nodes/variables with a cache are marked yellow.
In Figure \ref{fig:app:ptcExample}, we added the information necessary to compute the space-time exponent.
That is, for each variable $V=A,\dots,G$ we added $\con(V),\outt(V), \rho^*(\con(V)\cup \outt(V))$ as well as $\rho^*(\con(B_V)\cup [V, B_V]\cup \out(V))$ where $B_V$ is the closed ancestor with a cache (i.e., wither $D$ or $F$).
Thus, the space-time exponent of this structure is simply the max over the last two quantities (component wise), i.e., $(1,2) = (\rho^*(DE), \rho^*(DEFG))$ achieved for example at the root $D$ and at $G$, respectively.

\begin{figure}
\begin{minipage}[b]{0.5\textwidth}

\begin{figure}[H]
    \begin{center}
        \begin{tikzpicture}[font=\footnotesize]
            \node (A) at (-1,3) {A};
            \node (B) at (0,2) {B};
            \node (C) at (0,0) {C};
            \node (D) at (2,2) {D};
            \node (E) at (2,0) {E};
            \node (F) at (4,2) {F};
            \node (G) at (4,0) {G};

        \draw[fill=blue,opacity=0.5, fill opacity=0.2] \convexpath{E,D}{.3cm};

            \draw (A) -- (B);
            \draw (A) -- (C);
            \draw (A) -- (D);
            \draw (B) -- (C);
            \draw (B) -- (D);
            \draw (B) -- (E);
            \draw (C) -- (D);
            \draw (C) -- (E);
            \draw (E) -- (D);
            \draw (D) -- (G);
            \draw (D) -- (F);
            
            \draw (E) -- (F);
            \draw (E) -- (G);
            \draw (F) -- (G);
        \end{tikzpicture}
    \end{center} 
    \caption{Query $Q(D,E)$}
    \label{fig:app:ptcQuery}
\end{figure}

\end{minipage}
\hfill
\begin{minipage}[b]{0.49\textwidth}

\begin{figure}[H]
    \begin{center}
        \begin{tikzpicture}[scale = 0.65, font=\footnotesize]
            \node[circle, draw=black!50, fill=yellow!20] (D) {$D$}
                child {node (B) {$B$}
                    child {node (C) {$C$}
                        child {node (A) {$A$}}
                        child {node (E) {$E$}
                            child {node[circle, draw=black!50, fill=yellow!20] (F) {$F$}
                                child {node (G) {$G$}}
                                }
                            }
                        }
                    };

        \node[anchor=west] at ($(D) + (.25,0)$) (EXTDPT) { $\emptyset, DE, 1, 1$  };
        \node[anchor=west] at ($(B) + (.25,0)$) (EXTDPT) { $D, E, 1, 1$  };
        \node[anchor=west] at ($(C) + (.25,0)$) (EXTDPT) { $DB, E, 1, 1.5$  };
        \node[anchor=east] at ($(A) + (-.25,0)$) (EXTDPT) { $DBC, \emptyset, 0, 2$  };
        \node[anchor=west] at ($(E) + (.25,0)$) (EXTDPT) { $DBC, E, 1, 2$  };
        \node[anchor=west] at ($(F) + (.25,0)$) (EXTDPT) { $DE, \emptyset, 1, 1.5$  };
        \node[anchor=west] at ($(G) + (.25,0)$) (EXTDPT) { $DEF, \emptyset,  0, 2$  };

        \node at ($(D) + (.15,1)$) (EXTDPT) { $\con, \outt, \rho^*(\con \cup \outt), \rho^*(\con(B_{\cdot})\cup [\cdot, B_{\cdot}] \cup \out)$  };
        \end{tikzpicture}
    \end{center} 
    \caption{A PTC of $Q(D,E)$}
    \label{fig:app:ptcExample}
\end{figure}

\end{minipage}
\end{figure}

Now, let us proceed towards a proof of Theorem \ref{thm:algo:ptc}.
To do so, the following lemmata will be helpful.

\subsection{Lemmas for Theorem \ref{thm:algo:ptc}}

\begin{lemma}
\label{lem:app:ptc3}
    Let $(P,\boldsymbol{C})$ be a PTC of the SPQ $Q$ and $A\in \var(Q)$ arbitrary. 
    Then, for an $\boldsymbol{x}\in \sem{Q[{\anc(A)}]}$, the function $\textsc{solve}(A,\boldsymbol{x})$ of Algorithm \refalgptc returns $\sem{Q[{\descc(A)}|{\boldsymbol{x}}](\outt(A))}$.
\end{lemma}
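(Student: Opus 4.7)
The plan is to prove this by strong induction on the height of $A$ in the pseudo-tree $P$ (equivalently, on $|\descc(A)|$), mirroring the structure of the proof of Lemma~\ref{lem:app:pt1} but with an added argument for handling the cache. Within the inductive step I will simultaneously establish two substatements for calls $\textsc{solve}(A,\bm{x})$ with $\bm{x}\in\sem{Q[\anc(A)]}$: (i) if the call is a cache miss (or $A\notin\bm C$), then the value computed in the variable OUT equals $\sem{Q[\descc(A)|\bm{x}](\outt(A))}$, and is consequently returned and (if $A\in\bm C$) stored in $M_A$ at key $\bm{x}[\con(A)]$; (ii) if the call is a cache hit, then $M_A(\bm{x}[\con(A)])$ already equals $\sem{Q[\descc(A)|\bm{x}](\outt(A))}$.

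For part (i), the gray caching lines of Algorithm~\refalgptc do not alter the way OUT is assembled on a cache miss: the loops are identical to those of Algorithm~\ref{alg:pt}. Thus, invoking the inductive hypothesis on each child $B\in\children(A)$ to conclude that the recursive call $\textsc{solve}(B,(\bm{x},a))$ returns $\sem{Q[\descc(B)|(\bm{x},a)](\outt(B))}$, together with the decomposition identity of Lemma~\ref{lem:app:ptdecompose}, immediately yields that OUT equals $\sem{Q[\descc(A)|\bm{x}](\outt(A))}$. The subsequent assignment $M_A\gets M_A\cup\{\bm{x}[\con(A)]\mapsto\text{OUT}\}$ then stores precisely this value.

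For part (ii), a cache hit means there was an earlier call $\textsc{solve}(A,\bm{x}')$ with $\bm{x}'[\con(A)]=\bm{x}[\con(A)]$ that encountered a cache miss and populated $M_A$ at this key. By part (i) applied to that earlier call (a temporal, not structural, induction, justified because the first call for any specific key is necessarily a miss), the stored value is $\sem{Q[\descc(A)|\bm{x}'](\outt(A))}$. Crucially, by Lemma~\ref{lem:app:con}, which exploits precisely the definition of $\con(A)$, we have $\sem{Q[\descc(A)|\bm{x}']}=\sem{Q[\descc(A)|\bm{x}'[\con(A)]]}=\sem{Q[\descc(A)|\bm{x}[\con(A)]]}=\sem{Q[\descc(A)|\bm{x}]}$, so the cached value coincides with what part (i) would compute for $\bm{x}$.

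The main obstacle is cleanly interleaving the structural induction (over $\descc(A)$) with the temporal argument needed for (ii): one must verify that by the time a cache hit occurs for key $\bm{x}[\con(A)]$, the previously stored value was itself produced by a correct execution, which in turn relied on the structural induction for the children. A minor technicality to dispatch is the requirement $\bm{x}\in\sem{Q[\anc(A)]}$ which carries over from Lemma~\ref{lem:app:pt2} and must also hold for the $\bm{x}'$ of the earlier cache-populating call; but this follows because the algorithm only reaches $\textsc{solve}(A,-)$ through the standard recursion, unchanged by the gray lines.
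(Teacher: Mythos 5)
Your proposal is correct and follows essentially the same route as the paper: induction on $|\descc(A)|$, with the cache-miss case reduced to the argument of Lemma~\ref{lem:app:pt1} (via Lemma~\ref{lem:app:ptdecompose}) and the cache-hit case handled by identifying the first call sharing the key $\bm{x}[\con(A)]$ and applying Lemma~\ref{lem:app:con} to the chain $\sem{Q[\descc(A)|\bm{x}']}=\sem{Q[\descc(A)|\bm{x}'[\con(A)]]}=\sem{Q[\descc(A)|\bm{x}[\con(A)]]}=\sem{Q[\descc(A)|\bm{x}]}$. Your treatment is somewhat more explicit about interleaving the temporal and structural inductions, but the substance is identical to the paper's proof.
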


\begin{proof}
    Formally, we can prove this by induction on the size of $\descc(A)$.
    However, we have already seen this in the proof of Lemma \ref{lem:app:pt1} for variables without a cache.
    Thus, we only provide the arguments for the induction step for $A\in \boldsymbol{C}$ here.
    To that end, let $\boldsymbol{y}\in \sem{Q[{\anc(A)}]}$ with $\boldsymbol{y}[\con(A)]=\boldsymbol{x}[\con(A)]$ be the first tuple of this form such that a function call $\textsc{solve}(A,\boldsymbol{y})$ is made.
    Then, $\sem{Q[{\descc(A)}|{\boldsymbol{y}}]}$ is stored in the cache $M_A$ and returned for the function call $\textsc{solve}(A,\boldsymbol{x})$.
    This is correct as due to Lemma \ref{lem:app:con}
    $$\sem{Q[{\descc(A)}|{\boldsymbol{y}}]} = \sem{Q[{\descc(A)}|{\boldsymbol{y}[\con(A)]}]} = \sem{Q[{\descc(A)}|{\boldsymbol{x}[\con(A)]}]} = \sem{Q[{\descc(A)}|{\boldsymbol{x}}]}.$$
    This completes the proof.
\end{proof}

\begin{lemma}
\label{lem:app:ptc4}
    Let $(P,\boldsymbol{C})$ be a PTC of the SPQ $Q$ and $A\in \boldsymbol{C}$ be an arbitrary variable with cache.
    Then, for a run of Algorithm \refalgptc, the function call $\textsc{solve}(A,\boldsymbol{x})$ has a cache miss (reaches line \ref{line:ptc:miss}) at most once per argument pair $(A,\boldsymbol{x}[\con(A)])$.%
\end{lemma}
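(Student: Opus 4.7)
The plan is to argue directly from the control flow of Algorithm~\refalgptc. First, I would observe two invariants of the cache $M_A$ for $A \in \boldsymbol{C}$ throughout the execution: (i) it is initialized to $\emptyset$ exactly once (in the initialization loop preceding the first call to \textsc{solve}), and (ii) no line of the algorithm ever removes entries from $M_A$; the only modification is the insertion $M_A \gets M_A \cup \{\boldsymbol{y}[\con(A)] \mapsto \text{OUT}\}$ performed at the end of \textsc{solve} when $A \in \boldsymbol{C}$. Hence once a key $k$ has been inserted into $M_A$, it remains in $\keys(M_A)$ for the rest of the execution.

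Next, I would analyze the structure of recursive calls inside an invocation $\textsc{solve}(A,\boldsymbol{x})$. By inspection of line~\ref{line:pt:recurse}, the only recursive calls made are of the form $\textsc{solve}(B,\boldsymbol{y}')$ for $B \in \mathit{child}(A)$, and these in turn recurse only on their own descendants. Since $A$ is not a descendant of itself in the pseudo-tree $P$, no call of the form $\textsc{solve}(A,\cdot)$ occurs anywhere in the sub-call-tree spawned by $\textsc{solve}(A,\boldsymbol{x})$. Therefore, the insertion of $\boldsymbol{x}[\con(A)] \mapsto \text{OUT}$ into $M_A$ at the end of $\textsc{solve}(A,\boldsymbol{x})$ takes place strictly before any subsequent (top-level) call to $\textsc{solve}(A,\cdot)$ is initiated.

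Combining these two observations, I would conclude as follows. Fix a key value $k \in \dom^{\con(A)}$, and consider the sequence of all calls $\textsc{solve}(A,\boldsymbol{x}_1), \textsc{solve}(A,\boldsymbol{x}_2), \ldots$ made during the run for which $\boldsymbol{x}_j[\con(A)] = k$, ordered by the time at which they begin. The first such call finds $k \notin \keys(M_A)$ (the cache is still empty on $k$), so it proceeds past the guard and reaches line~\ref{line:ptc:miss}; at the end of this call the entry $k \mapsto \text{OUT}$ is added to $M_A$. By the call-ordering argument above, every later call $\textsc{solve}(A,\boldsymbol{x}_j)$ with $j \ge 2$ begins only after this insertion has taken effect, so it finds $k \in \keys(M_A)$, returns immediately with the cached value, and never reaches line~\ref{line:ptc:miss}. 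Hence line~\ref{line:ptc:miss} is executed at most once for each argument pair $(A, k)$, as claimed.

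I do not anticipate a serious obstacle here: the statement is essentially a correctness invariant of a memoized recursion, and the only subtlety is formally justifying that no in-flight call to $\textsc{solve}(A,\cdot)$ can race with the first one—this is handled by the observation that $P$ is a tree and recursion only descends.
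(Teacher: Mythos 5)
Your proposal is correct and follows essentially the same route as the paper: the first call with a given value of $\boldsymbol{x}[\con(A)]$ incurs the (unique) cache miss and inserts the result into $M_A$, after which every later call with the same context value is answered from the cache. Your extra observation that recursion only descends the pseudo-tree (so the insertion completes before any subsequent call to $\textsc{solve}(A,\cdot)$ begins) is a valid and slightly more explicit justification of a step the paper leaves implicit.
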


\begin{proof}
    Let $\boldsymbol{y}\in \dom^{\anc(A)}$ with $\boldsymbol{y}[\con(A)]=\boldsymbol{x}[\con(A)]$ be the first tuple of this form such that a function call $\textsc{solve}(A,\boldsymbol{y})$ is made.
    Then, this call produces a cache miss but $\boldsymbol{y}[\con(A)]\mapsto \sem{Q[{\descc(A)}|{\boldsymbol{y}}]}$ is then added to the cache (using Lemma~\ref{lem:app:ptc3} and Lemma~\ref{lem:app:pt2}) for subsequent calls that agree on $\con(A)$, i.e., also for the call $\textsc{solve}(A,\boldsymbol{x})$.
\end{proof}

\begin{lemma}
\label{lem:app:ptc5}
    Let $(P,\boldsymbol{C})$ be a PTC of the SPQ $Q$ and $A\in \var(Q)$ arbitrary. 
    Further, let $B_A=\min(\boldsymbol{C}\cap \ancc(A))$ be the closest ancestor of $A$ with a cache.
    Then, for a run of Algorithm~\refalgptc, the function call $\textsc{solve}(A,\boldsymbol{x})$ has a cache miss (reaches line \ref{line:ptc:miss}) at most once per argument pair $(A,\boldsymbol{x}[\con(A)\cup (A, B_A]])$. %
\end{lemma}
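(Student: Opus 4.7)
The plan is to trace calls to $\textsc{solve}(A,-)$ back through the call tree of Algorithm~\refalgptc to the nearest cached ancestor $B_A$, and to combine Lemma~\ref{lem:app:ptc4} (applied at $B_A$) with a simple counting argument over the descent from $B_A$ down to $A$.

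First I would dispatch the degenerate case $A=B_A$: then $(A,B_A]=\emptyset$, and the statement reduces verbatim to Lemma~\ref{lem:app:ptc4}. For the main case $A\notin \boldsymbol C$, observe that every call $\textsc{solve}(A,\boldsymbol x)$ that reaches line~\ref{line:ptc:miss} sits inside a unique chain of recursive calls originating from some call $\textsc{solve}(B_A,\boldsymbol y)$ with $\boldsymbol y = \boldsymbol x[\anc(B_A)]$. Moreover, this outer call must itself reach line~\ref{line:ptc:miss} at $B_A$: otherwise the cached value at $B_A$ would be returned immediately and the descent into the subtree (and thus into $A$) would never happen.

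I would then apply Lemma~\ref{lem:app:ptc4} at $B_A$: the calls $\textsc{solve}(B_A,\boldsymbol y)$ that reach line~\ref{line:ptc:miss} are indexed at most once by each value of $\boldsymbol y[\con(B_A)]$. Within a single such $B_A$-miss, the nested \texttt{for}-loops of Algorithm~\refalgptc descend from $B_A$ down to the parent of $A$, assigning each tuple of values for the path variables $(A,B_A]$ exactly once and, for each such assignment, invoking $\textsc{solve}(A,\boldsymbol x)$ at most once. Hence the total number of calls to $\textsc{solve}(A,-)$ reaching line~\ref{line:ptc:miss} is bounded by the number of distinct pairs $(\boldsymbol x[\con(B_A)],\,\boldsymbol x[(A,B_A]])$; since $\con(B_A)\subseteq \anc(B_A)$ and $(A,B_A]\subseteq \descc(B_A)\cap \anc(A)$ are disjoint, this is the same as the number of distinct projections $\boldsymbol x[\con(B_A)\cup(A,B_A]]$, matching the time-exponent formula $\rho^*(\con(B_A)\cup[A,B_A]\cup\out(A))$ in Def.~\ref{def:ptc:exp}.

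The main subtlety I expect is the bookkeeping: which coordinates of $\boldsymbol x$ are pinned by the outer $B_A$-miss (everything in $\anc(B_A)$ inherited along the relevant descent, fixed to $\boldsymbol y[\anc(B_A)]$) versus which vary freely through the path $(A,B_A]$ during the nested loops, together with the verification via Lemma~\ref{lem:app:con} that two descents originating from two distinct $B_A$-misses cannot collide on the combined projection. Once this bookkeeping is formalized, the bound follows directly from stacking Lemma~\ref{lem:app:ptc4} at $B_A$ with the loop-counting at each strict descendant on the path from $B_A$ to $A$.
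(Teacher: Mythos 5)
Your proof is correct and essentially the paper's argument: the paper packages the same reasoning as an induction on the length of $(A,B_A]$, with Lemma~\ref{lem:app:ptc4} as the base case and one level of loop descent as the inductive step, whereas you anchor at $B_A$, invoke Lemma~\ref{lem:app:ptc4} there, and count the descent directly — the two are interchangeable. One small remark: like the paper's own proof, what you actually establish is uniqueness per $\boldsymbol{x}[\con(B_A)\cup(A,B_A]]$ rather than per $\boldsymbol{x}[\con(A)\cup(A,B_A]]$ as literally stated (the former set contains the latter, and it is the version consumed by Lemma~\ref{lem:app:ptc6}), and your appeal to Lemma~\ref{lem:app:con} for non-collision across distinct $B_A$-misses is unnecessary — that already follows from Lemma~\ref{lem:app:ptc4} together with the fact that the descent never overwrites coordinates of $\anc(B_A)$.
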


\begin{proof}
    We proceed to prove this by induction on the size of $(A,B_A]$.
    The base case $B_A=A$ is taken care of by Lemma \ref{lem:app:ptc4}. %
    To that end, note that every call $\textsc{solve}(A,\boldsymbol{x})$ of a run of Algorithm \refalgptc has to also be a call $\textsc{solve}(A,\boldsymbol{x})$ of a run of Algorithm \ref{alg:pt}.

    Then, we can argue inductively.
    To that end, let $A$ be such that the induction hypothesis holds.
    I.e., all calls that incur a cache miss get a tuple $\boldsymbol{x}$ together with a variable $A$ as input such that $\boldsymbol{x}\in \sem{Q[{\anc(A)}]}$ and such that $(A,\boldsymbol{x}[\con(B_A)\cup (A,B_A]])$ is unique (recall also Lemma~\ref{lem:app:pt2}).
    Then, the call simply iterates through domain elements $a\in \sem{Q[A|{\boldsymbol{x}}]}$ and extends $\boldsymbol{x}$ by that element to get $\boldsymbol{x}'=(\boldsymbol{x},a)$.
    Thus, by construction $\boldsymbol{x}'\in \sem{Q[{\ancc(A)}]}$.
    Then, we call $\textsc{solve}(B,\boldsymbol{x}')$ for every $B\in child(A)$.
    Note that $\anc(B)=\ancc(A)$ and, thus, $\boldsymbol{x}'\in \sem{Q[{\anc(B)}]}$.
    Importantly, domain elements are never removed from the input tuple and, thus, we can trace back the call of $\textsc{solve}(B,\boldsymbol{x}')$ to $\textsc{solve}(parent(B),\boldsymbol{x}'[\anc(parent(B))])$.
    Hence, as $(A,\boldsymbol{x}[\con(B_A)\cup (A, B_A]])$ was unique, also $(B,\boldsymbol{x}'[\con(B_A)\cup [A, B_A]])$ is unique.    
\end{proof}

The next lemma is a slight strengthening of Lemma \ref{lem:app:gj}.

\begin{lemma}
\label{lem:app:gj2}
Let $Q(\boldsymbol{X})\leftarrow \bigotimes_i R_i(\boldsymbol{X}_i)$ be a SPQ, $\boldsymbol{Y}\subseteq \boldsymbol{Z}\subseteq \var(Q)$ be sets of variables and $A\in \var(Q)\setminus \boldsymbol{Z}$ a variable.
Further, for every $\boldsymbol{y}\in \sem{Q[{\boldsymbol{Y}}]}$, let $\boldsymbol{z}(\boldsymbol{y})\in \dom^{\boldsymbol{Z}}$ be a arbitrary extended tuple, i.e., such that $\boldsymbol{y}=\boldsymbol{z}(\boldsymbol{y})[\boldsymbol{Y}]$.
Then, we can iterate through $a\mapsto s \in \sem{Q[A|{\boldsymbol{z}(\boldsymbol{y})}]}$ over all $\boldsymbol{y}\in \sem{Q[{\boldsymbol{Y}}]}$ in time $O(|D|^{\rho^*(\boldsymbol{Y}\cup \{A\})})$.
\end{lemma}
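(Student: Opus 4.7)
The plan is to mimic the proof of Lemma~\ref{lem:app:gj} and exploit that conditioning on the extended tuple $\boldsymbol{z}(\boldsymbol{y})$ only shrinks, never enlarges, the candidate slices relative to conditioning on $\boldsymbol{y}$. My claim will be that the per-$\boldsymbol{y}$ enumeration cost is bounded by the \emph{same} quantity that Lemma~\ref{lem:app:gj} already bounds, so the iterated Hölder summation at the end of that proof carries over verbatim.

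First I would expand $Q[A|\boldsymbol{z}(\boldsymbol{y})]$ in analogy with Eq.~\eqref{eq:app:pt2}, replacing $\boldsymbol{Y}$ by $\boldsymbol{Z}$ in the three partitioning conditions. Since $A \notin \boldsymbol{Z}$ and $\boldsymbol{Y}\subseteq \boldsymbol{Z}$, the index set of relations containing $A$ is unchanged, and for each such $i$ the pointwise containment
\[
\supp(R_i[A\mid \boldsymbol{z}(\boldsymbol{y})]) \;\subseteq\; \supp(R_i[A\mid \boldsymbol{y}])
\]
holds, because the left-hand side additionally constrains the extra variables in $\boldsymbol{Z}\setminus \boldsymbol{Y}$. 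Running the standard generic-join enumeration (iterate through the smallest relevant slice, probe membership in all others via hashing) then produces $a \mapsto s \in \sem{Q[A|\boldsymbol{z}(\boldsymbol{y})]}$ in time
\[
O\Bigl(\min_{i\colon A \in \boldsymbol{X}_i} |\supp(R_i[A\mid \boldsymbol{z}(\boldsymbol{y})])|\Bigr) \;\leq\; O\Bigl(\min_{i\colon A \in \boldsymbol{X}_i} |\supp(R_i[A\mid \boldsymbol{y}])|\Bigr),
\]
which matches the per-$\boldsymbol{y}$ bound derived in Eq.~\eqref{eq:app:gjupto}. The semi-ring annotation $s$ is a product of $O(1)$ values coming from relations with $\boldsymbol{X}_i \setminus \boldsymbol{Z} = \{A\}$, computed in $O(1)$ time per emitted tuple.

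Summing this per-$\boldsymbol{y}$ bound over $\boldsymbol{y}\in \sem{Q[\boldsymbol{Y}]}$ yields precisely the sum $\sum_{\boldsymbol{y}} \min_{i\colon A\in \boldsymbol{X}_i} |\supp(R_i[A|\boldsymbol{y}])|$ that is handled in Lemma~\ref{lem:app:gj} via the geometric-mean bound followed by iterated Hölder over an arbitrary fractional edge cover of $\boldsymbol{Y}\cup\{A\}$. Applying that argument unchanged, and then minimizing over fractional edge covers, gives the claimed $O(|D|^{\rho^*(\boldsymbol{Y}\cup\{A\})})$ bound. The main obstacle I foresee is purely bookkeeping: carefully checking that the partition of relations into ``annotated contribution'', ``support contribution'', and ``scalar indicator'' shifts correctly when we move from $\boldsymbol{Y}$ to $\boldsymbol{Z}$, so that the tuples $a\mapsto s$ we enumerate are genuinely those of $\sem{Q[A|\boldsymbol{z}(\boldsymbol{y})]}$ rather than of $\sem{Q[A|\boldsymbol{y}]}$.
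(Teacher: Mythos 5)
Your proposal is correct and follows essentially the same route as the paper's proof: condition on $\boldsymbol{z}(\boldsymbol{y})$ during enumeration, observe the pointwise containment $\supp(R_i[A\mid \boldsymbol{z}(\boldsymbol{y})]) \subseteq \supp(R_i[A\mid \boldsymbol{y}])$ so the per-tuple cost is dominated by the quantity already controlled in Lemma~\ref{lem:app:gj}, and then reuse the iterated H\"older summation verbatim. The bookkeeping point you flag at the end is real but harmless, since $A\notin\boldsymbol{Z}$ and $\boldsymbol{Y}\subseteq\boldsymbol{Z}$ keep the set of relations contributing to the intersection unchanged.
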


\begin{proof}

An almost verbatim copy of the proof of Lemma \ref{lem:app:gj} also works for this lemma.
However, up to Equation \ref{eq:app:gjupto}, $\boldsymbol{Z}$ plays the roll of $\boldsymbol{Y}$ while $\boldsymbol{z}(\boldsymbol{y})$ plays the role of $\boldsymbol{y}$.

Then, for the long equation, we start with
\begin{align*}
    &\sum_{y\in \sem{Q[\boldsymbol{Y}]}} \min_{i\colon A\subseteq \boldsymbol{X}_i \setminus \boldsymbol{Y}} |\supp(R_i[A|\boldsymbol{z}(\boldsymbol{y})])| \\
     & \leq \sum_{y\in \sem{Q[\boldsymbol{Y}]}} \min_{i\colon A\subseteq \boldsymbol{X}_i \setminus \boldsymbol{Y}} |\supp(R_i[A|\boldsymbol{y}])|
\end{align*}
I.e., we simply fix less variables.
From that point onwards, we simply continue the proof verbatim. 
\end{proof}

\begin{lemma}
\label{lem:app:ptc6}
    Let $(P,\boldsymbol{C})$ be a PTC of the SPQ $Q$ and $A\in \var(Q)$ arbitrary. 
    Further, let $B_A=\min(\boldsymbol{C}\cap \ancc(A))$ be the closest ancestor of $A$ with a cache.  
    Then, for a run of the Algorithm \refalgptc, the collective time spent in function calls $\textsc{solve}(A,\boldsymbol{x})$ (excluding cache hits - counting starts from \ref{line:ptc:miss} - and excluding recursive calls - line \ref{line:pt:recurse}) over all values for $\boldsymbol{x}$ is bounded by $O(|D|^{\rho^*(\con(B_A)\cup [A, B_A]\cup \out(A))})$.
\end{lemma}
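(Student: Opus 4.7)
The plan is to combine the cache-uniqueness guarantee of Lemma~\ref{lem:app:ptc5} with an output-sensitive argument analogous to the one used in Lemma~\ref{lem:app:pt3}. By Lemma~\ref{lem:app:ptc5}, the function $\textsc{solve}(A, \boldsymbol{x})$ reaches line~\ref{line:ptc:miss} at most once per distinct value of the projection $\boldsymbol{x}[\con(B_A) \cup (A, B_A]]$; all other invocations are cache hits, which cost $O(1)$ and are excluded from the count. Hence it suffices to sum the work across the (finitely many) unique cache-miss invocations, indexed by tuples $\boldsymbol{y} \in \sem{Q[\con(B_A) \cup (A, B_A]]}$.

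For each such invocation, the work excluding recursive calls naturally splits into two parts: (a) enumerating $a \mapsto s \in \sem{Q[A \mid \boldsymbol{x}]}$ in the outer loop at line~\ref{line:for:a}, and (b) for each $a$, initializing $\text{TMP}$, performing the chain of $\otimes$-joins with the children's returned relations (effectively Cartesian products, whose final schema is $\outt(A)$), and then $\oplus$-merging $\text{TMP}$ into $\text{OUT}$. For (a), I would apply Lemma~\ref{lem:app:gj2} with $\boldsymbol{Y} := \con(B_A) \cup (A, B_A]$ and $\boldsymbol{Z} := \anc(A)$, picking $\boldsymbol{z}(\boldsymbol{y})$ to be the particular $\boldsymbol{x}$ that triggered the unique cache miss for $\boldsymbol{y}$; this bounds the total enumeration time across all cache-miss calls by $O(|D|^{\rho^*(\con(B_A) \cup [A, B_A])})$. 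For (b), one charges each $\otimes$- and $\oplus$-operation to a unique annotated tuple of $\sem{Q[\con(B_A) \cup [A, B_A] \cup \out(A)]}$, which by the AGM bound has at most $O(|D|^{\rho^*(\con(B_A) \cup [A, B_A] \cup \out(A))})$ many entries. The latter dominates the bound of (a) by monotonicity of $\rho^*$, yielding the stated upper bound.

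The main technical obstacle will be the rigorous accounting in step (b): the intermediate $\text{TMP}$ during the chain $\text{TMP}_0 \otimes R_1 \otimes \cdots \otimes R_k$ could \emph{a priori} blow up, but since each step is a Cartesian product that only enlarges the schema (monotonically towards $\outt(A)$), the cost of each individual step is at most the size of the final $\text{TMP}$; with $k$ being query-constant, the chain's total cost is proportional to that final size. It remains to argue that the sum of final $\text{TMP}$-sizes over all $(\boldsymbol{y}, a)$ is bounded by $|\sem{Q[\con(B_A) \cup [A, B_A] \cup \out(A)]}|$, since each $(\boldsymbol{y}, a, \text{extension})$ processed corresponds to a unique consistent tuple of this projection.
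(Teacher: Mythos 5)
Your proposal is correct and follows essentially the same route as the paper's proof: it invokes Lemma~\ref{lem:app:ptc5} to restrict attention to one cache miss per tuple in $\sem{Q[\con(B_A)\cup (A,B_A]]}$, applies Lemma~\ref{lem:app:gj2} with the same choice of $\boldsymbol{Y}$, $\boldsymbol{Z}$, and $\boldsymbol{z}(\boldsymbol{y})$ to bound the enumeration of $\sem{Q[A\mid\boldsymbol{x}]}$, and then charges the loop-body work output-optimally against $\sem{Q[\con(B_A)\cup[A,B_A]\cup\out(A)]}$ exactly as the paper does by reference to the argument in Lemma~\ref{lem:app:pt3}. Your closing remark about the intermediate TMP chain is handled in the paper the same way — each $\otimes$-step is a Cartesian product over disjoint schemas growing monotonically toward $\outt(A)$, so the cost is dominated by the final materialized size.
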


\begin{proof}
Due to Lemma \ref{lem:app:ptc5}, we only have to sum up the time spent in calls $\textsc{solve}(A,\boldsymbol{x})$ over all $\boldsymbol{x}[\con(B_A)\cup (A,B_A]]\in \sem{Q[{\con(B_A)\cup (A,B_A]}]}$.
For each call, we simply have to go through all $a\mapsto s\in \sem{Q[A|{\boldsymbol{x}}]}$.
We can use Lemma \ref{lem:app:gj2}, to calculate how much time this takes (excluding the time spent in the loop body).
That is, we use $\boldsymbol{Y}:= \con(B_A)\cup (A,B_A]$ and $\boldsymbol{Z}=\anc(A)$ with $\boldsymbol{y}\in \sem{Q[{\con(B_A)\cup (A,B_A]}]}$ mapping to the unique $\boldsymbol{z}(\boldsymbol{y}):=\boldsymbol{x}\in \dom^{\anc(\boldsymbol{A})}$ that caused the cache miss.
I.e., in total, this takes at most $O(|D|^{\rho^*(\con(B_A)\cup [A, B_A])})$ time.

Moreover, in the loop body $(\boldsymbol{x}[\con(B_A)\cup (A,B_A]],a)\in \sem{Q[{\con(B_A)\cup [A,B_A]}]}$.
Then, as we have seen in the proof of Lemma $\ref{lem:app:pt3}$, we compute $\sem{Q[{\descc(A)}|{\boldsymbol{x}}](\outt(A))}$ in \textit{output optimal time} where the output is $\{A\}\cup\out(A)$.
Thus, over all $\boldsymbol{x}\in \sem{Q[{\anc(A)}]}$ (more importantly actually, over all $\boldsymbol{x}[\con(B_A)\cup (A,B_A]]\in \sem{Q[{\con(B_A)\cup (A,B_A]}]}$), the time required is $O(|D|^{\rho^*(\con(B_A)\cup [A,B_A]\cup \out(A))})$.
\end{proof}

Now, let us proof Theorem \ref{thm:algo:ptc}

\subsection{Proof of Theorem \ref{thm:algo:ptc}}

\thmalgoptc*

\begin{proof}

The correctness follows directly by applying Lemma \ref{lem:app:ptc3} to the call $\textsc{solve}(\ptroot(P),())$ which consequently returns $\sem{Q}$ as $\descc(\ptroot(P))=\var(Q)$ and $\outt(\ptroot(P))=\boldsymbol{X}$.

To bound the runtime of Algorithm \refalgptc, we can attribute each step past line \ref{line:ptc:miss} (cache miss) to the current call of the function $\textsc{solve}(A,\boldsymbol{x})$ and the steps before line \ref{line:ptc:miss} to the calling function.
As the number of steps before line \ref{line:ptc:miss} is constant, we can actually simply ignore them.
To that end, for each variables $A$ we sum up the time over all values for $\boldsymbol{x}$.
Due to Lemma \ref{lem:app:ptc6}, this can be bound by $O(|D|^{\rho^*(\con(B_A)\cup [A,B_A]\cup \out(A))})$ where $B_A$ is the closest ancestor of $A$ with a cache.
Thus, by taking the maximum over all $A$, we arrive at a overall bound of $O(|D|^{t(P,\boldsymbol{C})})$ for Algorithm \refalgptc.

Lastly, the bound on the space consumption of Algorithm \refalgptc is rather immediate.
For every variable $A$, the $\K$-relations $\text{TMP}, \text{OUT}$ stored in calls to $\textsc{solve}(A,\boldsymbol{x})$ never exceed $O(|D|^{\rho^*(\outt(A))})$.
For variables $A$ with caches, the caches $M_A$ are essentially a collection of subsets of $\sem{Q[{\outt(A)}|{\boldsymbol{y}}]}$ for $\boldsymbol{y}\in \sem{Q[{\con(A)}]}$.
Thus, they are collectively essentially a subset of $\sem{Q[{\con(A)\cup\outt(A)}]}$ and its size can, therefore, never exceed $O(|D|^{\rho^*(\con(A)\cup\outt(A))})$.    
\end{proof}

Next, we proceed to prove Theorem \ref{thm:tdgj-ptc}.

\subsection{Proof of Theorem \ref{thm:tdgj-ptc}}

\thmtdgjptc*

\begin{proof}
Given a plan $\Pi := ((T,\chi),\pi) \in \td^{\gj}$ of the query $Q(\boldsymbol{X})$, we construct a plan $\Pi_0 := (P,\bm C) \in \ptc$ such that $\Pi_0 \preceq \Pi$.  The construction is based on the variable elimination procedure for a tree decomposition~\cite{DBLP:journals/corr/KhamisNRR15}, and proceeds by induction on the number of bags in $T$.  If $T$ has a single bag, then $\Pi$ is essentially a $\gj$ plan, and the claim follows from $\pt \preceq \gj$.  Otherwise, let $v$ be a leaf of $T$, and $p:= \parent(v)$.  We eliminate all variables $\set{A_1, \ldots, A_k} := \chi(v)\setminus \chi(p)$.  Let $\bm Z$ be their neighbors, $\bm Z := \setof{B\not\in \set{A_1, \ldots, A_k}}{\exists \text{ atom } R_i(\bm X_i), \exists j, \text{ s.t. } A_j, B \in \bm X_i}$.  
Then, $\boldsymbol{Z}$ must be in $\chi(v)\cap \chi(p)$.
Let $Q'$ be the query obtained from $Q$ by removing all variables $A_1, \ldots, A_k$, and adding a new atom $R(\bm Z)$.
That is, let
$$Q'\leftarrow Q[\var(Q)\setminus \set{A_1, \ldots, A_k}](\boldsymbol{X})\otimes R(\bm Z).$$
Note that $\set{A_1, \ldots, A_k}\cap \boldsymbol{X}=\emptyset $ as $\boldsymbol{X}\subseteq \chi(\ptroot(T))$.
Let $\Pi' = ((T',\chi),\pi)$ be the plan obtained from $\Pi$ by removing the leaf $v$. 
Notice that $\Pi'$ is also a plan of $Q'$.
By induction hypothesis, $\Pi'$ can be converted to a $\ptc$ plan $\Pi_0' = (P',\bm C')$ such that $\Pi_0' \preceq \Pi'$.  All variables $\bm Z$ belong to a branch of $\Pi_0'$ (because of the atom $R(\bm Z)$).  Construct the pseudo-tree $P$ from $P'$ by adding a branch $A_1\text{-}A_2\text{-}\cdots\text{-}A_k$ as a child of the last variable in $\bm Z$.  Finally, define $\Pi_0 := (P,\bm C \cup \set{A_1})$ (only $A_1$ receives a cache).  It can be checked that $\Pi_0 \preceq \Pi$, which proves $\ptc \preceq \td^{\gj}$.  

Strictness, $\ptc \prec \td^{\gj}$, follows from Theorem~\ref{thm:tdgj-pt} and the trivial relationship $\ptc\preceq \pt$.
That is, notice that
$$\tdgj \succeq \ptc, \quad \ptc\preceq \pt,$$
while also 
$$\tdgj \not\succeq \pt,\quad \tdgj \not\preceq \pt.$$
Thus, neither $\pt\equiv \ptc$ nor $\tdgj\equiv \ptc$ can hold.
\end{proof}

Lastly, we prove Theorem \ref{thm:tdpt-tdptc}

\subsection{Proof of Theorem \ref{thm:tdpt-tdptc}}

\thmtdpttdptc*

\begin{proof}
    Let $\Pi=((T,\chi), \pi)$ be a query plan in $\tdptc$ of the query $Q$.
    We prove that there is a query plan $\Pi'=((T',\chi'),\pi')$ in $\tdptc$ of the query $Q$ with less caches than $\Pi$ (formally excluding the caches at the roots of $\pi^v, \pi'^{v'}$) such that $\Pi'\preceq \Pi$.
    Thus, by induction, there must also be a $\Pi'$ in $\tdpt$.

    To that end, we explain how to transform $\Pi=((T,\chi), \pi)$ into $\Pi'=((T',\chi'),\pi')$, 
    where the second contains 1 less cache.
    To do so, let $v\in V(T)$ be a node where $\pi^v=(P^v,\boldsymbol{C}^v)$ contains a cache at a non-root variable $A\in V(P^v)=\chi(v)$.
    We then simply split the bag $\chi(v)$ into $\descc(A)\cup \con(A)$ and $(\chi(v)\setminus\descc(A))\cup \outt(A)$ (note that the intersection is $\con(A)\cup \outt(A)$).
    To that end, we introduce new nodes $c,p$ to replace $v$ with exactly those bags and where $c$ is the child of $p$.
    Then, for every child $c_{v}$ of $v$, it can either be made a child of $c$ or of $p$, depending of whether $\boldsymbol{Z}^{c_v}:=\chi(c_v)\cap \chi(v)$ is a subset of $\chi(c)$ or $\chi(p)$.
    To see that this is well-defined, note that $Q^v$ contains the relation $Q^{c_v}(\boldsymbol{Z}^{c_v})$.
    Thus, $B=\min(\boldsymbol{Z}^{c_v})$ is well-defined.
    If, $B\in \descc(A)$, then, 
    $$\boldsymbol{Z}^{c_v}\subseteq [B,A]\cup \con(A)\subseteq \chi(c).$$
    Otherwise, when $B\not \in \descc(A)$, then 
    $\boldsymbol{Z}^{c_v}\cap \descc(A)=\emptyset$ and 
    $$\boldsymbol{Z}^{c_v}\in (\chi(v)\setminus \descc(A))\subseteq \chi(p).$$
    The parent $p_v$ of $v$ becomes the parent of $p$.
    Now for the PTC of $c$ and $p$.
    The PTC $\pi^p:=(P^p,\boldsymbol{C}^p)$ will be the same as $\pi^v$ but where the subtree rooted in $A$ is simply replaced by a path containing $\outt(A)$ and no caches. 
    The PTC $\pi^v=(P^c,\boldsymbol{C}^c)$ will be the subtree of $P^v$ rooted in $A$ where we add $\con(A)$ as a path above $A$.
    Furthermore, we keep all caches except the ones at variables of $\conn(A)$.
    Thus, we decreasing the overall amount of caches by 1 as we in particular removed the cache at $A$.
    Furthermore, the new structure still describes a query plan of $\tdptc$ for $Q$ and the space-time exponent did not change.
    
    Note that when all $\boldsymbol{C}_{v}$ only contain the root, then we can simply drop the $\boldsymbol{C}_v$ altogether and view $\Pi$ as a plan in $\tdptc$. 

    As $\tdptc\preceq \tdpt$ trivially holds, we have shown in particular $\tdpt\equiv \tdptc$.
    Further, $\tdptc\preceq \ptc$ due to Theorem \ref{thm:tdgj-ptc} and as such also $\tdpt \preceq \ptc$.
\end{proof}

\fi 

\ifArxiv

\section{Appendix for Section \ref{sec:ptcr}}
\label{sec:app:ptcr}

We start this section by briefly showcasing the computation of the space-time exponent of a PTCR as well as the interplay of $\ria, \raexc, \ra, \icon,$ and $\scon$.

\subsection{Additional Example of a Plan in $\ptcr$}

Consider the scalar query $Q()$ and PTCR seen in Figures~\ref{fig:app:ptcrQuery} and \ref{fig:app:ptcrExample} (the graph and PT are an adaptation of an example given in \cite{DBLP:series/synthesis/2019Dechter}).
In Figure \ref{fig:app:ptcrExample}, for each variable $V=A,\dots,K$ we added $\raexc(V)$ as annotations.
Every set $\raexc(V)$ can be partitioned into the stored context $\scon(V)$ and the relevant instantiated ancestor $\ria(V)$, and $\ria(V)$ can be partitioned into those explicitly given in $\icon(V)$ and those that are only implicitly give, i.e., $\ria(V)\setminus \icon(V)$.
Thus, we added $\raexc(V)$ where we used bar indicated the partition, i.e.,

\[ 
\underbrace{\overunderbraces{&&&\br{3}{\con(V)}}%
{&\ria(V)\setminus \icon(V) & | & \icon(V) &| &\scon(V).&}%
{&\br{3}{\ria(V)} &}}_{\raexc(V)}
\]

Additionally, we added the quantities $\rho^*(\scon(V))$ and $\rho^*(\ra(V))$ to compute the space-time exponent.
Note that we did not include any output variables ($\outt(V)=\emptyset$) as their effect is very similar to that in the PTC case.
Thus, the space-time exponent of this structure is simply the max over the last two quantities (component wise), i.e., $(1,2) = (\rho^*(AI), \rho^*(CHABI))$ achieved for example at $F$ and at $I$, respectively.

\begin{figure}
\begin{minipage}[b]{0.44\textwidth}

\begin{figure}[H]
    \begin{center}
    \begin{tikzpicture}[scale = 0.65, font=\footnotesize]

        \foreach \angle/\label in {90/A, 162/B, 234/H, 306/C, 18/E}
                {
                    \node (\label) at (\angle:1.5) {$\label$};
                }
        \node (I) at (162:4) {$I$};
        \node (F) at (126:4) {$F$};
        \node (G) at (90:4) {$G$};

        \node (J) at (54:4) {$J$};
        \node (D) at (18:4) {$D$};
        \node (K) at (-18:4) {$K$};

        \draw (A) -- (G);
        \draw (A) -- (F);
        \draw (A) -- (B);
        \draw (A) -- (C);
        \draw (A) -- (E);
        \draw (A) -- (H);
        \draw (A) -- (I);
        \draw (A) -- (J);
        
        \draw (B) -- (I);
        \draw (B) -- (H);
        \draw (B) -- (C);
        \draw (B) -- (E);

        \draw (C) -- (H);
        \draw (C) -- (E);
        \draw (C) -- (D);
        \draw (C) -- (K);

        \draw (D) -- (J);
        \draw (D) -- (E);
        \draw (D) -- (K);
        
        \draw (E) -- (J);

        \draw (F) -- (G);
        \draw (F) -- (I);
        
        \draw (H) -- (I);
        \draw (H) -- (J);

    \end{tikzpicture}
    \end{center} 
    \caption{Query $Q()$}
    \label{fig:app:ptcrQuery}
\end{figure}

\end{minipage}
\hfill
\begin{minipage}[b]{0.55\textwidth}

\begin{figure}[H]
    \begin{center}
        \begin{tikzpicture}[scale = 0.65, font=\footnotesize]
            \node (C) {$C$}
                child {node (H) {$H$}
                    child {node (A) {$A$}
                        child {node (B) {$B$}
                            child {node (I) {$I$}
                                child {node (F) {$F$}
                                    child {node (G) {$G$}
                                        }
                                    }
                                }
                            child {node (E) {$E$}
                                child {node (J) {$J$}
                                    child {node (D) {$D$}
                                        child {node (K) {$K$}
                                            }
                                        }
                                    }
                                }
                            }
                        }
                    };

        \node[anchor=west] at ($(C) + (.25,0)$) (EXTDPT) { $||, 0, 1$  };
        \node[anchor=west] at ($(H) + (.25,0)$) (EXTDPT) { $|C|, 0, 1$  };
        \node[anchor=west] at ($(A) + (.25,0)$) (EXTDPT) { $|CH|, 0, 1.5$  };
        \node[anchor=west] at ($(B) + (.25,0)$) (EXTDPT) { $|CHA|, 0, 2$  };

        \node[anchor=west] at ($(E) + (.25,0)$) (EXTDPT) { $|CHA|, 0, 2$  };
        \node[anchor=west] at ($(J) + (.25,0)$) (EXTDPT) { $|CHAE|, 0, 2.5$  };
        \node[anchor=west] at ($(D) + (.25,0)$) (EXTDPT) { $|C|EJ, 1, 2$  };
        \node[anchor=west] at ($(K) + (.25,0)$) (EXTDPT) { $|C|D, 1, 1.5$  };

        \node[anchor=east] at ($(I) + (-.25,0)$) (EXTDPT) { $C|HAB|, 0, 2.5$  };
        \node[anchor=east] at ($(F) + (-.25,0)$) (EXTDPT) { $||AI, 1, 1.5$  };
        \node[anchor=east] at ($(G) + (-.25,0)$) (EXTDPT) { $|A|F, 1, 1.5$  };

        \node at ($(C) + (0,1.5)$) (EXTDPT) { $(\ria\setminus \icon) | \icon| \scon, \rho^*(\scon), \rho^*(\ra)$  };
        \end{tikzpicture}
    \end{center} 
    \caption{A PTCR of $Q(D,E)$}
    \label{fig:app:ptcrExample}
\end{figure}

\end{minipage}
\end{figure}

To compute $\ria, \ra, \con, \icon,$ and $\scon$, it is best to go top-down through the PT.
That is, for each variable $V$, one first computes the context $\con(V)$ (i.e., the ancestor that are still relevant further down) and performs the split according to $C(V)$.
Then, to compute the $\ria(V)$ one simply has to take $U=\min(\icon(V))$ and take the variables $\ra(parent(V))\cap \ancc(U)$.
If $\icon(V)=\emptyset$, nothing is taken.
$\ra(V)$ is then simply $\ria(V) \cup \scon(V)\cup \{V\}$.
Cases that showcase this computation are the variables $I,F,G,D,K$.
For $I$, $\ra(B)$ get added to $\ria(I)$ and thus also $C$.
For $F$, nothing gets added to $\ria(F)$ as $\icon(F)=\emptyset$.
For $G$, nothing gets added to $\ria(G)$ as there is no variable in $\ra(F)$ higher up than $A$.
For $D,K$, nothing gets added to $\ria(D),\ria(K)$ as $C$ is already the root.

Further, we note that always $\icon(A)\subseteq \ria(A)\subseteq \min(\icon(A))$ (with the understanding that $\min(\emptyset) = \emptyset$), $\con(A) \subseteq \raexc(A)\subseteq \ancc(A)$, $\con(A) \subseteq \ra(A)\subseteq \ancc(A)$, and $\raexc(A)\subseteq \ra(parent(A))$ always hold.

Now, let us proceed towards a proof of Theorem \ref{thm:algo:ptcr}.
To do so, the following lemmas will be helpful.

\subsection{Lemmas for Theorem \ref{thm:algo:ptcr}}

\begin{lemma}
\label{lem:app:ptcr1}    
    Let $(P,C)$ be a PTCR of the SPQ $Q$ and $A\in \var(Q)$ arbitrary. 
    Then, for a run of Algorithm \ref{alg:ptcr}, the function $\textsc{solve}(A,\boldsymbol{x})$ is only called for arguments $\boldsymbol{x}\in \dom^{\anc(A)}$ where $\boldsymbol{x}[\raexc(A)]\in \sem{Q[{\raexc(A)}]}$.
\end{lemma}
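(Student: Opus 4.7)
The plan is to establish the invariant by strong induction on the order in which recursive calls $\textsc{solve}(A,\boldsymbol{x})$ are initiated during a run of Algorithm~\ref{alg:ptcr}. The base case is the very first call $\textsc{solve}(\ptroot(P),())$: since $\ptroot$ has no ancestors, $\con(\ptroot)=\emptyset$ and therefore $\ria(\ptroot)=\scon(\ptroot)=\emptyset$, so $\raexc(\ptroot)=\emptyset$ and the empty tuple trivially lies in $\sem{Q[\emptyset]}$ (provided the database is non-empty, cf.\ the remark after Lemma~\ref{lem:app:pt2}).

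For the inductive step, I would locate where recursive calls actually originate. In Algorithm~\ref{alg:ptcr}, calls to $\textsc{solve}(B,\boldsymbol{y}')$ can only be produced inside $\textsc{fillCache}(A,-)$, and $\textsc{fillCache}$ is invoked only after a cache miss, at which point $\ria_A$ has been overwritten with $\boldsymbol{x}[\ria(A)]$. The induction hypothesis for the enclosing call gives $\boldsymbol{x}[\raexc(A)]\in\sem{Q[\raexc(A)]}$, and since $\ria(A)\subseteq\raexc(A)$ this yields $\ria_A\in\sem{Q[\ria(A)]}$. Tracing the two nested loops of $\textsc{fillCache}$, the iteration over $\ysto\in\supp(\sem{Q[\scon(A)|\ria_A]})$ produces $(\ria_A,\ysto)\in\sem{Q[\ria(A)\cup\scon(A)]}=\sem{Q[\raexc(A)]}$, and the subsequent iteration over $a\mapsto s\in\sem{Q[A|\ria_A,\ysto]}$ produces $(\ria_A,\ysto,a)\in\sem{Q[\ra(A)]}$. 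Hence the tuple $\boldsymbol{y}'=(\boldsymbol{y},a)$ supplied to each child call satisfies $\boldsymbol{y}'[\ra(A)]\in\sem{Q[\ra(A)]}$.

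The main (purely combinatorial) obstacle is the structural containment $\raexc(B)\subseteq\ra(A)$ for every child $B$ of $A$ in $P$, which is what bridges the inductive step. The piece $\ria(B)\subseteq\ra(A)$ is immediate from the recursive definition $\ria(B)=\ra(\parent(B))\cap\ancc(\min(\icon(B)))$. For the other piece, $\scon(B)\subseteq\ra(A)$, I would prove the sharper inclusion $\con(B)\subseteq\con(A)\cup\{A\}$: given $C\in\con(B)$, we have $C\in\anc(B)=\ancc(A)$ and a witnessing $D\in\descc(B)\subseteq\descc(A)$ with $\{C,D\}\subseteq\bm X_i$ for some atom $R_i(\bm X_i)$; if $C=A$ the claim is trivial, and otherwise $C\in\anc(A)$ and the same witness certifies $C\in\con(A)$. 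Since $\con(A)\subseteq\raexc(A)$ and $A\in\ra(A)$, this gives $\scon(B)\subseteq\con(B)\subseteq\ra(A)$, hence $\raexc(B)=\ria(B)\cup\scon(B)\subseteq\ra(A)$.

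Combining these two steps closes the induction: $\boldsymbol{y}'[\raexc(B)]$ is a projection of $\boldsymbol{y}'[\ra(A)]\in\sem{Q[\ra(A)]}$ onto $\raexc(B)\subseteq\ra(A)$, and since the support of a projected query is the projection of the support (cf.\ Eq.~\eqref{eq:app:def:supp:pro}), $\boldsymbol{y}'[\raexc(B)]\in\sem{Q[\raexc(B)]}$, as required.
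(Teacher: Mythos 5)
Your proof is correct and follows essentially the same route as the paper's: induction along the chain of recursive calls, showing that a cache miss at $A$ produces child calls with $\boldsymbol{y}'[\ra(A)]\in\sem{Q[\ra(A)]}$ and then projecting down via $\raexc(B)\subseteq\ra(A)$. The only substantive difference is that you explicitly prove the containment $\raexc(B)\subseteq\ra(A)$ (via $\con(B)\subseteq\con(A)\cup\{A\}$), which the paper's proof merely asserts — a worthwhile addition, and your argument for it is sound.
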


\begin{proof}    
We prove this by induction on the size of $\anc(A)$.
To that end, the base case holds trivially due to $\raexc(\ptroot(P)) = \emptyset$.

For the induction step, let $A$ be such that the induction hypothesis holds.
I.e., all calls are made for $\boldsymbol{x}\in \dom^{\anc(A)}$ where $\boldsymbol{x}[\raexc(A)]\in \sem{Q[{\raexc(A)}]}$.
Now, let $\boldsymbol{x}$ be such that is incurs a cache miss.
Then, the call takes $\yanc := \boldsymbol{x}[\anc(A)\setminus \scon(A)]$ and iterates through 
$$(\ysto,a)\in \sem{Q[{\scon(A)\cup \{A\}}|{\boldsymbol{x}[\ria(A)]}]}.$$
Consequently, 
$$(\boldsymbol{x}[\ria(A)], \ysto,a)\in \sem{Q[{\ria(A)\cup \scon(A)\cup \{A\}}]}.$$
Put differently, for $\boldsymbol{y}':=(\yanc, \ysto, a)$, we get 
$$\boldsymbol{y}'[\ra(A)] = (\boldsymbol{x}[\ria(A)], \ysto,a)\in \sem{Q[{\ria(A)\cup \scon(A)\cup \{A\}}]} = \sem{Q[{\ra(A)}]}.$$
Then, let $B$ be a child of $A$.
We know, $\raexc(B)\subseteq \ra(A)$ and hence, $\boldsymbol{y}'[\raexc(B)]  \in \sem{Q[{\raexc(B)}]}$.
Thus, the recursive call $\textsc{solve}(B,\boldsymbol{y}')$ is as required.
Note that all calls made to the function $\textsc{solve}$ (except the very first) arise in this form.
Hence, this complete the induction step and with it the proof.
\end{proof}

\begin{lemma}
\label{lem:app:ptcr2}
    Let $(P,C)$ be a PTCR of the SPQ $Q$ and $A\in \var(Q)$ arbitrary.  
    Then, during the execution of Algorithm \ref{alg:ptcr}, the function call $\textsc{solve}(A,\boldsymbol{x})$ returns $\sem{Q[{\descc(A)}|{\boldsymbol{x}[\con(A)]}](\outt(A))}$.
    Furthermore, after a call to $\textsc{fillCache}(A,\boldsymbol{x}[\anc(A)\setminus \scon(A)])$ the cache $M_A$ is of the form
    {\small
    \begin{align}\{(\boldsymbol{x}[\icon(A)],\ysto)\mapsto \sem{Q[{\descc(A)}|{\boldsymbol{x}[\icon(A)],\ysto}](\outt(A))}\mid \ysto\in \sem{Q[{\scon(A)}|{\boldsymbol{x}[\ria(A)]}]}\}.\label{eq:app:ptcrcache}        
    \end{align}
    }
\end{lemma}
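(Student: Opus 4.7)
The plan is to prove both claims simultaneously by induction on the size of $\descc(A)$, mirroring the strategy used for Lemma~\ref{lem:app:pt1} and Lemma~\ref{lem:app:ptc3}, but with two extra ingredients specific to PTCR: an \emph{outer} induction accounting for how often $\textsc{fillCache}$ is invoked, and a careful argument that the cache invariant in \eqref{eq:app:ptcrcache} is preserved between cache hits. Before starting, I would record a bookkeeping fact proved via a short separate induction on the depth of the recursion in Algorithm~\ref{alg:ptcr}: whenever $\textsc{solve}(A,\bm x)$ is entered, the variable $\ria_A$ stored globally equals $\bm x[\ria(A)]$ exactly when the subsequent guard evaluates to a hit; and whenever $\textsc{fillCache}(A,-)$ returns, $\ria_A = \bm x[\ria(A)]$ holds. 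This lets me treat the cache as conditioned on a fixed value $\bm x[\ria(A)]$ throughout the rest of the argument.

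\textbf{Base case.} If $\descc(A)=\{A\}$, then the \textsc{for}-loop over children is empty, so $\textsc{fillCache}(A,\yanc)$ writes, for each $\ysto\in \supp(\sem{Q[\scon(A)|\ria_A]})$, the value $\bigoplus_{a\mapsto s \in \sem{Q[A|\ria_A,\ysto]}} \{a[\bm X\cap\{A\}]\mapsto s\}$ into $M_A(\bm x[\icon(A)],\ysto)$. By Lemma~\ref{lem:app:ptdecompose} this equals $\sem{Q[A|\ria_A,\ysto](\outt(A))}$, and since $(\ria(A)\cup \scon(A))\supseteq \con(A)$, Lemma~\ref{lem:app:con} lets me rewrite this as $\sem{Q[\descc(A)|\bm x[\con(A)]](\outt(A))}$, as required for the cache-state claim. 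The return value of $\textsc{solve}$ is then just $M_A(\bm x[\con(A)])$, which settles the first claim in the base case for both cache-hit and cache-miss branches.

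\textbf{Inductive step.} Assume the statement for all variables $B$ with $|\descc(B)| < |\descc(A)|$. For a cache-hit call $\textsc{solve}(A,\bm x)$, I invoke the cache-state invariant on the previous $\textsc{fillCache}$ call: since a cache hit implies $\bm x[\ria(A)] = \ria_A$ has not changed since the cache was filled, the entry $M_A(\bm x[\con(A)])$ still equals $\sem{Q[\descc(A)|\bm x[\con(A)]](\outt(A))}$, giving the first claim. For a cache miss, I analyze $\textsc{fillCache}(A,\yanc)$ directly: for each $\ysto$ it constructs $\text{OUT}$ as
\begin{align*}
\bigoplus_{a\mapsto s \in \sem{Q[A|\ria_A,\ysto]}} \{a[\bm X\cap\{A\}]\mapsto s\} \otimes \bigotimes_{B\in \mathit{child}(A)} \textsc{solve}(B,(\yanc,\ysto,a)).
\end{align*}
Each recursive call $\textsc{solve}(B,\bm y')$ satisfies $\bm y'[\anc(B)]\in \sem{Q[\anc(B)]}$ by Lemma~\ref{lem:app:ptcr1}, so by the induction hypothesis it returns $\sem{Q[\descc(B)|\bm y'[\con(B)]](\outt(B))}$. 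Applying Lemma~\ref{lem:app:ptdecompose} with $\bm x' := (\yanc,\ysto)$ then identifies the OUT value with $\sem{Q[\descc(A)|\bm x'](\outt(A))}$, and Lemma~\ref{lem:app:con} (since $\con(A)\subseteq \ria(A)\cup\scon(A)$) rewrites it as $\sem{Q[\descc(A)|\bm x[\con(A)]](\outt(A))}$. This is exactly the content of \eqref{eq:app:ptcrcache}, and taking $\ysto = \bm x[\scon(A)]$ gives the return-value claim in the cache-miss branch.

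\textbf{Main obstacle.} The subtle point I expect to require the most care is the interaction between recursive calls made inside $\textsc{fillCache}(A,-)$ and the global state $\ria_B$ of descendants $B$ of $A$: while $A$ is iterating through many tuples $\ysto$, the children $B$ may legitimately hit their own caches or trigger their own fills and resets, and I need to be sure the children's cache-state invariant holds at the moment each recursive call is made. This is handled by observing that the iteration in line~\ref{line:line11} traverses $\supp(\sem{Q[\scon(A)|\ria_A]})$ in lexicographic order on $\scon(A)$, and $\ria(B)\cap \descc(A) \subseteq \scon(A)\cup\{A\}$, so whenever a descendant's $\ria(B)$-projection changes its prefix from $A$'s perspective, $B$ has already finished with all previous values of that prefix; combined with the bookkeeping fact established at the start, this guarantees the children's caches are in the state required by the induction hypothesis whenever $\textsc{solve}(B,\cdot)$ is invoked. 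Once this is in place the rest of the argument is a direct combination of Lemmas~\ref{lem:app:ptdecompose} and~\ref{lem:app:con} as sketched above.
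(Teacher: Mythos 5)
Your plan is correct and follows essentially the same route as the paper's proof: a simultaneous induction on $|\descc(A)|$ whose base and inductive cases combine Lemma~\ref{lem:app:ptdecompose}, Lemma~\ref{lem:app:con} (justified via Lemma~\ref{lem:app:ptcr1}, which also guarantees that on a cache hit the key $\boldsymbol{x}[\con(A)]$ is actually present, since $\boldsymbol{x}[\scon(A)]\in\sem{Q[{\scon(A)}|{\boldsymbol{x}[\ria(A)]}]}$), and the induction hypothesis applied to the children. The only divergence is your ``main obstacle'' paragraph: the worry about descendants' global state and the lexicographic traversal of $\supp(\sem{Q[\scon(A)|\ria_A]})$ is superfluous for this correctness claim, because the structural induction hypothesis is a statement about \emph{every} call $\textsc{solve}(B,\cdot)$ made during the run regardless of when it occurs; the ordering argument is only needed for the call-counting and complexity bounds in Lemmas~\ref{lem:app:ptcr3} and~\ref{lem:app:ptcr4}.
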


\begin{proof}
We prove both claims in a simultaneous induction on the size of $\descc(A)$.
To that end, the base case is $\descc(A) = \{A\}$.
We start with the base claim for $\textsc{fillCache}(A,\yanc)$ with $\yanc:= \boldsymbol{x}[\anc(A)\setminus \scon(A)]$.
We then iterate through $\ysto\in \sem{Q[{\scon(A)}|{\boldsymbol{x}[\ria(A)]}]}$ and compute 
$$\text{OUT} = \bigoplus_{a\mapsto s \in \sem{Q[A|{\boldsymbol{x}[\ria(A)],\ysto]}}}\{a[\outt(A)]\mapsto s\}$$
As there are no children of $A$, this is clearly equal to $\sem{Q[A|{\boldsymbol{x}[\ria(A)],\ysto]}(\outt(A))}.$
Further, due to Lemma \ref{lem:app:con}, this is equal to $\sem{Q[{\descc(A)}|{\boldsymbol{x}[\icon(A)],\ysto}](\outt(A))}$.
Thus, the base claim for $\textsc{fillCache}(A,\yanc)$ holds.

Now, for the base claim for $\textsc{solve}(A,\boldsymbol{x})$.
There are two cases to consider:
\begin{itemize}
    \item $ \boldsymbol{x} [ \ria(A) ] = \ria_A \colon $  Then, there was a previous call with $\textsc{solve}(A,\boldsymbol{z})$ such that the cache was filled and $\boldsymbol{z}[\ria(A)]=\boldsymbol{x}[\ria(A)]$.
    Thus, $M_A$ is of the form given in Eq. \eqref{eq:app:ptcrcache}.
    Furthermore, due to Lemma \ref{lem:app:ptcr1}, $\boldsymbol{x}[\raexc(A)]\in \sem{Q[{\raexc(A)}]}$.
    Moreover, $\boldsymbol{x}[\scon(A)]\in \sem{Q[{\scon(A)}|{\boldsymbol{x}[\ria(A)]}]}$.
    Hence, 
    \begin{align*}
        M_A(\boldsymbol{x}[\con(A)]) & = M_A(\boldsymbol{x}[\icon(A)],\boldsymbol{x}[\scon(A)]) \\
        & = \sem{Q[{\descc(A)}|{\boldsymbol{x}[\icon(A)],\boldsymbol{x}[\scon(A)]}](\outt(A))} \\
        &= \sem{Q[{\descc(A)}|{\boldsymbol{x}[\con(A)]}](\outt(A))}.
    \end{align*}
    \item $ \boldsymbol{x} [ \ria(A) ] \neq \ria_A \colon $ Then, $\textsc{fillCache}(A,\boldsymbol{x}[\anc(A)\setminus \scon(A)])$ will be called and after the call, $M_A$ will be of the form given in Eq. \eqref{eq:app:ptcrcache}.
    Thus, we can proceed as in the previous case.
\end{itemize}

Now for the induction step.
To that end, let $\descc(A) \neq \{A\}$.
We start with the inductive claim for $\textsc{fillCache}(A,\yanc)$ with $\yanc:= \boldsymbol{x}[\anc(A)\setminus \scon(A)]$.
We then iterate through $\ysto\in \sem{Q[{\scon(A)}|{\boldsymbol{x}[\ria(A)]}]}$ and compute for $\boldsymbol{y}:=(\yanc,\ysto)$
$$\text{OUT} = \bigoplus_{a\mapsto s \in \sem{Q[A|{\boldsymbol{y}[\raexc(A)]}]}}\{a[\outt(A)]\mapsto s\}\otimes\bigotimes_{B\in child(A)}\textsc{solve}(B,(\boldsymbol{y},a))$$
which is equal to 
$$\text{OUT} =  \bigoplus_{a\mapsto s \in \sem{Q[A|{\boldsymbol{y}[\raexc(A)]}]}}\{a[\outt(A)]\mapsto s\}\otimes\bigotimes_{B\in child(A)}\sem{Q[{\descc(B)}|{\boldsymbol{y},a[\con(B)]}](\outt(B))} $$
due to the induction hypothesis applied to the children $B$.
Then, by construction, $(\boldsymbol{y}[\raexc(A)],a)\in \sem{Q[{\ra(A)}]}$ and by Lemma \ref{lem:app:con} (applied twice)
\begin{align*}
    \text{OUT} &=  \bigoplus_{a\mapsto s \in \sem{Q[A|{\boldsymbol{y}[\raexc(A)]}]}}\{a[\outt(A)]\mapsto s\}\otimes\bigotimes_{B\in child(A)}\sem{Q[{\descc(B)}|{\boldsymbol{y}[\raexc(A)],a}](\outt(B))} \\
    &= \sem{Q[{\descc(A)}|{\boldsymbol{y}[\raexc(A)]}](\outt(A))} \\
    &= \sem{Q[{\descc(A)}|{\boldsymbol{y}[\con(A)]}](\outt(A))} \\
    &= \sem{Q[{\descc(A)}|{\boldsymbol{x}[\icon(A)],\ysto}](\outt(A))}.
\end{align*}
Thus, after the call $\textsc{fillCache}(A,\yanc)$, the cache $M_A$ is as required in the form of Eq. \eqref{eq:app:ptcrcache}.

Now, for the inductive claim for $\textsc{solve}(A,\boldsymbol{x})$ we can simply reuse the argument for the base claim.
Note that we never used the fact that $A$ has no children there.
\end{proof}

For the following claim, we assume that loops over multiple variables are processed essentially in lexicographic order.
E.g., for $ab\in\sem{Q[AB]}$, first all $b$ values matching to the first $a$ value are processed before going to the next value for $a$.

\begin{lemma}
\label{lem:app:ptcr3}
    Let $(P,C)$ be a PTCR of the SPQ $Q$ and $A\in \var(Q)$ arbitrary.  
    Then, for a run of Algorithm \ref{alg:ptcr}, let $\textsc{solve}(A,\boldsymbol{x}_1), \textsc{solve}(A,\boldsymbol{x}_2)$ be two calls where the first happens before the latter.
    Then, $\boldsymbol{x}_1[\ria(A)]$ comes lexicographically before $\boldsymbol{x}_2[\ria(A)]$ (or they are equal).
\end{lemma}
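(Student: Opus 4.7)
My plan is to prove the lemma by induction on the depth of $A$ in the pseudo-tree $P$, with the lexicographic ordering on tuples in $\dom^{\boldsymbol{Y}}$ (for $\boldsymbol{Y}\subseteq \ancc(A)$) defined using the natural root-first depth ordering on $\ancc(A)$, which is a chain. The base case $A=\ptroot(P)$ is immediate because $\ria(A)=\emptyset$, so all projections coincide with the empty tuple. The real work is the inductive step.

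For the inductive step, let $B:=\parent(A)$. I will first observe that every call $\textsc{solve}(A,\boldsymbol{y}')$ is spawned on line~\ref{line:ptcr:recurse} from inside some invocation of $\textsc{fillCache}(B,\yanc)$, with $\boldsymbol{y}'=(\yanc,\ysto,a)$. Within a single such invocation, the stated lex-processing assumption on the nested loops ensures that the pair $(\ysto,a)$ is emitted in lex order using depth order on $(\scon(B),B)$ (with $\scon(B)$ ``outer'' and $B$ ``inner''). Across different $\textsc{fillCache}(B,\cdot)$ invocations, the induction hypothesis applied to $B$ says that the triggering calls $\textsc{solve}(B,\boldsymbol{x})$ arrive with $\boldsymbol{x}[\ria(B)]$ weakly lex-increasing; since $\textsc{fillCache}$ runs only on a cache miss $\boldsymbol{x}[\ria(B)]\neq \ria_B$, the successive new $\ria_B$-values must be strictly lex-increasing. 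Concatenating these three levels, across all calls to $\textsc{solve}(A,\cdot)$ the tuple $\boldsymbol{y}'[\ra(B)]=(\ria_B,\ysto,a)$ is weakly lex-increasing under the depth ordering on $\ra(B)=\ria(B)\cup \scon(B)\cup\{B\}$, which lists $\ria(B)$ at the top, then $\scon(B)$, then $B$ at the bottom (using $\ria(B)\subseteq \ancc(\min(\icon(B)))$ to place it above $\scon(B)$, and $\scon(B)\subseteq \anc(B)$ to place it above $B$).

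The remaining and main step will be to identify $\ria(A)$ as a \emph{prefix} of $\ra(B)$ in the depth ordering, after which the conclusion follows from the elementary fact that projecting a weakly lex-ordered sequence of tuples to a prefix of the variable order preserves weak lex order. When $\icon(A)=\emptyset$ this is trivial. Otherwise $\ria(A)=\ra(B)\cap \ancc(\min(\icon(A)))$, and since $\ancc(\min(\icon(A)))$ consists of exactly the tree nodes at depth at most $\mathrm{depth}(\min(\icon(A)))$, intersecting with $\ra(B)$ keeps precisely the depth-topmost elements of $\ra(B)$ -- a prefix in depth order -- irrespective of whether $\min(\icon(A))$ itself sits in $\ria(B)$, in $\scon(B)$, equals $B$, or lies entirely outside $\ra(B)$.

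The main obstacle I anticipate is not any single technical hurdle but the careful bookkeeping needed to verify the prefix claim under the recursive definition of $\ria$: one has to keep straight which variables are ``upper'' versus ``lower'' in $\con(B)$, check that the depth ordering on $\ra(B)$ really does place $\ria(B)$ above $\scon(B)$ above $\{B\}$, and ensure that the lex ordering used by the induction hypothesis at $B$ is the restriction of the ordering used at $A$ -- so that the projection really is a prefix projection in the combinatorial sense the final argument requires.
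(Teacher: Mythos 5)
Your plan is correct and follows essentially the same route as the paper's proof: induction along the tree using the hypothesis at the parent, the lexicographic emission of $(\ysto,a)$ inside \textsc{fillCache}, the strict increase of $\ria_B$ across cache misses, and the observation that $\ria(A)$ sits inside $\ra(\parent(A))$. The only (cosmetic) difference is that you verify once that $\ria(A)$ is a depth-order prefix of $\ra(\parent(A))$ and then invoke prefix-projection, whereas the paper inlines this as a three-way case split on where $\min(\icon(A))$ falls ($\icon$, $\scon\cup\{\parent(A)\}$, or empty).
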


\begin{proof}    
We prove this by induction on the size of $\anc(A)$.
To that end, the base case holds trivially due to $\ria(\ptroot(P)) = \emptyset$.

For the induction step, let $A$ be such that the induction hypothesis holds.
I.e., all calls are made in lexicographically order.
Then, for $B\in child(A)$, consider two $\textsc{solve}(B,\boldsymbol{y}'_1), \textsc{solve}(B,\boldsymbol{y}'_2)$ calls where the first happens before the latter.
Then, there must have been calls $\textsc{solve}(A,\boldsymbol{x}_1), \textsc{solve}(A,\boldsymbol{x}_2)$ made in that order (or they are equal) such that $$\boldsymbol{x}_i[\anc(A)\setminus \scon(A)] = \boldsymbol{y}'_i[\anc(A)\setminus \scon(A)].$$
Furthermore, $\boldsymbol{x}_1[\ria(A)]$ comes lexicographically before $\boldsymbol{x}_2[\ria(A)]$ due to the induction hypothesis.
We distinguish two cases:
\begin{itemize}
    \item $\boldsymbol{x}_1[\ria(A)] = \boldsymbol{x}_2[\ria(A)]\colon$ Then, all calls $\textsc{solve}(A,\boldsymbol{x}_3)$ made in between the two also agree on $\ria(A)$.
    Thus, the cache was only filled in the first calls, i.e., as both spawned recursive calls, they have to be the same call and $\boldsymbol{x}_1=\boldsymbol{x}_2$.
    Then, note that 
    $$(\ysto,a)\in \sem{Q[\scon(A)\cup \{A\}|{\boldsymbol{x}[\ria(A)]}]}$$
    are processed in lexicographic order and produce the recursive calls $$\textsc{solve}(B,(\boldsymbol{x}_1[\anc(A)\setminus \scon(A)],\ysto,a))$$
    Importantly, $\boldsymbol{y}'_1[\ria(A)] = \boldsymbol{y}'_2[\ria(A)]$ and $\boldsymbol{y}'_1[\scon(A)\cup \{A\}]$ comes lexicographically before $\boldsymbol{y}'_2[\scon(A)\cup \{A\}]$.
    Further, $\ria(B)$ is a prefix of $\ra(A) = \ria(A)\cup \scon(A) \cup \{A\}$.
    Consequently, $\boldsymbol{y}'_1[\ria(B)]$ also comes lexicographically before $\boldsymbol{y}'_2[\ria(B)]$.

    \item $\boldsymbol{x}_1[\ria(A)] \neq \boldsymbol{x}_2[\ria(A)]\colon$ 
    First note that $\ria(A)\subseteq \anc(A)\setminus \scon(A)$.
    Then, we have to consider different sub-cases:
    \begin{itemize}
        \item $\icon(B)=\emptyset\colon$ the claim holds trivially as $\ria(B)=\emptyset$ in this case.
        \item $\min(\icon(B))\in \icon(A)\colon$ In this case, $\ria(B)$ is a prefix of $\ria(A)$.
        Thus, $\boldsymbol{y}'_1[\ria(B)]= \boldsymbol{x}_1[\ria(B)]$ comes lexicographically before  $\boldsymbol{y}'_1[\ria(B)]= \boldsymbol{x}_1[\ria(B)]$.
        \item $\min(\icon(B))\in \scon(A)\cup \{A\}\colon$ Then, $\ria(A)$ is a prefix of $\ria(B)$ and $\boldsymbol{y}'_1[\ria(A)]= \boldsymbol{x}_1[\ria(A)]$ comes lexicographically strictly before  $\boldsymbol{y}'_2[\ria(A)]= \boldsymbol{x}_2[\ria(A)]$.
        Thus, also $\boldsymbol{y}'_1[\ria(B)]$ comes lexicographically strictly before  $\boldsymbol{y}'_2[\ria(B)]$.
    \end{itemize}
\end{itemize}
This completes the induction step and end the proof.
\end{proof}

\begin{lemma}
\label{lem:app:ptcr4}
    Let $(P,C)$ be a PTCR of the SPQ $Q$ and $A\in \var(Q)$ arbitrary.  
    Then for a run of the Algorithm \ref{alg:ptcr}, the collective time spent in function calls $\textsc{solve}(A,\boldsymbol{x})$ (excluding cache hits - counting starts from line \ref{line:ptcr} - and excluding recursive calls - line \ref{line:ptcr:recurse}) over all values for $\boldsymbol{x}$ is bounded by $O(|D|^{\rho^*(\ra(A)\cup \out(A))})$.
\end{lemma}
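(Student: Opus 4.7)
The plan is to follow the template of the proof of Lemma~\ref{lem:app:pt3} for pure pseudo-trees, but replace the bound $\rho^*(\ancc(A)\cup\out(A))$ by $\rho^*(\ra(A)\cup\out(A))$ by exploiting the lexicographic-order property (Lemma~\ref{lem:app:ptcr3}) to control how often a cache is reset.

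First I would observe that all non-cache-hit work in $\textsc{solve}(A,-)$ is performed inside a call to $\textsc{fillCache}(A,-)$, and $\textsc{fillCache}(A,-)$ is invoked at most once per distinct value $\boldsymbol{x}[\ria(A)]$: by Lemma~\ref{lem:app:ptcr3} the values of $\boldsymbol{x}[\ria(A)]$ reached by successive calls $\textsc{solve}(A,\boldsymbol{x})$ are monotone in the lexicographic order, so each value triggers at most one cache reset and hence at most one $\textsc{fillCache}$ call. Moreover, by Lemma~\ref{lem:app:ptcr1}, each such call satisfies $\boldsymbol{x}[\raexc(A)]\in\sem{Q[\raexc(A)]}$, so the tuples $\boldsymbol{x}[\ria(A)]$ actually used all lie in $\sem{Q[\ria(A)]}$.

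Next I would split the work inside $\textsc{fillCache}(A,\yanc)$ into two parts: (i) enumerating the pairs $(\ysto,a)\in\sem{Q[\scon(A)\cup\{A\}\mid \boldsymbol{x}[\ria(A)]]}$, and (ii) updating $\mathrm{OUT}$ using TMP (but not the recursive calls). Summing (i) over all $\textsc{fillCache}$ invocations across the run amounts to iterating $a\mapsto s\in\sem{Q[A\mid \boldsymbol{x}[\ria(A)\cup\scon(A)]]}$ over all $\boldsymbol{x}[\ria(A)\cup\scon(A)]\in\sem{Q[\ria(A)\cup\scon(A)]}$. Applying Lemma~\ref{lem:app:gj} with $\boldsymbol{Y}:=\ria(A)\cup\scon(A)$ gives total time $O(|D|^{\rho^*(\ra(A))})$ for part~(i).

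For part (ii), each iteration writes into $\mathrm{OUT}\colon\dom^{\outt(A)}\to\K$ by first forming $\mathrm{TMP}=\{a[\boldsymbol{X}\cap\{A\}]\mapsto s\}\otimes\bigotimes_{B\in child(A)}\textsc{solve}(B,\boldsymbol{y}')$ and then adding it to $\mathrm{OUT}$. Exactly as in the proof of Lemma~\ref{lem:app:pt3}, the cost of these two operations, summed over all pairs $(\ria_A,\ysto,a)$ encountered in all $\textsc{fillCache}(A,-)$ executions, is output-optimal in the variables $\ra(A)\cup\out(A)$, since every produced tuple is a distinct element of $\sem{Q[\ra(A)\cup\desc(A)](\ra(A)\cup\out(A))}$ whose size is bounded by $O(|D|^{\rho^*(\ra(A)\cup\out(A))})$ by the AGM bound. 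Combining with the $O(|D|^{\rho^*(\ra(A))})$ bound from part (i) and noting $\rho^*(\ra(A))\le\rho^*(\ra(A)\cup\out(A))$ yields the claimed overall bound $O(|D|^{\rho^*(\ra(A)\cup\out(A))})$.

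The main subtlety will be justifying that the enumerations in different $\textsc{fillCache}$ calls can be aggregated into a single generic-join traversal over $\sem{Q[\ria(A)\cup\scon(A)]}$; for this I rely on Lemma~\ref{lem:app:ptcr3}, which guarantees that the $\ria(A)$-values arrive in lexicographic order so no $\ria(A)$-value is visited twice, and on Lemma~\ref{lem:app:ptcr1}, which guarantees that each $\ria(A)$-value actually extends to a tuple in the relevant projected query. Once these two facts are in place, the accounting is a direct adaptation of the PT analysis.
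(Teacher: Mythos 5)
Your proposal is correct and follows essentially the same route as the paper's proof: Lemma~\ref{lem:app:ptcr3} limits cache misses to one per $\ria(A)$-value, iterated application of Lemma~\ref{lem:app:gj} over $\ria(A)\cup\scon(A)\cup\{A\}$ bounds the enumeration cost by $O(|D|^{\rho^*(\ra(A))})$, and the output-optimality/AGM argument inherited from the PT analysis bounds the TMP/OUT updates by $O(|D|^{\rho^*(\ra(A)\cup\out(A))})$. The only cosmetic difference is that the paper spells out the nested-loop application of Lemma~\ref{lem:app:gj} once per variable of $\scon(A)$, which you compress into a single aggregated generic-join traversal.
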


\begin{proof}
Due to Lemma \ref{lem:app:ptcr3}, we only have to sum up the time spent in calls $\textsc{solve}(A,\boldsymbol{x})$ over all $\boldsymbol{x}[\ria(A)]\in \sem{Q[{\ria(A)}]}$ as only the first $\boldsymbol{x}$ that matches $\boldsymbol{x}[\ria(A)]$ triggers a cache miss.
For each call, we simply have to go through all $\ysto\in \sem{Q[{\scon(A)}|{\boldsymbol{x}[\ria(A)]}]}$.
We can use Lemma \ref{lem:app:gj}, to calculate how much time this takes.
To that end, let $A_1\dots A_l = \scon(A)$ and let us use nested loops.
Then, by applying Lemma \ref{lem:app:gj} $l$-times, we arrive at a bound 
$$O(\sum _{i=1,\dots,l}|D|^{\rho^*(\ria(A)\cup \{A_1,\dots,A_i\})}) = O(|D|^{\rho^*(\ria(A)\cup \scon(A))}) = O(|D|^{\rho^*(\raexc(A))}).$$
Then, iterating further through $a\mapsto s\in  \sem{Q[{A}|{\boldsymbol{x}[\ria(A)],\ysto}]}$ and applying Lemma \ref{lem:app:gj} again, we arrive at
$$O(|D|^{\rho^*(\ria(A)\cup \scon(A)\cup \{A\})}) = O(|D|^{\rho^*(\ra(A))}).$$

Moreover, in the loop body $(\boldsymbol{x}[\ria(A)],\ysto ,a)\in \sem{Q[{\ra(A)}]}$.
Then, as we have seen in the proof of Lemma $\ref{lem:app:ptcr2}$, we compute $\sem{Q[{\descc(A)}|{\boldsymbol{x}[\ria(A)],\ysto}](\outt(A))}$ in \textit{output optimal time} where the output is $\{A\}\cup\out(A)$.
Thus, over all $\boldsymbol{x}[\raexc(A)]\in \sem{Q[{\raexc(A)}]}$ (more importantly actually, over all $\boldsymbol{x}[\ria(A)]\in \sem{Q[{\ria(A)}]}$), the time required is $O(|D|^{\rho^*(\ra(A)\cup \out(A))})$.
\end{proof}

Now, let us proof Theorem \ref{thm:algo:ptcr}

\subsection{Proof of Theorem \ref{thm:algo:ptcr}}

\thmalgoptcr*

\begin{proof}

The correctness follows directly by applying Lemma \ref{lem:app:ptcr2} to the call $\textsc{solve}(root(P),())$ which consequently returns $\sem{Q(\boldsymbol{X})}$ as $\descc(root(P))=\var(Q)$ and $\outt(root(P))=\boldsymbol{X}$.

To bound the runtime of Algorithm \ref{alg:ptcr}, we can attribute each step past line \ref{line:ptcr} (cache miss) to the current call of the function $\textsc{solve}(A,\boldsymbol{x})$ and the steps before line \ref{line:ptc:miss} to the calling function.
As the number of steps before line \ref{line:ptc:miss} is constant, we can actually simply ignore them.
To that end, for each variables $A$ we sum up the time over all values for $\boldsymbol{x}$.
Due to Lemma \ref{lem:app:ptcr4}, this can be bound by $O(|D|^{\rho^*(\ra(A)\cup \out(A))})$.
Thus, by taking the maximum over all $A$, we arrive at a overall bound of $O(|D|^{t(\Pi)})$ for Algorithm \ref{alg:ptcr}.

Lastly, to bound the space consumption of Algorithm \ref{alg:ptcr} is suffice to look at Lemma \ref{lem:app:ptcr2} that describes the caches.
That is, for variables $A$, the cache $M_A$ is a collection of $\K$-relations 
$$\sem{Q[{\descc(A)}|{\boldsymbol{x}[\ria(A)],\ysto}](\outt(A))}$$
for $\ysto\in \sem{Q[{\scon(A)}|{\boldsymbol{x}[\ria(A)]}]}$.
Put differently, they are essentially a partitioning of 
$$\sem{Q[{\descc(A)\cup \scon(A)}|{\boldsymbol{x}[\ria(A)]}](\scon(A)\cup \outt(A))}.$$
The size of the cache $M_A$ can, therefore, never exceed $O(|D|^{\rho^*(\scon(A)\cup\outt(A))})$.   
Hence, the size of all caches $M_A$ together never exceed $O(|D|^{s(\Pi)})$.
\end{proof}

Next, we proceed to Theorem \ref{thm:ptcr-tdpt}.

\subsection{Proof of Theorem \ref{thm:ptcr-tdpt}}

\thmptcrtdpt*

\begin{proof}
To prove $\tdpt \not\preceq \ptcr$,  take the scalar query
$Q()$ depicted in Fig,~\ref{fig:app:ptcrQuery2} together with the PTCR depicted in Fig.~\ref{fig:app:ptcrExample2}.
The space-time exponent of this PTCR is $(1,2)$.
However, a TD of $Q()$ that only uses linear space has to put everything in one bag as there is no edge that separates the query.
To show that any such TD-based plan would require more than quadratic time, we used our computer implementation and exhaustively searched the space of $\pt$, and we found that none solve $Q()$ in quadratic time.

To prove $\tdpt \not\preceq \ptcr$, take the query $Q()$ depicted in Fig.~\ref{fig:tdptQuery}.
It can be evaluated by a plan in $\tdpt$ with space-time exponent $(1,2)$ by using the TD and PTs depicted in Fig.~\ref{fig:app:tdptexample}.
Note that the space computation given in the figure is for when the edge $GH$ is removed and $EFH$ are output variable.
However, we explained in Section \ref{sec:app:tdptexample} that re-adding the edge $GH$ and removing the output would result in the space-time exponent $(1,2)$.
However, we again leveraged our computer implementation for an exhaustive search and found that there is no plan in $\ptcr$ that is at least as good. 
\end{proof}

\begin{figure}
\begin{minipage}[b]{0.44\textwidth}

\begin{figure}[H]
    \begin{center}
    \begin{tikzpicture}[scale = 0.45,font=\footnotesize]
        \node (A) at (4,2) {$A$};
        \node (B) at (6,2) {$B$};
        \node (F) at (2,0) {$F$};
        \node (E) at (4,0) {$E$};
        \node (D) at (6,0) {$D$};
        \node (C) at (8,0) {$C$};

        \draw (A) -- (B);
        \draw (A) -- (C);
        \draw (A) -- (D);
        \draw (A) -- (E);
        \draw (A) -- (F);
        \draw (B) -- (C);
        \draw (B) -- (D);
        \draw (B) -- (E);
        \draw (B) -- (F);
        \draw (F) -- (E);
        \draw (E) -- (D);
        \draw (D) -- (C);
    \end{tikzpicture}
    \end{center} 
    \caption{Query $Q()$}
    \label{fig:app:ptcrQuery2}
\end{figure}

\end{minipage}
\hfill
\begin{minipage}[b]{0.55\textwidth}

\begin{figure}[H]
    \begin{center}
        \begin{tikzpicture}[scale = 0.65, font=\footnotesize]
            \node (A) {$A$}
                child {node (B) {$B$}
                    child {node (C) {$C$}
                        child {node (D) {$D$}
                            child {node (E) {$E$}
                                child {node (F) {$F$}
                                }
                            }
                        }
                    }
                };

        \node[anchor=west] at ($(A) + (.25,0)$) (EXTDPT) { $||, 0, 1$  };
        \node[anchor=west] at ($(B) + (.25,0)$) (EXTDPT) { $|A|, 0, 1$  };
        \node[anchor=west] at ($(C) + (.25,0)$) (EXTDPT) { $|AB|, 0, 1.5$  };
        \node[anchor=west] at ($(D) + (.25,0)$) (EXTDPT) { $|ABC|, 0, 2$  };

        \node[anchor=west] at ($(E) + (.25,0)$) (EXTDPT) { $|AB|D, 1, 2$  };
        \node[anchor=west] at ($(F) + (.25,0)$) (EXTDPT) { $|AB|E, 1, 2$  };

        \node at ($(A) + (0,1.5)$) (EXTDPT) { $(\ria\setminus \icon) | \icon| \scon, \rho^*(\scon), \rho^*(\ra)$  };
        \end{tikzpicture}
    \end{center} 
    \caption{A PTCR of $Q(D,E)$}
    \label{fig:app:ptcrExample2}
\end{figure}
\end{minipage}
\end{figure}

\corralgoptcr*

\begin{proof}
    This follows from preorders: Consider 4 elements $A,B,C,D$ ordered in the following way: \begin{align*}
        A\preceq B,C, \quad B,C\preceq D, \quad B\not\preceq C, \quad C\not\preceq B.
    \end{align*}
    Then, necessarily all are different, i.e.,
    \begin{align*}
        A\not\equiv B, \quad A\not\equiv C, \quad A\not\equiv D, \quad B\not\equiv C, \quad B\not\equiv D, \quad C\not\equiv D.
    \end{align*}
    Thus, consequently,
    \begin{align*}
        A\prec B,C, \quad B,C\prec D
    \end{align*}

    To apply this to that classes of query plans considered in Fig-\ref{fig:plan-families} above $\tdptcr$, let us note that the following hold trivially:
    \begin{align*}
        \ptcr \preceq \ptc \preceq \pt \preceq \gj, \quad \tdptcr \preceq \tdpt.
    \end{align*}
    While Theorem \ref{thm:tdgj-pt} showed
    \begin{align*}
        \tdgj \preceq \gj, \quad \tdptcr \preceq\ptcr
    \end{align*}
    as well as
    \begin{align*}
        \pt \not\preceq \tdgj, \quad \tdgj \not\preceq \pt.
    \end{align*}
    Furthermore, Theorem \ref{thm:tdgj-ptc} showed 
    \begin{align*}
        \ptc \preceq \tdgj.
    \end{align*}
    while Theorem \ref{thm:tdpt-tdptc} showed
    \begin{align*}
        \tdpt \preceq \ptc.
    \end{align*}
    Then, lastly, Theorem \ref{thm:ptcr-tdpt} showed 
    \begin{align*}
        \ptcr \not\preceq \tdpt, \quad \tdpt \not\preceq \ptcr.
    \end{align*}
    Consequently, all relationships $\preceq$ can actually be replaced by the strict version $\prec$.
\end{proof}

\fi 

\ifArxiv

\section{Appendix for Section \ref{sec:rpt}}

We start this section by briefly showcasing the computation of the space-time exponent of a RPT by revising the example used in Section \ref{sec:rpt}.

\subsection{Example of a RPT and RPT Algorithm}
\label{sec:app:rptexample}

Let us reconsider the query $Q()$ depicted in Fig.~\ref{fig:app:rptQuery} and the RPT depicted in Fig.~\ref{fig:app:rptExample}.
In Fig.~\ref{fig:app:rptExample}, we drew all nested PTs.
That is, we drew both the replaced sub-pseudo-tree and the removed inputs.
I.e., 
\begin{itemize}
    \item The top PT starts with $L$, has a path to $D$, branches into $A$ and $E$, and has a path from $F$ to $K$ attached to $E$ (draw in gray).
    \item the middle PT with (real) root $F$ replaces the gray path $F-G-H-I-J-K$ of the top PT and has the input $L$ (drawn in gray),
    \item the final PT with (real) root in $I$ replaces the path $I-J-K$ of the middle PT and also has the input $L$ (drawn in gray).
\end{itemize}
For each variable $V$ (excluding input variables and replaced sub-PTs) we added 
$$\ria(V)\setminus \icon(V)| \icon(V)|\scon(V)$$
(i.e., the partitioning of $\raexc(V)$), and $ \outt(V)$ as well as $\rho^*(\scon(V)\cup \outt(V))$ and $\rho^*(\ra(V)\cup \outt(V))$.
Thus, the space-time exponent of this structure is simply the max over the last two quantities (component wise), i.e., $(1,2) = (\rho^*(D), \rho^*(LFEDG))$ achieved for example at $G$ in the middle PT.

\begin{figure}
\begin{minipage}[b]{0.3\textwidth}
\begin{figure}[H]
\centering
\begin{tikzpicture}[font=\footnotesize]

                \foreach \angle/\label in {0/K, -36/I, -72/G, -108/E, -144/C, -180/A}
                {
                    \node[inner sep=1pt] (\label) at (\angle:2) {$\label$};
                }
        
                \foreach \angle/\label in {-18/J, -54/H, -90/F, -126/D, -162/B}
                {
                    \node[inner sep=1pt] (\label) at (\angle:1) {$\label$};
                }

                \node[inner sep=1pt] (L) at (0,0) {$L$};

                \draw (A) -- (B);
                \draw (A) -- (C);
                \draw (A) -- (D);
                \draw (B) -- (C);
                \draw (B) -- (D);
                \draw (B) -- (E);
                \draw (C) -- (D);
                \draw (C) -- (E);
                \draw (D) -- (E);
                \draw (D) -- (F);
                \draw (D) -- (G);
                \draw (E) -- (F);
                \draw (E) -- (G);
                \draw (E) -- (H);
                \draw (F) -- (G);
                \draw (F) -- (H);
                \draw (G) -- (H);
                \draw (I) -- (G);
                \draw (I) -- (H);
                \draw (I) -- (J);
                \draw (I) -- (K);
                \draw (J) -- (G);
                \draw (J) -- (H);
                \draw (J) -- (K);
                \draw (H) -- (K);
                \draw (L) -- (A);
                \draw (L) -- (B);
                \draw (L) -- (C);
                \draw (L) -- (D);
                \draw (L) -- (E);
                \draw (L) -- (F);
                \draw (L) -- (G);
                \draw (L) -- (H);
                \draw (L) -- (I);
                \draw (L) -- (J);
                \draw (L) -- (K);
            \end{tikzpicture}
    \caption{Query $Q()$}
    \label{fig:app:rptQuery}
\end{figure}
\end{minipage}
\hspace{-1cm}
\begin{minipage}[b]{0.59\textwidth}
\begin{figure}[H]
  \begin{center}
    \begin{tikzpicture}[scale = 0.85, font=\footnotesize]
        
            \node (L) {$L$}
                child { node (C) {$C$}
                    child { node (B) {$B$}
                        child { node (D) {$D$}
                            child {node (A) {$A$}}
                            child  {node (E) {$E$}
                                child [edge from parent/.style={draw,gray}] {node (F) {$\textcolor{gray}{F}$}
                                         child { node (G) {$\textcolor{gray}{G}$}
                                            child [xshift=-2cm] { node (H) {$\textcolor{gray}{H}$}
                                                child {node (I) {$\textcolor{gray}{I}$}
                                                    child {node (J) {$\textcolor{gray}{J}$}
                                                        child { node (K) {$\textcolor{gray}{K}$}
                                                        }
                                                    }
                                                }
                                            }
                                        }
                                    }
                                 child [edge from parent/.style={draw,black,dashed}]  {node (F2) {$F$}
                                     child [edge from parent/.style={draw, solid}] { node (E2) {$E$}
                                        child { node (G2) {$G$}
                                            child {node (D2) {$D$}}
                                            child {node (H2) {$H$}
                                                child [edge from parent/.style={draw,gray}]  {node (I2) {$\textcolor{gray}{I}$}
                                                    child { node (J2) {$\textcolor{gray}{J}$}
                                                        child { node (K2) {$\textcolor{gray}{K}$}
                                                        }
                                                    }
                                                }
                                                child [edge from parent/.style={draw,black,dashed}] {node (I3) {$I$}
                                                    child [edge from parent/.style={draw, solid}] { node (H3) {$H$}
                                                        child { node (J3) {$J$}
                                                            child {node (G3) {$G$}}
                                                            child {node (K3) {$K$}}
                                                        }
                                                    }
                                                }
                                            }
                                        }
                                    }
                                }
                            }
                        }
                    }
                };

            \draw[dotted] (F) -- (F2);
            \draw[dotted] (I2) -- (I3);
            
            \node (L2) at ($(F2) + (1,1)$) {\textcolor{gray}{$L$}};
            \node (L3) at ($(I3) + (1,1)$) {\textcolor{gray}{$L$}};

            \draw[gray] (L2) -- (F2);
            \draw[gray] (L3) -- (I3);

            \node[anchor=west] at ($(L) + (.1,0)$) (EXTDPT) { $||, \emptyset, 0, 1$  };
            \node[anchor=west] at ($(C) + (.1,0)$) (EXTDPT) { $|L|, \emptyset, 0, 1$  };
            \node[anchor=west] at ($(B) + (.1,0)$) (EXTDPT) { $|LC|, \emptyset, 0, 1.5$  };
            \node[anchor=west] at ($(D) + (.1,0)$) (EXTDPT) { $|LCB|, \emptyset, 0, 2$  };
            \node[anchor=east] at ($(A) + (-.1,0)$) (EXTDPT) { $|LCBD|, \emptyset, 0, 2.5$  };
            \node[anchor=west] at ($(E) + (.1,0)$) (EXTDPT) { $|LCBD|, \emptyset, 0, 2.5$  };
            \node[anchor=east] at ($(F) + (-.1,0)$) (EXTDPT) { $|L|DE, \emptyset$  };

            \node[anchor=west] at ($(F2) + (.1,0)$) (EXTDPT) { $|L|, DE, 1, 2$  };
            \node[anchor=west] at ($(E2) + (.1,0)$) (EXTDPT) { $|LF|, DE, 1, 2$  };
            \node[anchor=west] at ($(G2) + (.1,0)$) (EXTDPT) { $|LFE|, D, 1, 2.5$  };
            \node[anchor=east] at ($(D2) + (-.1,0)$) (EXTDPT) { $|LFEG|, D, 1, 2.5$  };
            \node[anchor=west] at ($(H2) + (.1,0)$) (EXTDPT) { $|LFEG|, \emptyset, 0, 2.5$  };
            \node[anchor=east] at ($(I2) + (-.1,0)$) (EXTDPT) { $|L|GH, \emptyset$  };

            \node[anchor=west] at ($(I3) + (.1,0)$) (EXTDPT) { $|L|, GH, 1, 2$  };
            \node[anchor=west] at ($(H3) + (.1,0)$) (EXTDPT) { $|LI|, GH, 1, 2$  };
            \node[anchor=west] at ($(J3) + (.1,0)$) (EXTDPT) { $|LIH|, G, 1, 2.5$  };
            \node[anchor=east] at ($(G3) + (-.1,0)$) (EXTDPT) { $|LIHJ|, G, 1, 2.5$  };
            \node[anchor=west] at ($(K3) + (.1,0)$) (EXTDPT) { $|LIHJ|, \emptyset, 0, 2.5$  };

            \node at ($(L) + (0,1)$) (EXTDPT) { $(\ria\setminus \icon) | \icon| \scon, \outt, \rho^*(\scon \cup \outt), \rho^*(\ra \cup \outt)$  };

    \end{tikzpicture}
\end{center} 
  \caption{A RPT}
    \label{fig:app:rptExample}
\end{figure}
\end{minipage}
\end{figure}

For the roots $V$ of the replaced sub-PTs we added the partitioning of $\raexc(V)$ and $\out(V)$ as these will be important for the algorithm.
Note that these sets would be the same, no matter in which order the variables would be on the path.
Further, the replaced PTs can always be made to be a path.
Note that the partitioning $\raexc(V)$ and $\out(V)$ are the same as the ``new'' roots (i.e., $F$, $I$) except that the stored contexts $\scon$ moved to the output $\outt$.
We see the non-replaced parts (drawn in black) as the real vertices of the RPT.

Next, we discuss notation and conventions used in the subsequent proofs.

\subsection{Some Notation and Conventions for RPTs}

Recall the notation used in the recursive definition of RPTs.
The intuition of the recursive construction is to \textit{replace} the subtrees $\descc(A_i)$ of $P$ with $\calP_i$.
We then define the vertices of $\calP$ as the disjoint union (note that a single variable $A$ may appear multiple times in a RPT)
$${\boldsymbol{V}}(\calP) = ({V}(P)\setminus (\iv \cup \bigcup_{i} \descc(A_i))) \sqcup \bigsqcup_{i} \boldsymbol{V}(\calP_i).$$

Next, we aim to prove that Algorithm \ref{alg:app:rpt} runs in time $O(|D|^{t(\calP,\calC,\calI)})$ and uses space $O(|D|^{s(\calP,\calC,\calI)})$.
Thus, this then proves Theorem \ref{thm:algo:rpt}.
But first, let us first clarify some ambiguities.
To that end, let $((P,(\calP_i)_i), (C,(\calC_i)_i), (\iv, (\caliv_i)_i))$ be a sub-RPT that appears in $(\calP,\calC,\calI)$ where the subtree rooted in $A'_i\in V(P)$ was replaced by the RPT $(\calP_i, \calC_i, \caliv_i)$.
Further, let $(Q^{P}(\boldsymbol{X}),\boldsymbol{I})$ be the query to PTCR $(P,C,\boldsymbol{I})$.
\begin{itemize}
    \item Then, $(\calP_i, \calC_i, \caliv_i)$ has to be a RPT for the query 
    $$(Q^{P}_i[\descc(A'_i)\cup\raexc(A'_i)](\scon(A'_i)\cup\outt(A'_i)), \ria(A'_i)).$$
    \item To distinguish multiple copies of the same variable in $\boldsymbol{V}(\calP)$, for a $A\in V(P)$, we denote with $PT(A):=P$ the PT $P$ the variable $A$ stems from.
    \item With $\Root(\calP)$ we mean the root of the top PT in $\calP$.
    \item Note that $V(P)\setminus (\boldsymbol{I} \cup \bigcup \descc(A'_i))\subseteq \boldsymbol{V}(\calP)$
    \item Let $A\in \boldsymbol{V}(\calP)$ and $PT(A)=P$. Then with $child(A)$ we mean the children in $\boldsymbol{V}(\calP)$. Thus, if $A'_i$ is a child of $A$ in the PT $P(A)=P$, then $\Root(\calP_i)$ (ignoring $\caliv_i$) is considered a child of $A$ in $\calP$ (written $\Root(\calP_i)\in child(A)$) and $A'_i$ is not considered a child of $A$ in $\calP$ (written $B'\not\in child(A)$).
    We write $\Root(\calP_i)=A_i$ replaced $A'_i$ (see line \ref{line:app:replaced} of Algorithm~\ref{alg:app:rpt}).
    \item However, we leave $\ra, \raexc, \ria, \icon, \scon, \con, \anc, \ancc, \desc, \descc, \out, \outt$ unchanged, i.e., they are as in $PT(A)$.
    Thus, these sets of variables always stem from the same PT.
    \item When projecting a tuple $\boldsymbol{x}\in \dom^{\anc(A)}$ onto some variables $\boldsymbol{Y}\subseteq V(P)$ of a PT $P$ that is different from $PT(A)$ ($P\neq PT(A)$), we identify the variables.
    I.e., $\boldsymbol{x}[\boldsymbol{Y}]$ is not necessarily the empty tuple.
    \item To avoid overcounting, we have to make sure that the annotations of every relation $R_j(\boldsymbol{X}_j)$ of $Q^{P}(\boldsymbol{X})$ is only used once.
    To that end, in a call to $\textsc{solve}(A,\boldsymbol{x})$, $\textsc{solveReplaced}(A,A',\boldsymbol{x}),$ and $\textsc{fillCache}(A,\yanc)$ whether the relation $R(\boldsymbol{X}_j)$ or $\supp(R_j[\boldsymbol{X}_j])$ is used in $Q$ depends on $PT(A)$.
    This is defined recursively.
    I.e., for variables $V\in \boldsymbol{V}(\calP_i)$, we only use the annotations for relation $R_j(\boldsymbol{X}_j)$ where $\boldsymbol{X}_j\cap \descc(A'_i)\neq \emptyset$.
    Thus, these are the relations that would be been considered further down in $(P,C,\boldsymbol{I})$ and, thus, now have to be considered in a $(\calP_i, \calC_i, \caliv_i)$ as this replaced $(P,C,\boldsymbol{I})$.
    With this, it is always implicitly clear which version of $Q$ we are currently using.
\end{itemize}

{\small

\begin{algorithm}[t]
\caption{RPT Algorithm}
\label{alg:app:rpt}
\begin{algorithmic}[1]
    \Statex \textbf{Input:} $\text{Query } Q, \text{RPT } (\calP,\calC,\calI)$
    \Statex \textbf{Output:} $\sem{Q}$
    \For{$A\in \boldsymbol{V}(\calP)$}
        \State $M_A \gets \emptyset, \ria_A \gets \bot$ 
    \EndFor  
    \State \Return \Call{solve}{$root(\calP),()$}

    \Function{solve}{$A,{\boldsymbol{x}}$} 
        \If{${\boldsymbol{x}}[\ria(A)] = \ria_A $} \Comment{Cache hit}
            \State \Return $M_A({\boldsymbol{x}}[\con(A')])$ 
        \EndIf
        \State $\Call{fillCache}{A,\yanc}$ \label{line:rpt1} \Comment{Cache miss}
        \State \Return $M_A({\boldsymbol{x}}[\con(A)])$
    \EndFunction
    \Function{solveReplaced}{$A, A',{\boldsymbol{x}}$} 
        \If{${\boldsymbol{x}}[\ria(A)] = \ria_A $} \Comment{Cache hit}
            \State \Return $M_A({\boldsymbol{x}}[\con(A')])$ 
        \EndIf
        \State $\ria_A \gets \boldsymbol{x}[\ria(A)]$ \label{line:rpt2} \Comment{$\ria(A)=\ria(A')$, Cache miss}
        \State $\yanc \gets \boldsymbol{x}[\anc(A)] $
            \State $\Call{fillCache}{A,\yanc}$
            \State $\yinst\gets \boldsymbol{x}[\icon(A')]$ \Comment{$\icon(A')=\icon(A)=\con(A)$}
            \State $\text{OUT} \gets M_A(\boldsymbol{x}[\yinst])$
            \State $M_A \gets \{(\yinst, \ysto) \mapsto \text{OUT}[\outt(A')|\yinst, \ysto] \mid \ysto\in \sem{Q[\scon(A')|{\ria_A}]} \}$ \label{line:rptcache}
            \State \Return $M_A({\boldsymbol{x}}[\con(A')])$
    \EndFunction
    \Function{fillCache}{$A,\yanc$} 
        \For{$\ysto\in \sem{Q[{\scon(A)}|{\ria_A}]}$}
            \State $\boldsymbol{y} \gets (\yanc,\ysto)$
            \State OUT $\gets \{\boldsymbol{z} \mapsto \0 \mid \boldsymbol{z}\in\dom^{\outt(A)}\}$  
            \For{$a\mapsto s \in \sem{Q[A|{(\ria_A,\ysto)}]}$}
                    \State $\boldsymbol{y}'\gets (\boldsymbol{y}, a)$
                    \State TMP $\gets \{a[\outt(A)\cap \{A\}] \mapsto s\}$ 
                    \For{$B\in child(A)$} \label{line:rptrc1}
                        \If{$B$ replaced $B'$} \label{line:app:replaced}
                            \State TMP $\gets$ TMP $\otimes$ \Call{solveReplaced}{$B,B',\boldsymbol{y}'$}
                        \Else
                            \State TMP $\gets$ TMP $\otimes$ \Call{solve}{$B,\boldsymbol{y}'$} \label{line:rptrc2}
                        \EndIf  
                    \EndFor     
                    \State OUT $\gets$ OUT $\oplus$ TMP
            \EndFor
            \State $M_A \gets M_A \cup \{\boldsymbol{y}[\con(A)] \mapsto $ OUT$\}$
        \EndFor
        
    \EndFunction
\end{algorithmic}

\end{algorithm}
}

To prove Theorem \ref{thm:algo:rpt}, we essentially have to re-prove the lemmas considered in Section \ref{sec:app:ptcr}.
As these worked by induction, we only have to re-prove the induction steps for when an $A$ replaced an $A'$.

\subsection{Lemmas for Theorem \ref{thm:algo:rpt}}

\begin{lemma}
\label{lem:app:rpt1}
    Let $(\calP,\calC,\calI)$ be a RPT of the SPQ $Q$ and $A\in \boldsymbol{V}(\calP)$ arbitrary. 
    Let $Q(\boldsymbol{X})\leftarrow \bigotimes_i R_i(\boldsymbol{X}_i)$ be a SPQ and $(\calP,\calC,\calI)$ a RPT of $Q(\boldsymbol{X})$.
    Then, for a run of Algorithm \ref{alg:ptcr}, the function $\textsc{solve}(A,\boldsymbol{x})$ is only called for arguments $A\in \boldsymbol{V}(\calP)$ that do not replace a $A'$ and $\boldsymbol{x}\in \dom^{\anc(A)}$ where $\boldsymbol{x}[\raexc(A)]\in \sem{Q[{\raexc(A)}]}$.
    Further, the function $\textsc{solveReplaced}(A,A',\boldsymbol{x})$ is only called for arguments $A\in \boldsymbol{V}(\calP)$ that replaced $A'$, $\boldsymbol{x}\in \dom^{\anc(A')}$ where $\boldsymbol{x}[\raexc(A')]\in \sem{Q[{\raexc(A')}]}$ and $\boldsymbol{x}[\raexc(A)]\in \sem{Q[{\raexc(A)}]}$.
\end{lemma}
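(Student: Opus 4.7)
The plan is to generalize the induction used for Lemma \ref{lem:app:ptcr1} from a single PTCR to the recursive structure of an RPT. The natural well-founded order is lexicographic: first on the recursion nesting depth of the sub-RPT to which $A$ belongs (outer induction), then on $|\anc(A)|$ within the local PTCR (inner induction). The base of the outer induction is the top-level call $\textsc{solve}(\Root(\calP),())$, where $\iv=\emptyset$ and $\raexc(\Root(\calP))=\emptyset$, so the claim holds trivially.

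For the outer inductive step, fix a sub-RPT $(\calP_i,\calC_i,\caliv_i)$ with real root $A_i$ replacing some $A_i'$, and assume the claim already holds for every enclosing PTCR. The only entry point into $\calP_i$ is via $\textsc{solveReplaced}(A_i,A_i',\boldsymbol{x})$, invoked from $\textsc{fillCache}$ in the enclosing PTCR (lines \ref{line:rptrc1}--\ref{line:rptrc2}). From the outer induction hypothesis applied to $\textsc{fillCache}$ of the parent of $A_i'$, the argument $\boldsymbol{x}$ already satisfies $\boldsymbol{x}[\raexc(A_i')]\in\sem{Q[\raexc(A_i')]}$, which is exactly what the first part of the conclusion for $\textsc{solveReplaced}$ asks. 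Since the RPT definition requires $\caliv_i=\ria(A_i')$ and $\ria(A_i)=\caliv_i$, we have $\raexc(A_i)\subseteq \raexc(A_i')$, so the second part $\boldsymbol{x}[\raexc(A_i)]\in\sem{Q[\raexc(A_i)]}$ also follows, completing the transition into the sub-RPT.

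For the inner inductive step, the reasoning of Lemma \ref{lem:app:ptcr1} carries over essentially verbatim. Given a call $\textsc{solve}(A,\boldsymbol{x})$ or $\textsc{solveReplaced}(A,A',\boldsymbol{x})$ that reaches $\textsc{fillCache}(A,\yanc)$, the loops iterate over $(\ysto,a)\in\sem{Q[\scon(A)\cup\{A\}\mid \boldsymbol{x}[\ria(A)]]}$, so the extended tuple $\boldsymbol{y}'=(\yanc,\ysto,a)$ satisfies $\boldsymbol{y}'[\ra(A)]\in\sem{Q[\ra(A)]}$. For each child $B$ of $A$ in $\boldsymbol{V}(\calP)$, the recursive call is classified by whether $B$ replaced some $B'$: in the plain case $\raexc(B)\subseteq \ra(A)$, and in the replaced case $\raexc(B')\subseteq\ra(A)$ together with $\ria(B)=\caliv_{B}\subseteq \raexc(B')$. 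In either case the invariant needed for the recursive call follows by projection, which is exactly the statement we want to propagate.

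The main obstacle, and the only place where the RPT structure genuinely adds content beyond Lemma \ref{lem:app:ptcr1}, is the bookkeeping at the interface between PTCRs: one must carefully distinguish $\anc(B)$ (which, for a replaced $B$, equals $\caliv_B$, not the ancestors of the replaced variable $B'$) from $\anc(B')$, and verify that the projection $\boldsymbol{y}'[\raexc(B)]$ uses the identification of variables described in Section \ref{sec:app:rptexample}. Once this identification is fixed, the containment $\raexc(B)\subseteq \ra(A)$ reduces to the defining property $\ria(B)=\caliv_B=\ria(B')$ from the RPT definition, so no new combinatorial fact is required.
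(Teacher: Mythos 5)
Your plan is correct and follows essentially the same route as the paper's proof: an induction along the call structure of the RPT (the paper phrases it as induction on recursion depth, you as a lexicographic induction on sub-RPT nesting depth and then $|\anc(A)|$, which amounts to the same well-founded order), with the same three key facts — the extended tuple lies in $\sem{Q[\ra(A)]}$ because the \textsc{fillCache} loops range over $\sem{Q[\scon(A)\cup\{A\}\mid \ria_A]}$, plain children satisfy $\raexc(B)\subseteq\ra(A)$, and replaced children satisfy $\anc(B)=\raexc(B)=\ria(B)=\ria(B')\subseteq\raexc(B')$ by the RPT definition. Your handling of the interface between PTCRs (distinguishing $\anc(B)$ from $\anc(B')$ and using $\caliv_i=\ria(A_i')$) matches the paper's treatment of the $PT(B)\neq PT(A)$ case.
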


\begin{proof}    
We prove this by induction on the recursion depth.
To that end, the base case holds trivially due to $\raexc(\Root(\calP))= \emptyset$.

For the induction step for $\textsc{solve}$, let $A\in \boldsymbol{V}(\calP)$ be such that the induction hypothesis holds.
I.e., all calls are made for $\boldsymbol{x}\in \dom^{\anc(A)}$ where $\boldsymbol{x}[\raexc(A)]\in \sem{Q[{\raexc(A)}]}$.
Now, let $\boldsymbol{x}$ be such that is incurs a cache miss.
Then, the call takes $\yanc := \boldsymbol{x}[\anc(A)\setminus \scon(A)]$ and iterates through $(\ysto,a)\in \sem{Q[{\scon(A)\cup \{A\}}|{\boldsymbol{x}[\ria(A)]}]}$.
Consequently, 
$$(\boldsymbol{x}[\ria(A)], \ysto,a)\in \sem{Q[{\ria(A)\cup \scon(A)\cup \{A\}}]}.$$
Put differently, for $\boldsymbol{y}':=(\yanc, \ysto, a)$, we get 
$$\boldsymbol{y}'[\ra(A)] = (\boldsymbol{x}[\ria(A)], \ysto,a)\in \sem{Q[{\ria(A)\cup \scon(A)\cup \{A\}}]} = \sem{Q[{\ra(A)}]}.$$
Then, let $B\in child(A)$.

\begin{itemize}
    \item If $PT(B)=PT(A)\colon$ we know, $\raexc(B)\subseteq \ra(A)$ and hence, $\boldsymbol{y}'[\raexc(B)]  \in \sem{Q[{\raexc(B)}]}$.
    Thus, the recursive call $\textsc{solve}(B,\boldsymbol{y}')$ is as required.

    \item If $PT(B)\neq PT(A)\colon$ we have $\anc(B) = \raexc(B) = \ria(B)=\ria(A)$. 
    Thus, $\boldsymbol{y}'[\anc(B)] = \boldsymbol{y}'[\raexc(B)] \in \sem{Q[{\raexc(B)}]}.$ 
    Further, let $B'$ be replaced by $B$.
    Then $\boldsymbol{y}'[\raexc(B)]  \in \sem{Q[{\raexc(B)}]}$ due to the same argument as in the previous case.
\end{itemize}

For the induction step for $\textsc{solveReplaced}$, let $A\in \boldsymbol{V}(\calP)$ be such that the induction hypothesis holds and $A'$ be replaced by $A$.
I.e., all calls are made for $\boldsymbol{x}\in \dom^{\anc(A')}$ where $\boldsymbol{x}[\raexc(A)]\in \sem{Q[{\raexc(A)}]}$ and $\boldsymbol{x}[\raexc(A')]\in \sem{Q[{\raexc(A')}]}$.
Now, let $\boldsymbol{x}$ be such that is incurs a cache miss.
Then, the call takes $\yanc := \boldsymbol{x}[\anc(A)]$ and iterates through $a\in \sem{Q[{A}|{\yanc}]}$.
Note that $\scon(A)=\emptyset$ and $\anc(A)=\ria(A)$.
Consequently, $(\yanc,a)\in \sem{Q[{\ria(A) \cup \{A\}}]}$.
Put differently, for $\boldsymbol{y}':=(\yanc, a)$, we get 
$$\boldsymbol{y}' \in  \sem{Q[{\ra(A)}]}.$$
Then, for the children $B\in child(A)$ we can argue as we did for $\textsc{solve}$.

Note that all calls made to the functions $\textsc{solve}, \textsc{solveReplace}$ (except the very first) arise in this form.
Hence, this complete the induction step and with it the proof.
\end{proof}

\begin{lemma}
\label{lem:app:rpt2}
    Let $(\calP,\calC,\calI)$ be a RPT of the SPQ $Q$ and $A\in \boldsymbol{V}(\calP)$ arbitrary. 
    Then, during the execution of Algorithm \ref{alg:ptcr}, the function call $\textsc{solve}(A,\boldsymbol{x})$ returns 
    $$\sem{Q[{\descc(A)}|{\boldsymbol{x}[\con(A)]}](\outt(A))}.$$
    Furthermore, after the call to $\textsc{fillCache}(A,\boldsymbol{x}[\anc(A)\setminus \scon(A)])$ the cache $M_A$ is of the form
    \begin{align}
    \left\{(\yinst,\ysto)\mapsto \sem{Q[{\descc(A)}|{\yinst,\ysto}](\outt(A))}\mid \ysto\in \sem{Q[{\scon(A)}|{\boldsymbol{x}[\ria(A)]}]}\right\}.\label{eq:app:rptcache}   
    \end{align}
    where $\yinst=\boldsymbol{x}[\icon(A)]$.
    Further, the function call $\textsc{solveReplaced}(A,A',\boldsymbol{x})$ returns 
    $$\sem{Q[{\descc(A')}|{\boldsymbol{x}[\con(A')]}](\outt(A'))}.$$
    Furthermore, after Line \ref{line:rptcache}, the cache $M_A$ is of the form
    \begin{align}\{(\yinst,\ysto)\mapsto \sem{Q[{\descc(A')}|{\yinst,\ysto}](\outt(A'))}\mid \ysto\in \sem{Q[{\scon(A')}|{\boldsymbol{x}[\ria(A')]}]}\}.\label{eq:app:rptcache2}   
    \end{align}
    where $\yinst=\boldsymbol{x}[\icon(A)]=\boldsymbol{x}[\icon(A)]$.
\end{lemma}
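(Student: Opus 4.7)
The proof proceeds by nested induction that mirrors Lemma~\ref{lem:app:ptcr2}, with additional bookkeeping for replaced subtrees. The outer induction is on the recursion depth of $(\calP,\calC,\calI)$ (the number of nested sub-RPTs), and within a fixed sub-PTCR the inner induction is on $|\descc(A)|$, proving the claims for $\textsc{solve}$, $\textsc{fillCache}$, and $\textsc{solveReplaced}$ simultaneously. The base case (no replacement, no descendants beyond $A$ itself) is identical to the one in Lemma~\ref{lem:app:ptcr2}: $\text{OUT}$ is built as $\bigoplus_{a\mapsto s}\{a[\outt(A)]\mapsto s\}$ and Lemma~\ref{lem:app:con} converts the key from $\boldsymbol{x}[\ria(A),\ysto]$ to $\boldsymbol{x}[\con(A)]$.

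For the inductive step of $\textsc{solve}(A,\boldsymbol{x})$ and $\textsc{fillCache}(A,\yanc)$, the argument of Lemma~\ref{lem:app:ptcr2} applies verbatim when all children $B \in \children(A)$ live in $PT(A)$. The new case is when some child $B$ replaces a vertex $B'$; then $\textsc{fillCache}$ dispatches to $\textsc{solveReplaced}(B,B',\boldsymbol{y}')$. Since the sub-RPT rooted at $B$ has strictly smaller recursion depth, the outer induction hypothesis yields that this call returns $\sem{Q[\descc(B')|\boldsymbol{y}'[\con(B')]](\outt(B'))}$, which is exactly the factor Lemma~\ref{lem:app:ptdecompose} needs in order to identify $\text{OUT}$ with $\sem{Q[\descc(A)|\boldsymbol{y}[\ria(A),\ysto]](\outt(A))}$. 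The convention that each relation $R_j$ is scored in the unique sub-RPT whose replacement boundary satisfies $\boldsymbol{X}_j \cap \descc(A'_i) \neq \emptyset$ guarantees that no annotation is counted twice across the recursion.

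For $\textsc{solveReplaced}(A,A',\boldsymbol{x})$: on a cache hit, the returned value is correct by the cache invariant~\eqref{eq:app:rptcache2} together with Lemma~\ref{lem:app:con} applied to $\con(A')$ (note $\boldsymbol{x}[\ria(A')] = \ria_A$). On a miss, $\textsc{fillCache}(A,\yanc)$ is invoked first. By the inner induction hypothesis applied inside the sub-RPT rooted at $A$, after this call $M_A$ has the form~\eqref{eq:app:rptcache} relative to the subquery $Q_i$, with keys $\yinst = \boldsymbol{x}[\icon(A)]$ and values $\sem{Q[\descc(A)|\yinst](\outt(A))}$. Because $A$ replaced $A'$, the variable identifications
\begin{align*}
\descc(A) = \descc(A'),\quad \outt(A) = \scon(A') \cup \outt(A'),\quad \icon(A) = \icon(A') = \con(A'),\quad \scon(A) = \emptyset
\end{align*}
hold, so $M_A(\yinst) = \sem{Q[\descc(A')|\yinst](\scon(A') \cup \outt(A'))}$. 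Line~\ref{line:rptcache} then refines this cache by iterating over $\ysto \in \sem{Q[\scon(A')|\ria_A]}$ and storing $\text{OUT}[\outt(A')|\yinst,\ysto]$. The correctness of this re-partitioning reduces to the identity
\begin{align*}
\sem{Q[\descc(A')|\yinst](\scon(A') \cup \outt(A'))}[\outt(A')|\yinst,\ysto] \;=\; \sem{Q[\descc(A')|\yinst,\ysto](\outt(A'))},
\end{align*}
which holds because summation over $\descc(A') \setminus (\scon(A') \cup \outt(A'))$ commutes with conditioning on a subset of the head variables. After Line~\ref{line:rptcache}, $M_A$ has exactly the form~\eqref{eq:app:rptcache2}, and the returned $M_A(\boldsymbol{x}[\con(A')])$ is the claimed value.

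The main obstacle is the notational overhead of juggling the two context partitions, $\con(A) = \icon(A) \cup \scon(A)$ for the new root $A$ against $\con(A') = \icon(A') \cup \scon(A')$ for the replaced vertex $A'$, while arguing that the rewrite performed in Line~\ref{line:rptcache} preserves semantics. Once this bookkeeping is fixed, all remaining steps are direct generalisations of the PTCR case, and the annotation-consumption convention prevents any over-counting when a child call crosses a replacement boundary.
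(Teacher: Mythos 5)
Your proof follows essentially the same route as the paper's: a simultaneous induction that reuses the PTCR argument (Lemma~\ref{lem:app:ptcr2}) for \textsc{solve}/\textsc{fillCache}, treats a replacing child via the recursion hypothesis so that \textsc{solveReplaced}$(B,B',\cdot)$ plays the role of \textsc{solve}$(B',\cdot)$ in Algorithm~\ref{alg:ptcr}, and justifies Line~\ref{line:rptcache} as a regrouping of the cache by $\con(A')$ using $\outt(A)=\scon(A')\cup\outt(A')$ plus the annotation-consumption convention to rule out over-counting. Two of your stated identifications are off, though harmlessly so: the replacing root satisfies $\descc(A)=\descc(A')\cup\scon(A')$ (the stored context of $A'$ is pulled \emph{below} the new root, cf.\ $D,E\in\descc(F)$ in Fig.~\ref{fig:rpt}), not $\descc(A)=\descc(A')$; and $\icon(A)=\icon(A')=\con(A)$, not $\con(A')$ (since $\con(A')=\icon(A')\cup\scon(A')$). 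With those sets corrected, your currying identity and the rest of the argument go through exactly as in the paper.
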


Note that $\anc(A)\setminus \scon(A) = \anc(A)$ is $A$ replaced $A'$.

\begin{proof}
We prove both claims in a simultaneous induction on the recursion depth.
To that end, the base case are $A$ such that $child(A) = \emptyset$.
We have seen the proof of this base case for $\textsc{fillCache}(A,\yanc)$ and $\textsc{solve}(A,\boldsymbol{x})$ in the proof of Lemma \ref{lem:app:ptcr2}.
Thus, let us immediately go to the base claim for $\textsc{solveReplaced}(A,A',\boldsymbol{x})$.
There are two cases to consider:
\begin{itemize}
    \item $ \boldsymbol{x} [ \ria(A) ] = \ria_A \colon $  Then, there was a previous call with $\textsc{solveReplaced}(A,A',\boldsymbol{z})$ such that the cache was filled and $\boldsymbol{z}[\ria(A)]=\boldsymbol{x}[\ria(A)]$.
    Thus, $M_A$ is of the form given in Eq.~\eqref{eq:app:rptcache2}.
    Furthermore, due to Lemma \ref{lem:app:rpt1}, $\boldsymbol{x}[\raexc(A')]\in \sem{Q[{\raexc(A')}]}$.
    Moreover, $\boldsymbol{x}[\scon(A')]\in \sem{Q[{\scon(A')}|{\boldsymbol{x}[\ria(A')]}]}$.
    Hence, 
    \begin{align*}
        M_A(\boldsymbol{x}[\con(A')]) & = M_A(\boldsymbol{x}[\icon(A')],\boldsymbol{x}[\scon(A')]) \\
        & = \sem{Q[{\descc(A)}|{\boldsymbol{x}[\icon(A')],\boldsymbol{x}[\scon(A')]}](\outt(A'))} \\
        &= \sem{Q[{\descc(A')}|{\boldsymbol{x}[\con(A')]]}(\outt(A'))}.
    \end{align*}
    \item $ \boldsymbol{x} [ \ria(A) ] \neq \ria_A \colon $ Then, $\textsc{fillCache}(A,\boldsymbol{x}[\anc(A)\setminus \scon(A)])$ will be called and after the call, $M_A$ will be of the form given in Eq. \eqref{eq:app:rptcache}.
    Then, consider the transformation performed in Line \ref{line:rptcache}.
    That is, we take the cache in the form of Eq. \eqref{eq:app:rptcache}, group by $\con(A')\neq \con(A)$, and let $\boldsymbol{z}\in \dom^{\con(A')}$ point to the sub-relation that extends $\boldsymbol{z}$.
    Note that, $\scon(A')\cup \outt(A') = \outt(A)$.
    Then, further note that $Q[{\descc(A)}|{\boldsymbol{x}[\icon(A)]}](\outt(A))$ only uses the annotations of relations that use variables from $\descc(A')$.
    Thus, there is no ``overcounting'' and after line~\ref{line:rptcache}, the cache is in the form of Eq. \eqref{eq:app:rptcache2}.
\end{itemize}

Now for the induction step.
To that end, let $child(A) \neq \emptyset$.
Then consider a $B\in child(A)$ such that $B$ replaced $B'$.
Then, note that $\textsc{solveReplace}(B,B',\boldsymbol{x})$ returns the same $\textsc{solve}(B',\boldsymbol{x})$ of Algorithm~\ref{alg:ptcr}.
Thus, also for the induction step of $\textsc{solve}$ and $\textsc{fillCache}$, we can use what we have seen in the proof of Lemma \ref{lem:app:ptcr2}.
Further, for the inductive claim of $\textsc{solveReplaced}(A,A',\boldsymbol{x})$ we can simply reuse the argument for the base claim.
Note that we never used the fact that $A$ has no children there.
\end{proof}

\begin{lemma}
\label{lem:app:rpt3}
    Let $(\calP,\calC,\calI)$ be a RPT of the SPQ $Q$ and $A\in \boldsymbol{V}(\calP)$ arbitrary. 
    Then, for a run of Algorithm \ref{alg:ptcr}, let $\textsc{solve}(A,\boldsymbol{x}_1), \textsc{solve}(A,\boldsymbol{x}_2)$ be two calls where the first happens before the latter.
    Then, $\boldsymbol{x}_1[\ria(A)]$ comes lexicographically before $\boldsymbol{x}_2[\ria(A)]$ (or they are equal).
    Further, let $\textsc{solveReplaced}(A,A',\boldsymbol{x}'_1), \textsc{solveReplaced}(A,A',\boldsymbol{x}_2)$ be two calls where the first happens before the latter.
    Then, $\boldsymbol{x}'_1[\ria(A)]$ comes lexicographically before $\boldsymbol{x}'_2[\ria(A)]$ (or they are equal).
\end{lemma}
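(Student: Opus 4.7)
The plan is to mirror the proof of Lemma~\ref{lem:app:ptcr3}, but with a simultaneous induction handling both $\textsc{solve}$ and $\textsc{solveReplaced}$, and with additional care taken at the ``seams'' where one PT in $\calP$ is replaced by a sub-RPT. The induction will proceed top-down along the tree structure of $\calP$ (i.e., on the depth of $A$ in $\calP$, where children across a replacement boundary are the roots of the sub-RPTs). The base case is $A = \Root(\calP)$, for which $\ria(A)=\emptyset$ holds by assumption, so the claim is vacuous.

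For the inductive step, fix a non-root $A \in \boldsymbol{V}(\calP)$ with parent $A^\ast$. Every call $\textsc{solve}(A,-)$ or $\textsc{solveReplaced}(A,A',-)$ is issued from within a call to $\textsc{fillCache}(A^\ast,-)$ (lines \ref{line:rptrc1}--\ref{line:rptrc2}). I would break the argument into two complementary sources of ordering. First, within a single invocation $\textsc{fillCache}(A^\ast,\yanc)$, the outer loop iterates $\ysto \in \supp(\sem{Q[\scon(A^\ast)|\ria_{A^\ast}]})$ lexicographically, and the inner loop iterates $a \in \sem{Q[A^\ast|(\ria_{A^\ast},\ysto)]}$ lexicographically (as assumed for all loops of this kind). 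Hence the emitted arguments $\boldsymbol{y}'$ satisfy that $\boldsymbol{y}'[\ra(A^\ast)] = (\ria_{A^\ast},\ysto,a)$ appears in lexicographic order. Second, different invocations of $\textsc{fillCache}(A^\ast,-)$ correspond to distinct cache misses at $A^\ast$, and by the inductive hypothesis applied to the parent (whether via $\textsc{solve}$ or $\textsc{solveReplaced}$), the values $\ria_{A^\ast}$ across these misses appear in lex order.

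Next I would combine these two orderings to deduce the desired lex order for $\ria(A)$ (or $\ria(A')$ in the replaced case), proceeding by the same three sub-cases as in Lemma~\ref{lem:app:ptcr3}. Writing $U := \min(\icon(A))$ (or $U := \min(\icon(A'))$ if $A$ replaced $A'$; recall that $\ria(A)=\ria(A')$ in that case), and using $\ria(A)\subseteq \ancc(U)$ together with $\raexc(A)\subseteq \ra(A^\ast)$:
\begin{itemize}
    \item If $\icon(A)=\emptyset$ (resp.\ $\icon(A')=\emptyset$), then $\ria(A)=\emptyset$ and the claim is trivial.
    \item If $U \in \icon(A^\ast)$, then $\ria(A) \subseteq \ancc(U) \cap \ra(A^\ast)$ depends only on the fixed ancestors present before the current $\textsc{fillCache}(A^\ast,-)$ begins; hence all variables of $\ria(A)$ are prefixes of $\ria_{A^\ast}$, and the second source of ordering (lex order across cache misses at $A^\ast$) gives the claim.
    \item If $U \in \scon(A^\ast)\cup\{A^\ast\}$, then $\ria(A)$ extends $\ria(A^\ast)$ with a prefix of $(\ysto,a)$. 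Within one $\textsc{fillCache}(A^\ast,-)$ call the first source of ordering suffices, and across different calls the prefix $\ria_{A^\ast}$ strictly increases, dominating the subsequent coordinates in lex order.
\end{itemize}

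The main obstacle, and the reason this is not a purely formal copy of Lemma~\ref{lem:app:ptcr3}, is the bookkeeping at the boundaries between PTs. When $A$ is the root of a sub-RPT replacing some $A'$ in the parent PT, one must verify that the identifications $\ria(A)=\ria(A')=\caliv$ are consistent (by our convention variables are identified across PTs via their names), and that $\raexc(A') \subseteq \ra(A^\ast)$ together with $\anc(A)=\caliv=\ria(A')$ lets the same case analysis above apply with $A'$ playing the role of $A$ in the derivation of $\ria(A)$. Once this identification is in place, the two sub-cases still cover all possibilities for $U=\min(\icon(A'))$, and the induction goes through uniformly for both $\textsc{solve}$ and $\textsc{solveReplaced}$.
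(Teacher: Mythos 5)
Your proposal is correct and takes essentially the same route as the paper: the paper's own proof is a two-line remark observing that $\ria(A)=\ria(A')$ and that $\textsc{solve}$ and $\textsc{solveReplaced}$ issue recursive calls with the same structure, so the argument of Lemma~\ref{lem:app:ptcr3} carries over verbatim. Your write-up is simply the expanded version of that transfer --- the two sources of ordering (lexicographic iteration within one \textsc{fillCache} invocation, and lex-increasing $\ria_{A^\ast}$ across distinct cache misses by the inductive hypothesis), the three-way case split on $\min(\icon(A))$, and the identification $\ria(A)=\ria(A')$ at replacement boundaries all match the intended argument.
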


\begin{proof}    
Notice that $\ria(A)=\ria(A')$ and that the structure of recursive calls made by $\textsc{solve}(A,\boldsymbol{x})$ and $\textsc{solveReplaced}(A,A',\boldsymbol{x})$ are the same.
Thus, we can essentially reuse the proof of Lemma~\ref{lem:app:ptcr3}.
\end{proof}

\begin{lemma}
\label{lem:app:rpt4}
    Let $(\calP,\calC,\calI)$ be a RPT of the SPQ $Q$ and $A\in \boldsymbol{V}(\calP)$ arbitrary. 
    Then for a run of the Algorithm \ref{alg:app:rpt}, the collective time spent in function calls $\textsc{solve}(A,\boldsymbol{x})$ and $\textsc{solveReplaced}(A,A',\boldsymbol{x})$ (excluding cache hits - counting starts from lines \ref{line:rpt1} and \ref{line:rpt2} - and excluding recursive calls - lines~\ref{line:rptrc1} to~\ref{line:rptrc2}) over all values for $\boldsymbol{x}$ is bounded by $O(|D|^{\rho^*(\ra(A)\cup \out(A))})$.
\end{lemma}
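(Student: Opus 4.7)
The plan is to follow the structure of the proof of Lemma~\ref{lem:app:ptcr4}, but split into two cases depending on whether $A \in \boldsymbol{V}(\calP)$ is a ``normal'' vertex (reached by $\textsc{solve}$) or the root of a recursive replacement (reached by $\textsc{solveReplaced}$). In both cases, Lemma~\ref{lem:app:rpt3} guarantees that cache misses occur at most once per distinct value of $\boldsymbol{x}[\ria(A)]$, so we sum the per-miss cost over all $\boldsymbol{x}[\ria(A)] \in \sem{Q[\ria(A)]}$.

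For the first case, when $A$ does not replace any $A'$, only $\textsc{solve}(A,\cdot)$ is called, and the work done outside cache hits and recursive calls lies entirely inside $\textsc{fillCache}(A,\yanc)$. This step is structurally identical to the PTCR situation, so I would reuse the chain of reasoning from Lemma~\ref{lem:app:ptcr4} verbatim: iterating through $\ysto \in \sem{Q[\scon(A)|\ria_A]}$ and $a \mapsto s \in \sem{Q[A | \ria_A, \ysto]}$, and applying Lemma~\ref{lem:app:gj} $|\scon(A)|+1$ times, bounds the iteration cost summed over all $\ria(A)$ values by $O(|D|^{\rho^*(\ra(A))})$. The output-materialization cost of OUT is output-optimal in $\{A\} \cup \out(A)$, hence bounded in total by $O(|D|^{\rho^*(\ra(A) \cup \out(A))})$ (again using that $\sem{Q[\ra(A) \cup \out(A)]}$ has size at most $|D|^{\rho^*(\ra(A) \cup \out(A))}$).

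For the second case, when $A$ replaced some $A'$, we have $\scon(A) = \emptyset$, $\ria(A) = \anc(A) = \raexc(A)$, and $\outt(A) = \scon(A') \cup \outt(A')$. Per cache miss, $\textsc{solveReplaced}$ performs: (i) one call to $\textsc{fillCache}(A, \yanc)$, whose single outer iteration (because $\scon(A) = \emptyset$) simply runs the inner $a$-loop; and (ii) the cache-restructuring step at line~\ref{line:rptcache} that reindexes OUT from $(\yinst) \mapsto \text{relation over } \outt(A)$ into $(\yinst, \ysto) \mapsto \text{relation over } \outt(A')$. For part~(i), applying Lemma~\ref{lem:app:gj} to the $a$-loop summed over $\ria(A)$ gives $O(|D|^{\rho^*(\ria(A) \cup \{A\})}) = O(|D|^{\rho^*(\ra(A))})$ for iteration, while materializing OUT is output-optimal in $\outt(A) = \scon(A') \cup \outt(A')$, bounding the total by $O(|D|^{\rho^*(\ria(A) \cup \outt(A))})$. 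For part~(ii), iterating $\ysto \in \sem{Q[\scon(A')|\ria_A]}$ summed over $\ria(A)$ takes $O(|D|^{\rho^*(\ria(A) \cup \scon(A'))})$ by Lemma~\ref{lem:app:gj}, and within each $\ysto$ the restructuring work is linear in $|\text{OUT}[\outt(A') \mid \yinst, \ysto]|$, so summing over $\ysto$ is linear in $|\text{OUT}|$ itself, already counted in part~(i).

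The final step is to observe that since $A$ replaced $A'$ we have $\ra(A) \cup \out(A) = \ria(A) \cup \outt(A) = \ria(A) \cup \scon(A') \cup \outt(A')$, so every bound collected above is dominated by $O(|D|^{\rho^*(\ra(A) \cup \out(A))})$. The main obstacle I anticipate is careful bookkeeping in case~(ii): one must check that the restructuring step does not accidentally pay for the size of $\sem{Q[\scon(A')|\ria_A]}$ \emph{times} $|\text{OUT}|$, but rather touches each element of OUT once, so its cost is absorbed into the output-materialization bound already accounted for. Once this is confirmed, taking the maximum of the two cases and using $\rho^*$-monotonicity yields the claimed bound.
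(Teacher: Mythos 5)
Your proof is correct and takes essentially the same route as the paper, whose own argument is simply the observation that $\ria(A)=\ria(A')$ and that the work in $\textsc{solve}$ and $\textsc{solveReplaced}$ is structurally identical, so Lemma~\ref{lem:app:ptcr4} carries over. Your additional bookkeeping for the restructuring step at line~\ref{line:rptcache} is sound (it touches each entry of OUT once and the $\ysto$-iteration is covered by Lemma~\ref{lem:app:gj} with $\ria(A)\cup\scon(A')\subseteq\ra(A)\cup\out(A)$); the only nit is that for a replaced root $\ra(A)\cup\out(A)=\ria(A)\cup\{A\}\cup\outt(A)$ rather than $\ria(A)\cup\outt(A)$, which is harmless since you only need domination.
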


\begin{proof}    
Again, notice that $\ria(A)=\ria(A')$ and that the work done by calls $\textsc{solve}(A,\boldsymbol{x})$ and $\textsc{solveReplaced}(A,A',\boldsymbol{x})$ is fundamentally the same.
Thus, we can essentially reuse the proof of Lemma \ref{lem:app:ptcr4}.
\end{proof}

Now, let us prove Theorem~\ref{thm:algo:rpt}

\subsection{Proof for Theorem \ref{thm:algo:rpt}}

\thmalgorpt*

\begin{proof}

The correctness follows directly by applying Lemma \ref{lem:app:rpt2} to the call $\textsc{solve}(root(\calP),())$ which consequently returns $\sem{Q(\boldsymbol{X})}$ as $\descc(root(\calP))=\var(Q)$ and $\outt(root(\calP))=\boldsymbol{X}$.

To bound the runtime of Algorithm \ref{alg:app:rpt}, we can attribute each step past lines \ref{line:rpt1} or \ref{line:rpt2} (cache miss) to the current call of the function $\textsc{solve}(A,\boldsymbol{x})$, or $\textsc{solve}(A, A',\boldsymbol{x})$ and the steps before lines \ref{line:rpt1} and \ref{line:rpt2}  to the calling function.
As the number of steps before lines \ref{line:rpt1} and \ref{line:rpt2} is constant, we can actually simply ignore them.
To that end, for each variables $A$ we sum up the time over all values for $\boldsymbol{x}$.
Due to Lemma \ref{lem:app:rpt4}, this can be bound by $O(|D|^{\rho^*(\ra(A)\cup \out(A))})$.
Thus, by taking the maximum over all $A\in \boldsymbol{V}(\calP)$, we arrive at a overall bound of $O(|D|^{t(\calP,\calC,\calI)})$ for Algorithm \ref{alg:app:rpt}.

Lastly, to bound the space consumption of Algorithm \ref{alg:app:rpt} is suffice to look at Lemma \ref{lem:app:rpt2} that describes the caches.
That is, for variables $A$, the cache $M_A$ is essentially a subset of 
$$\sem{Q[{\descc(A)\cup \scon(A)}|{\boldsymbol{x}[\ria(A)]}](\scon(A)\cup \outt(A))}$$
no matter if looking at Eq. \eqref{eq:app:rptcache} or \eqref{eq:app:rptcache2}.
The size of the cache $M_A$ can, therefore, never exceed $O(|D|^{\rho^*(\scon(A)\cup\outt(A))})$.   
Hence, the size of all caches $M_A$ together never exceed $O(|D|^{s(P,C)})$.
\end{proof}

Now, let us proceed towards Theorem \ref{thm:rptcr-tdrptcr}.

\subsection{Proof of Theorem \ref{thm:rptcr-tdrptcr}}

\thmrptcrtdrptcr*

\begin{proof}

Let $\Pi=((T,\chi), \pi)$ be a query plan in $\tdrptcr$ of the query $Q$.
We prove that (for $|V(T)|\geq 2$) there is a query plan $\Pi'=((T',\chi'),\pi')$ in $\tdrptcr$ of the query $Q$ with $|T'|< |T|$ and such that $\Pi'\preceq \Pi$.
Thus, by induction, there must also be a single node TD based plan $\Pi'$ that is essentially a RPT in $\rptcr$.

To that end, we explain how to transform $\Pi=((T,\chi), \pi)$ into $\Pi'=((T',\chi'),\pi')$, 
To that end, consider a leaf $c\in V(T)$ and its parent $p\in V(T)$.
Then, first let us make some simplifying assumption that the TD is not sub-optimal.
That is, w.l.o.g., $\forall A\in \boldsymbol{Z}:=\chi(c)\cap \chi(p)\neq \emptyset$ we assume $\exists B\in \chi(c)\setminus \chi(p)$ such that $A,B\in \boldsymbol{X}_i$ for some relation $R_i(\boldsymbol{X}_i)$.
I.e., there are no unnecessary variables in $\chi(c)$.

Then, we know $\pi^c=(\calP^c,\calC^c,\caliv^c)$ is a RPT of (see Eq.~\eqref{eq:tdst}) 
\begin{align}
  (Q^c(\boldsymbol{Z}) \leftarrow & \bigotimes_j R^c_j(\bm X_j \cap \chi(c)), \emptyset) \label{eq:app:sub:query}
\end{align}
and $\pi^p=(\calP^p,\calC^p,\caliv^p)$ ``covers'' $Q^c(\boldsymbol{Z})$, i.e., $Q^c(\boldsymbol{Z})$ is part of the query $Q^p(\boldsymbol{Z}^p)$.
Thus, let us ``search'' for $Q^c(\boldsymbol{Z})$.
To that end, let $(\calP^p,\calC^p,\caliv^p) = ((P,(\calP_i)_i), (C,(\calC_i)_i), (\emptyset, (\caliv_i)_i))$ and let $A_1,\dots,A_k\in V(P)$ be such that $(\calP_i, \calC_i, \caliv_i)$ is an RPT for 
    $$(Q^p[{\desc(A_i)\cup \ra(A_i)}](\scon(A_i)\cup \outt(A_i)), \ria(A_i)).$$
Then, $\boldsymbol{Z}\subseteq \ancc(A)$ for $A=\min( \boldsymbol{Z})\in V(P)$. %
Now, for the general case, assume $A\in \descc(A_i)$ for some~$i$.
Thus, 
$$\boldsymbol{Z}\subseteq \descc(A_i)\cup \con(A_i)\subseteq \descc(A_i)\cup \raexc(A_i)$$
and $(\calP_i, \calC_i, \caliv_i)$ ``covers'' $Q^c(\boldsymbol{Z})$.
I.e., we can continue the search recursively.

We can assert, there is a (sub)-RPT $((P',(\calP'_i)_i), (C',(\calC'_i)_i), (\iv', (\caliv'_i)_i))$ of $(\calP^p,\calC^p,\caliv^p)$ together with $A'_1,\dots,A'_k\in V(P')$ such that $(\calP'_i, \calC'_i, \caliv'_i)$ is an RPT for 
    $$(Q[{\desc(A'_i)\cup \ra(A'_i)}](\scon(A'_i)\cup \outt(A'_i)), \ria(A'_i))$$
but $\descc(A'_i)\cap \boldsymbol{Z} = \emptyset$.
Then, $\boldsymbol{Z}\subseteq \ancc(A)$ and $A=\min(\boldsymbol{Z}) \in V(P')\setminus \{\iv'\cup \bigcup_i \descc(A'_i)\}$.
Then, let us append $\chi(c)\setminus \chi(p)\neq \emptyset$ as a path to $A$.
I.e., we add 1 additional child $B$ to $A'$ which in turn has 1 child, and so on until all of $\chi(c)\setminus \chi(p)$ are added.
Note that $(\chi(c)\setminus \chi(p))\cap \boldsymbol{Z}^p = \emptyset$
Furthermore, we extend $C'$ to assign $0$ to all of $\chi(c)\setminus \chi(p)$ except $C'(B)=|B|$.
Then, in the newly constructed PT, $\con(B)=\scon(B)=\raexc(B)=\boldsymbol{Z}$ while $\ria(B)=\icon(B)=\emptyset=\outt(B)$ and we can replace (the newly introduced sub-PT rooted in) $B$ by $(\calP^c,\calC^c,\caliv^c)$.
The altered structure $\pi'^{p}=(\calP'^p,\calC'^p,\caliv'^p)$ together with the original RPTs $\pi'^{v}:=\pi^v$ for $c\neq v\neq p$ give you a structure $((T',\chi'),\pi')$ where $(T',\chi')$ is obtained from $(T,\chi)$ by simply removing $c$. 
Then, over $(T',\chi')$ the RPT $(\calP'^p,\calC'^p,\caliv'^p)$ is an RPT of 
\begin{align}
  (Q^p(\chi(p)\cap \chi(parent(p))) \leftarrow & \bigotimes_j R^p_j(\bm X_j \cap (\chi(c)\cup \chi(p))) \otimes \bigotimes_{w\in child(p)}Q^w(\boldsymbol{Z}^w),\emptyset ), \label{eq:app:sub:query:1}
\end{align}
where no longer $c\not\in child(p)$.
Furthermore, by construction,
\begin{align}
    s(\calP'^p,\calC'^p,\caliv'^p)&\leq \max(s(\calP^p,\calC^p,\caliv^p), s(\calP^c,\calC^c,\caliv^c))\\
    t(\calP'^p,\calC'^p,\caliv'^p)&\leq \max(t(\calP^p,\calC^p,\caliv^p), t(\calP^c,\calC^c,\caliv^c))
\end{align}
This completes the construction and proves in $\rptcr\preceq \tdrptcr$.
In particular, due to Theorem~\ref{thm:tdgj-pt}, even $\rptcr\equiv \tdrptcr$.
\end{proof}

Lastly, we prove Theorem \ref{thm:rptcr-tdptcr}.

\subsection{Proof of Theorem \ref{thm:rptcr-tdptcr}}

\thmrptcrtdptcr*

\begin{proof}
To prove $\rptcr \preceq \tdptcr$, we simply have to combine Theorem \ref{thm:rptcr-tdrptcr} together with the trivial relationship $\tdrptcr \preceq \tdptcr$.

To prove $\tdptcr \not\preceq \rptcr$, take the scalar query $Q()$ depicted in Fig.~\ref{fig:app:rptQuery}.
It can be evaluated by a plan in $\rptcr$ with space-time exponent $(1,2.5)$ (see discussion in Section \ref{sec:app:rptexample}) by using the RPT depicted in Fig.~\ref{fig:app:rptExample}.
However, using a computer-assisted exhaustive search of the plan space $\ptcr$, we verified that there is no plan PTCR $(P,C)$ with a space exponent of $s(P,C)\leq 1$ and time exponent of $t(P,C)\leq \nicefrac{5}{2}$. 
Since there is also no linear separator in this query, this implies that the same holds for $\tdptcr$. 
\end{proof}

\fi 
\ifArxiv

\section{Appendix for Section \ref{sec:beyond}}

\begin{figure}
\begin{minipage}[b]{0.49\textwidth}
    \begin{center}
        \begin{tikzpicture}[scale = 0.65, font=\footnotesize]
            \node (A_1) {$A_1$}
                child {node (A_3) {$A_3$}
                    child {node (A_2) {$A_2$} }
                    child {node (A_4) {$A_4$} }
                    };

        \node[anchor=west] at ($(A_1) + (.25,0)$) (EXTDPT) { $A_1, 0.5$  };
        \node[anchor=west] at ($(A_3) + (.25,0)$) (EXTDPT) { $A_1A_3, 1$  };
        \node[anchor=west] at ($(A_2) - (3.2,0)$) (EXTDPT) { $A_1A_3A_2, 1.5$  };
        \node[anchor=west] at ($(A_4) + (.25,0)$) (EXTDPT) { $A_1A_3A_4, 1.5$  };

        \node at ($(A_1) + (0,1.5)$) (EXTDPT) { $\ancc, \mu^*(\ancc)$  };
        \end{tikzpicture}
    \end{center} 
    \caption{PT of $Q_{A_1}()$}
    \label{fig:app:square_ptcr_heavy_example}
\end{minipage}
\end{figure}

\fourcycle*
\begin{proof}
    To achieve the submodular width, we first partition the input relation, then we solve the query for different combinations of partitions, using different query plans for each combination. To define the partitioning, we first define a \textit{heavy} join value as one which appears in at least $\sqrt{|E_{i}|}$ tuples of $E_{i}$. All other join values are \textit{light}. Note, by the pigeon-hole principle, there cannot be more than $\sqrt{|E_{i}|}$ heavy values. Given this, we partition $E_{i}$ into $E_{i}^{h,\_},E_{i}^{l,h},E_{i}^{l,l}$ where $h$ and $l$ denote whether the values in the left and right attribute are heavy or light, respectively. 
    This partitioning can be done without using additional space by ordering $E_i$ by heaviness of the first attribute -- this then gives you $E_{i}^{h,\_}$ as the top part of $E_i$ -- and then ordering the $E_i\setminus E_{i}^{h,\_}$ by heaviness of the second attribute -- this then gives you $E_{i}^{l,h}$ as the top part of $E_i\setminus E_{i}^{h,\_}$.
    When handling a heavy attribute, we know that the domain size is bounded by $\sqrt{|E_{i}|}$. When handling a light attribute, we can leverage its low degree.
    
    Consider the following query:
    \begin{align*}
        Q_{A_1}() \leftarrow E_{1}^{h,\_}(A_1,A_2)\otimes E_{2}(A_2,A_3)\otimes E_{3}(A_3,A_4)\otimes E_{4}(A_1,A_4)
    \end{align*}
    This query can be solved by the PT defined in Fig.~\ref{fig:app:square_ptcr_heavy_example} with a space-time complexity of $(0,1.5)$. However, to see that, we need to measure the time and space at each node in a manner that incorporates our new knowledge about the bounded domain size. Formally, this is done by using the polymatroid bound \cite{DBLP:conf/pods/Khamis0S17}, which we denote with $\mu^*$, instead of the AGM bound. Take, for example, the node $A_2$ whose time exponent is $1.5$. This bound is valid because the number of iterations on that branch cannot exceed $|E_{i}(A_2,A_3)\times \supp(E_{i}^{h,\_}[A_1])|\leq |D|^{1.5}$, equally for $A_4$. 
    For any combination of partitions where at least one attribute is heavy, an equivalent PT can be constructed.
    Note that this takes care of all but 1 of the $3^4$ combinations to consider.
    
    Therefore, the only remaining interesting case is the combination of partitions where all attributes are light.
    \begin{align*}
    Q_{LIGHT}() \leftarrow  E_{1}^{l,l}(A_1,A_2)\otimes E_{2}^{l,l}(A_2,A_3)\otimes E_{3}^{l,l}(A_3,A_4)\otimes E_{4}^{l,l}(A_1,A_4)
    \end{align*}
    In this case, we do not see how to simply apply one of the structures considered in this paper, i.e., not even an RPT.
    However, we can proceed as follows:
    We assume the relations are ordered lexicographically in th following way:
    $$E_{1}^{l,l}(A_1,A_2),\quad E_{2}^{l,l}(A_2,A_3),\quad E_{4}^{l,l}(A_1,A_4),\quad E_{3}^{l,l}(A_4,A_3).$$
    That is, $E^{l,l}_1, E^{l,l}_2, E^{l,l}_4$ are ordered in the natural way while $E^{l,l}_3$ is ordered by the attribute $A_4$ and, then, for every $a_4\in \dom^{A_4}$ we order $E^{l,l}_4[A_3]$.
    Then, we aim to perform a merge join on the fly using essentially the decomposition
    $$\bigoplus_{a_1,a_3}(\bigoplus_{a_2} E_{1}^{l,l}(a_1,a_2) \otimes E_{2}^{l,l}(a_2,a_3))\otimes ( \bigoplus_{a_4} E_{4}^{l,l}(a_1,a_4) \otimes E_{3}^{l,l}(a_4,a_3)).$$
    We explain how to iterate through $E_{1}^{l,l}(A_1,A_2) \otimes E_{2}^{l,l}(A_2,A_3)$ projected to $A_1,A_3$ in lexicographic order.
    To that end,, we iterate through $a_1\in \supp(E^{l,l}_1[A_1])\cap \supp(E^{l,l}_4[A_1])$ at the top level.
    Then, individually, we compute $\supp(E^{l,l}_1[A_2|a_1])$ and $\supp(E^{l,l}_4[A_4|a_1])$.
    After that, for every $a_2\in \supp(E^{l,l}_1[A_2|a_1])$ we go through $a_3\in\supp(E^{l,l}_2[A_3|a_2])$ in parallel.
    That is, every $a_2$ spawns a separate process -- hence we use $O(|D|^{\frac{1}{2}})$ additional space as $a_1$ is light, i.e., $|\supp(E^{l,l}_1[A_2|a_1])|\leq |\sqrt{E_1}|$.
    However, we go over pairs $a_1,a_3$ in lexicographic order and we compute 
    $$s = \bigoplus_{a_2\in \supp(E^{l,l}_1[A_2|a_1]) \colon a_3\in \supp(E^{l,l}_2[A_3|a_2])}E_1(a_1,a_2)\otimes E_2(a_2,a_3).$$
    Thus, we enumerate 
    $$(a_1,a_3)\mapsto s \in \sem{\bigoplus_{a_2} E_{1}^{l,l}(A_1,a_2) \otimes E_{2}^{l,l}(a_2,A_3)}.$$
    All of this is possible in time $|\supp(E^{l,l}_1)\Join \supp(E^{l,l}_2)|\leq O(|D|^{\mu^*(A_1A_2A_3)})=O(|D|^{\frac{3}{2}})$.

    The same argumentation allows us to enumerate 
    $$(a_1,a_3)\mapsto s' \in \sem{\bigoplus_{a_4} E_{4}^{l,l}(A_1,a_4) \otimes E_{3}^{l,l}(a_4,A_3)}.$$
    in lexicographic order in time $|\supp(E^{l,l}_1)\Join \supp(E^{l,l}_2)|\leq O(|D|^{\mu^*(A_1A_2A_3)})=O(|D|^{\frac{3}{2}})$.
    Thus, merge joining on the fly allows us to compute $\sem{Q_{LIGHT}()}$ in $O(|D|^{\frac{3}{2}})$ time and using $O(|D|^{\frac{1}{2}})$ additional space (we execute up to $O(|D|^{\frac{1}{2}})$ processes in parallel).
\end{proof}

\fi

\end{document}